\newif\iffelinenonum
\newcommand\MyNumToName[1]{%
\ifcase#1\relax 
\or First\or Second\or Third%
\else Not implemented\fi}
\newcommand{\cao}{\c c\~ao }
\newcommand{\de}[1]{\left( #1 \right)}
\newcommand{\ket}[1]{\left| #1 \right\rangle}
\newcommand{\bra}[1]{\left\langle #1 \right|}
\newcommand{\braket}[2]{\left\langle #1 \mid #2 \right\rangle}
\newcommand{\norm}[1]{\left\| #1 \right\|}
\newcommand{\sand}[3]{\left\langle #1\right| #2 \left| #3 \right\rangle}
\newcommand{\tr}{\mathrm{Tr}}
\newtheoremstyle{barbarateo}
  {8pt}
  {8pt}
  {\itshape\sffamily}
  {}
  {\bfseries\sffamily}
  {.}
  {5pt}
  {}
\theoremstyle{barbarateo}
\newtheorem{teo}{Theorem}
\newtheorem{lemma}{Lemma}
\newtheorem{cor}{Corollary}
\newtheorem{prop}{Proposition}
\newtheoremstyle{barbaradefi}
  {8pt}
  {8pt}
  {\sffamily}
  {}
  {\bfseries\sffamily}
  {.}
  {5pt}
  {}
\theoremstyle{barbaradefi}
\newtheorem{ex}{\textsf{Example}}
\newtheorem{defi}{\textsf{Definition}}
\newtheorem{prin}{\textsf{Principle}}
\newtheorem{as}{\textsf{Assumption}}
\newtheorem{cons}{\textsf{Constraint}}
\newtheoremstyle{barbaraproof}
  {8pt}
  {8pt}
  {\sffamily}
  {}
  {\itshape\sffamily}
  {.}
  {5pt}
  {}
\def\@endtheorem{\qed\endtrivlist\@endpefalse } 
  \theoremstyle{barbaraproof}
\newtheorem*{dem}{\textit{\textsf{Proof}}}
\def\be{\begin{equation}}
\def\ee{\end{equation}}
\definecolor{violeta}{cmyk}{0.07,0.90,0,0.34}
\definecolor{fresa}{cmyk}{0,1,0.50,0}
\DeclareMathOperator{\cosec}{cosec}
\begin{document}
\chapterstyle{daleif3}
\sffamily
\pagestyle{ruled}
\frontmatter

\begin{center}

\thispagestyle{empty}

\vspace*{15.5cm}
{\huge The Exclusivity Principle and the Set of Quantum Correlations}

\vspace{0.5cm}

{\LARGE Bárbara Lopes Amaral}

\vspace{0.5cm}
\begin{Large}
Maio de 2014
\end{Large}

\end{center}

\newpage
\thispagestyle{empty}
\mbox{}

\thispagestyle{empty}

\vfill
\hbox{%
\hspace*{0.2\textwidth}%
\rule{1pt}{\textheight}
\hspace*{0.05\textwidth}%
\parbox[b]{0.75\textwidth}{
\vbox{%
\vspace{2cm}
{\noindent \LARGE The Exclusivity Principle and the Set of Quantum Correlations }\\[2\baselineskip]
{\Large Bárbara Lopes Amaral}\par \vspace{0.3\textheight}
{\Large {Orientador:}\\
Dr. Marcelo Terra Cunha} \vspace*{2cm}
\begin{flushright}
\begin{minipage}{7.8cm}
Tese apresentada \`a UNIVERSIDADE FE\-DE\-RAL DE MINAS GERAIS,
como requisito parcial para a obten\cao do grau de DOUTORA EM
MATEMÁTICA.
\end{minipage}
\end{flushright}
\vspace*{2cm}
{\noindent \large Belo Horizonte, Brasil\\ Maio de 2014}\\[\baselineskip]
}
}
}

\thispagestyle{empty}
\mbox{}
\newpage
\thispagestyle{empty}
\vspace*{\stretch{1}}
\begin{flushright}
\textit{\`A minha casa,  Thales e  Tshabalala.}
\end{flushright}
\vspace{\stretch{4}}
\begin{figure}[h]
\includegraphics[width=0.7\textwidth]{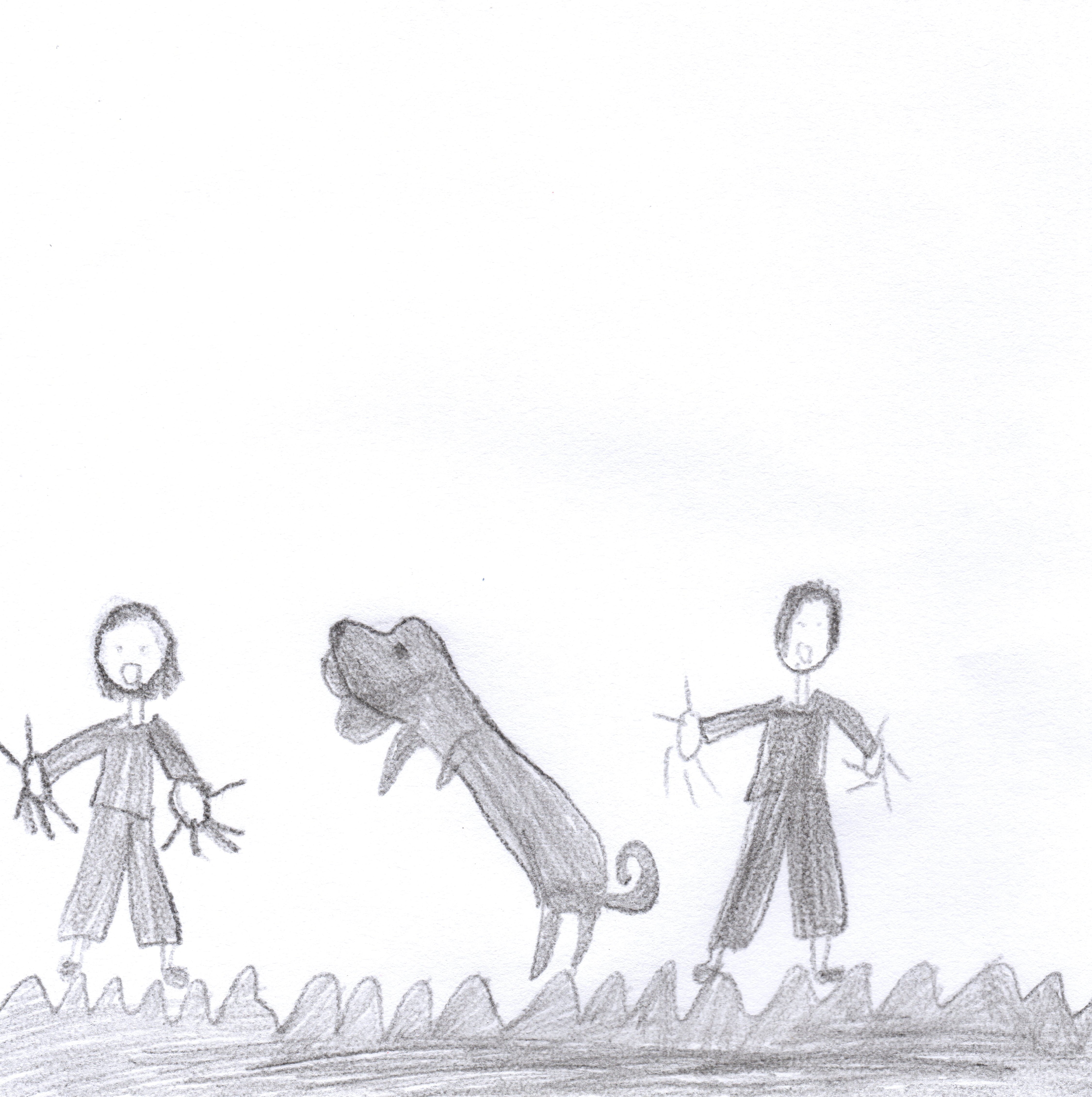}
\end{figure}

\newpage
{\scriptsize
Tenho o direito de ter raiva, de manifest\'a-la, de t\^e-la como motiva\c c\~ao para minha briga tal qual tenho o
direito de amar, de expressar meu amor ao mundo, de t\^e-lo como motiva\c c\~ao de minha briga porque,
hist\'orico, vivo a Hist\'oria como tempo de possibilidade n\~ao de determina\c c\~ao. Se a realidade fosse assim
porque estivesse dito que assim teria de ser n\~ao haveria sequer por que ter raiva. Meu direito \`a raiva
pressup\~oe que, na experi\^encia hist\'orica da qual participo, o amanh\~a n\~ao \'e algo ``pr\'e-dado'', mas um
desafio, um problema. A minha raiva, minha justa ira, se funda na minha revolta em face da nega\c c\~ao do
direito de ``ser mais'' inscrito na natureza dos seres humanos. N\~ao posso, por isso, cruzar os bra\c cos
fatalistamente diante da mis\'eria, esvaziando, desta maneira, minha responsabilidade no discurso c\'inico e
``morno'', que fala da impossibilidade de mudar porque a realidade \'e mesmo assim. O discurso da
acomoda\c c\~ao ou de sua defesa, o discurso da exalta\c c\~ao do sil\^encio imposto de que resulta a imobilidade
dos silenciados, o discurso do elogio da adapta\c c\~ao tomada como fado ou sina \'e um discurso negador da
humaniza\c c\~ao de cuja responsabilidade n\~ao podemos nos eximir. A adapta\c c\~ao a situa\c c\~oes negadoras da
humaniza\c c\~ao s\'o pode ser aceita como consequ\^encia da experi\^encia dominadora, ou como exerc\'icio de
resist\^encia, como t\'atica na luta pol\'itica. Dou a impress\~ao de que aceito hoje a condi\c c\~ao de silenciado para
bem lutar, quando puder, contra a nega\c c\~ao de mim mesmo. Esta quest\~ao, a da legitimidade da raiva
contra a docilidade fatalista diante da nega\c c\~ao das gentes, foi um tema que esteve impl\'icito em toda a
nossa conversa naquela manh\~a.


\vspace{2em}

\'E por isso tamb\'em que n\~ao me parece poss\'ivel nem aceit\'avel a posi\c c\~ao ing\^enua ou, pior,
astutamente neutra de quem estuda, seja o f\'isico, o bi\'ologo, o soci\'ologo, o matem\'atico, ou o pensador da
educa\c c\~ao. Ningu\'em pode estar no mundo, com o mundo e com os outros de forma neutra. N\~ao posso
estar no mundo de luvas nas m\~aos constatando apenas. A acomoda\c c\~ao em mim \'e apenas caminho para a
inser\c c\~ao, que implica decis\~ao, escolha, interven\c c\~ao na realidade. H\'a perguntas a serem feitas
insistentemente por todos n\'os e que nos fazem ver a impossibilidade de estudar por estudar. De estudar
descomprometidamente como se misteriosamente de repente nada tiv\'essemos que ver com o mundo, um
l\'a fora e distante mundo, alheado de n\'os e n\'os dele.

Em favor de que estudo? Em favor de quem? Contra que estudo? Contra quem estudo?

\vspace{2em}

Mas t\~ao decidido quanto antes na luta
por uma educa\c c\~ao que, enquanto ato de conhecimento, n\~ao apenas se centre no ensino dos conte\'udos
mas que desafie o educando a aventurar- se no exerc\'icio de n\~ao s\'o falar da mudan\c ca do mundo, mas de
com ela realmente comprometer- se. Por isso \'e que, para mim, um dos conte\'udos essenciais de qualquer
programa educativo, de sintaxe, de biologia, de f\'isica, de matem\'atica, de ci\^encias sociais \'e o que
possibilita a discuss\~ao da natureza mut\'avel da realidade natural como da hist\'orica e v\^e homens e
mulheres como seres n\~ao apenas capazes de se adaptar ao mundo mas sobretudo de mud\'a-lo. Seres
curiosos, atuantes, falantes, criadores.

\vspace{2em}

Com a vontade enfraquecida, a resist\^encia fr\'agil, a identidade posta em d\'uvida, a auto-estima
esfarrapada, n\~ao se pode lutar. Desta forma, n\~ao se luta contra a explora\c c\~ao das classes dominantes
como n\~ao se luta contra o poder do \'alcool, do fumo ou da maconha. Como n\~ao se pode lutar, por faltar
coragem, vontade, rebeldia, se n\~ao se tem amanh\~a, se n\~ao se tem esperan\c ca. Falta amanh\~a aos
``esfarrapados do mundo'' como falta amanh\~a aos subjugados pelas drogas.
Por isso \'e que toda pr\'atica educativa libertadora, valorizando o exerc\'icio da vontade, da decis\~ao, da
resist\^encia, da escolha; o papel das emo\c c\~oes, dos sentimentos, dos desejos, dos limites; a import\^ancia da
consci\^encia na hist\'oria, o sentido \'etico da presen\c ca humana no mundo, a compreens\~ao da hist\'oria como
possibilidade jamais como determina\c c\~ao, \'e substantivamente esperan\c cosa e, por isso mesmo,
provocadora da esperan\c ca.

\vspace{1em}

\hspace{15em} Paulo Freire, trechos de \emph{Pedagogia da Indigna\c c\~ao.}

\vspace{3em}

\hspace{25em}A esperan\c ca

\hspace{25em}Dan\c ca na corda bamba de sombrinha

\hspace{25em}E em cada passo dessa linha

\hspace{25em}Pode se machucar

\hspace{25em}Azar!

\hspace{25em}A esperan\c ca equilibrista

\hspace{25em}Sabe que o show de todo artista

\hspace{25em}Tem que continuar

 \vspace{1em}
 
 \hspace{15em}Aldir Blanc e Jo\~ao Bosco, \emph{ O B\^ebado e a Equilibrista}. }

 \newpage

\chapter*{\textsf{Agradecimentos}}

Esse \'e o fim de uma era. \'E ineg\'avel que o conhecimento t\'ecnico que eu adquiri durante esses 10 anos \'e imenso, mas o
  que vou guardar de mais  precioso do meu ``tempo de faculdade'' s\~ao as in\'umeras amizades que eu fiz durante esse tempo.
Por esse motivo, as pr\'oximas p\'aginas s\~ao as mais importantes de toda a tese.

Em primeiro lugar, agrade\c co de cora\c c\~ao ao meu querido orientador, Marcelo, que esteve sempre presente em 9 desses anos. 
Muito do que sou hoje \'e fruto do seu trabalho. Agrade\c co tamb\'em por todos os conselhos e discuss\~oes, incluindo 
especialmente as discuss\~oes sobre futebol. Mas essa n\~ao \'e a parte pela qual sou mais grata, porque eu sei que 
nessa parte ele tamb\'em se diverte. Eu devo a ele muitos agradecimentos por todas as horas que ele passou escrevendo projetos,
organizando eventos, cuidando dos v\'arios visitantes
e resolvendo burocracias para que eu e meus colegas de grupo pud\'essemos ter as oportunidades que tivemos
e que ajudaram a transformar o grupo Enlight no que ele \'e hoje. 
Agrade\c co por ter sido compreens\'ivel quando eu decidi trabalhar e por n\~ao ter me deixado desistir nos momentos de
fraqueza. Agrade\c co pelas in\'umeras horas dispensadas na revis\~ao minuciosa
desse texto. Ao Terra e  aos Terr\'aqueos  eu dedico tamb\'em esse trabalho, na esperan\c ca de que ele possa
ser \'util aos Terr\'aqueos futuros. Agrade\c co tamb\'em \`a Mimi e \`a Tat\'a por abrirem m\~ao de  um pouco do seu tempo em fam\'ilia para
que ele pudesse se dedicar \`a nossa orienta\c c\~ao.

Devo tamb\'em meus sinceros agradecimentos ao Professor Ad\'an Cabello, sem o qual esse trabalho n\~ao seria poss\'ivel.
Agrade\c co pelo incentivo e pelas in\'umeras horas de\-di\-ca\-das aos nossos trabalhos em colabora\c c\~ao,  pela aten\c c\~ao e 
pela simpatia de sempre. Agrade\c co a ele e tamb\'em \`a Carmen pela hospitalidade que tornaram meus dias em Sevilha t\~ao agrad\'aveis.

Agrade\c co o Professor Andreas Winter,  Emili Bagan Capella, John Calsamiglia Costa, Ramon Mu\~noz-Tapia,
Anna Sanpera, Marcus Hubber,  Claude Klockl, Alex Monras Blasi, Milan Mosonyi, Rub\'en Quesada, Stefan Baeuml
e a todo pessoal da UAB pela aten\c c\~ao dispensada 
durante minha estadia em 
Barcelona. Agrade\c co especialmente \`a Marionna e ao Elio por me fazerem me sentir em casa a 9 mil quilometros de dist\^ancia.
Agrade\c co  de cora\c c\~ao o Daniel Cavalcanti e   Ariel Bendersky por me ajudarem, especialmente no in\'icio.
Agrade\c co tamb\'em  a todo pessoal do ICFO.

Agrade\c co aos meu pais, \^Angela e Geraldo, por todo amor,
carinho e apoio incondicional. S\'o eles sabem o sacrif\'icio que
fizeram para que eu e minha irm\~a pud\'essemos chegar onde chegamos.
Agrade\c co a  minha irm\~a Luana, especialmente por todas as dicas de ingl\^es, ao \'Atila, especialmente pela obra de arte estilo anos 80 
que ilustra a dedica\'oria dessa tese, ao Di\'ogenes e a toda minha fam\'ilia
querida, epecialmente aqueles que estiveram mais pr\'oximos. Agrade\c co a Los Bochechas pelo apoio que s\'o uma fam\'ilia de verdade
pode nos dar.

Agrade\c co ao DEMAT e DEEST-UFOP por me apoiarem durante a realiza\c c\~ao desse trabalho, epecialmente durante meu afastamento.
Agrade\c co em especial \`a Fufa, \'Eder, J\'ulio, Wenderson, Vin\'icius, Edney, Monique, Isaque,
\'Erica e \'Erica, Tha\'is, Anderson, Fernando e Gra, Claudinha,  Tiago e Di  pelo companherismo, pelos
 momentos de divers\~ao e por dividirem comigo as ang\'ustias inevit\'aveis de quatro anos de doutorado.

Agrade\c co a todos os meus amigos da gradua\c c\~ao,  Diog\~ao e Camila , Samuca, que sempre cuidou de
mim t\~ao bem,  Marquinhos, Breno e Ana, Dudu, E(d)milson e \'Isis.

Agrade\c co a todo pessoal do Enlight. Ao  Professor Marcelo Fran\c ca por todas as discuss\~oes mas principalmente 
por todas as burocracias que ele teve que resolver por n\'os.
Agrade\c co ao Raphael, Pierre, Cristhiano, Pablo, Gl\'aucia e em especial \`a Nadja por tomar conta do suprimento de caf\'e.

Agrade\c co todos os meus professores da f\'isica e da matem\'atica e a todos os 
funcion\'arios dos dois departamentos especialmente ao pessoal da
secretaria da p\'os da matem\'atica e da biblioteca da f\'isica. 

Agrade\c co ao Matthias Kleinmann, Roberto Imbuzeiro, Ernesto Galv\~ao, Raphael Drumond, Remy Sanchis, Gast\~ao Braga,  Bernardo Nunes, Artur Lopes, Alexandre Baraviera, Andreas Winter e Ad\'an
Cabello por aceitarem
o convite de participar da avalia\c c\~ao desse trabalho.

Agrade\c co a todos os Diagonais, em especial ao Leo, ao Pablito e Anderson Silva pela hospedagem,  
ao Carlitos pelo ombro amigo nas
horas de desepero. \`A Ju e ao Robson por mesmo distantes estarem sempre comigo.

Agrade\c co ao meu companheirinho c\~ao, Tshabalala, por estar ao meu lado, literalmente, durante todo o processo de
escrita desse trabalho e tamb\'em ao meu companheiro, Thales, pelo apoio, pelo incentivo, pela paci\^encia, especialmente nessa
reta final que me impediu de estar com voc\^es tanto quanto eu gostaria. Agrade\c co a voc\^es por estarem do meu lado em todos
os aspectos da minha vida, que n\~ao teria a mesma gra\c ca se voc\^es n\~ao estivessem comigo.

Enfim agrade\c co a todas as pessoas maravilhosas que conheci durante
esse tempo e que permanecer\~ao no meu cora\c c\~ao pela vida toda.

\textsf{Ao apoio financeiro das ag\^encias CNPq, FAPEMIG e CAPES,em especial ao Programa Ci\^encias Sem Fronteira por nos possibilitar tantas parcerias de sucesso.}

\newpage
\thispagestyle{empty}
\mbox{}
\newpage
\tableofcontents

\newpage

\mainmatter
\chapter*{\textsf{Introduction}}
\label{chapterintro}
\addcontentsline{toc}{chapter}{Introduction}

\epigraph{Foi preciso que os
fil\'osofos e outros abstractos andassem j\'a meio perdidos na floresta das
suas pr\'oprias elucubra\c c\~es sobre o quase e o zero, que \'e a
maneira plebeia de dizer o ser e o nada, para que o senso comum se
apresentasse prosaicamente, de papel e l\'apis em punho, a demonstrar
por a + b + c que havia quest\~oes muito mais urgentes em que
pensar.}{Jos\'e Saramago, As Intermit\^encias da Morte.}

Quantum theory  provides a set of rules to predict probabilities of different outcomes in different experimental settings. 
While it predicts probabilities which match, with extreme accuracy, the data from actually performed experiments,
it has some peculiar properties which deviate it from how we normally think about 
systems which have a probabilistic description. Two of the ``strange'' characteristics are 
\emph{contextuality} and \emph{ nonlocality}.
The former tells us that we cannot think about a measurement on a quantum system as revealing a property which is 
independent of the set of measurements we chose to make. The later,  describes how measurements made by spatially separated
observers in a multipartite 
quantum system
 can exhibit extremely strong correlations.
Contextuality and nonlocality are the most striking features of quantum theory.
We believe that a complete understanding about these features may be the most important step towards understanding 
the whole theory.

 The necessity of the use of probabilities in the description of an experiment naturally arises
when we do not control all the  parameters involved in it. Our classical intuition leads us to think that
if we could control our devices with perfect accuracy, 
two repetitions of the same procedure with exactly the same value for every possible parameter  had to
provide the same result at the end.
It is natural to imagine that two replicas of the same object will remain identical if they are subjected to the exactly  
same process. If this is not the case, we would have no reason to call them identical in the first place. 

Quantum theory, on the other hand, does not provide definite outcomes for the measurements, even if we have
complete knowledge about the state of the system.
If we have a large set of quantum systems, all prepared in the same state,  we can apply the same measurement to all of them,
obtaining a probability distribution that in general  will exhibit dispersion. This means that for almost all measurements,
at least two outcomes have probability 
larger then zero. If we apply the argument of the previous paragraph,
we would conclude that the systems in this set
could not be identical and hence they could not all
be in the same state. Hence, the state assigned to this preparation by quantum theory can not be everything: there are more
parameters we must use in the description of these systems in order to get definite outcomes for all measurements.
This unknown parameters 
may have different values in our set of systems, and  the probabilistic behavior is due to our lack of knowledge about these
``hidden variables.''

This line of thought led many physicists to believe that quantum theory might be incomplete.  Hence, they conjectured the possibility of 
completing quantum theory, adding extra variables to the quantum description, in a way that with all 
this information (of quantum state plus extra variables) we would be able to predict with certainty the outcome of
all measurements and in a way that when averaging over these extra variables we would get the quantum predictions.
This kind of completion of quantum theory is often called a \emph{hidden-variable model}.

With some very reasonable extra assumptions on these models, we get
a contradiction with the  predictions of quantum theory. If the value associated by the model to a measurement 
 is independent of what other compatible measurements are jointly performed, we say that the model satisfy 
 the \emph{noncontextuality hypothesis}. This demand is consistent with what we expect from classical intuition: 
 physical quantities have predefined values which are only revealed by the measurement process. If
  these values exists prior to the measurement, how can they depend on some choice made at the moment of the  measurement?

 It happens that noncontextual hidden-variable models can not reproduce  quantum statistics. This result 
 is known as the Bell-Kochen-Specker theorem. The result was first proven by Kochen and Specker, and Bell pointed out the 
 assumption of noncontextuality, which was so natural that Kochen and Specker assumed it with no explicit 
 discussion. A huge number of proofs can be found
 on the literature, much simpler then the pioneer proof. 
One of the most common ways to provide a simple proof of this theorem is using the so
called \emph{noncontextuality inequalities}. They are linear inequalities involving the probabilities of 
certain outcomes of the joint measurement of compatible observables that must be obeyed by any
noncontextual hidden-variable model
and can be violated by quantum theory with a particular choice of state and observables.

One of the reasons for studying quantum contextuality  and quantum nonlocality 
is the belief that they are essential for understanding 
quantum theory the same way we understand special relativity. Special relativity can be derived from two simple 
physical principles: the light speed is constant and physics is the same for reference frames in uniform relative motion. 
We cannot do the same for quantum theory and this is one of the most seductive scientific
challenges in recent times. The starting point is assuming general probabilistic theories allowing for probability distributions
that are more general  than those that arise from Kolmogorov's axioms, and even from quantum theory, and
the goal is to find principles that pick out quantum theory from this
landscape of possible theories. There are many ideas on how to do this, and  at least three different approaches to the
problem stand out.

The first one consists of reconstructing quantum theory as a purely
operational probabilistic theory that follows from some
sets of axioms.  Imposing a small number of 
reasonable physical principles, it is possible  to prove that the only consistent probabilistic theory 
is quantum \cite{Hardy01, Hardy11, MM11,CDP11}. 
Although really successful, this approach does not resolve the issue completely, specially because some 
of the principles imposed do not sound so natural. Another drawback is that there is interesting and important quantum effects in simple 
systems (as opposed to composite) that can not be addressed this way.

In the second approach, instead of trying to reconstruct quantum theory, the idea is to 
understand what physical principles explain the nonlocal character of quantum theory.
 Many different principles have been proposed, the most important being non-triviality of communication complexity, 
 Information Causality, Macroscopic Locality and Local Orthogonality \cite{V05, PPKSWZ09, NW09, OW10}. None of them is known to solve the problem completely,
  but many interesting results have been found so far.

The third approach consists of identifying principles that explain the set of quantum contextual correlations  without restrictions imposed by
a specific experimental scenario. 
The belief that identifying the physical principle responsible for quantum contextuality provides a higher probability 
of success than previous
approaches is based on two observations. On one hand, when focusing on quantum contextuality we are just considering a 
natural extension of quantum nonlocality which is free of certain restrictions (composite systems, space-like separated tests 
with multiple observers, entangled states) which play no role in the rules of quantum theory, although they are crucial for many important applications, specially in communication 
protocols (see, for example, references  \cite{wikiquantumcryptography, HHHH09,BBCJPW93} and other references therein), 
and  played an important role in the 
historical debate on whether or not quantum theory is a complete theory.
On the other hand, it is based on the observation that, while calculating the maximum value of quantum correlations for 
nonlocality scenarios is a mathematically complex problem, calculating the maximum contextual value of quantum correlations for an 
{\em arbitrary} scenario  is the solution  of a semidefinite program \cite{CSW14,Lovasz95}.
The difficulties in characterizing quantum nonlocal correlations are due to the 
mathematical difficulties associated to the extra constraints resulting from enforcing a particular labeling of the 
events  in terms of parties, local settings, and outcomes, rather than a 
fundamental difficulty related to the principles of quantum theory.

Within this line of research, the most promising candidate for being {\em the} fundamental principle of quantum contextuality 
is the Exclusivity principle,   which can be stated as follows: 

\begin{center}The sum of the probabilities of a set of pairwise 
exclusive events cannot exceed~1.
\end{center}

By itself, the Exclusivity principle 
singles out the maximum quantum value for some important Bell and 
noncontextuality
inequalities. 
We can get better results if we apply the E principle to more sophisticated scenarios. This happens
because this principle exhibits \emph{activation effects}: a distribution satisfying this principles does not necessarily
satisfies it when combined with other distributions. Activation effects can be used to prove that
the Exclusivity principle singles out the set of quantum distributions for the most simple noncontextuality inequality.
It is still not known if the exclusivity principle
 solves the problem of explaining quantum contextuality completely, but many results have been proven that
support the conjecture that it might.
The main purpose of this thesis is to discuss in detail the situations in which the E principle can be used to rule out distributions outside the quantum set. 

In chapter \ref{chaptergpt} we start the discussion defining the generalized probability theories that are suitable for the description of states and measurements in a physical system
\cite{Barret06, BW12}. 
We will try to  keep the assumptions as general as possible, but for the purposes of this work it is sufficient
to consider a class of theories that satisfy further restrictions that do not have a physical meaning and will be made solely 
to simplify the description. Nonetheless, our framework is general enough to include as special cases the mathematical 
structure of finite dimensional quantum theory and classical probability theory with finite sample spaces.

In chapter \ref{chapterncinequalities} we discuss in detail the assumption of noncontextuality. 
 We present two different approaches, both connected with graph theory: 
 the compatibility-hypergraph approach and the exclusivity-graph approach \cite{CSW14}. 
 The graph-theoretical formulation of quantum contextuality supplies new tools to understand the 
 differences between quantum and classical theories and also the differences between quantum theory and more
general theories \cite{Cabello13, Yan13, ATC14}. 

The pioneer proof of Kochen and Specker is out of the scope of this thesis, but we present it in appendix \ref{chaptercontextuality}.
There the reader can  find a brief discussion on the first attempts to prove the impossibility of hidden-variables models compatible with quantum theory and
 other interesting state-independent proofs of the Kochen-Specker theorem.

In chapter \ref{chapterlovasz} we prove the recent results supporting the conjecture that the E principle might explain the set of quantum distributions in the exclusivity-graph approach
to quantum contextuality. 
The most important results are the ones we have proven in reference \cite{ATC14}. There we show that the Exclusivity principle singles out the \emph{entire set of quantum correlations} 
associated to any exclusivity graph assuming the set of quantum correlations for the complementary graph. 
Moreover,  for self-complementary graphs, the Exclusivity principle, {\em by itself} (i.e., without further assumptions), 
excludes any set of correlations strictly larger than the quantum set. Finally,  for vertex-transitive graphs, 
the Exclusivity principle singles out the maximum value for the quantum correlations assuming only the quantum maximum for 
the complementary graph.  We also show 
 that important results can be proven if we use  graph operations other then complementation and as a consequence
 we show that the exclusivity principle explains the quantum maximum for all vertex-transitive graphs with
 $10$ vertices, except two\footnote{If the E principle explains the quantum bound for one of them, the result of Yan \cite{Yan13} proves that the E principle also explain the quantum bound for the other.}. 
These results show that the Exclusivity principle goes beyond any other proposed principle towards the objective of 
singling out quantum correlations.

Since we made no original contribution to Bell inequalities, the concept of Bell scenarios will only be introduced in appendix \ref{chapternonlocality}. Bell scenarios provide a natural way to enforce  the noncontextuality assumption, since in these situations the experiment is designed in such a way that 
the choice of the different compatible observables to be measured
is made in a different region of the space in a time interval that forbids any signal to be
sent from one region to the other. Since no signal was sent, the choice of what is going to be measured in
one part can not disturb what happens in the other, what guarantees that the model is noncontextual.
In this situation, we say that the model is local and the noncontextuality assumption is usually referred to as the 
\emph{locality assumption}.

Although nowadays we may see quantum nonlocality as a special case of quantum contextuality, historically the discussion of nonlocality in quantum theory preceded the discussion about its noncontextual character.
Quantum nonlocality puzzled the famous trio  Einstein, Podolsky, and Rosen, who discussed this strange property of quantum theory in their  pioneer paper  
``\emph{Can Quantum-Mechanical description of Physical 
Reality Be Considered Complete?}'' in 1935 \cite{EPR35}.
They started one of the greatest debates in foundations of physics and philosophy of science 
in general, that is still fruitful nowadays.

The first one to provide a proof of the impossibility of \emph{local hidden-variable models} was John Bell, in 1964 \cite{Bell64}.
He demonstrated that if the statistics of joint measurements on a pair of two qubits in the singlet state were 
given by a hidden-variable model, a linear inequality involving the corresponding probabilities should be satisfied.
A simple choice a measurements leads to a violation of this inequality, and hence the model can not reproduce
the quantum statistics. 

Many similar inequalities were derived since Bell's work. Because of his pioneer paper, any inequality
derived under the assumption of a local hidden-variable model is called \emph{Bell inequality}.
Quantum theory violates these inequalities in  many situations. Besides the insight given in foundations of 
quantum theory, those violations are also connected to many interesting applications.

The quest for a principle that explains the set of quantum distributions in Bell scenarios has been very fruitful. 
For completeness,  a brief discussion can be found in appendix \ref{chaptertsirelson}.

We will state, and sometimes prove, many results that can be found in the literature. These results will be referred to as \emph{Theorems}.
The original results of the author and collaborators will be referred to as \emph{Propositions}. We will use a huge number of tools from many different areas of mathematics  and physics. This makes
 a proper introduction of some subjects impractical.
Typically, the necessary mathematical definitions will be given in the text, but nor its consequences, nor other previous necessary concepts will find room in the text. We list the concepts we will need, 
along with references where a proper discussion can be found.

\begin{enumerate}
\item Linear algebra: vector spaces, linear maps, matrices, basis, inner products, orthogonal complements, tensor products; Finite dimensional Hilbert spaces. 
An introduction to the the subject can be found in references   \cite{HK61,Lang87};

\item Convex Geometry: we assume that the reader is familiar with the notions of convex sets, convex sums, convex cones, polytopes and H-descriptions. The reader can learn about this subjects in references 
\cite{Rockafellar97};

\item Basic probability theory: finite sample spaces, $\sigma$-algebras and measures. We give a brief introduction in section \ref{sectionclassical} and suggest references \cite{SW95, GS01, James04}
for a more complete treatment.

\item Quantum theory in finite dimension. We present the mathematical aspects in section \ref{sectionquantum}.
We recommend references \cite{Feynman65,Cohen77,Peres95, NC00, Griffiths05, ATB11}.

\item Ordered linear spaces and order unit spaces \cite{Jameson70}.  

\item Category theory, morphisms, opposite category, symmetric monoidal category.  All these definition can be found in reference \cite{Lane98}.

\item Sheaf theory. We define very briefly the  objects we use and recommend reference \cite{MM92} for a complete treatment.

\end{enumerate}

We thank very much all who spent some of their time reading this work. Any comments, questions or suggestions are welcome.

\begin{flushright}\begin{minipage}[r]{2\columnwidth}
B\'arbara Amaral\newline barbaraamaral@gmail.com

 \end{minipage}\end{flushright}

\chapter{\textsf{Generalized Probability Theories}}
\label{chaptergpt}

In this chapter we study generalized probability theories that can be used to describe states and measurements
in a physical system. We will not focus on any particular kind of  system. Our intention is to discuss only the 
abstract mathematical structure behind the description and what the consequences are of assuming a particular type 
of theory. A number of requirements imposed by physical reasoning must be obeyed by all  theories in this framework 
and for now we will try to  keep the assumptions as general as possible. For the purposes of this work it is sufficient
to consider a class of theories that satisfy further restrictions that do not have a physical meaning and will be made solely 
to simplify the description. Nonetheless, our framework is general enough to include as special cases the mathematical 
structure of finite dimensional quantum theory and classical probability theory with finite sample spaces, the subjects 
of the  sections \ref{sectionquantum} and \ref{sectionclassical}, respectively. In section \ref{sectionstates} we define
states and measurements in a physical system and in section \ref{sectionmultipartite} we discuss the mathematical description
of a multipartite system. A  mathematical formalization  of  these concepts is presented in section 
\ref{sectionmoremathematics}. We finish this chapter with general properties of the theories in section 
\ref{sectiongeneral}.

\section{\textsf{States and Measurements}}
\label{sectionstates}

As we said above, our purpose in this chapter is to find a suitable mathematical structure that we can 
apply in the description of experiments carried in a hypothetical physical system. We follow
the ideas presented by Barrett in reference \cite{Barret06}. 

Our first assumption 
is about the nature of the experiments that can be performed in this system. We assume that there are two kinds 
of experiments available: preparations and operations.  Another important requirement is that these experiments be
repeatable: every preparation and every operation can be done as many times as we want and we can use several 
repetitions of a given procedure to count relative frequencies. For each operation there may be several different 
outcomes, each occurring with a well defined probability for a given preparation. Preparations can be compared through 
their statistics in relation to the given operations, and these statistics define a state.

\begin{defi}
Two preparations are equivalent if they give the same probability distribution for all available operations. 
The equivalence class of preparations is called a \emph{state}.
\end{defi}

\begin{defi}
A set of operations is called \emph{informationally complete} or \emph{tomographic} if the list of probabilities
for the outcomes of these operations  
completely specifies the state of the system. 
\end{defi}

For every system there is  a set of tomographic operations. In the worst case scenario, we can take 
the entire set of operations as a tomographic set. This is not the case in general, 
since only a small subset of the available operations is needed to describe the state completely. 
The set of tomographic operations is not unique and we will not assume it to be a minimal set, in the 
sense that it might be the case that removing some operations we still get a tomographic set. This set is not always 
finite, but we will only consider the cases in which a finite tomographic set exists.

\begin{as}
The state of the system can be completely specified by listing the probabilities of the outcomes of a finite set of tomographic 
operations each of them with a finite set of possible outcomes.
\label{asfinitetomographical}
\end{as}

This restriction is not a physical requirement and it is really easy to come up with real physical systems that 
require an infinite set of tomographic operations or tomographic operations with an infinite number of outcomes. 
We are just narrowing down the kind of problems we will deal with in this work.

If we fix the set of tomographic operations $\{M_1,M_2, \ldots, M_n\}$, each $M_i$ with outcomes $\{1, 2, \ldots, m_i\}$, every state can be represented by a list of probabilities:

\be
P=\left[\begin{array}{c}
p(1|M_1)\\
\vdots\\
p(m_1|M_1)\\
p(1|M_2)\\
\vdots\\
p(m_2|M_2)\\
\vdots\\
p(1|M_n)\\
\vdots\\
p(m_n|M_n)\end{array}\right] \in \mathbb{R}^d\ee
in which $p(i|j)$ is the probability of outcome $i$ given that the operation $j$ was applied and 
$d=\sum_{i=1}^{n}m_i$. Since the entries represent probabilities, we have $p(i|j) \geq 0$ and
$$\sum_i p(i|j)=1$$
for every tomographic operation $j$.
Nevertheless, it will be convenient to use also subnormalized states with
\be\sum_i p(i|j)=p \label{unorm}\ee
where $0 \leq p \leq 1$ and $p$ is independent of the tomographic operation $j$. The value $p$ is called the
norm of the state $P$ and will be denoted by $|P|$. 
These subnormalized  states have a physical interpretation: suppose an operation $j$ is performed in a normalized state 
and an outcome $i$ is obtained with probability $p$ less than one. There is a subnormalized state of the 
form \eqref{unorm} associated with this outcome, and each entry $p(k,i|l,j)=p(i|j)\cdot p(k|l)$ of this state corresponds to the probability of 
obtaining outcome $i$ in operation $j$ followed by outcome $k$ in the tomographic operation $l$.

With this interpretation, the vector with all entries equal to zero, denoted by $\overrightarrow{0}$, is an allowed 
(subnormalized) state of every system. This state can be prepared in the following way: suppose we prepare a state for
which outcome $i$ of operation $M$ has probability zero; each entry $p(k|j)$ of the state of the system associated to this
outcome  is the probability of getting $i$ in the first operation and $k$ in the tomographic operation $j$, 
and since outcome $i$ is a zero probability event, all the entries of this vector are zero.

\begin{as}
For each  system the set of allowed normalized states is closed and convex. 
The complete set of states $\mathcal{S}$ is the convex hull of the set of allowed normalized states 
and $\overrightarrow{0}$. The set $\mathcal{S}$ is called the \emph{state space} of the system.
\label{asstatespace}
\end{as}

\begin{defi}
The extremal points of the state space $\mathcal{S}$ are called \emph{pure states}.
The points that are not extremal are called \emph{mixed states}, and can be written as a convex sum of pure states. 
Convex sums are also called \emph{mixtures}.
\end{defi}

\begin{defi}
 We say that a state is \emph{dispersion free} if it provides definite outcomes for all measurements, that is, if 
 for every measurement there is one outcome with probability one.
\end{defi}

If a model admits dispersion free states, then these states are pure. The converse is not always true: some models
may admit pure states that are not dispersion free. This is the case of quantum theory, as we will see 
in section \ref{sectionquantum}.

 When an operation $M$ is performed,  each outcome $i$ is associated to a transformation $f_i$ of the state of the system:
 \be P \mapsto f_i(P).\ee
  The entry $p(k|j)$ of $f_i(P)$ is the probability of obtaining outcome $i$ in operation $M$ followed by outcome $k$ 
  in the tomographic operation $j$.
  Operations with only one outcome preserve normalization.  If the transformation is associated with an outcome 
  that occurs with probability $p< 1$, then it decreases the norm of the state
by a factor of $p$.
\begin{defi}
Operations with more then one outcome are called \emph{measurements}.
\end{defi}

\begin{as}
 We require that the  transformations  preserve mixtures. This means that if
 \begin{subequations}
 \be
 P=\sum_ip_iP_i
 \ee
 then
 \be
  f(P)=
  \sum_ip_if(P_i).
  \label{fmixture}
  \ee

  The physical interpretation of the vector $\overrightarrow{0}$ requires that
  \be
  f\left(\overrightarrow{0}\right)=\overrightarrow{0}.
  \label{fzero}
  \ee
  \label{subeqf}
  \end{subequations}
  \end{as}

In fact, state vector $\overrightarrow{0}$ is prepared when we condition on an outcome $i$ of a measurement $j$
that happens with probability zero. Let $f$ be associated to outcome $k$ of some measurement $l$. Then the entry 
$p(r|s)$ of $f(\overrightarrow{0})$ is the probability of obtaining outcome $i$ in the  measurement $j$, followed by 
outcome $k$ in measurement $l$, followed by outcome $r$ in tomographic measurement $s$. 
Since outcome $i$ is a zero probability event in the first place, all these entries are zero and  
equation \eqref{fzero} follows.

The conditions above imply that we can take $f$ to be linear \cite{Barret06}.

\begin{teo}
The transformation $f$ associated to an operation acting on the state of a physical system can be extended to
a linear operation on $\mathbb{R}^d$.
\end{teo}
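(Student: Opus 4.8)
The plan is to exploit the two structural properties granted by Assumption~\ref{subeqf}---that $f$ preserves mixtures and fixes $\overrightarrow{0}$---to first build a linear map on the cone generated by the state space, and only then extend it to all of $\mathbb{R}^d$. The starting observation is that mixture preservation says precisely that $f$ is affine on the convex set $\mathcal{S}$, and combining this with $f(\overrightarrow{0})=\overrightarrow{0}$ already yields homogeneity for scalars in $[0,1]$: writing $tP = tP + (1-t)\overrightarrow{0}$ for a state $P\in\mathcal{S}$ and $t\in[0,1]$, mixture preservation gives $f(tP)=t\,f(P)$.

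Next I would define the extension $F$ on the cone $C=\{\lambda P : \lambda\geq 0,\ P\in\mathcal{S}\}$ by positive homogeneity, setting $F(\lambda P)=\lambda f(P)$. The first thing to check is that this is well defined: if $\lambda_1 P_1=\lambda_2 P_2$ with $P_i\in\mathcal{S}$ and, say, $\lambda_2\leq\lambda_1$, then $P_1=(\lambda_2/\lambda_1)P_2$ with $\lambda_2/\lambda_1\in[0,1]$, and the homogeneity just established forces $\lambda_1 f(P_1)=\lambda_2 f(P_2)$. Additivity on $C$ then follows from convexity together with affineness: given $u=\alpha a$ and $v=\beta b$ with $a,b\in\mathcal{S}$ and $\gamma=\alpha+\beta>0$, the point $c=(\alpha/\gamma)a+(\beta/\gamma)b$ lies in $\mathcal{S}$ by convexity, so $u+v=\gamma c$ and
\begin{equation}
F(u+v)=\gamma f(c)=\gamma\left(\frac{\alpha}{\gamma}f(a)+\frac{\beta}{\gamma}f(b)\right)=F(u)+F(v),
\end{equation}
the middle equality being affineness. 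Thus $F$ is additive and positively homogeneous on $C$.

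With $F$ linear on the cone, the final step is to push it to the linear span $V=C-C$ by declaring $F(u-v)=F(u)-F(v)$ for $u,v\in C$; this is well defined because $u-v=u'-v'$ rewrites as $u+v'=u'+v$, to which additivity on $C$ applies. The resulting $F$ is linear on $V$ and agrees with $f$ on $\mathcal{S}\subseteq C$. Since $V$ is a subspace of $\mathbb{R}^d$, any linear extension of $F$ from $V$ to all of $\mathbb{R}^d$ (for instance, setting $F\equiv 0$ on a chosen complement of $V$) completes the argument.

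The one point deserving real care---and the main obstacle---is the bookkeeping of well-definedness at each extension stage, since a given vector in $C$ or in $V$ admits many representations $\lambda P$ or $u-v$; every such ambiguity must be reconciled, and each reconciliation is exactly where one of the two hypotheses ($[0,1]$-homogeneity or additivity on the cone) is invoked. A secondary subtlety is that $\mathrm{span}(\mathcal{S})$ need not be all of $\mathbb{R}^d$, because the normalization constraints can confine $\mathcal{S}$ to a proper affine region; this is why the extension from $V$ to $\mathbb{R}^d$ is best stated as a separate, and harmless, last step rather than absorbed into the cone construction.
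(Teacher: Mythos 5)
Your proposal is correct and follows essentially the same route as the paper's proof: establish $[0,1]$-homogeneity from mixture preservation and $f(\overrightarrow{0})=\overrightarrow{0}$, extend by positive homogeneity to the cone generated by $\mathcal{S}$, obtain additivity there via the convexity rescaling trick, pass to the linear span by taking differences (the paper phrases this as preservation of real linear combinations within $\mathcal{S}_+$, which is the same well-definedness check), and finally extend linearly to all of $\mathbb{R}^d$. The only differences are cosmetic bookkeeping choices, not mathematical substance.
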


\begin{dem}
Equations \eqref{subeqf} imply that $f(rP)=rf(P) \ \forall \ P \in \mathcal{S}$ and $0\leq r \leq 1$. In fact, under these conditions
\be f(rP)=f\left(rP + (1-r)\overrightarrow{0}\right)=rf(P) +(1-r)f(\overrightarrow{0})=rf(P).\label{frP}\ee

Suppose $P\in \mathcal{S}$  and $r>1$. If $rP=P' \in \mathcal{S}$, then $f(rP)= rf(P)$ since $f(P)=f\left(\frac{1}{r}P'\right)$ and 
by equation \eqref{frP}, $f\left(\frac{1}{r}P'\right)=\frac{1}{r}f(P').$ If $rP \notin \mathcal{S}$, we can extend $f$ using the rule
$$ f(rP)=rf(P).$$


Let $\mathcal{S}_+$ be  the set of vectors of the form $rP, \ P \in \mathcal{S}, \ r\geq 0.$ This set 
is a convex cone and $f(rP)=rf(P) \ \forall \ P \in \mathcal{S}_+$ and $ r \geq 0$. It is also true that
\be f\left(\sum_i r_iP_i\right)=\sum_i r_if(P_i), \ \forall \ P_i \in \mathcal{S}_+, \ r_i\geq 0.\label{fs+2}\ee
To prove this, let $P_i=s_iP'_i, \ s_i \geq 0, \ P'_i \in \mathcal{S}$ and $c=\sum_ir_is_i$. Then
$$f\left(\sum_i r_iP_i\right)=f\left(c\sum_i \frac{r_is_i}{c}P'_i\right)$$
and since $\sum_i \frac{r_is_i}{c}P'_i \in \mathcal{S}$
$$f\left(c\sum_i \frac{r_is_i}{c}P'_i\right)=cf\left(\sum_i \frac{r_is_i}{c}P'_i\right)=c\sum_i \frac{r_is_i}{c}f\left(P'_i\right)=
 \sum_i r_is_if\left(P'_i\right)= \sum_i r_if\left(P_i\right).$$

Now we prove that equation \eqref{fs+2} is also true if the coefficients $r_i$ are real.
Let $Q \in \mathcal{S}_+$ such that
$$Q=\sum_i r_iP_i, \ \  P_i \in \mathcal{S}_+, \ r_i \in \mathbb{R}.$$
We can rewrite the above expression as
$$Q + \sum_{r_i <0}|r_i|P_i= \sum_{r_i >0}r_iP_i$$
and applying $f$ to both sides of this equation we get
$$f(Q) + \sum_{r_i <0}|r_i|f(P_i)= \sum_{r_i >0}r_if(P_i)$$
which implies
$$f(Q)=\sum_ir_if(P_i).$$

This proves that $f$ is linear in $\mathcal{S}_+$. If $Q$ belongs to the subspace spanned by $\mathcal{S}_+$, $f(Q)$ can be defined uniquely by linear extension. The action on the orthogonal complement of this subspace is arbitrary and we can define it to be linear.  Then $f$ can be extended linearly to the rest of the vector space $\mathbb{R}^d$.
\end{dem}

This result implies that every transformation can be written as
\be f(P)=MP \ee
where $M$ is a matrix acting on $\mathbb{R}^d$.

An operation is associated to a set of matrices $\{M_i\}$, each $M_i$ corresponding to an outcome $i$ of this operation. The subnormalized state associated to outcome $i$ is $M_iP \in \mathcal{S} $ and the unnormalized probability of $i$ is $|M_iP|$. This means that if $P$ is normalized, the probability of outcome $i$ is $|M_iP|$.

As one should expect, not every set of matrices $\{M_i\}$ corresponds to a valid operation on the system, since some physical requirements must be satisfied.

\begin{cons}
If a set of matrices $\{M_i\}$ represents an operation, the following conditions must hold
\begin{enumerate}
\item Positivity: $0\leq \frac{|M_iP|}{|P|} \leq 1, \ \forall i, \ \forall P\in \mathcal{S}\setminus \{\vec{0}\}$; \label{coperation1}
\item Normalization:  $\sum_i\frac{|M_iP|}{|P|}=1, \ \forall  P \in \mathcal{S}$; \label{coperation2}
\item State preservation: $M_iP \in \mathcal{S}, \ \forall  P \in \mathcal{S}$; \label{coperation3}
\item Complete state preservation: Each transformation $M_i$ must result in allowed states when it acts on a system that is a part of a larger multipartite system. \label{coperation4}
\end{enumerate}
\label{coperation}
\end{cons}

Item \ref{coperation1} of constraint \ref{coperation} must be satisfied because the probability of an outcome is a real number between zero and one. 
Item \ref{coperation2} follows from the fact that the sum of the probability of all outcomes must be one. 
Itens \ref{coperation3} and \ref{coperation4} follow from the fact that any transformation must take an allowed state 
to another allowed state, whether we considerer the system alone or as a part of a larger system composed of several parties. 
We will talk about item \ref{coperation4} again in section \ref{sectionmultipartite}.

\begin{as}
For each system there is a set $\mathcal{T}$ of allowed transformations.  
This set is convex and includes the transformation that takes all $P$ to 
the vector $\overrightarrow{0}$.\label{asallowedtransf}
\end{as}

\begin{defi} An operation is a set of allowed transformations 
$\{M_i\}$, $M_i \in \mathcal{T}$, satisfying constraint \ref{coperation}.
\end{defi}

The set $\mathcal{T}$ can be viewed as a set of possible outcomes for the available operations, each outcome 
represented by a matrix
$M_i \in \mathcal{T}$. Distinct operations may share some outcomes, since a matrix $M_i$ can appear in different measurements. 
The probability of a given outcome does not depend on the measurement in which it appears.

\begin{defi}
 The pair $\left(\mathcal{S}, \mathcal{T}\right)$ is called a \emph{probabilistic model}. 
 A \emph{probability theory} is a collection of
 probabilistic models.
 \label{defimodel}
\end{defi}

The same model can describe different systems. This happens because the description of a real physical system also depends
on how we connect the real experiments with the mathematical objects in the model.
It is also possible that the same system is described by apparently  different models. For example, we could use a different set of 
tomographic measurements and obtain a model in a different vector space and consequently,  a different set of matrices representing 
allowed operations. This difference is irrelevant, since the physics represented by each of them is the same. 

\begin{defi}
 Two probabilistic models $\left(\mathcal{S}_1, \mathcal{T}_1\right)$ and $\left(\mathcal{S}_2, \mathcal{T}_2\right)$
  are \emph{equivalent} if there exist linear bijections
 \begin{eqnarray*}
  \xi : \mathcal{S}_1 &\longrightarrow &\mathcal{S}_2\\
    \zeta : \mathcal{T}_1 &\longrightarrow &\mathcal{T}_2
 \end{eqnarray*}
such that
\begin{equation*} \left|MP\right|= \left|\zeta\left(M\right)\xi\left(P\right)\right|\end{equation*}
for every $M \in \mathcal{T}_1$ and every $P \in \mathcal{S}_1$.\footnote{We do not assume that $\mathcal{S}_1$ and $\mathcal{S}_2$
are subsets of the same real vector space, that is, the number of entries in the vectors representing the states does not have to
be the same.}

\end{defi}

\begin{defi}
If two models belong to the same equivalence class  under the equivalence above, we say that
they describe the same  \emph{type of system}.
\end{defi}

All models describing a given type of system  are equally good. Some of them might be more practical or more appropriate in a 
particular situation, but the choice of 
one instead of the others is just a mater of taste.

\subsection{\textsf{Repeatability}}

In the beginning of this section we mentioned that experiments must be repeatable. 
This means that  every preparation and operation we consider
can be done as many times as we want in the same conditions, what 
allow us to define the statistics of every sequence of experiments. The word \emph{repeatability} will be used again 
with a 
different meaning in the definition of \emph{repeatability of outcomes}. We apologize for the inconvenient use of the same word
for both concepts, but we have no better option in neither case.

\begin{defi}
 A measurement $i$ has \emph{repeatable outcomes} if every time this measurement is performed and an outcome $k$ is obtained, a  subsequent measurement
 of $i$ gives outcome $k$ with probability one.
\end{defi}

In this chapter we still allow measurements with non-repeatable outcomes. In some cases it might be
important to restrict the discussion to the case of repeatable outcomes, and we will do that further when we talk 
about contextuality.

\subsection{\textsf{Compatibility for outcome-repeatable measurements}}

One of the implications of a more general theory for computing probabilities than the usual classical probability theory 
is that in some cases there is not a well defined probability for the results of all measurements in a given set. 
When this global probability distribution exists for all states, we say that the measurements are compatible. This is not new for the reader familiar 
with quantum theory, where non-compatibility is the rule, not the exception. 
 
\begin{defi}
A set of outcome-repeatable measurements $\{j_1, \ldots, j_n\}$  is \emph{compatible} if there is another measurement $j$ with
outcomes 
$\{1, \ldots, m\}$  and functions $f_1, \ldots, f_n$ such that the possible outcomes of each
$j_s$ are $f_s\left(\{1, \ldots, m\}\right)$ and
\be p\left(i|j_s\right)=\sum_{k \in  f^{-1}_s(i)}p\left(k|j\right).\ee
The measurement $j$ is called a \emph{refinement} of each $j_i$, and each $j_i$ is called a \emph{coarse graining} of $j$.
\label{deficompatible}
\end{defi}

If the measurements $\{j_1, \ldots,j_n\}$ are compatible, the probability of a set of outcomes
$i_1, \ldots, i_n|j_1, \ldots, j_n$ is well defined and it is equal to the probability of outcomes 
$\bigcap_kf^{-1}_k(i_k)$ for measurement $j$. 

The notion of compatibility is essential in quantum theory, specially in the problems of non-contextuality we will present 
in chapter \ref{chapterncinequalities}. It is connected to the idea of ``measurements that can be performed at once''. If a set of measurements is 
 compatible, they can be measured jointly on the same individual system without disturbing the results of each other. In practice, to measure all of them
at the same time we apply measurement $M$ in definition  \ref{deficompatible} and then use functions $f_i$ to find out the outcomes of each $M_i$.
 Compatible measurements can be made simultaneously or in any order and can be repeated any number of times in the same system and repeatability
 of the results must be preserved. We will come back to 
this subject many times in the text and in section \ref{sectionquantum} we will see how non-compatible measurements appear in quantum theory.

\section{\textsf{Multipartite systems}}
\label{sectionmultipartite}

In this section we will see how we can describe multipartite systems in general probability theories. 
As for the simple systems, the probability theories used for composite systems must obey some requirements that 
come from natural physical assumptions.

\begin{as}
For every system composed of several parties, we assume that operations that act on only one of the parties are allowed. 
These operations are called \emph{local operations}.
\label{localoperations}
\end{as}

Although the parties do not need to be spatially separated, this is the case most of the times we deal with multipartite
systems. Thais motivates the use  of the word \emph{local} for the operations acting in only one party of the system.

\begin{as}[Local operations commute] Suppose that for each subsystem $i$ of a multipartite system, an operation $M_i$ is 
performed.  Then the state of the composite system  after the sequence of operations $M_i$ does not depend on  the particular order in which the operations were applied.
\label{locommute}
\end{as}

This assumption means that local operations can be regarded as performed simultaneously on each subsystem. This implies that
for each measurement the joint probabilities 
$$p(r_1, \ldots, r_n|M_1, \ldots, M_n)$$ are well defined, where $r_i$ is the outcome of measurement $M_i$ on party $i$.

An important corollary of assumption \ref{locommute} is that for all composite systems  no-signaling  holds \cite{Barret06}. 
This property states that any of the parties  cannot signal its choice of input to the others. Physically, this is a reasonable restriction: since there may be a large spatial separation between the parties, signaling between them would potentially require faster-then-light communication, which would violate the most fundamental principle of special relativity.
 
\begin{cor}[No-signaling]
If an operation was performed on  system $i$, it is not possible to get information about which operation was performed 
by measuring another system $j$.
\end{cor}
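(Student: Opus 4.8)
The plan is to derive the no-signaling corollary directly from Assumption~\ref{locommute} (local operations commute) together with the state-preservation and normalization conditions of Constraint~\ref{coperation}. The statement we must prove is that if party~$i$ performs one of several possible operations, the outcome statistics observed by party~$j$ alone are independent of which operation $i$ chose. Concretely, suppose party~$i$ can perform either operation $\{A_a\}$ or operation $\{B_b\}$, while party~$j$ performs a fixed operation $\{C_c\}$. I want to show that the marginal probability that $j$ records outcome $c$,
\be
p(c \mid C) = \sum_a p(a,c \mid A, C) = \sum_b p(b,c \mid B, C),
\ee
so that this marginal does not reveal whether $i$ applied $A$ or $B$.

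First I would set up the joint probabilities using the matrix representation developed earlier: a local operation on party~$i$ is represented by a family of matrices $\{A_a\}$ acting on the $i$-th tensor factor, and since these act on disjoint subsystems the global transformation associated to the pair of outcomes $(a,c)$ is the composition of the two local maps applied to the joint state $P$. The key input is Assumption~\ref{locommute}: because local operations commute, the resulting subnormalized state is independent of the order, and the joint outcome probability is $p(a,c \mid A, C) = \left| A_a C_c P \right|$ with $A_a$ and $C_c$ acting on different factors. Summing over the outcomes $a$ of party~$i$ and invoking the normalization condition~\ref{coperation2}, which guarantees $\sum_a \left| A_a Q \right| = \left| Q \right|$ for every allowed state $Q = C_c P$, I would obtain
\be
\sum_a p(a,c \mid A, C) = \sum_a \left| A_a C_c P \right| = \left| C_c P \right| = p(c \mid C).
\ee
The crucial point is that the right-hand side depends only on $C$ and on the marginal state seen by $j$, not on the labels $A_a$ at all.

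Repeating the identical computation with $\{B_b\}$ in place of $\{A_a\}$ gives $\sum_b p(b,c \mid B,C) = \left| C_c P \right| = p(c \mid C)$, so the two marginals coincide and party~$j$ can extract no information about $i$'s choice. The same argument generalizes verbatim to more than two parties by treating all subsystems other than $i$ collectively as the ``observer'' factor. The main obstacle, and the place I would be most careful, is justifying that the joint probability genuinely factors as the norm of a commuting composition $\left| A_a C_c P \right|$ rather than something order-dependent; this is exactly what Assumption~\ref{locommute} supplies, so the proof is really about correctly invoking that assumption and then applying completeness of the local operation through normalization condition~\ref{coperation2}. Everything else is a short, linear computation.
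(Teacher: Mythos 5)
Your proof is correct, and it pivots on the same key input as the paper's proof: Assumption \ref{locommute}. The difference lies in how the argument is finished. The paper argues conceptually: since local operations commute, the probability of outcome $k$ for the operation on system $j$ in the sequence ``$M_i$ then $M_j$'' equals its probability when $M_j$ is performed first, and an operation performed first manifestly cannot have statistics depending on an operation yet to be applied; hence $p(k|M_i,M_j)=p(k|M_j)$, with no explicit marginalization. You instead make the marginalization explicit: writing the joint probability as $\left|A_a C_c P\right|$ (well defined precisely because of commutativity) and summing over $a$ using the normalization condition (item \ref{coperation2} of Constraint \ref{coperation}) applied to the operation $\{A_a\otimes I\}$ on the composite system, you obtain $\sum_a \left|A_a C_c P\right| = \left|C_c P\right|$, which visibly does not depend on the choice of operation on system $i$. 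Your route is slightly longer but more self-contained: it identifies exactly which axioms are consumed (commutativity, normalization of operations, and the fact that $A_a \otimes I$ and $I \otimes C_c$ are allowed transformations of the composite system), whereas the paper's temporal argument leaves implicit the step that the marginal over $i$'s outcomes equals $p(k|M_j)$. Both arguments are valid; yours trades the paper's brevity for explicitness about the role of the normalization constraint.
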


\begin{dem}
Suppose an operation $M_i$ was performed on system $i$ and afterwards we apply operation $M_j$ on system $j$.
By assumption \ref{locommute}, the probability of getting outcome $k$ for measurement $M_j$ in this sequence of 
operations is equal to the probability of this outcome if $M_j$ was performed first and then
$$p(k|M_i,M_j)=p(k|M_j),$$
which implies that $p(k|M_i,M_j)$ does not depend on measurement $M_i$. This implies that no 
information on $M_i$ can be gained by any measurement in system $j$.
\end{dem}

\begin{as}[Local Tomographic Principle]
The global state of a multipartite system can be completely 
determined by specifying the joint probabilities of outcomes for local tomographic measurements.
\label{asglobalstate}
\end{as}

Given a system composed of $n$ parts, it follows from the above assumption that a state of the system can be described by a vector with entries of 
the form
$$p(r_1, r_2, \ldots, r_n|M_1,M_2, \ldots,M_n),$$ where $r_i$ is an outcome of a tomographic measurement $M_i$ acting only on party $i$.

The normalized states of the composed system must satisfy
$$\sum_{r_1, \ldots, r_n} p(r_1, r_2, \ldots, r_n|M_1,M_2, \ldots,M_n)=1$$
but, as before, we allow subnormalized states as well.
The no-signaling principle implies that, for any bipartition $\{S, S^{C}\}$ of the set $\{1, \ldots, n\}$, the marginal distribution for the parties in $S$ obtained by summing over all outcomes of the parties $i \in S^{C}$
\be\sum_{r_i, i\in S} p(r_1, r_2, \ldots, r_n|M_1,M_2, \ldots,M_n)\label{reduceds}\ee
does not depend on the measurements $M_i$ with $i \in S$. This means that marginal probability distributions are 
well defined and this allows the definition of the \emph{reduced state} of a subsystem $i$, as the vector with entries given 
by\footnote{\textsf{We can also define the reduced state of a subset $S$ of parties in an analogous form,
using equation \eqref{reduceds}.}}
\be p(r_i|M_i)=\sum_{r_j, j\neq i}p(r_1, r_2, \ldots, r_n|M_1,M_2, \ldots,M_n).\label{reduced}\ee

\begin{defi}
For every state of a multipartite system described by joint probabilities of the 
form $p(r_1, r_2, \ldots, r_n|M_1,M_2, \ldots,M_n)$, the marginal distribution
$p(r_i|M_i)$ is well defined  and is called the \emph{reduced state} of party $i$.
\end{defi}

From now on, every time we refer to a multipartite system we will use only joint probabilities of local tomographic measurements to describe its state and for every subsystem we will use the same set of tomographic measurements to describe its reduced state. The connection is given by equation \ref{reduced}.

As expected, a natural constraint we will impose is that the reduced state of each subsystem is an allowed state of this subsystem.

\begin{cons}
Let $\mathcal{S}$ be the set of allowed states for a multipartite system and $\mathcal{S}^i$ be the set of allowed states for a subsystem $i$. Let $P \in \mathcal{S}$, and $P_i$ be the reduced state of subsystem  $i$. We require that $P_i \in \mathcal{S}^i$.
\end{cons}

The result below gives a connection between the vector spaces associated to the individual systems and the vector space associated to the composite system \cite{Barret06}.
\begin{teo}
Let $P$ be a state of a multipartite system and $P_i$ be the reduced state of party $i$. If 
$P$ belongs to the vector space $V$ and each $P_i$ belongs to the vector space $V^i$, then
$$V\equiv \bigotimes_i V^i.$$
\label{tensor}
\end{teo}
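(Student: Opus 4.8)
The plan is to prove the two inclusions $\bigotimes_i V^i \subseteq V$ and $V \subseteq \bigotimes_i V^i$ after fixing a common coordinate system, reducing the $n$-party statement to the bipartite one by induction (grouping the last $n-1$ parties into a single subsystem). First I would fix the tomographic measurements on each party, so that a local state $P_i$ is the coordinate vector with entries $p(r_i \mid M_i)$; the canonical basis of the ambient $\mathbb{R}^{d_i}$ is then indexed by the outcome--measurement pairs $(r_i, M_i)$ and $V^i = \text{span}(\mathcal{S}^i)$. By the Local Tomographic Principle (Assumption \ref{asglobalstate}) a composite state $P$ is specified by the joint probabilities $p(r_1, \ldots, r_n \mid M_1, \ldots, M_n)$, whose index set is exactly the Cartesian product of the local index sets. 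This yields a canonical identification of the composite ambient space with $\bigotimes_i \mathbb{R}^{d_i}$, matching the index $((r_1,M_1), \ldots, (r_n,M_n))$ to the basis tensor $\epsilon^1_{(r_1,M_1)} \otimes \cdots \otimes \epsilon^n_{(r_n,M_n)}$.

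Under this identification the uncorrelated product preparation, whose entries are $\prod_i p(r_i \mid M_i)$, corresponds precisely to the tensor $P^1 \otimes \cdots \otimes P^n$. For the inclusion $\bigotimes_i V^i \subseteq V$ I would then argue that, since each $V^i$ is spanned by local states, the product tensors $P^1 \otimes \cdots \otimes P^n$ span $\bigotimes_i V^i$; as each such tensor is realized by a product state obtained by preparing the parties independently, which lies in $\mathcal{S} \subseteq V$, the whole tensor product is contained in $V$.

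The reverse inclusion $V \subseteq \bigotimes_i V^i$ is the delicate direction, and it is where no-signaling does the work. I would pass to the dual picture: each pair $(r_i, M_i)$ defines an effect $e^i_{(r_i,M_i)} \in (V^i)^*$, namely $P_i \mapsto p(r_i \mid M_i)$, and the restrictions of the coordinate functionals span $(V^i)^*$. The joint probability $p(\vec r \mid \vec M)$ is exactly the value of the product effect $e^1_{(r_1,M_1)} \otimes \cdots \otimes e^n_{(r_n,M_n)}$ on $P$, and by local tomography $P$ is determined by all of these, i.e.\ by the action of $\bigotimes_i (V^i)^* = (\bigotimes_i V^i)^*$. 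Concretely I would show that every linear relation $\sum_{(r_i,M_i)} \lambda_{(r_i,M_i)} \, p(r_i \mid M_i) = 0$ satisfied by all local states of party $i$ lifts to a relation cutting $V$ down into $\bigotimes_i V^i$; the lift is legitimate precisely because, by the No-signaling corollary, the reduced states of $P$ are well defined through equation \eqref{reduced} and are themselves allowed local states, hence must obey those relations.

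I expect the main obstacle to be this last inclusion. The subtle point is that the naive coordinate space $\mathbb{R}^D$ with $D = \prod_i d_i$ is strictly larger than $\bigotimes_i V^i$ whenever the local spaces carry nontrivial linear relations (normalization being the obvious one), so one cannot conclude by a bare dimension count: one must verify that no composite state acquires a component transverse to the tensored subspace. This is exactly the content supplied by no-signaling together with the consistency of the marginals in \eqref{reduced}, and making the lifting argument precise --- ideally by first settling the bipartite case and then inducting on the number of parties --- is the crux of the proof.
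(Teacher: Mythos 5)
Your first paragraph is already the paper's entire proof. In the paper, $V$ and $V^i$ denote the ambient coordinate spaces in which states are written as lists of probabilities: the proof introduces the vectors $Q^{12}_{ijkl}$, $Q^{1}_{ik}$, $Q^{2}_{jl}$ with a single unit entry (noting explicitly that they need not be allowed states), asserts that these generate $V$, $V^1$ and $V^2$, and concludes from $Q^{12}_{ijkl}=Q^{1}_{ik}\otimes Q^{2}_{jl}$ --- i.e.\ from the fact that the composite index set is the Cartesian product of the local index sets --- that $V\equiv V^1\otimes V^2$, the $n$-partite case following by iteration. Under that reading there is no delicate direction at all: the theorem is coordinate bookkeeping, and no-signaling plays no role in it.

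By instead taking $V^i=\mathrm{span}\left(\mathcal{S}^i\right)$ and $V=\mathrm{span}\left(\mathcal{S}\right)$ you set out to prove a genuinely stronger statement, which in the paper is not this theorem but the next one (Theorem \ref{theoremlcpure}), and there your argument has a real gap. To cut $V$ down to $\mathrm{span}\left(\mathcal{S}^1\right)\otimes\mathbb{R}^{d_2}$ (say, in the bipartite case) you must verify the lifted relations $U\otimes Q^{2}_{jl}$ with $U\perp\mathrm{span}\left(\mathcal{S}^1\right)$, i.e.\ that $U$ vanishes on each \emph{conditional} state of party $1$ given each single outcome $(j,l)$ on party $2$. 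What you invoke --- the no-signaling corollary and the marginal formula \eqref{reduced} --- only says that the \emph{sum over $j$} of these conditional states (the marginal) is an allowed state of party $1$, which yields the lifted relation summed over party $2$'s outcomes, not outcome by outcome. For the norm-consistency relations this conditional statement happens to coincide with no-signaling itself, but the framework explicitly allows over-complete tomographic sets, in which case $\mathrm{span}\left(\mathcal{S}^i\right)$ obeys further linear relations; for those one can write down no-signaling tables whose marginals are allowed local states but whose conditional states leave $\mathrm{span}\left(\mathcal{S}^1\right)$, so the ingredients you cite cannot close the argument. The missing step is the one the paper uses to prove Theorem \ref{theoremlcpure}: performing a tomographic measurement on party $2$ is an allowed operation on the composite system (Assumption \ref{localoperations}), so $M_{jl}P$ is an allowed composite state and its reduced state $P^1_{jl}$ is an allowed state of party $1$; then $P=\sum_{j,l}P^1_{jl}\otimes Q^{2}_{jl}$, hence $\left(U\otimes W\right)P=0$ for every $U\in\left(\mathcal{S}^1\right)^{\bot}$, and symmetrically in the other factor.
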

\begin{dem}
We will prove the statement above for the particular case of bipartite systems. 
Since we only consider finite dimensional systems, the general case follows if we apply the particular case several times.

Let $Q^{12}_{ijkl}$ be the vector with entry $1$ for outcome $i$ of tomographic measurement $k$ in party $1$ and 
outcome $j$ for tomographic measurement $l$ in party $2$ and $0$ elsewhere. 
Define the vectors $Q^{1}_{ik}$ and $Q^{2}_{jl}$ analogously. 
Notice that these vectors are not necessarily allowed states of the system. 
Nevertheless, the vectors $Q^{12}_{ijkl}$ generate $V$, the vectors $Q^{1}_{ik}$ generate $V_1$,
the vectors $Q^{2}_{jl}$ generate $V_2$ and
$$Q^{12}_{ijkl}=Q^{1}_{ik} \otimes Q^{2}_{jl},$$
which implies the desired result.
\end{dem}

We can prove that any state of the composite system can be written as a \emph{linear combination} of product states \cite{Barret06}.
\begin{teo}
Any state of a $n$-partite system $P$ can be written in the form
\be P=\sum_i q_iP_i^1\otimes P_i^2\otimes \ldots \otimes P_i^n \label{eqlcproduct}\ee
where $P_i^j$ is a normalized and pure state of the party $j$ and $q_i \in \mathbb{R}$.
\label{theoremlcpure}
\end{teo}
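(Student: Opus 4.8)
The plan is to reduce the statement to a linear-algebraic fact about the local vector spaces, using Theorem \ref{tensor} to pass to the tensor-product structure and then re-expressing each local factor through pure states. By Theorem \ref{tensor} we may identify the vector space of the composite system as $V \equiv \bigotimes_{j=1}^n V^j$, where $V^j$ is the (span of the reduced) states of party $j$. Since $P \in V$ and a tensor product is spanned by products of vectors spanning its factors, I would first write $P$ as a finite real linear combination $P = \sum_i q_i\, v_i^1 \otimes v_i^2 \otimes \cdots \otimes v_i^n$ with $q_i \in \mathbb{R}$ and each $v_i^j \in V^j$. The only gap between this and \eqref{eqlcproduct} is that the factors $v_i^j$ are a priori arbitrary vectors of $V^j$, not normalized pure states.

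The key step, and the one I expect to be the main obstacle, is to show that the normalized pure states of party $j$ span $V^j$. I would argue as follows. By Assumption \ref{asstatespace} the set of normalized allowed states of subsystem $j$ is closed and convex; since every entry is a probability in $[0,1]$ the set is also bounded, hence compact in finite dimension. By the finite-dimensional Krein--Milman (Minkowski) theorem it therefore equals the convex hull of its extreme points, which by the definition of pure states are exactly the normalized pure states. Consequently every normalized state is a convex combination of normalized pure states, so the linear span of the normalized pure states contains all normalized states, hence all of $\mathcal{S}^j$ (adding $\vec{0}$ changes nothing, as it lies in any linear span). Because $V^j$ is the linear span of the states of party $j$, this shows that the normalized pure states span $V^j$.

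Finally I would combine the two ingredients. Writing each factor as $v_i^j = \sum_k c^j_{ik}\, P^j_{ik}$ with $P^j_{ik}$ normalized pure states and $c^j_{ik} \in \mathbb{R}$, I substitute into the product decomposition of $P$ and expand using the multilinearity of $\otimes$. This produces a finite sum of terms, each a real scalar times $P^1 \otimes P^2 \otimes \cdots \otimes P^n$ with every factor a normalized pure state of the corresponding party; after collecting terms and relabeling, this is precisely the form \eqref{eqlcproduct}. Once the spanning claim of the second paragraph is established, the remaining expansion is the routine manipulation of a tensor in terms of spanning sets of its factors, so the whole argument hinges on the extremal structure of the local state space supplied by Assumption \ref{asstatespace} and Krein--Milman.
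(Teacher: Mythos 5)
There is a genuine gap, and it sits exactly where you predicted the main obstacle would be: the claim that the normalized pure states of party $j$ span $V^j$ is false in general. In Theorem \ref{tensor}, the factor $V^j$ is \emph{not} the linear span of the states of party $j$; it is the full coordinate space generated by the vectors $Q^j$ having a single entry $1$ and all other entries $0$, which (as the paper notes in that very proof) are in general not allowed states. Every state of party $j$, normalized or subnormalized, satisfies the consistency constraints of equation \eqref{unorm}: the sum $\sum_a p(a|M)$ takes the same value for every tomographic measurement $M$. Hence $\mathrm{span}(\mathcal{S}^j)$ is contained in a proper subspace of $V^j$ whenever party $j$ has at least two tomographic measurements. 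Your Krein--Milman argument is sound as far as it goes, but what it proves is that the normalized pure states span $\mathrm{span}(\mathcal{S}^j)$ --- not $V^j$. So after you write $P=\sum_i q_i\, v_i^1\otimes\cdots\otimes v_i^n$ with arbitrary $v_i^j\in V^j$ (which is trivially possible), you cannot re-expand those factors in pure states, and the argument halts.

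The missing idea is the actual content of the theorem: one must show that $P$ lies in the smaller subspace $\mathrm{span}(\mathcal{S}^1)\otimes\cdots\otimes\mathrm{span}(\mathcal{S}^n)$, and this cannot come from linear algebra alone --- it requires physical input. The paper's proof (for $n=2$) obtains it by conditioning: for each tomographic measurement $l$ of party $2$ with outcome $j$, the conditional reduced state $P^1_{jl}$ of party $1$ is a genuine (subnormalized) state, and commutation of local operations (assumption \ref{locommute}) yields the decomposition $P=\sum_{j,l}P^1_{jl}\otimes Q^2_{jl}$ of equation \eqref{tensor1}, in which the party-$1$ factors are states. This gives $\left(U\otimes W\right)P=0$ for every $U\in(\mathcal{S}^1)^\perp$; exchanging the roles of the parties gives the same for every $W\in(\mathcal{S}^2)^\perp$, and the two conditions together force $P\in\mathrm{span}(\mathcal{S}^1)\otimes\mathrm{span}(\mathcal{S}^2)$. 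Only at that point does your extremal-point argument (which, within its proper scope, is a nice justification of the paper's closing assertion that each $\mathcal{S}^j$ is generated by its normalized pure states) become applicable.
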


\begin{dem}
We will once more prove the statement for $n=2$, since the general case follows easily from this one.

Consider a composite system consisting of parties $1$ and $2$ in  state $P \in V=V^1 \otimes V^2$. By assumption \ref{localoperations}, for each tomographic measurement $l$ in party $2$ there is  one operation on 
the composite system that corresponds to performing that measurement. Let $\{M_{jl}\}$ be the set of matrices representing this operation, $j$ labeling the
 possible outcomes.

Let $P_{jl}=M_{jl}P$ be the final state after outcome $j$ and let $P^1_{jl}$ be the corresponding reduced state of system $1$. Then
\be P=\sum_{j,l} P_{jl}^1 \otimes Q^2_{jl}\label{tensor1}\ee
where the vector $Q^2_{ij}$ was defined in the proof of theorem \ref{tensor}.

To prove equation \eqref{tensor1}, let us compare the entries of $P$ and $P_{jl}^1 \otimes Q^2_{jl}$. Each entry of $P$ is of the form $p(i,j|k,l)$, which is the probability of  outcome $i$ for tomographic measurement $k$ in system $1$ and  outcome $j$ of tomographic measurement $l$
in system $2$. An entry  of $P_{jl}^1 \otimes Q^2_{jl}$ is non-zero iff it is in position $(i,j|k,l)$ for some outcome $i$ of tomographic measurement $k$ in party $1$.  This entry is equal to the entry $(i|k)$ of $P_{jl}^1$, which is the probability of outcome 
$j$ for tomographic measurement $l$ in system $2$ followed by outcome
$i$ for tomographic measurement $k$ in system $1$. Since local operations commute, equation \eqref{tensor1} follows.

Let $U \otimes W\in V$  with $U \in (\mathcal{S}^1)^\bot$. Then equation \eqref{tensor1} implies that
$$(U \otimes W)P=0.$$

Repeating the same argument but exchanging the parties, we conclude that for any vector of the form $U \otimes W$ with $W \in (\mathcal{S}^2)^\bot$ we have
$$(U \otimes W)P=0.$$

This implies that $P$ belongs to the subspace generated by $U \otimes W$, $U \in \mathcal{S}^1$ and $W \in \mathcal{S}^2$. Since each $\mathcal{S}^i$ is generated by
the states that are normalized and pure, the result follows.
\end{dem}

States of the form $P_i^1\otimes P_i^2\otimes \ldots \otimes P_i^n$ are called \emph{product states}. If a state  can 
be written as a convex combination of product states, that is, if we can choose
the coefficients $q_i$ in equation \eqref{eqlcproduct} such that $0\leq q_i \leq 1$ and $\sum_iq_i=1$,
it is called a \emph{separable state}. States that can not be written in this form are called \emph{entangled}.

Consider a composite system and a transformation $T^1$ acting in subsystem $1$, represented by the matrix $M^1$. We know that this transformation is allowed in the composite system and that the resulting effect  is linear. Hence there is a matrix $\tilde{M}^1$ such that the transformation on the composite system is given by
$$P \mapsto P'= \tilde{M}^1P.$$
We want to find out what  the relation is between $M^1$ and $\tilde{M}^1$ \cite{Barret06}.

\begin{teo}
Consider a multipartite system in a state $P$ and a local transformation $M^1$ on subsystem $1$ , defined by
$$P_1\mapsto P'_1=M^1P_1.$$ 
The joint transformation on the composite system is given by
\be P\mapsto P'=(M^1\otimes I \otimes \ldots \otimes I)P.\label{equationmatrixlocal}\ee
\label{matrixlocal}
\end{teo}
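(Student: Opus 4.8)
The plan is to exploit linearity together with the product-state decomposition already available. Both maps in question---the composite transformation $\tilde{M}^1$, which is linear because every operation extends to a linear map on $\mathbb{R}^d$, and the candidate $M^1 \otimes I \otimes \ldots \otimes I$---are linear maps on the space $V \equiv \bigotimes_i V^i$ identified in Theorem \ref{tensor}. Since two linear maps that agree on a spanning set of $V$ are equal, and since by Theorem \ref{theoremlcpure} the product vectors $P^1 \otimes P^2 \otimes \ldots \otimes P^n$ span $V$, it suffices to verify the identity on product states. First I would therefore reduce the whole claim to checking that $\tilde{M}^1\left(P^1 \otimes \ldots \otimes P^n\right) = (M^1 P^1) \otimes P^2 \otimes \ldots \otimes P^n$, the right-hand side being exactly $(M^1 \otimes I \otimes \ldots \otimes I)$ applied to the product vector by the definition of a tensor product of linear maps.

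Next I would evaluate the transformed product state entrywise. A generic entry of $\tilde{M}^1\left(P^1 \otimes \ldots \otimes P^n\right)$ is the subnormalized probability of registering the outcome that defines the local operation $T^1$ on party $1$ and then recording tomographic outcomes $r_1, \ldots, r_n$ on the respective parties. Because the initial state is a product state the parties are uncorrelated, and because $T^1$ acts only on party $1$ while local operations commute (Assumption \ref{locommute}), the operation $T^1$ followed by the tomographic measurement $M_1$ on party $1$ can be carried out independently of, and in any order relative to, the tomographic measurements on parties $2, \ldots, n$. The no-signaling corollary then guarantees that performing $T^1$ on party $1$ leaves the marginals of the remaining parties untouched, so those marginals remain the entries of $P^2, \ldots, P^n$, while the conditioned reduced state of party $1$ is $M^1 P^1$ by the very definition of $M^1$. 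I would conclude that the joint entry factorizes as the corresponding entry of $M^1 P^1$ times $\prod_{j \geq 2}$ the entry of $P^j$, which is precisely the entry of $(M^1 P^1) \otimes P^2 \otimes \ldots \otimes P^n$.

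The main obstacle is making this factorization rigorous: one must argue that a local operation applied to an uncorrelated (product) input cannot create correlations with the untouched parties, so that the output is again a genuine tensor product of the transformed reduced states. I expect to justify this exactly as in the proof of Theorem \ref{theoremlcpure}, by expanding the post-operation state via a decomposition of the form \eqref{tensor1} and invoking that the reduced state of the complementary parties is fixed by no-signaling. As with the earlier results in this section, I would first establish the bipartite case $n = 2$ in full and then obtain the general $n$-partite identity by iterating one cut at a time: grouping parties $2, \ldots, n$ into a single subsystem reduces each step to the two-party statement, which is enough since only finite-dimensional systems are considered.
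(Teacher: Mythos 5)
Your proposal is correct, but it takes a genuinely different route from the paper. The paper proves the identity directly for an \emph{arbitrary} state $P$ of the composite system: using assumption \ref{locommute} it reorders the procedure so that the tomographic measurement on system $2$ happens first, observes that the block of entries $P_{i'jk'l}$ (with $j,l$ fixed) is then a subnormalized state of system $1$ on which the linear extension of $M^1$ acts, and reads off $P'_{ijkl}=\sum_{i'k'}M^1_{ik,i'k'}P_{i'jk'l}=\left[(M^1\otimes I)P\right]_{ijkl}$, concluding that $\tilde{M}^1$ and $M^1\otimes I$ agree on $\mathcal{S}$ and may be taken equal since the action outside $\mathcal{S}$ is arbitrary. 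You instead verify the identity only on \emph{product} states, where conditioning on the other parties' outcomes is trivial by independence, and then extend to all states by linearity via theorem \ref{theoremlcpure}; this is non-circular, since that theorem is proved earlier without assuming the form $M^1\otimes I$ for local operations. What each approach buys: the paper's computation is shorter and handles all states at once, at the cost of applying $M^1$ to conditioned subnormalized blocks of a general state; your version confines the physical reasoning to the uncorrelated case, where it is most transparent, and pushes the rest into linear algebra, at the cost of invoking the spanning result. One caveat: the no-signaling corollary is stated for \emph{complete} operations (summed over outcomes), so it is not quite the right tool for a single-outcome transformation $M^1$; for your product input, what actually keeps the untouched parties' entries equal to those of $P^2,\ldots,P^n$ is the independence built into the product state together with assumption \ref{locommute} — which you do also invoke, and which your proposed expansion via equation \eqref{tensor1} makes rigorous — so this is an imprecision of attribution rather than a gap.
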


\begin{dem}
We will once more prove the statement for a bipartite system, since the general case follows  from this one.

Let the set of tomographic measurements of systems $1$ and $2$ used to write $P$ and $P'$ be fixed.
Consider the following procedure: apply $T^1$ to system $1$ and then the tomographic measurements  of systems $1$ and $2$. 
The entries of the vector $P'$ give the probability of each possible outcome of this procedure.
By assumption \ref{locommute}, the order of the operations in systems $1$ and $2$ does not matter and this procedure is
equivalent to: first apply the tomographic measurement in system $2$, then apply $T^1$ to system $1$ and then apply 
the tomographic measurement to system $1$. The probabilities for the possible outcomes of this procedure 
are also given by $P'$.

The probability of  outcome $j$ for  tomographic measurement $l$ in system $2$ and  outcome $i$ for tomographic measurement 
$k$ in system $1$, before transformation $T^1$ is 
applied, is given by entry $P_{ijkl}=p(i,j|k,l)$ of vector $P$.  After transformation $T^1$ is applied, the
  outcome $j$ for  tomographic measurement $l$ in system $2$ and  outcome $i$ for tomographic measurement 
$k$ in system $1$ is 
$$P'_{ijkl}=\sum_{i'k'} M^1_{ik,i'k}P_{i'jk'l}=\left[(M^1 \otimes I)P\right]_{ijkl}.$$
This implies that the action of $\tilde{M}^1$ in $\mathcal{S}$ is equal to the action of $M^1\otimes I$. Since the action of $\tilde{M}^1$ outside $\mathcal{S}$ is arbitrary, we can take
$\tilde{M}^1=M^1 \otimes I$.

\end{dem}

Now that we know how   the action of local operations is in the description  of composite systems, 
we can go back to item \ref{coperation4} of constraint \ref{coperation} and see how this restricts the allowed transformations in each subsystem. We have stated that each local transformation $M_i$ on a subsystem $i$ must result in a allowed state of the  multipartite system as well.
This means that not only $M_i$ has to be an allowed transformation of system $i$, $M_i \otimes I\otimes \ldots \otimes I$ has to define an allowed transformation on the composite system. This extra requirement may reduce  even further the set of allowed transformations  in the individual system $i$.

\begin{defi}
A transformation $T$ on a system $1$, represented by matrix $M$, is \emph{well defined} if 
$$(M\otimes I)P^{12} \in \mathcal{S}^{12}$$
for all states $P^{12} \in \mathcal{S}^{12}$, where system 2 can be any other system allowed by the theory. 
\label{wdlocal}
\end{defi}

\begin{cons}
For each system, all transformations in $\mathcal{T}$ must be well defined.
\label{constraintwdlocal}
\end{cons}

System $2$ can itself be a multipartite system, so the general requirement of item \ref{coperation4} of constraint 
\ref{coperation} is implied by 
the special case of bipartite systems of definition \ref{wdlocal} and constraint \ref{constraintwdlocal}. 

Assumption \ref{localoperations} together with theorem \ref{matrixlocal} imply that the allowed transformations of a composite system must include the ones given by equation \eqref{equationmatrixlocal}.

\begin{cor}
If $M^1$ is an allowed transformation on system $1$, then $M^1 \otimes I$ is an allowed transformation of a composed system consisting of system $1$ and another arbitrary system $2$.
\end{cor}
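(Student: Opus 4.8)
The plan is to obtain this statement essentially for free by combining the two results that immediately precede it: Assumption~\ref{localoperations}, which guarantees that operations acting on a single party are always allowed on a multipartite system, and Theorem~\ref{matrixlocal}, which identifies the matrix representing such a local operation on the composite state space as $M^1 \otimes I$. There is really only one conceptual step, namely to reconcile the notion of an \emph{allowed operation} with that of an \emph{allowed transformation} belonging to the set $\mathcal{T}$ of the composite system, and I expect no genuine obstacle: the corollary is close to a restatement of the two cited results.

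First I would fix the composite system built from system~$1$ and an arbitrary auxiliary system~$2$, and let $\{M^1_i\}$ be an operation on system~$1$ in which the given matrix $M^1$ appears as the transformation associated to one of its outcomes. Such an operation exists because, by Definition~\ref{defimodel} together with Assumption~\ref{asallowedtransf}, every element of $\mathcal{T}$ is a candidate outcome of some operation satisfying Constraint~\ref{coperation}. Since $M^1$ is by hypothesis an allowed transformation of system~$1$, this $\{M^1_i\}$ is a legitimate operation on system~$1$ alone.

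Next I would invoke Assumption~\ref{localoperations}: applying $\{M^1_i\}$ to subsystem~$1$ of the composite system is itself an allowed operation on the composite system, precisely because operations acting on a single party are always permitted. Theorem~\ref{matrixlocal} then shows that the transformation induced on the composite state space by the outcome carrying $M^1$ is represented by the matrix $M^1 \otimes I$, with $I$ acting on the factor associated to system~$2$. Since each matrix occurring in an allowed operation is, by definition, an element of the set $\mathcal{T}$ of allowed transformations of the system on which the operation acts, I conclude that $M^1 \otimes I$ lies in $\mathcal{T}$ for the composite system. The only point worth stressing is that this argument goes through for \emph{every} choice of the auxiliary system~$2$, which is exactly what Assumption~\ref{localoperations} provides when stated for an arbitrary multipartite system; hence the corollary follows.
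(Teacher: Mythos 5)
Your proposal is correct and follows essentially the same route as the paper, which presents this corollary as an immediate consequence of Assumption \ref{localoperations} (local operations are always allowed on a multipartite system) combined with Theorem \ref{matrixlocal} (the induced transformation on the composite state space is represented by $M^1 \otimes I$). Your additional step of embedding $M^1$ as an outcome of an operation $\{M^1_i\}$ on system $1$ is a harmless elaboration that the paper leaves implicit, so there is nothing substantive to distinguish the two arguments.
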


We desire that our description include the possibility of multipartite systems with no correlation among its parties. This is quite natural: imagine that the parties of this system are thousand of kilometers apart and that none of them interacted in the past. We do not expect any correlation among the outcomes obtained in local measurement performed in these subsystems, and this implies that the joint probabilities are independent:
\be p(r_1, r_2, \ldots, r_n|M_1,M_2, \ldots,M_n)=p(r_1|M_1)p(r_2|M_2)\ldots p(r_n|M_n) \label{localindependent}\ee
where $r_i$ is the outcome of local measurement $M_i$ on party $i$.

\begin{as}
If $P^1$ is an allowed state of system $1$ and $P^2$ is an allowed state of system $2$, then $P^1 \otimes P^2$ is an allowed state of the system composed of parties $1$ and $2$.
\label{asindependentstates}
\end{as}

The state $P^1 \otimes P^2$ gives independent probabilities for the bipartite system, in the form of equation 
\eqref{localindependent}. The meaning is that system $1$ is in state $P_1$, system $2$ is in state $P_2$
and they are independent. Again, since system $2$ can itself be a multipartite system, assumption 
\ref{asindependentstates} also implies that any vector of the form \eqref{localindependent} is an allowed state 
of the system composed of parties $1,2,\ldots n$ in which party $i$ is in state given by the probabilities $p(r_i|M_i)$.

The next assumption is another simplification without  physical meaning. We will include in the set  $\mathcal{T}$ all  transformations that are mathematically well defined. There is no physical requirement that guarantees that this is indeed the case. For a particular kind of system, it is possible that nature forbids, for some reason, some of the transformations contained in this set. As our intention is to be general, we will define $\mathcal{T}$ to be the largest set of mathematically allowed transformations.

\begin{defi}
 A probability theory is called \emph{maximal} if the set $\mathcal{T}$ coincides with the 
 set of all  mathematically well defined transformations. 
\end{defi}

\begin{as}
All probability theories considered from now on are maximal.
\label{allowedt}
\end{as}

A number of corollaries follows from this assumption. 
The first one is something we would like to have in our theories: the composition of 
two allowed transformations is an allowed transformation. Mathematically, 
composition of transformation represented by matrices $M$ and $N$ is given by the product $MN$. Then,
if  $M$ and $N$ are matrices associated to allowed transformations of a system, we expect that  $MN$ 
is also an allowed transformation of the same system, and this is indeed the case if $\mathcal{T}$ satisfy 
assumption \ref{allowedt}.

\begin{cor}
\label{corcomposition}
If $M, N \in \mathcal{T}$, then  $MN \in \mathcal{T}$.
\end{cor}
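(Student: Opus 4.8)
The plan is to reduce the claim to a statement about well-defined transformations and then exploit the functoriality of the tensor product. By Assumption \ref{allowedt} the theory is maximal, so $\mathcal{T}$ is precisely the set of transformations that are well defined in the sense of Definition \ref{wdlocal}. Consequently, to conclude $MN \in \mathcal{T}$ it suffices to verify that $MN$ is well defined, that is, that $(MN \otimes I)P^{12} \in \mathcal{S}^{12}$ for every allowed bipartite state $P^{12} \in \mathcal{S}^{12}$, where system $2$ ranges over all systems allowed by the theory.

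First I would record the elementary tensor identity $MN \otimes I = (M \otimes I)(N \otimes I)$, which follows from the mixed-product rule $(A\otimes B)(C \otimes D) = AC \otimes BD$ applied with $B = D = I$. This rewrites the action of $MN \otimes I$ on the composite system as the composition of two local actions that we already control by hypothesis.

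Then, fixing an arbitrary allowed system $2$ and an arbitrary state $P^{12} \in \mathcal{S}^{12}$, I would apply the two hypotheses in succession. Since $N \in \mathcal{T}$ is well defined, the intermediate vector $Q^{12} = (N \otimes I)P^{12}$ lies in $\mathcal{S}^{12}$. Because $Q^{12}$ is itself an allowed state of the composite system and $M \in \mathcal{T}$ is also well defined, a second application gives $(M \otimes I)Q^{12} \in \mathcal{S}^{12}$. Combining this with the tensor identity yields $(MN \otimes I)P^{12} = (M \otimes I)(N \otimes I)P^{12} \in \mathcal{S}^{12}$, so $MN$ is well defined and hence, by maximality, belongs to $\mathcal{T}$.

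The argument is short, so there is no serious obstacle; the only point demanding care is that Definition \ref{wdlocal} quantifies over \emph{all} ancillary systems $2$ (including multipartite ones), so both applications of well-definedness must be carried out with respect to the same fixed system $2$. Since the well-definedness of $M$ guarantees that $M \otimes I$ preserves every allowed state of that composite system, and $Q^{12}$ is precisely such a state, the chaining of the two applications is legitimate and the quantifier over system $2$ is handled uniformly.
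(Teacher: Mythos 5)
Your proof is correct, and it is essentially the argument the paper intends: the thesis states this corollary with no explicit proof, presenting it as an immediate consequence of the maximality assumption \ref{allowedt}, and your formalization — the identity $MN\otimes I=(M\otimes I)(N\otimes I)$ followed by two successive applications of well-definedness with respect to the same fixed ancillary system — is exactly the chaining that justification requires. One small remark: strictly speaking, maximality refers to all the mathematical requirements on a transformation (the single-transformation parts of Constraint \ref{coperation}, namely norm non-increase and state preservation of the system on its own), not only Definition \ref{wdlocal}; but these either follow from well-definedness within the framework or chain in the identical way ($|MNP|\leq |NP|\leq |P|$ and $MNP\in\mathcal{S}$ since $NP\in\mathcal{S}$), so this is a completion of detail rather than a gap.
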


Suppose we start with system $1$ in a state $P_1$ and we append 
another independent system $2$ in state $P_2$. As we know, the state of the system 
composed of subsystems $1$ and $2$ is $P_1 \otimes P_2$. Suppose that we apply an operation 
to the composite system, taking  $P_1 \otimes P_2$ to another state $P'$, not necessarily a product state. This 
state gives a reduced state $P'_1$ that is an allowed state of system $1$. This kind of procedure can be used to 
perform transformations on system $1$ alone, and system $2$ is just used as an ancilla that can be discarded after the process is completed.

\begin{cor}
A procedure consisting on appending an ancilla to system $1$, performing a joint operation on the composed system, 
and then throwing the ancilla away is a well defined transformation on system $1$.
\end{cor}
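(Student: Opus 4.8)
The plan is to exhibit the procedure as a linear map on the states of system $1$ and then to verify, using maximality of the theory (assumption \ref{allowedt}), that this map lies in $\mathcal{T}$ by checking the well-definedness condition of Definition \ref{wdlocal}. First I would fix the ancilla, say system $2$ prepared in a state $P_2 \in \mathcal{S}^2$, together with the joint operation on the pair $1$-$2$, represented by a matrix $N$. The procedure sends a state $P_1$ of system $1$ to the reduced state on system $1$ of $N(P_1 \otimes P_2)$. This is a composition of three maps: the appending map $P_1 \mapsto P_1 \otimes P_2$, which is linear in $P_1$; the operation $N$, which is linear by the extension theorem for transformations; and the reduction (partial marginalization) map, which is linear since each entry of the reduced state is a fixed sum of entries of the global state, as in equation \eqref{reduced}. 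Hence the whole procedure is linear and is represented by some matrix $M$ acting on $\mathbb{R}^d$, with $P_1 \mapsto M P_1$.

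Next, because the theory is maximal, to conclude $M \in \mathcal{T}$ it suffices to show that $M$ is mathematically well defined, i.e. that $(M \otimes I)P^{1b} \in \mathcal{S}^{1b}$ for every spectator system $b$ and every $P^{1b} \in \mathcal{S}^{1b}$ (Definition \ref{wdlocal}). The strategy is to realize $M \otimes I$ on the pair $1$-$b$ as the very same physical procedure applied to subsystem $1$ while $b$ is a passive bystander: append the ancilla $2$ (allowed, and yielding an allowed state, by assumption \ref{asindependentstates}), apply the operation $N$ on $1$-$2$ together with the identity on $b$ (an allowed transformation by the corollary that $N \otimes I$ is allowed whenever $N$ is), and finally discard the ancilla by passing to the reduced state on $1$-$b$ (an allowed state by the reduced-state constraint). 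Each step preserves allowed states, so the output lies in $\mathcal{S}^{1b}$.

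Finally I would identify this output with $(M \otimes I)P^{1b}$. Since all maps involved are linear and, by Theorem \ref{theoremlcpure}, every $P^{1b}$ is a linear combination of product states $P_1 \otimes P_b$, it is enough to check the identity on a product input. For $P_1 \otimes P_b$ the appended state is $P_1 \otimes P_2 \otimes P_b$; applying $N$ on $1$-$2$ and the identity on $b$ gives $N(P_1 \otimes P_2) \otimes P_b$; and marginalizing away system $2$ commutes with the untouched tensor factor $P_b$, producing $\big(M P_1\big)\otimes P_b = (M \otimes I)(P_1 \otimes P_b)$. By linearity this extends to all of $\mathcal{S}^{1b}$, so $(M \otimes I)P^{1b} \in \mathcal{S}^{1b}$, $M$ is well defined, and maximality then gives $M \in \mathcal{T}$. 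The main obstacle is precisely this last identification: one must check that the reduction over the ancilla and the presence of the spectator $b$ interact correctly, so that the partial marginalization separates cleanly across the $1$-$2$ versus $b$ split. This is where Theorem \ref{matrixlocal} (local operations act as $\,\cdot\otimes I$), the commutativity of local operations, and the reduction to product states via Theorem \ref{theoremlcpure} do the real work.
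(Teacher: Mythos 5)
Your proof is correct and follows essentially the same route as the paper, which justifies this corollary only by the informal paragraph preceding it: appending an independent ancilla, applying an allowed joint operation, and taking the reduced state each preserve allowed states, so by maximality (Assumption \ref{allowedt}) the composite is an allowed transformation. Your write-up is in fact more complete than the paper's, since you explicitly verify linearity and the spectator-system condition of Definition \ref{wdlocal} (reducing to product inputs via Theorem \ref{theoremlcpure} and checking that marginalization over the ancilla factors cleanly past the bystander), steps the paper leaves entirely implicit.
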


Physically we already have everything we need in our probabilistic theories. We can add some mathematical structure to our description
without having to restrict it any further. The reader may skip the next section with no prejudice for the understanding of the rest of the text.

\section{\textsf{A little bit of Category Theory}}
\label{sectionmoremathematics}

Previously we have defined a probabilistic model using vectors in $\mathbb{R}^d$ as states and matrices acting in this vector space as 
transformations. We can provide a more formal and general definition.
The point of view we present here is a simplification of the approach of Barnum and Wilce in reference \cite{BW12}.

The first thing we need for our new definition is a \emph{ordered linear space}: a real vector space $E$ equipped with a closed generating cone $E_{+}$.
Such a cone determines a partial ordering, invariant under translation and under positive 
scalar multiplication: if $a,b \in E$ we say that $a \leq b$ iff
$b-a \in E_{+}$. An order unit in $E$ is an element $u \in E_{+}$ such that for every $a \in E$ there is $n \in \mathbb{N}$ 
such that $a \leq nu$.
 We use $\left(E, u\right)$ to denote  an ordered linear space $E$ with an order unit $u$. 
 We say that $\left(E,u\right)$ is an 
\emph{order-unit space}. In this text, we will deal only with finite dimensional ordered linear spaces. In this case,
$E$ always has an order unit.

\begin{defi}
 A \emph{state} on an order-unit space $E$ is a linear functional $\alpha \in E^{*}$ with $\alpha(u)\leq 1$. 
\end{defi}

Once more, our definition  allows subnormalized states, with the same meaning as before. The normalized states
 are the ones with $\alpha(u)= 1$. The set of all states on 
 $E$ is called the state space on $E$ and is denoted by $\mathcal{S}(E)$. This set is a compact and convex set in $E^{*}$.

\begin{defi}
 An \emph{effect} on an order-unit space $E$ is a non-zero element $a \in E$ with $a \leq u$ and $0 \leq \alpha(a) \leq 1,
 \ \forall \ \alpha \ \in \ 
 \mathcal{S}(E).$ 
 \end{defi}
 
 The set of all effects in $E$ will be denoted by
 $\mathcal{E}(E)$. The effects in $E$ play the role of the elements of $\mathcal{T}$. 
 They represent possible outcomes of measurements that can be performed 
 on the system. Each measurement is then given by a set of effects in $E$.  We continue following the lines of assumption \ref{asfinitetomographical}, 
 and this implies that we
 only consider measurements with a finite number of outcomes.

 \begin{defi}
  A \emph{measurement}  on an order-unit space $E$ is a finite set $O=\{a_1,a_2, \ldots, a_n\}$ of  effects $a_i$ with
  $$a_1 + a_2+ \ldots +a_n=u.$$
 \end{defi}

If  $\alpha$ is a normalized state, the probability of obtaining outcome $a_i$ in measurement $O$ is $\alpha(a_i)$. 
Different measurements can share an outcome $a_i$, and  the probability of obtaining this outcome is independent of the 
measurement in which it appears. 

Once a measurement 
is performed and a given outcome is obtained, the state of the system will change, and hence every effect is related to a transformation on 
$\mathcal{S}(E)$, that has to obey  restrictions already discussed in sections \ref{sectionstates} and
\ref{sectionmultipartite}. 

\begin{defi}
 A \emph{probabilistic model} is given by an order-unit space $E$, which determines the state-space $\mathcal{S}(E)$ and 
 the set of effects
 $\mathcal{E}(E)$.
\end{defi}

Here we assume that $\mathcal{S}(E)$  contains all mathematically well defined states and 
 $\mathcal{E}(E)$ contains all mathematically well defined effects. More restrictive models can be considered, but we will
 not deal with them in this text.

Multipartite systems can be represented using composition of models. The composition will be another model, 
together with a way of connecting
states and effects in the single system with  some particular states and effects of the composite system.

  Let $E$ and $F$ be two order-unit spaces, representing systems $1$ and $2$ respectively. The composite system whose parts 
  are $1$ and $2$ is 
 represented in a order-unit space $EF$, together with a positive linear mapping
 
  \begin{eqnarray}
E \times F & \longrightarrow & EF \nonumber\\
(a,b)&\mapsto &ab.
\label{eqmapcomposite}
\end{eqnarray}

This mapping gives the connection between states and effects of $E$ and $F$ and $EF$ we mentioned above. 
Its positivity  implies  that if $a$ is an effect on $E$ and $b$ is an effect on $F$, $ab$ is an effect on $EF$. 
 A number of other requirements must be satisfied by this map and also by the states in $EF^{*}$. All assumptions made in section
 \ref{sectionmultipartite} will  hold for states and effects in $EF$ as well. Since we already provided a detailed discussion
 there, we will not repeat it here. For a different and more mathematical point of view and also for a discussion of the conditions we must 
 impose in the map of equation \eqref{eqmapcomposite}, see reference \cite{BW12}.

 \subsection{Processes and Categories}
 
 A theory aiming to describe physical systems has to provide rules that must be obeyed when a system changes. We already discussed these rules 
 when this change does not alter the type of system we are dealing with, but it might be the case that it does alter the type of the system
 we are trying to describe. We have then to define what are the valid mappings between different types of systems. 
 These mappings are called processes.
 
\begin{defi}
Given two order-unit spaces $\left(E,u\right)$ and $\left(F,v\right)$, a \emph{process} is a positive linear mapping
 $$\phi: E^* \longrightarrow F^*$$
 with $\phi(\alpha)\left(v\right)  \leq \alpha\left( u\right)$ for all states $\alpha $ in $\left(E,u\right)$.
\end{defi}

A process is a map that takes states in $E$ to states in $F$. If $\alpha$ is a normalized state, $\phi(\alpha)\left(v\right)$ is the probability that
$\phi$ occurs given that the initial 
state is $\alpha$. Of course, not every positive linear map counts as a process. The discussion of constraint \ref{coperation}
applies also in this case with very little 
modification.

\begin{defi}
A process $\phi: E^* \rightarrow F^*$  is well defined if 
$$\phi\otimes I: \left(EG\right)^* \longrightarrow \left(FG\right)^*$$
also takes states on $EG^*$ to states on $\left(FG\right)^*$, for every order-unit space $G$, where $\left(EG\right)^*$ is the state space of
the system composed of parties $E$ and $G$, $FG^*$ is the state space of
the system composed of parties $F$ and $G$ and  $\phi\otimes I$ is the extension of $\phi$ to the composite system 
$EG$ (which is defined as applying $\phi$ to system $E$ and doing nothing in system $G$).
\end{defi}

Process must take allowed states of the system  to allowed states  also when 
the system under consideration is a part of a multipartite system. That is why we require that all processes are well defined.

We also assume that convex combinations and composites of processes are also processes, for the obvious reasons.
For every pair of order-unit spaces $E$ 
and $F$ there 
is a null process that takes every states $\alpha \in E^{*}$ to the zero vector in $F^{*}$. The interpretation
of this state is the same as before,
and it can be prepared conditioning in a outcome of a measurement that happens with probability zero.

We postulate the existence of a canonical trivial system $I$ with a single operation, and hence with no measurement. 
For this system, 
$E=E^{*}=\mathbb{R}$. We do not have many options in this case, since the only normalized state  is $1$, 
which gives probability one for the only possible effect.

Given an order-unit space $E$, there are two kinds of  natural processes involving $E$ and the trivial system $I$. 
The first one is a mathematical representation of the experiment that preparates a state. For every normalized state 
$\alpha \in E^{*}$ we define the process $\phi_{\alpha}: \mathbb{R}\longrightarrow E^*$ of \emph{preparation}  of $\alpha$
given by
$$1\mapsto \alpha.$$

The second kind of process is a mathematical representation of obtaining the outcome related to an effect in a measurement.
For every effect $a$ we define the process $\psi_{a}: E^* \longrightarrow \mathbb{R}$ of \emph{registration} of the 
outcome $a$,
taking $\alpha$ to $\alpha(a)$.

\begin{defi}
 A  \emph{probabilistic category} is a category $\mathcal{C}$ such that
 \begin{enumerate}
  \item Every object in $ \mathcal{C}$ is a probabilistic model, including the trivial;
  
  \item The set of morphisms between two objects in  $\mathcal{C}$ is the set of well defined processes between
  the corresponding models.

 \end{enumerate}

\end{defi}

The set of effects on a order-unit space  $E$ can be identified with a subset of $\mathcal{C}(E, I)$ by the injection
$$a \longmapsto \psi_a :E^{*}\rightarrow \mathbb{R}$$
that takes each effect $a\in E$ to the corresponding registration process $\psi_a$, 
and the set of all states on $E$ can be identified 
with a subset of 
$\mathcal{C}(I,E)$ by the injection 
$$\alpha \longmapsto \phi_{\alpha}: \mathbb{R} \rightarrow E^{*}$$
that takes each state $\alpha \in E^{*}$ to the corresponding preparation process $\phi_{\alpha}$.

We must make one more imposition to the kind of categories representing probabilistic theories. 
We already know how to represent bipartite systems, via equation \eqref{eqmapcomposite}, but when  we consider tripartite systems the 
composition may not be associative. This is not a trivial requirement, but it is a very natural one. This property implies that $\mathcal{C}$
has to be a \emph{symmetric monoidal category} \cite{Lane98}.

\begin{defi}
 A \emph{state-complete probabilistic theory} is a probabilistic category $\mathcal{C}$, equipped with a rule of composition $\mathcal{C} \times
 \mathcal{C} \rightarrow \mathcal{C}$ assigning to every pair of models its composition according to
 equation \eqref{eqmapcomposite}, making $\mathcal{C}$ a symmetric-monoidal 
 category.
\end{defi}

This kind of probabilistic theory is called state complete because every mathematically well defined state in 
$E$ is an allowed state on the model. When we deal with real systems, there may be physical constraints that 
forbid some particular states, but we will not deal with this here.

\begin{as}
 We only consider state-complete probabilistic categories.
\end{as}

We will see many other physical impositions we can make on the  system that restricts the  set of allowed states
in chapter \ref{chapterlovasz} and appendix \ref{chaptertsirelson}. 

\subsection{Dual Processes}

The discussion above can be made using maps between effects instead of maps between states.
For every process $\phi: E^* \longrightarrow F^*$, there is a dual process

$$\phi^*: F \longrightarrow E$$
given by $\alpha(\phi^*(b))= \phi(\alpha)(b)$ for all $b \in F$ and $\alpha \in E^{*}$.
Physically, getting  the outcome related to the effect $\phi^*(b)$ in a measurement corresponds to apply process $\phi$ 
first and then obtain outcome $b$ in a measurement.

Given a probabilistic category $\mathcal{C}$ we can define the dual category $\mathcal{C}^{*}$ using the dual processes for $\mathcal{C}(E,F)$
instead of the processes. In physicist's language, $\mathcal{C}$ represents the Schr\"odinger picture while $\mathcal{C}^{*}$ 
represents the
Heisenberg picture \cite{Cohen77}.

The most important probabilistic theories  for us are finite dimensional classical and quantum probability theories. 
They will be presented in detail in
sections \ref{sectionclassical} and \ref{sectionquantum}. Of course, they are not the only examples we can provide. In references \cite{BW12} and \cite{Barret06}, the reader can find
a number of examples differing from these ones. We will not present these examples here, but we emphasize that probabilistic theories beyond 
quantum theory are of great importance in this work.

\section{Classical Probability Theory}
\label{sectionclassical}

Classical probability theory was developed to describe the most elementary random processes we deal with in our everyday life.
The simplest example is a coin toss, where there are two possible outcomes. 
Another familiar example is the throwing of a dice: if we look at the top face of the die, there 
are six possible outcomes: the numbers $\{1,2,3,4,5,6\}$. Of course we can come up with much more complicated examples,
but the most important features are  already present in these simple  cases. The axiomatic system we will present here was 
introduced by the soviet mathematician
 Andrey Kolmogorov in  the 1930s \cite{SW95, GS01, James04}.  
 Although this system can be used to describe a large variety of random phenomena,
 it is not enough to describe the behavior of quantum systems. This leads to other axioms for probability theory
 and an example of such more general formulation is the one present in the previous sections. 
 
 Now we study carefully  classical models and we stress  how the elements of the previous sections
 are represented in this class. All axioms in classical probability theory look very natural and it was indeed
 a shock to many people
 that nature does not always behave in this way. These axioms imply a number of singular properties that make this kind of theory different
 from any 
 other in the framework. In this sense,  classical theory  emerges as a very special exception.

\subsection{\textsf{Sample Spaces}}

A classical probabilistic model consists of three basic elements. The first one is a set  whose elements
represent all possible outcomes in an experiment.  This set is called the \emph{sample space} of the experiment.

\begin{defi}
 The \emph{sample space} $\Omega$ of a random experiment is a set in which every element $\omega \in \Omega$ is associated to a 
 possible outcome of the experiment.
\end{defi}


\begin{ex}[The classical bit]

 The sample space of the game of heads and tails is a set with two elements, corresponding 
 to the two possible outcomes of the experiment of tossing a coin. We could use the set $\{H,T\}$ with the letter $H$ 
 representing outcome \emph{heads} and letter $T$ representing outcome \emph{tails}. It is sometimes easier to work with sample spaces 
 with numerical elements, since this allows the definition of a number of useful quantities we can use to get
 information about the experiment
 we are describing. In this case we generally use the set $\{0,1\}$, but $\{-1,1\}$ is also pretty common. A classical system with sample space with only two elements is
 called a \emph{classical bit}.
\end{ex}
\begin{ex}

 The sample space of the experiment of throwing a dice and looking at its superior face is the set $\{1,2,3,4,5,6\}$, as we already know.
\end{ex}

\begin{ex}
 Sometimes it is not that trivial to define what is the  sample space of an experiment. 
 Think about the possible outcomes of the following experiment: select randomly an inhabitant of a country
 and measure their height. In principle the height of a person is a number in the interval $(0, \infty)$, but of course we 
 know that some values in this set are highly unlikely, such as a height of a billion meters. The interval $(0,3)$ seems a 
 much more reasonable sample space. Nowadays in Brazil we could use the interval $(0, 2.37]$, since the tallest man we have 
 record of, according to a quick search in Google, is Joelisson Fernandes, who claims to be the tallest person in Brazil 
 with $2.37m$ \cite{wikijoelisson}\footnote{If you know anyone taller then Joelisson, let us know.}. If we were in Turkey instead of Brazil we would have to use ate least the interval $(0, 2.51]$, since the tallest man alive in Earth is the Turkish Sultan 
K\"osen with $2.51m$ \cite{wikisultan}.
\end{ex}

The set of all subsets of $\Omega$ will be denoted by $\mathcal{P}(\Omega)$. We would like to assign a probability for
all subsets of $\Omega$, but in general it is not possible to do that in a reasonable manner. Because of this, we need the 
definition  of  measurable sets, the elements of $\mathcal{P}(\Omega)$ for which we can define a probability.
This is the second basic element of a classical probabilistic model.

\begin{defi}
$\Sigma \subset \mathcal{P}\de{\Omega}$ is a $\sigma$-algebra if it satisfies:
\begin{enumerate}
\item $\Omega, \emptyset \ \in \ \Sigma$.
\item $\Sigma$ is closed under complementation: If $A \in \Sigma$, then so is its complement, $\Omega \setminus A$.
\item $\Sigma$ is closed under countable unions: If $\{A_1, A_2, A_3, ... \}$ is a countable sequence 
of elements of $\Sigma$, then  $A = \bigcup_i A_i$ is in $\Sigma$.
\end{enumerate}

\label{defimeasure}
The sets $A \in \Sigma$ are called \emph{measurable sets}.
An ordered pair $\left(\Omega, \Sigma\right)$ where $\Omega$ is a sample space and $\Sigma$ is a 
$\sigma$-algebra over $\Omega$ is called a \emph{measurable space}.
\end{defi}

\begin{ex}

 The trivial $\sigma$-algebra contains only two elements: the entire set $\Omega$ and is complement, the empty set $\emptyset.$

\end{ex}

\begin{ex}[Finite and countable sample space] When $\Omega$ is a finite or a countable set, we 
usually take  $\Sigma=\mathcal{P}(\Omega)$.  This set is a $\sigma$-algebra even if $\Omega$ is not countable, but in this 
case it might not be a good choice. For a classical bit with sample space $\{0,1\}$ we have
$$\Sigma =\left\{\emptyset,\{0\},\{1\}, \{0,1\}\right\}.$$
For the dice, $\Sigma$ has $64$ elements. In the finite case, if $\Omega$ has $n$ elements, $\Sigma$ has $2^n$ elements. 
\end{ex}

\begin{ex}[Continuous sample space] Consider the experiment that consists of 
selecting  a number in the interval $[0,1]$ with equally distributed probability.
In this example, $\Omega=[0,1]$ and if we take 
$\Sigma$ to be $\mathcal{P}\left([0,1]\right)$ the $\sigma$-algebra will be too big and we will not be able to
define a probability for all subsets in it. We have to choose $\Sigma$ in such a way that it   allows the definition of a 
probability for all its elements, respecting the natural properties   probabilities must have, but in 
such a way that it is not too small to live behind some subsets of $[0,1]$ for which the definition of a
probability is almost obvious. For example, consider the subset $A=\left[0,\frac{1}{3}\right]$. If we 
choose a point in $[0,1]$ randomly, and if  all points are equally likely, we expect this point to be in $A$ one 
third of the time. This means that we should define the probability of $A$ as $\frac{1}{3}$, and hence we would 
like to have $A \in \Sigma$. A similar argument holds for all intervals. This means that every interval should belong
to $\Sigma$.  Most of the times the most convenient choice is to take $\Sigma$ as the minimal $\sigma$-algebra  that 
contains all intervals. This  is the Borel $\sigma$-algebra $\mathcal{B}$ and its elements are called \emph{Borelians}.

\end{ex}

The third element we need in a classical probability space is the assignment of a probability to each measurable set
$A \in \Sigma$. We have been using this notion without further consideration, with the interpretation
that this number quantifies the idea of 
relative frequencies of a given outcome. It is related to the ratio
$$ \frac{\mbox{ number of occurrences of A}}{\mbox{number of independent trials of the experiment}}.$$
This definition depends on the assumption of   convergence  of this sequence after many repetitions of the experiment. 

This ratio should not be mistaken with the most naive definition of probabilities, where all atomic elements of $\Sigma$
have the same probability. Here, one is adopting the idea that there is some a priori probability distribution and that 
identically prepared repetitions of the experiment will generate frequencies that converge to such probability distribution. 
For a more precise statement, we have the many versions of the Law of Large Numbers \cite{James04}.

Being practical, we will only focus on the mathematical definition and assume 
the existence of a real number associated to each measurable set in $\Sigma$, its probability.
We assume also that this association is 
done 
in such a way that the properties expected by the interpretation of this number as relative frequencies in a experiment 
should hold.

\begin{defi}
Let $\left(\Omega, \Sigma\right)$ be a measurable space.  A function    $\mu: \Sigma \longrightarrow \overline{\mathbb{R}_+}= 
\mathbb{R}_+ \cup  \{\infty\}$
is called 
a \emph{measure} if it satisfies the following properties:
\begin{enumerate}
\item Non-negativity: $\mu(A) \geq 0 \ \forall A \in \Sigma$;
\item Nullity: $\mu(\emptyset)=0$;
\item Countable additivity (or $\sigma$-additivity): For all countable collections $\left\{A_1, A_2, \ldots\right\}$   
of pairwise disjoint sets $A_i \in \Sigma$:
\be \mu\left(\bigcup_i A_i\right)=\sum_i \mu(A_i). \label{equationmeasureuni}\ee
\end{enumerate}

The measure $\mu$ is called a \emph{probability measure} if $\mu\left(\Omega\right)=1$. If 
$\mu$ is a probability measure over  the measurable space $\left(\Omega, \Sigma\right)$, the   
triple $\left(\Omega, \Sigma, \mu\right)$ is called a \emph{classical probability space}\footnote{Classical mathematicians 
do not need the word \emph{classical} and use the term \emph{probability space} 
for the triple $\left(\Omega, \Sigma, \mu\right)$. We will add a third word to avoid confusion with the general 
theories introduced in section \ref{sectionstates}.}.
\end{defi}

\begin{defi}
 A subset $A$  of $\Omega$ for which a probability can be assigned is called an \emph{event}. If $A=\{w\}$, it is called
 an \emph{elementary event}.
\end{defi}

It follows from definition \ref{defimeasure} that a measure $\mu$ should also satisfy, as expected, the
properties of monotonicity 
and sub-additivity.

\begin{cor}[Monotonicity]
If $A_1$ and $A_2$ are measurable sets with $A_1 \subset A_2$ then
$$\mu(A_1) \leq \mu(A_2).$$
\end{cor}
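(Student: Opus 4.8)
The plan is to reduce $A_2$ to a disjoint union containing $A_1$ and then invoke additivity together with non-negativity. First I would observe that the set difference $A_2 \setminus A_1 = A_2 \cap \left(\Omega \setminus A_1\right)$ is measurable: $\Omega \setminus A_1 \in \Sigma$ because $\Sigma$ is closed under complementation, and the intersection of two measurable sets lies in $\Sigma$ by De Morgan's law together with closure under countable (hence finite) unions. This guarantees that $\mu\left(A_2 \setminus A_1\right)$ is defined, which is what lets the argument even get started.

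Next I would use the hypothesis $A_1 \subset A_2$ to write the disjoint decomposition
$$A_2 = A_1 \cup \left(A_2 \setminus A_1\right),$$
where the two sets on the right are disjoint. The natural tool is countable additivity, equation \eqref{equationmeasureuni}, but that property is phrased for countable collections, so I would apply it to the padded sequence $\left\{A_1, A_2 \setminus A_1, \emptyset, \emptyset, \ldots\right\}$ of pairwise disjoint measurable sets. Using nullity $\mu(\emptyset) = 0$ to discard the trailing terms, this yields the finite additivity statement
$$\mu(A_2) = \mu(A_1) + \mu\left(A_2 \setminus A_1\right).$$

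Finally, non-negativity gives $\mu\left(A_2 \setminus A_1\right) \geq 0$, so dropping this term from the right-hand side can only decrease it, and I conclude $\mu(A_2) \geq \mu(A_1)$, as claimed.

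There is no real obstacle here; the result is an immediate consequence of additivity and non-negativity. The only point requiring a moment's care is the passage from the \emph{countable} additivity axiom as stated to the \emph{finite} additivity actually used, which is handled cleanly by padding with copies of $\emptyset$ and appealing to nullity. Everything else is bookkeeping about which sets belong to the $\sigma$-algebra $\Sigma$.
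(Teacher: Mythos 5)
Your proof is correct and is precisely the standard argument the paper has in mind: the paper states this corollary without any proof, asserting only that it "follows from definition \ref{defimeasure}," and your decomposition $A_2 = A_1 \cup \left(A_2 \setminus A_1\right)$ combined with finite additivity (obtained from the countable-additivity axiom by padding with copies of $\emptyset$) and non-negativity is the canonical way to fill in that gap. Your extra care about the measurability of $A_2 \setminus A_1$ and about passing from countable to finite additivity is sound, and nothing is missing.
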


\begin{cor}[Sub-additivity]
 For any countable sequence $\left\{A_1, A_2, \ldots\right\}$ of sets $A_i \in \Sigma$, not necessarily disjoint, we have
$$\mu\left(\bigcup_i A_i\right)\leq \sum_i \mu(A_i).$$
\end{cor}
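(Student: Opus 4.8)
The plan is to reduce the sub-additive case to the countable additivity axiom by replacing the (possibly overlapping) sets $A_i$ with a pairwise disjoint family having the same union. First I would \emph{disjointify}: set $B_1 = A_1$ and, for $n \geq 2$,
\[
B_n = A_n \setminus \bigcup_{i=1}^{n-1} A_i.
\]
Before using these sets I must confirm they are measurable. This follows directly from Definition \ref{defimeasure}: since $\Sigma$ is closed under complementation and under countable (hence finite) unions, De Morgan's laws give closure under finite intersections, and therefore under set differences via $A \setminus B = A \cap (\Omega \setminus B)$. Thus each $B_n \in \Sigma$.

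The next step is to record the three defining properties of the disjointification. The $B_n$ are pairwise disjoint, because for $m < n$ every point of $B_n$ is removed whenever it lies in $A_m \supseteq B_m$; each satisfies $B_n \subseteq A_n$ by construction; and the unions agree, $\bigcup_n B_n = \bigcup_n A_n$. The inclusion $\bigcup_n B_n \subseteq \bigcup_n A_n$ is immediate from $B_n \subseteq A_n$, while any point of the right-hand union lies in some $A_n$ and hence belongs to $B_{n_0}$ for the least index $n_0$ with that property.

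With these facts in hand the conclusion is a short computation. Applying countable additivity \eqref{equationmeasureuni} to the disjoint family $\{B_n\}$ gives
\[
\mu\!\left(\bigcup_n A_n\right) = \mu\!\left(\bigcup_n B_n\right) = \sum_n \mu(B_n),
\]
and the Monotonicity corollary established above, together with $B_n \subseteq A_n$, yields $\mu(B_n) \leq \mu(A_n)$ for every $n$. Summing these inequalities term by term produces $\sum_n \mu(B_n) \leq \sum_n \mu(A_n)$, which combined with the displayed equality gives the claim.

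The only genuinely delicate points are the measurability of the $B_n$ and the verification that the disjointified family preserves the union; once this set-theoretic bookkeeping is settled, the measure-theoretic content is carried entirely by countable additivity and monotonicity, both of which are already available. I therefore expect the main obstacle to be purely notational rather than conceptual: one must be careful that the least-index argument for $\bigcup_n B_n = \bigcup_n A_n$ is phrased correctly so that every point of the union is captured by exactly one $B_n$.
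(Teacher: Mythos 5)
Your proof is correct. Note that the paper itself offers no proof of this corollary: it merely asserts that monotonicity and sub-additivity ``follow from the definition'' of a measure, so there is no argument of the paper's to compare against. Your disjointification $B_1 = A_1$, $B_n = A_n \setminus \bigcup_{i=1}^{n-1} A_i$ is exactly the canonical way to fill that gap: the measurability of the $B_n$, their pairwise disjointness, the preservation of the union, and the final combination of countable additivity with monotonicity are all handled correctly, and the least-index argument you flag as the delicate point is phrased properly.
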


\begin{ex}[The classical bit]
A probability  measure in the measurable space of a classical bit is a vector in $\mathbb{R}^2$ of the form
$$\left[\begin{array}{c}
p\\
1-p\end{array}\right]$$
where $p=\mu(0)$, $1-p=\mu(1)$ and  $0 \leq p \leq 1$. 
\label{exbit2}
\end{ex}

\begin{ex}[The discrete case]
In the discrete case, a probability measure $\mu$ in $\left(\Omega, \mathcal{P}(\Omega)\right)$ is defined by a function 
$p: \Omega \rightarrow \overline{\mathbb{R}_+}$ such that
$$\sum_{\omega \in \Omega}p(\omega)=1.$$
The value of $\mu$  in a event $A \in \mathcal{P}(\Omega)$ is then given by equation \eqref{equationmeasureuni}
$$\mu(A)=\sum_{\omega \in A} p(\omega).$$
\end{ex}

\begin{ex}[The Lebesgue measure]
One important measure in $\left([0,1], \mathcal{B}\right)$ is the Lebesgue measure $l$. The value of this measure in a interval $[a,b] \subset [0,1]$ is
$$l\left([a,b]\right)=b-a.$$
This definition can be extended to all elements of the $\sigma$-algebra $\mathcal{B}$ in a unique manner \cite{James04}. 

\end{ex}

We will consider only finite sample spaces, 
which will meet the requirement of assumption \ref{asfinitetomographical}. 
We will always  take $\Sigma= \mathcal{P}(\Omega)$ for simplicity.

\begin{defi}
A \emph{classical probabilistic model}  is a model  in which every normalized state is 
a probability measure in a  measurable space
$\left(\Omega, \Sigma\right)$. The set $\mathcal{T}$ of allowed transformations is the greatest set of linear transformations
in $\mathcal{S}$ satisfying constraint \ref{coperation}.
\end{defi}

Notice that when we assume  $\Sigma = \mathcal{P}\left(\Omega\right)$ the only important information is the number
of elements in the sample space:  two sample spaces with the same number of elements describe
the same type of  system.   

\subsection{\textsf{Transformations}}

An allowed transformation $M \in \mathcal{T}$ must map a state
into another allowed state, according to item  \ref{coperation3} of constraint \ref{coperation}.
This means that every element of
$\mathcal{T}$ is a linear map that takes every probability measure in $\Omega$ to another probability measure in $\Omega$, possibly
multiplied by a constant between zero and one, if the transformation does not preserve normalization. Constraint
\ref{coperation} implies that each entry
of the matrix associated to this transformation must be positive, and the sum of each column must be
a number between zero and one. In the case that $M$ preserves normalization, it is
a stochastic matrix.

There is an important class of transformations in $\mathcal{T}$, given by the indicator functions of elements of the 
$\sigma$-algebra $\Sigma$. Let $\Omega=\{\omega_1, \ldots, \omega_n\}$ and $A \in \Sigma$. Define $I_A $ as the $n\times n$ 
real diagonal matrix with
$$(I_A)_{ii}=\left\{\begin{array}{cc}
                     1& \ \mbox{if} \ \omega_i \ \in \ A \\
                     0& \ \mbox{otherwise.}
                    \end{array}\right.$$
This matrix is an element of $\mathcal{T}$. The matrices in $\mathcal{T}$ that are of this form give rise to an important class
of measurements, given by a partition of the sample space $\Omega$: let $\{A_1, \ldots, A_m\}$ be a partition of $\Omega$
such that every $A_i$ in the partition belongs to $\Sigma$. Then the set of matrices $\{I_{A_1}, \ldots, I_{A_m}\}$ defines a
measurement in the model. Given a normalized state of the system $\mu$, which is, by definition, a measure defined in 
$\left(\Omega, \Sigma\right)$, the probability $p_i$ of outcome $i$, associated to the matrix $I_{A_i}$, is given by
$$p_i=\mu(A_i).$$

To prove that all the  matrices mentioned above indeed belong to $\mathcal{T}$ we still  have to check that item \ref{coperation4}
of constraint
\ref{coperation} is also satisfied. Indeed, one can prove that for classical models, 
all transformations satisfying items \ref{coperation1}, \ref{coperation2} and \ref{coperation3} automatically satisfy  item \ref{coperation4}. 
We will do it in subsection \ref{subsectionclassicalmultipartite}.

\subsection{\textsf{Classical probabilistic theory with finite sample spaces}}

\begin{defi}
A \emph{classical probability theory} is one in which all models are classical. In this text, the sample spaces are all finite.
\end{defi}
 

Since we are dealing with finite sample spaces, without loss of generality we can consider
$\Sigma = \mathcal{P}(\Omega)$ in all models. With this assumption, each model in a classical theory is 
given by a sample space $\Omega.$ We can always use a tomographic set with only one element, 
the measurement associated to the partition in which every subset  contains only one element of $\Omega.$

\begin{cor}
If $\Omega=\{\omega_1, \ldots, \omega_n\}$, the set that contains  only  the measurement 
associated to the partition $\left\{\{\omega_1\}, \{\omega_2\}, \ldots, \{\omega_n\}\right\}$ is a tomographic
measurement for the system given by the measurable space $\left(\Omega, \mathcal{P}(\Omega)\right)$.
\end{cor}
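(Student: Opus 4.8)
The plan is to unwind the definition of \emph{tomographic} and use the elementary fact that a finite measure is completely determined by its values on singletons. Recall that in a classical probabilistic model a normalized state is a probability measure $\mu$ on $\left(\Omega, \mathcal{P}(\Omega)\right)$, so to establish the claim I must show that the list of outcome probabilities produced by the single measurement in the statement determines $\mu$ completely, matching the definition of an informationally complete set.

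First I would identify the outcomes of the measurement associated to the partition $\left\{\{\omega_1\}, \ldots, \{\omega_n\}\right\}$. By the discussion preceding the statement, outcome $i$ corresponds to the diagonal indicator matrix $I_{\{\omega_i\}}$, and for a state $\mu$ the probability of this outcome is $p_i = \mu\left(\{\omega_i\}\right)$. Hence the data produced by performing this measurement is exactly the vector $\left(\mu(\{\omega_1\}), \ldots, \mu(\{\omega_n\})\right)$ of weights on the elementary events.

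Next I would reconstruct $\mu$ on an arbitrary measurable set from this data. Since $\Omega$ is finite and $\Sigma = \mathcal{P}(\Omega)$, every event $A$ is the disjoint union $A = \bigcup_{\omega_i \in A}\{\omega_i\}$, so countable additivity \eqref{equationmeasureuni} yields $\mu(A) = \sum_{\omega_i \in A}\mu(\{\omega_i\})$. Thus the values $\mu(\{\omega_i\})$ recorded by the measurement fix $\mu(A)$ for every $A \in \Sigma$, and therefore fix the state $\mu$ entirely. By the definition of an informationally complete (tomographic) set, this shows that the single measurement in question is tomographic.

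I do not expect any genuine obstacle: the only structural ingredient is finite additivity of the measure, which propagates the singleton values to all events, and the only subtlety is simply checking that the reconstruction uses nothing beyond the recorded outcome probabilities. The finiteness of $\Omega$ guarantees both that the reconstruction is a finite sum and that the tomographic set consists of a single measurement with finitely many outcomes, in agreement with assumption \ref{asfinitetomographical}.
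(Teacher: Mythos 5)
Your proof is correct and follows essentially the same route the paper intends: the corollary is stated there without an explicit proof precisely because it follows from the two ingredients you use, namely that the partition measurement built from the indicator matrices $I_{\{\omega_i\}}$ records the singleton probabilities $\mu(\{\omega_i\})$, and that (as in the paper's discrete-case example) finite additivity then determines $\mu(A)$ for every $A \in \mathcal{P}(\Omega)$. Your write-up simply makes this implicit argument explicit, with no gaps.
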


This measurement is called \emph{maximal measurement}. 

The existence of a tomographical set with only one element is a particularity of classical theory, with drastic consequences to 
our way of thinking, as we will see soon. 

\begin{teo}

In a classical probability theory, the state space of the system associated to sample space $\Omega$ is 
a simplex of dimension $|\Omega|$.
\end{teo}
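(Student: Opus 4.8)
The plan is to exploit the fact, just established, that a single maximal measurement already constitutes a tomographic set for the model attached to $\Omega$. Writing $n=|\Omega|$ and $\Omega=\{\omega_1,\ldots,\omega_n\}$, this means the tomographic set consists of one measurement with $n$ outcomes, so the list of probabilities representing a state has exactly $n$ entries and the representing vectors live in $\mathbb{R}^d$ with $d=n$. A normalized state is by definition a probability measure $\mu$ on $\de{\Omega,\mathcal{P}(\Omega)}$, and under this identification it becomes the vector $P_\mu=\de{\mu(\{\omega_1\}),\ldots,\mu(\{\omega_n\})}$. The conditions $\mu(\{\omega_i\})\geq 0$ and $\sum_i\mu(\{\omega_i\})=1$ then say precisely that the set of normalized states is the standard probability simplex $\Delta=\DE{P\in\mathbb{R}^n : p_i\geq 0,\ \sum_i p_i=1}$.

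First I would identify the extreme points of $\Delta$. The Dirac measures $\delta_{\omega_i}$, which correspond to the canonical basis vectors $e_i$, are the point masses and are immediately seen to be extremal; conversely every $P\in\Delta$ is the convex combination $P=\sum_i p_i\,e_i$, so $\Delta=\mathrm{conv}\DE{e_1,\ldots,e_n}$ and these are exactly its vertices. Next I would bring in Assumption \ref{asstatespace}, according to which the full state space $\mathcal{S}$ is the convex hull of the normalized states together with $\overrightarrow{0}$. Combining the two facts gives $\mathcal{S}=\mathrm{conv}\DE{e_1,\ldots,e_n,\overrightarrow{0}}$, the convex hull of $n+1$ explicit points.

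It remains to check that these $n+1$ points are the vertices of a genuine simplex and to compute its dimension. For this I would verify affine independence: the difference vectors $e_1-\overrightarrow{0},\ldots,e_n-\overrightarrow{0}$ are just $e_1,\ldots,e_n$, which are linearly independent, so the $n+1$ points are affinely independent and their convex hull is an $n$-dimensional simplex with vertex set $\DE{e_1,\ldots,e_n,\overrightarrow{0}}$. Since $n=|\Omega|$, this is a simplex of dimension $|\Omega|$, as claimed. The only genuine subtlety, rather than any real obstacle, is bookkeeping about the dimension: the normalized states alone form the $(|\Omega|-1)$-dimensional face $\Delta$, and it is precisely the inclusion of the subnormalized vertex $\overrightarrow{0}$ prescribed by Assumption \ref{asstatespace} that raises the dimension to $|\Omega|$. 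One should also note that no element of $\mathcal{T}$ enters the argument, since the statement concerns only the geometry of $\mathcal{S}$.
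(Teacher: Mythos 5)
Your proof is correct and follows essentially the same route as the paper: identify the Dirac measures (the canonical basis vectors) as the extreme points of the set of normalized states, and then invoke Assumption \ref{asstatespace} to adjoin $\overrightarrow{0}$ and obtain the full state space as the convex hull of $|\Omega|+1$ points forming an $|\Omega|$-dimensional simplex. Your explicit verification of affine independence of $e_1,\ldots,e_n,\overrightarrow{0}$ is a small addition in rigor over the paper, which simply asserts the resulting hull is the $n$-dimensional simplex, but the argument is the same.
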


\begin{dem}
Let $\Omega=\{\omega_1, \ldots, \omega_n\}$ and take the tomographic set that consists only of the maximal 
measurement $M$ with outcomes $r_1, \ldots, r_n$. Define the measure $\mu_i$ given by
$$\mu_i\left(\omega_j\right)=\delta_{ij}.$$
Since the states in a classical model are given by probability measures in $\Omega$, all of these $n$ measures 
represent states in the state space of the system $\mathcal{S}.$ 
They are also the only pure states in $\mathcal{S}$, since all other measures in $\Omega$ can be 
written as convex sums of the $\mu_i$. This implies that the set of normalized states is the simplex of dimension $n-1$
in $\mathbb{R}^n$. 

By assumption \ref{asstatespace}, $\mathcal{S}$ is 
the convex hull of the $n$ points $\mu_i$ and $\overrightarrow{0}$, which is homeomorphic to the 
$n$-dimensional simplex in $\mathbb{R}^{n+1}$.
\end{dem}

Although the $n$-dimensional simplex is defined as a subset of $\mathbb{R}^{n+1}$, it can be represented in 
$\mathbb{R}^n$, as the convex hull of the extremal normalized states and $\overrightarrow{0}$. Figures \ref{figcbit} and 
\ref{figctrit} show this for $n=2$ and $n=3$, respectively.

\begin{figure}
\begin{center}
 \includegraphics[scale=0.4]{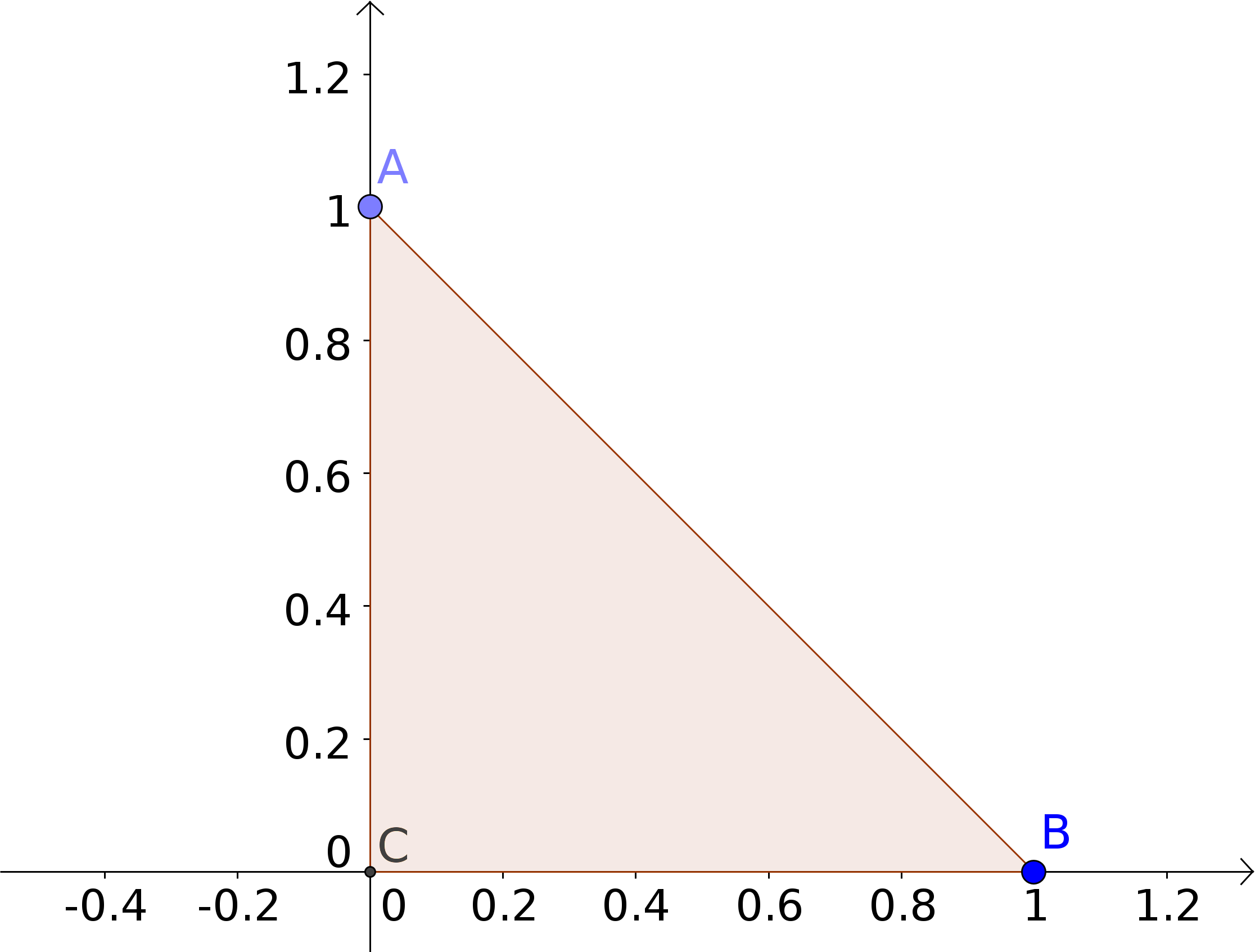}
\caption{\textsf{The state space of a classical bit. The point $A$ represents the normalized extremal state for 
which $\mu(0)=0$ and $\mu(1)=1$ and 
point $B$ represents the normalized extremal state for which $\mu(0)=1$ and $\mu(1)=0$. Point $C$
represents the unnormalized 
state $\vec{0}$.} \label{figcbit}}
 \end{center}
\end{figure} 

\begin{ex}[The state space of a classical bit]
\label{excbit}
We already saw in example \ref{exbit2} that the normalized states of a classical bit are the vectors
$$\left[\begin{array}{c}
p\\
1-p\end{array}\right]$$
where $p=\mu(0)$, $1-p=\mu(1)$ and  $0 \leq p \leq 1$. 
The state space of this system is then given by the convex hull of this set of vectors and $\vec{0}$, which is a triangle 
in $\mathbb{R}^2$. This set is shown in figure \ref{figcbit}.

\end{ex}

\begin{figure}
\begin{center}
 \includegraphics[scale=0.7]{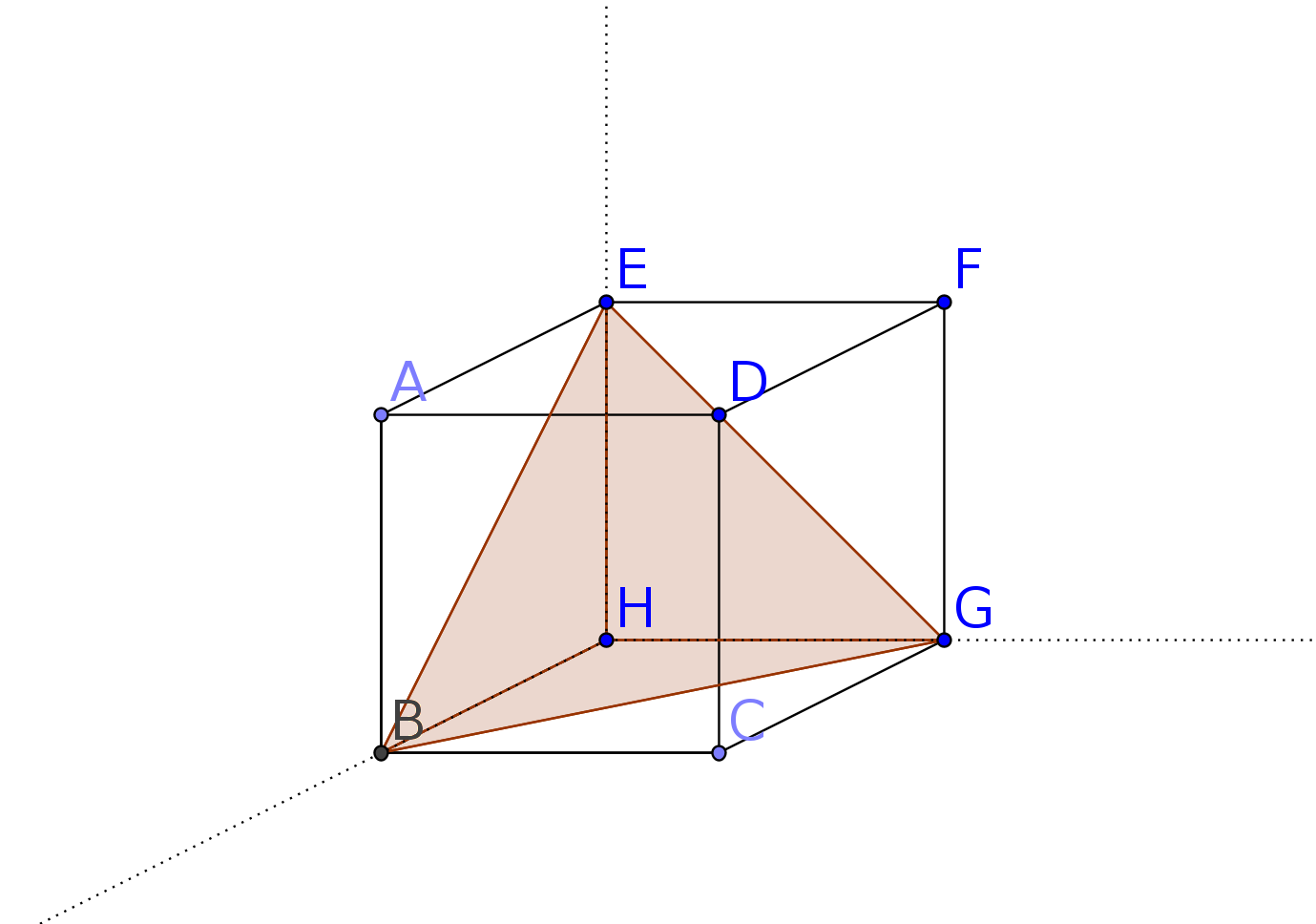}
 \end{center}
\caption{\textsf{The state space of a classical trit. The state space is the tetrahedron in $\mathbb{R}^3$ with extremal points
$B,E,G,H$. The point $B$ represents the normalized extremal state 
for which $\mu(0)=1$ and $\mu(1)=\mu(2)=0$, point $G$ represents the normalized extremal state for which $\mu(0)=\mu(2)=0$ and 
$\mu(1)=1$ and 
point $E$ represents the normalized extremal state for which $\mu(0)=\mu(1)=0$ and $\mu(2)=1$. Point $H$ represents the unnormalized 
state $\vec{0}$.}  \label{figctrit}}
\end{figure} 

\begin{ex}[The state space of a classical trit]
\label{exctrit}
The normalized sates of a classical system with sample space $\{0,1,2\}$ are vectors in $\mathbb{R}^3$ of the form
$$\left[\begin{array}{c}
p\\
q\\
1-p-q\end{array}\right]$$
where $p=\mu(0)$, $q=\mu(1)$, $1-p-q=\mu(2)$ and  $0 \leq p, q, p+q \leq 1$.
The state space of this system is then given by the convex hull of this set of vectors and $\vec{0}$, which is a tetrahedron
in $\mathbb{R}^3$. This set is shown in figure \ref{figctrit}.


\end{ex}

The simplex has a remarkable property that every point can be written uniquely as a convex sum of the extremal points. 
The converse also holds: if in a convex set  every point can be written uniquely as a convex sum of the extremal points, 
then this set is a simplex.  For a proof of this claim, see reference \cite{Rockafellar97}. 
This result has an interesting consequence when the convex set represents the state space of 
a system.

\begin{teo}
If the state space of a system is a simplex, then it can be described by a classical probability space.
\end{teo}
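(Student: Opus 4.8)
The plan is to convert the unique barycentric decomposition that characterizes a simplex into a probability measure on a finite sample space, and then use the equivalence of models from Definition~\ref{defimodel}ff. Let $P_1,\ldots,P_n$ be the extremal normalized states (the pure states) of the simplex of normalized states, and set $\Omega=\{\omega_1,\ldots,\omega_n\}$ with $\Sigma=\mathcal{P}(\Omega)$, giving a classical model $(\mathcal{S}_2,\mathcal{T}_2)$. By the defining property of a simplex quoted just above the statement, every normalized state $P$ has a \emph{unique} representation $P=\sum_i q_iP_i$ with $q_i\ge 0$ and $\sum_i q_i=1$. I would define $\xi(P)$ to be the measure $\mu_P$ with $\mu_P(\omega_i)=q_i$, and extend this to subnormalized states and to $\overrightarrow{0}$ using the unique decomposition of a point of the cone $\mathcal{S}_+$ along the vertices $P_1,\ldots,P_n$.

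First I would check that $\xi$ is a norm-preserving linear bijection onto the classical state space. Since $\overrightarrow{0}$ has norm zero it does not lie in the affine hull of the normalized vertices, so $P_1,\ldots,P_n$ are linearly independent and span a subspace $W$ that contains all of $\mathcal{S}$; the barycentric coordinates are then linear functionals on $W$, and $\xi$ is the restriction of a linear isomorphism $W\cong\mathbb{R}^n$ carrying $\mathcal{S}$ bijectively onto the convex hull of the point masses $\mu_i$ and $\overrightarrow{0}$. Uniqueness of the decomposition is exactly what makes $\xi$ well defined and injective. Norm preservation follows because, for a fixed tomographic measurement, the norm is a linear functional, so $|P|=\sum_i q_i|P_i|=\sum_i q_i=|\mu_P|$, as each $P_i$ is normalized. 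I would then transport transformations by conjugation: every $M\in\mathcal{T}_1$ preserves $\mathcal{S}$ and hence $W$, so $\zeta(M):=\xi\,M|_W\,\xi^{-1}$ is a well-defined linear map with $\zeta(M)\xi(P)=\xi(MP)$ for all $P\in\mathcal{S}_1$; combined with norm preservation this yields $|MP|=|\zeta(M)\xi(P)|$, the identity required for equivalence.

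It then remains to see that $\zeta$ actually maps $\mathcal{T}_1$ \emph{onto} $\mathcal{T}_2$, and here I would invoke maximality (Assumption~\ref{allowedt}): both transformation sets are the largest sets of well-defined transformations, so it suffices that $M$ is well defined exactly when $\zeta(M)$ is. Positivity, normalization and single-system state preservation (items \ref{coperation1}--\ref{coperation3} of Constraint~\ref{coperation}) transfer immediately, since $\xi$ is an order- and norm-preserving isomorphism of the state spaces. The main obstacle is the complete-state-preservation condition (item~\ref{coperation4}), which concerns the action of $M\otimes I$ on a composite with an arbitrary ancilla and therefore cannot be read off from the single-system isomorphism alone. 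To push it through I would appeal to the local tomographic principle together with Theorem~\ref{tensor}, which identify the composite state space with the tensor product of the factor spaces; this makes $\xi\otimes\mathrm{id}$ again a norm-preserving linear bijection of the relevant composite state spaces, so that well-definedness is preserved in both directions. Granting this last point, $\zeta$ is a bijection $\mathcal{T}_1\to\mathcal{T}_2$, the two models are equivalent, and the system is described by the classical probability space $\left(\Omega,\mathcal{P}(\Omega),\mu\right)$.
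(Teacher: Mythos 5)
Your proof is correct, but there is nothing in the paper to compare it against line by line: the paper states this theorem with \emph{no proof at all}, presenting it as an immediate ``interesting consequence'' of the unique-decomposition property of simplices quoted in the paragraph just before it. Your construction of $\xi$ is exactly that implicit argument made precise --- identify the pure states $P_1,\ldots,P_n$ with the points of a finite sample space and let uniqueness of the barycentric decomposition turn each state into a well-defined (sub)normalized measure --- and your observation that the norm functional separates $\overrightarrow{0}$ from the affine hull of the normalized vertices, so that $P_1,\ldots,P_n$ are linearly independent and the coordinates are genuinely linear, is a detail the paper skips entirely. Where you go beyond the paper is in reading ``can be described by'' as equivalence of probabilistic models, which obliges you to produce the bijection $\zeta$ on transformations; conjugation by $\xi$ together with maximality (Assumption~\ref{allowedt}) is the right mechanism, and items \ref{coperation1}--\ref{coperation3} of Constraint~\ref{coperation} transfer exactly as you say. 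The one soft spot is the transfer of complete state preservation (item~\ref{coperation4}): Theorem~\ref{tensor} identifies the composite \emph{vector} spaces under $\xi\otimes\mathrm{id}$, but the framework never specifies precisely which vectors of that tensor product are the allowed composite states of a general model, so your claim that well-definedness transfers in both directions is an assertion, not a deduction. Note that for classical models the paper obtains item~\ref{coperation4} from items \ref{coperation1}--\ref{coperation3} in subsection \ref{subsectionclassicalmultipartite} only by \emph{assuming} that every composite state is a convex sum of product states; your simplex model would need the analogous separability fact, which is exactly what the transfer presupposes. This is a limitation inherited from the paper's own informal treatment of composites rather than a flaw in your approach; within the paper's standard of rigor your proof is complete, and considerably more careful than anything the paper offers for this statement.
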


Notice here that the only important thing in the  classical probability spaces we 
consider in this text is the number of elements of $\Omega$, since $\Sigma$ is always equal to
$\mathcal{P}(\Omega)$. This implies that once $|\Omega|$ is fixed, both the state space and the set of measurements are 
determined and it makes no difference which particular  symbols we use to represent the elements of $\Omega$.

\subsection{\textsf{Compatibility}}
In section \ref{sectionstates} we defined the notion of compatibility of measurements,  connected to joint mesurability of them. 
For measurements with repeatable outcomes in classical probability theory there are no incompatible measurements, 
which makes the compatibility concept unnecessary.
This is quite easy to prove: the maximal measurement is a refinement for all other measurements at the same time,
a consequence of the fact that a finite intersection of sets in a $\sigma$-algebra is also an element of
the $\sigma$-algebra.

\begin{cor}
In a classical system, all measurements with repeatable outcomes are compatible.
\end{cor}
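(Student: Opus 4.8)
My plan is to reduce the statement to the purely combinatorial fact that any finite family of partitions of a set admits a common refinement, the substantive part being the identification of which measurements this applies to. So the first thing I would establish is that an outcome-repeatable measurement on a finite classical model is necessarily \emph{sharp}, i.e.\ its effects are indicator matrices $I_{A_1}, \ldots, I_{A_m}$ for a partition $\{A_1, \ldots, A_m\}$ of $\Omega$, so that for a state $\mu$ the outcome $i$ has probability $\mu(A_i)$. This is the only point with real content: general classical measurements need not be sharp (fuzzy, randomized effects are allowed, as the ``Transformations'' subsection already hints, and such effects are typically not repeatable), and it is repeatability that forces the effect vectors to take only the values $0$ and $1$ on the point masses $\delta_{\omega_k}$, which are the pure states of the model.

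To see the sharpness I would argue on the vertices of the simplex. Evaluating an effect $e_i$ on the point mass $\delta_{\omega_k}$ gives $(e_i)_k = p(i|\delta_{\omega_k})$. I would then show that repeatability forbids $(e_i)_k \in (0,1)$: if outcome $i$ occurred on $\delta_{\omega_k}$ with probability strictly between $0$ and $1$, then conditioning on $i$ produces a state on which, by repeatability, outcome $i$ must now be certain while every other outcome must vanish; tracking the induced transformation back to the extreme input $\delta_{\omega_k}$ forces $(e_i)_k \in \{0,1\}$. Granting this, each $e_i$ equals the indicator of $A_i = \{\omega_k : (e_i)_k = 1\}$, and since the effects of a measurement sum to the order unit $u$ the family $\{A_i\}$ is a partition of $\Omega$.

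With this in hand the rest is routine. Given repeatable measurements $j_1, \ldots, j_n$ corresponding to partitions $\mathcal{A}^1, \ldots, \mathcal{A}^n$, I would take as their common refinement the maximal measurement $j$ associated with the finest partition $\{\{\omega_1\}, \ldots, \{\omega_N\}\}$ of $\Omega = \{\omega_1, \ldots, \omega_N\}$, legitimate because $\Sigma = \mathcal{P}(\Omega)$ contains every singleton. For each $s$ I would define the coarse-graining $f_s$ sending the outcome $\omega_k$ of $j$ to the index $i$ of the unique cell $A_i^s \in \mathcal{A}^s$ containing $\omega_k$; this is well defined exactly because the cells of a partition are disjoint and exhaust $\Omega$. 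Finite additivity of $\mu$ then yields
\[
\sum_{k \in f_s^{-1}(i)} p(k|j) = \sum_{\omega_k \in A_i^s} \mu(\{\omega_k\}) = \mu(A_i^s) = p(i|j_s),
\]
which is precisely the relation required by Definition \ref{deficompatible}. (If one prefers not to assume $\Sigma = \mathcal{P}(\Omega)$, one instead refines by the nonempty intersections $\bigcap_s A_{i_s}^s$, which lie in $\Sigma$ because a $\sigma$-algebra is closed under finite intersections, exactly the fact the surrounding text emphasizes.) Thus every $j_s$ is a coarse graining of a single measurement, so the family is compatible.

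I expect the sharpness argument of the second paragraph to be the only genuine obstacle; once it is granted, compatibility is just the existence of a common refinement of finitely many partitions, which the additivity computation makes immediate. This is why the statement sits naturally as a corollary rather than a theorem.
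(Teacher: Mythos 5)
Your second paragraph — the common-refinement computation — is exactly the paper's own argument: the paper disposes of this corollary in the single sentence preceding it, observing that the maximal measurement refines every partition measurement because finite intersections of sets in a $\sigma$-algebra are measurable, which is precisely your coarse-graining $f_s$ plus the additivity display. The gap is in the first part, the one you single out as "the only point with real content." In this paper's framework a measurement is a set of allowed \emph{transformations} $\{M_i\}$, and repeatability constrains the effect on the \emph{post-measurement} state, not on the input state; the transformation attached to an outcome is free to move even a point mass to a completely different point of the simplex. Concretely, on $\Omega=\{\omega_1,\omega_2,\omega_3\}$ with point masses $\delta_{\omega_k}$, take the two-outcome measurement defined by $M_1P=\left(p_1+\tfrac{1}{2}p_2\right)\delta_{\omega_1}$ and $M_2P=\left(\tfrac{1}{2}p_2+p_3\right)\delta_{\omega_3}$. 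Both matrices have nonnegative entries with column sums at most one, they sum to a normalization-preserving (stochastic) map, and state preservation is clear, so this is a legitimate classical measurement. It is outcome-repeatable: after outcome $1$ the state is $\delta_{\omega_1}$, on which outcome $1$ then has probability $1$, and symmetrically for outcome $2$. Yet its effects are $\left(1,\tfrac{1}{2},0\right)$ and $\left(0,\tfrac{1}{2},1\right)$ — not indicators of any partition, and not obtainable by coarse-graining the maximal measurement. So "repeatable $\Rightarrow$ sharp" is false as stated, and your phrase "tracking the induced transformation back to the extreme input" is exactly where the argument breaks: there is nothing to track back, since the transformation need not fix, or even preserve the support of, $\delta_{\omega_k}$.

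There are two ways to close this. One is to read the corollary as the paper implicitly does: the repeatable measurements under discussion are the partition measurements $\{I_{A_1},\ldots,I_{A_m}\}$ introduced in the Transformations subsection (the classical analogue of the paper's equally unproven quantum assertion that projective measurements are the outcome-repeatable ones); under that reading your second paragraph alone is a complete proof and coincides with the paper's. The better repair is to notice that sharpness is not needed at all: compatibility in Definition \ref{deficompatible} concerns only outcome statistics, i.e., effects, and any finite family of classical measurements — sharp or fuzzy, repeatable or not — admits a common refinement whose effects are the pointwise products $\prod_s e^s_{i_s}$ of the given effects (well defined because classical effects are diagonal, hence commute); summing over all indices but the $s$-th returns $e^s_{i_s}$, since the effects of each measurement sum to the unit. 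That construction also covers the exotic repeatable measurement above, which your route cannot.
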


One of the central aspects of the generalization presented in section \ref{sectionstates} is that we
no longer demand this property from  our models. 

Incompatibility of measurements is one of the many strange features of non-classical theories, and specially of quantum 
theory. It sounds pretty disturbing that nature forbids us to extract all information from a system by
measuring it. The existence of incompatible measurements has many interesting and intriguing consequences. One of them is the 
noncontextual  character of some non-classical theories, which we will see in chapter \ref{chapterncinequalities}.

\subsection{\textsf{Multipartite systems in classical probability theory}}
\label{subsectionclassicalmultipartite}

In classical probability theory, we require that a multipartite system can also  be described in a
classical probability space. Given the sample spaces of the individual systems, it is
very easy to find the sample space associated to the joint system.

\begin{as}
Given a bipartite system composed of classical parties $1$ and $2$, associated to sample spaces $\Omega_1$ and $\Omega_2$. Then the global system is associated to the sample space $\Omega_1 \times \Omega_2$.
\end{as}

 By assumption \ref{asindependentstates}, all product states are allowed and 
this implies that all measures in $\Omega_1 \times \Omega_2$ are allowed states of the composite system, since every measure 
in this sample space can be written as a convex sum of product states. 
This is a very important statement, and implies the following result:

\begin{teo}
 Every state in a composite classical system can be written as a convex sum of product states.
\end{teo}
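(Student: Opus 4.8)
The plan is to prove the statement for a bipartite system; the general $n$-partite case then follows either by iterating the construction or by repeating it verbatim on the product of all $n$ sample spaces, exactly in the spirit of the proofs of theorems \ref{tensor} and \ref{theoremlcpure}. First I would recall that, for a bipartite system with parties $1$ and $2$, the global state is by definition a probability measure $\mu$ on the product sample space $\Omega_1 \times \Omega_2$. Since $\Omega_1=\{\omega^1_1, \ldots, \omega^1_n\}$ and $\Omega_2=\{\omega^2_1, \ldots, \omega^2_m\}$ are finite, $\mu$ is completely determined by the numbers $c_{ij}=\mu\left(\{(\omega^1_i,\omega^2_j)\}\right)$, and these satisfy $c_{ij}\geq 0$ and $\sum_{i,j}c_{ij}=1$ precisely because $\mu$ is a probability measure.

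Next I would identify the product states that serve as the extremal building blocks. As in the simplex theorem for classical state spaces, the pure states of party $k$ are the point masses $\mu^k_i$ defined by $\mu^k_i(\omega^k_l)=\delta_{il}$. The product state $\mu^1_i \otimes \mu^2_j$ assigns independent probabilities, so it is the measure on $\Omega_1\times\Omega_2$ giving probability $1$ to the single outcome $(\omega^1_i,\omega^2_j)$ and $0$ to every other outcome; in other words, it is exactly the point mass supported on $(\omega^1_i,\omega^2_j)$. By assumption \ref{asindependentstates}, each such product state is an allowed state of the composite system.

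Finally, I would write the decomposition explicitly as
\be
\mu=\sum_{i,j}c_{ij}\,\left(\mu^1_i \otimes \mu^2_j\right),
\ee
and verify it by comparing both sides on an arbitrary elementary event: the only term on the right that does not vanish at $(\omega^1_i,\omega^2_j)$ is $c_{ij}\left(\mu^1_i\otimes\mu^2_j\right)$, which contributes $c_{ij}$, matching $\mu$. Since the $c_{ij}$ are nonnegative and sum to $1$, this is a genuine convex combination of product states. I expect no serious obstacle here: the whole argument rests on the finiteness of the sample spaces, which is what lets the global measure be expanded as a finite positive combination of point masses, together with the observation that a point mass on a product space factors as the product of point masses on the factors, which is what converts that expansion into a combination of product states.
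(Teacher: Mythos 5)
Your proof is correct and follows essentially the same route as the paper, which justifies the theorem with the one-line observation (just before the statement) that every measure on the finite product sample space $\Omega_1\times\Omega_2$ is a convex sum of product states. You have simply made that observation explicit: expanding $\mu$ in point masses with weights $c_{ij}$ and noting that a point mass on the product space factors as $\mu^1_i\otimes\mu^2_j$ is exactly the decomposition the paper has in mind.
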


 This is not  true for every 
theory. In fact, in theorem \ref{theoremlcpure} we have proved that all  states can be written as a linear combination of 
product states, but there might be states for which it is not possible to find a linear combination of this type
with all coefficients positive. This is the case for quantum theory and also for many other theories in  framework. 
As a corollary of this observation, we can prove the following result:

\begin{teo}
 In a classical model, if a linear map defined in $\mathcal{S}$ satisfies  positivity, normalization  and state preservation,
 it automatically satisfies 
 complete state preservation.
 
\end{teo}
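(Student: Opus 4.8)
The plan is to reduce complete state preservation to two facts already established for classical composites: the theorem just proven, that every joint state decomposes as a convex combination of product states, and assumption \ref{asindependentstates}, that products of allowed states are allowed. Let $M$ be the matrix of the given transformation on system $1$, and let $P^{12}$ be an arbitrary state of the composite classical system with sample space $\Omega_1\times\Omega_2$. What I must show is that $(M\otimes I)P^{12}\in\mathcal{S}^{12}$, where $\mathcal{S}^{12}$ is the state space of the composite, and that system $2$ may be taken arbitrary.

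First I would invoke the decomposition of $P^{12}$ as a convex sum of product states,
\be
P^{12}=\sum_k q_k\, P^1_k\otimes P^2_k,
\ee
with $q_k\geq 0$, $\sum_k q_k\leq 1$, and each $P^1_k\in\mathcal{S}^1$, $P^2_k\in\mathcal{S}^2$ normalized. Using linearity together with theorem \ref{matrixlocal}, which identifies the action of the local map on the composite with $M\otimes I$, I obtain
\be
(M\otimes I)P^{12}=\sum_k q_k\,(MP^1_k)\otimes P^2_k.
\ee

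Next I would exploit item \ref{coperation3} (state preservation) of constraint \ref{coperation}: each $MP^1_k$ is an allowed, possibly subnormalized, state of system $1$. Setting $p_k=\left|MP^1_k\right|$, positivity (item \ref{coperation1}) gives $0\leq p_k\leq 1$, and $MP^1_k=p_k\,Q^1_k$ for some normalized $Q^1_k\in\mathcal{S}^1$. Each product $Q^1_k\otimes P^2_k$ is then a product of normalized allowed states, hence a normalized allowed state of the composite by assumption \ref{asindependentstates}. Therefore
\be
(M\otimes I)P^{12}=\sum_k (q_kp_k)\,Q^1_k\otimes P^2_k,
\ee
a nonnegative combination of normalized allowed states with total weight $\sum_k q_kp_k\leq\sum_k q_k\leq 1$. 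By assumption \ref{asstatespace} the space $\mathcal{S}^{12}$ is the convex hull of the normalized allowed states together with $\overrightarrow{0}$, so every such subnormalized combination lies in $\mathcal{S}^{12}$. This is exactly complete state preservation, and since system $2$ was an arbitrary classical system, the conclusion holds in full generality. Note that normalization (item \ref{coperation2}) is never used: complete state preservation of each effect follows from positivity and state preservation alone.

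The step I expect to be the main obstacle is the careful bookkeeping of subnormalization: because $M$ may strictly decrease the norm, $MP^1_k$ need not be normalized, and one must verify that the ``lost'' weight $1-p_k$ is absorbed by the $\overrightarrow{0}$ vertex of $\mathcal{S}^{12}$ rather than pushing the image outside the state space. The decomposition into product states is precisely what makes this transparent, since it lets me apply assumption \ref{asindependentstates} factor by factor and reduce the whole question to membership in the convex hull described by assumption \ref{asstatespace}.
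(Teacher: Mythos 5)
Your proof is correct and takes essentially the same route as the paper's: decompose the joint state as a convex combination of product states, apply the map factor by factor, and conclude via the fact that products of allowed states are allowed states of the composite. The only difference is that you track subnormalization explicitly (factoring $MP^1_k = p_k\,Q^1_k$ and absorbing the lost weight into $\overrightarrow{0}$ using assumption \ref{asstatespace}), a bookkeeping detail the paper's own proof glosses over, and you correctly observe that the normalization hypothesis is never actually needed.
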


\begin{dem}
 In this proof we use the notation introduced in section \ref{sectionstates}.
 Let $f$ be a map satisfying positivity, normalization  and state preservation. This means that $f$ takes a 
 state in $\mathcal{S}$ to 
 another state in $\mathcal{S}$. Suppose now that our system is  part of a composite system. 
 Let $p$ be a state of the composite system. Since every state of the system 
 can be written as a convex combination of product states, all of them are of the form
 $$p=\sum_i\alpha_i p^1_i  \otimes p^2_i$$
 where each $p^1_i$ is a state in $\mathcal{S}$, each  $p^2_i$ is a state of 
some other arbitrary subsystem,  $0 \leq \alpha_i \leq 1$ for every $i$ and $\sum_i \alpha_i=1$. 
Then, if we apply the map $f \otimes I$ we get
 $$p'=\sum_i \alpha_i f\left(p^1_i\right) \otimes p^2_i.$$
 Since $f\left(p^1_i\right)$ is an allowed state in $\mathcal{S}$ for 
 every $i$, $p'$ is also a convex combination of product states, and hence another valid state of the composite system.
\end{dem}

In this thesis, every time we say that  a system or an experiment is \emph{classical}, we mean 
that it can be described  by a classical probabilistic model.
We stress this fact because the word \emph{classical} can be used in many different situations with different meanings and we do not want 
to create any confusion. In the same way, every time we say that something is not classical we mean that it does not admit a
description through a classical probabilistic model. Many of the models in the framework presented in this chapter
are not classical. One of them is the model obtained with quantum theory, which 
we will present in the next section. 

\section{\textsf{Quantum Probability Theory}}
\label{sectionquantum}

\epigraph{\begin{center}
\textit{\small{Quantum Mechanics deals with nature as She is - absurd.}}\end{center}}{Richard Feynman, \cite{Feynman88}}


Quantum theory is, at the same time, the first physical theory where the probabilistic character is considered intrinsic,
and the first physical theory which does not fit into classical probabilistic models 
under reasonable assumptions. In this section we will see
how states and measurements are described in this theory. For a more complete treatment and to applications on the description
of specific physical systems, see \cite{Feynman65,Cohen77,Peres95, NC00, Griffiths05, ATB11}.

\begin{defi}
A quantum probabilistic model is a model in which the   state  space  is in one-to-one correspondence
with the set of positive operators $\rho$ acting on a 
fixed Hilbert space $\mathcal{H}$ over $\mathbb{C}$ such that $\tr\left(\rho\right) \leq 1.$ This set will be denoted
by $\mathcal{S}\left(\mathcal{H}\right)$. The set 
$\mathcal{T}\left(\mathcal{H}\right)$ of allowed transformations is the
greatest set of linear transformations 
satisfying constraint \ref{coperation}. These transformations correspond to a special type of linear transformations
acting in $\mathcal{S\left(\mathcal{H}\right)}$, as we will see later.
\end{defi}

The normalized states are the ones with $\tr\left(\rho\right) = 1.$ They are called the \emph{density operators} of
 $\mathcal{H}$. Once an orthonormal basis is fixed, each density operator is given by a positive matrix with unit trace. 
 These matrices are called \emph{density matrices}. We will often use the letter $\rho$ to denote both  density operators and 
 density matrices and the specific meaning in each case must be clear from the context. The set of 
 all density operators acting in $\mathcal{H}$ will be denoted by $D\left(\mathcal{H}\right)$. The set of all matrices
 acting on $\mathcal{H}$ will be denoted by $M\left(\mathcal{H}\right)$.
  We will consider only the cases with finite 
dimensional $\mathcal{H}$, to satisfy requirement \ref{asfinitetomographical}. The type of system is determined
by the dimension of $\mathcal{H}$.

\begin{teo}
 The pure states of a quantum model are the unidimensional projectors over $\mathcal{H}$.
\end{teo}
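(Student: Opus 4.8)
The goal is to prove that the pure states of a quantum model --- the extremal points of the state space $\mathcal{S}(\mathcal{H})$ of density operators with $\tr(\rho)\leq 1$ --- are exactly the one-dimensional projectors on $\mathcal{H}$. The plan is to argue by a double inclusion, characterizing extremality of a point of $\mathcal{S}(\mathcal{H})$ in terms of the spectral decomposition of the corresponding positive operator. Recall that a pure state is by definition an extremal point of the state space, and that $\mathcal{S}(\mathcal{H})$ is the convex hull of the normalized states (here the density operators $D(\mathcal{H})$) together with the zero vector $\overrightarrow{0}$, which here is the zero operator.

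First I would dispose of the subnormalized issue. Because $\mathcal{S}(\mathcal{H})$ is the convex hull of $D(\mathcal{H})\cup\{0\}$, any operator with $0<\tr(\rho)<1$ can be written as $\tr(\rho)\cdot\frac{\rho}{\tr(\rho)}+(1-\tr(\rho))\cdot 0$, a nontrivial convex combination of a normalized state and $\overrightarrow{0}$; hence it is not extremal. The zero operator is extremal but is not a one-dimensional projector (it is the empty mixture), so strictly the claim should be read as concerning the normalized pure states; I would note this and restrict attention to $\tr(\rho)=1$. The core of the proof is then to show that among the density operators, the extremal ones are precisely the rank-one (equivalently one-dimensional) orthogonal projectors.

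The main step is to use the spectral theorem. Every density operator $\rho$ is positive and self-adjoint, so it admits a spectral decomposition $\rho=\sum_i \lambda_i \ket{\psi_i}\bra{\psi_i}$ with orthonormal eigenvectors $\ket{\psi_i}$, eigenvalues $\lambda_i\geq 0$, and $\sum_i\lambda_i=\tr(\rho)=1$. If $\rho$ has rank at least two, then at least two of the $\lambda_i$ are strictly positive, and the decomposition exhibits $\rho$ as a nontrivial convex combination of the distinct normalized states $\ket{\psi_i}\bra{\psi_i}$, so $\rho$ is not extremal. Conversely, suppose $\rho=\ket{\psi}\bra{\psi}$ is a one-dimensional projector and write $\rho=p\sigma_1+(1-p)\sigma_2$ with $0<p<1$ and $\sigma_1,\sigma_2\in D(\mathcal{H})$. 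Taking the expectation in a unit vector $\ket{\phi}$ orthogonal to $\ket{\psi}$ gives $0=\sand{\phi}{\rho}{\phi}=p\sand{\phi}{\sigma_1}{\phi}+(1-p)\sand{\phi}{\sigma_2}{\phi}$; since each term is nonnegative (positivity of $\sigma_1,\sigma_2$) and the coefficients are positive, both $\sand{\phi}{\sigma_1}{\phi}$ and $\sand{\phi}{\sigma_2}{\phi}$ vanish. Thus each $\sigma_j$ annihilates the orthogonal complement of $\ket{\psi}$, forcing $\sigma_j$ to be supported on the line $\C\ket{\psi}$, and together with $\tr(\sigma_j)=1$ this gives $\sigma_1=\sigma_2=\ket{\psi}\bra{\psi}=\rho$. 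Hence the decomposition is trivial and $\rho$ is extremal.

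The step I expect to require the most care is the converse direction, showing a one-dimensional projector cannot be split: the inference that positivity plus a vanishing convex combination of nonnegative quadratic forms forces each $\sand{\phi}{\sigma_j}{\phi}=0$ for all $\ket{\phi}\perp\ket{\psi}$, and then that a positive trace-one operator vanishing on the whole orthogonal complement of a line must equal the rank-one projector on that line. This is the only place where the argument genuinely uses positivity of the states rather than mere convexity, so it deserves explicit justification; the forward direction is a routine application of the spectral theorem.
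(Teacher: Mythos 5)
Your proof is correct, and it diverges from the paper's in a way worth noting. The forward direction coincides: both you and the paper invoke the spectral theorem to write any density operator of rank at least two as a nontrivial convex combination of unidimensional projectors. The converse is where you part ways. The paper argues that a unidimensional projector cannot be a convex combination \emph{of other unidimensional projectors}, ``because the rank of any convex combination is at least two''; strictly speaking this only excludes decompositions into projectors, and covering decompositions into arbitrary density operators requires combining it with the first part (refine each competing state into projectors via its own spectral decomposition) --- a step the paper leaves implicit. Your argument instead attacks an arbitrary decomposition $\rho = p\sigma_1 + (1-p)\sigma_2$ head-on, using positivity to show each $\sigma_j$ annihilates $\ket{\psi}^{\perp}$ and hence equals $\ket{\psi}\bra{\psi}$; this is self-contained and makes explicit exactly where positivity (rather than mere convexity) enters. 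You also handle two points the paper dismisses with ``Clearly'': you show that strictly subnormalized nonzero states are non-extremal by exhibiting them as mixtures of a normalized state with $\overrightarrow{0}$, and you correctly flag that $\overrightarrow{0}$ itself \emph{is} an extremal point of $\mathcal{S}\left(\mathcal{H}\right)$ in the paper's own framework (a convex combination of positive operators vanishes only if every term does), so the theorem as stated should be read as a claim about normalized pure states. That caveat is a genuine catch, not pedantry.
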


\begin{dem}
 Clearly, the pure states are also normalized states, so we have to worry only with the extremal 
 points of the set $D\left(\mathcal{H}\right)$. Every density matrix can be written in spectral decomposition 
 \be \rho=\sum_i p_i \ket{\psi_i}\bra{\psi_i}, \ p_i \geq 0, \ \sum_i p_i =1,\ee
 where each $\ket{\psi_i}$ is a vector in $\mathcal{H}$ with unit norm. 
 This proves that each density matrix can be written as a convex combination of unidimensional projectors. On the other hand,
 the unidimensional projectors $ \ket{\psi}\bra{\psi}$ themselves can not be written as convex combination of the others,
 because the rank of any convex combination is ate least two. This proves that they are 
 the extremal points of $D\left(\mathcal{H}\right)$, and hence the extremal points of $\mathcal{S}\left(\mathcal{H}\right)$.
\end{dem}

Every mixed state can be written as a convex combination of projectors, but in contrary to what happens
in classical models, this decomposition is not unique. We will shall make this clear in example \ref{exqubit}.

Every unidimensional projector can be associated with its one dimensional image in $\mathcal{H}$.
We can identify this unidimensional space with a class of equivalence of unit vectors in $\mathcal{H}$ under the relation
$$\ket{\psi} \sim e^{i\phi} \ket{\psi}.$$
This means that every pure state is given by a straight line passing through the origin  in $\mathcal{H}$. The set of these
lines is the projective Hilbert space $\mathcal{P}\mathcal{H}$. If in some situation we are restricted to pure states only,
we can use  $\mathcal{P}\mathcal{H}$ instead of $\mathcal{H}$ in the description of the model \cite{BH01, Amaral06}.

It is quite common to use only a unit vector to represent a pure state in quantum theory. This brings no difficulty if we 
keep in mind that each unit vector is only a representative of the equivalence class related to the state and that there 
are many unit vectors representing the same pure state.

\begin{ex}[The quantum bit]
\label{exqubit}
A quantum bit, or  \emph{qubit}, is the system described by a Hilbert space of
dimension two.  It is the quantum analogue of
the classical bit, hence its name. This analogy  justifies the usual notation used for the standard basis in
$\mathcal{H}$: $\{|0\rangle,|1\rangle\}$. Any  pure state of this system can be represented by a unit vector in 
$\mathcal{H}$
$$|\psi\rangle=\alpha |0\rangle + \beta|1\rangle, \ \ \ \alpha, \beta \in \mathbb{C}.$$
The normalized pure states satisfy the further restriction $|\alpha|^2 + |\beta|^2=1$.

General normalized states of a qubit are represented by $2\times 2$ density matrices acting in $\mathcal{H}$.
The set of $2 \times 2$ Hermitian matrices is a real vector space of dimension four, and the set of matrices given by the
three Pauli matrices
$$\sigma_1=\left[\begin{array}{cc}
0&1\\
1&0
\end{array}\right], \ \ \sigma_2=\left[\begin{array}{cc}
0&-i\\
i&0
\end{array}\right], \ \ \sigma_3=\left[\begin{array}{cc}
1&0\\
0&-1
\end{array}\right],$$ 
together with  the identity matrix $I$, is an orthogonal basis.
Hence, every density matrix of a qubit can be written in the form
$$\rho=\frac{1}{2}\left(I + a\sigma_1 + b\sigma_2 +c\sigma_3\right).$$
The coefficient of  $I$ must be $1/2$ because it is the only matrix with non-zero trace, equal to two, and
$Tr(\rho)=1$. The vector
$\left[\begin{array}{ccc} a&b&c
\end{array}\right]$, called the \emph{Bloch vector} of the state,   has to satisfy the condition
$$a^2+b^2+c^2 \leq 1$$
because of the positivity of
$$\rho=\frac{1}{2}\left[\begin{array}{cc}
1+c&a-ib\\
a+ib&1-c
\end{array}\right].$$
This implies that there is a bijective association between normalized states of a qubit and points in the ball of radius one
in $\mathbb{R}^3$, the \emph{Bloch ball}. This bijection preserves mixtures, and points in the sphere $S^2$, the \emph{Bloch sphere},
correspond to the pure states of the 
system. 

Including subnormalized states, the state space $\mathcal{S}\left(\mathcal{H}\right)$ is a cone over the Bloch ball,
which requires four dimensions to be embedded.

From this geometrical representation it is easy to see that  the decomposition of a mixed state in terms of pure state is 
not unique. In fact, any point in the interior of the ball can be written as a convex combination of a finite number of points
in the sphere in many different ways.

\begin{figure}
\label{figurebloch}
\centering
\includegraphics[scale=0.7]{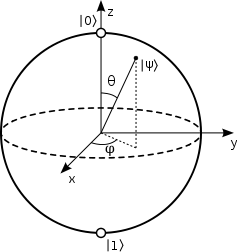}
\caption{\textsf{The Bloch sphere, a geometrical representation of the state space of one qubit.}}
\end{figure}

\end{ex}
 The Bloch sphere is connected to an interesting mathematical object, called the \emph{Hopf fibration}. For more information 
 see \cite{BH01, Amaral06, TerraCunha07, Amaral10}.
 
\subsection{\textsf{Multipartite systems in quantum models}}

A state of a  multipartite system composed of subsystems $1$ and $2$ in  quantum probability theory  
is also given by a positive operator, $\rho$, in a Hilbert space $\mathcal{H}_{12}$, with $\tr(\rho) \leq 1$.
This Hilbert space is
constructed from the Hilbert spaces of the subsystems using the tensor product.

\begin{as} If
the Hilbert spaces of subsystems $1$ and $2$ are $\mathcal{H}_{1}$ and $\mathcal{H}_{2}$, respectively, then 
the Hilbert space of the composite system is given by
\be \mathcal{H}_{12}=\mathcal{H}_{1} \otimes \mathcal{H}_{2}. \ee
\end{as}

The states of the composite systems are matrices in\footnote{The isomorphism we use in this identification is positive and trace 
preserving. For this reason, the density matrices of  the composite system is given by a positive matrix with trace bounded by
one in $M\left(\mathcal{H}_{1}\right) \otimes M\left(\mathcal{H}_{2}\right)$.}  $M(\mathcal{H}_{A}\otimes \mathcal{H}_B)\equiv
M(\mathcal{H}_A) \otimes M(\mathcal{H}_B)$. 

\begin{ex}[Two quantum bits]
 \label{extwoqbits}
The Hilbert space associated to the system of two qubits is 
isomorphic to $\mathbb{C}^2 \otimes \mathbb{C}^2$ and the density matrices of this system are positive
matrices with trace one in $M(\mathbb{C}^2)
\otimes M(\mathbb{C}^2)$. A basis for the real vector space of  $4 \times 4$ Hermitian matrices is the set
of matrices\footnote{We will use the letter $I$ for the identity matrix of every dimension.} 
$\left\{I, I\otimes \sigma_i, \sigma_i \otimes I,
\sigma_i \otimes \sigma_j\right\}$,    and a density 
matrix of the system of two qubits can be written in the form
$$\rho=\frac{1}{4}\left(I + \sum_i R_{0i} \  I \otimes \sigma_i + \sum_i R_{i0} \ \sigma_i \otimes I +
\sum_{ij}R_{ij} \ \sigma_i 
\otimes \sigma_j\right)$$ where
$$R_{ij}=\tr(\sigma_i \otimes \sigma_j \ \rho ).$$
This matrix can also be represented by the matrix $R$, whose entries are the coefficients $R_{ij}$ defined above, with
$R_{00}=1/4$.

Unfortunately, the conditions the positivity of $\rho$ imposes on the  entries of $R$ are not so easily written as 
in the case
of one qubit. Sometimes we can focus on subsets of the set of density matrices, decreasing the number of parameters in the 
problem and simplifying the analysis \cite{Amaral10}.

\end{ex}

In $\mathcal{S}\left(\mathcal{H}_{12}\right)$, we distinguish three kinds of density matrices.

\begin{defi}
 We say that a state $\rho \in D\left(\mathcal{H}_{12}\right)$ is a \emph{product state} if 
 $$\rho=\rho_1 \otimes \rho_2$$
with  $\rho_1  \in D\left(\mathcal{H}_1\right)$ and  $\rho_2  \in
D\left(\mathcal{H}_{2}\right)$.
We say that $\rho$ is a \emph{separable state} if it can be written as a convex combination of product states:
\be \rho=\sum_i p_i \rho_1^i \otimes \rho_2^i,  \ \ \ p_i \geq 0, \ \ \sum_i p_i=1.\label{equationconvexsum}\ee
with $\rho_1^i  \in D\left(\mathcal{H}_1\right)$ and  $\rho_2^i \in
D\left(\mathcal{H}_{2}\right)$.
The density matrices that cannot be written as in \eqref{equationconvexsum} are called \emph{entangled states}.
\end{defi}

\begin{ex}[Entangled states of two qubits]
The simplest non-trivial composite quantum  system is  the system of two qubits.
 The pure separable states of this system are given by vectors of the form
 $$\ket{\Psi}=\ket{\psi_1} \otimes \ket{\psi_2}$$
 where $\ket{\psi_i}$ represent states of a qubit. Hence, every pure separable state is of the form
\be\alpha_1\alpha_2\ket{00}+\alpha_1\beta_2\ket{01} + \beta_1\alpha_2\ket{10}+ \beta_1\beta_2\ket{11},
\label{equationpureseparable}\ee
 $\alpha_i, \beta_i \in \mathbb{C}$, $|\alpha_i|^2+ |\beta_i|^2=1$.
 Very few pure states can be written  this way. Indeed, the set of pure separable states is a quadric of complex dimension
 two in a three
 dimensional complex manifold \cite{BH01, Amaral06, TerraCunha07}. For example, the states
 \begin{eqnarray}
\ket{\Phi_{\pm}}&=&\frac{\ket{00} \pm \ket{11}}{\sqrt{2}}\nonumber\\
 \ket{\Psi_{\pm}}&=&\frac{\ket{01} \pm \ket{10}}{\sqrt{2}},
 \end{eqnarray}
 called the \emph{Bell states}, 
 can not be written in the form \eqref{equationpureseparable}, and hence represent entangled states. 
 
 Deciding if a mixed state is entangled or not is also easy for this system. Let
\begin{eqnarray}
T:M(\mathcal{H})&\longrightarrow &M(\mathcal{H}) \nonumber\\
\rho &\longmapsto &\rho^T
\end{eqnarray}
be the transposition map for a fized basis and  $T\otimes I$ its extension to a composite system, called  \emph{partial transposition}. We have the following result

\begin{teo}[Peres-Horodecki criterion \cite{Peres96, HHH96}]
 A density matrix $\rho$ of a two qubit system is separable iff its partial transposition is a density matrix.
\end{teo}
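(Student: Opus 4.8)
The statement has two implications. The forward direction --- separability implies that the partial transpose $(T\otimes I)(\rho)$ is again a density matrix --- is elementary and holds in every dimension; the converse is the substantial content and is special to the $2\times 2$ case. I would begin with necessity. If $\rho$ is separable, write $\rho=\sum_i p_i\,\rho_1^i\otimes\rho_2^i$ with $p_i\geq 0$, $\sum_i p_i=1$ and each $\rho_k^i$ a density matrix. Applying the partial transposition gives
$$(T\otimes I)(\rho)=\sum_i p_i\,(\rho_1^i)^T\otimes\rho_2^i.$$
Transposition preserves the spectrum, so each $(\rho_1^i)^T$ is again a density matrix; the right-hand side is thus a convex combination of product states, and in particular a positive operator. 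Since partial transposition preserves the trace, $(T\otimes I)(\rho)$ is a density matrix, and this argument never uses the dimension.

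For the converse I would pass through the theory of positive maps. The key ingredient is the general Horodecki criterion: $\rho$ is separable \emph{if and only if} $(\Lambda\otimes I)(\rho)\geq 0$ for every positive linear map $\Lambda:M(\mathcal{H}_1)\to M(\mathcal{H}_1)$. One direction of this lemma is immediate, since for separable $\rho$ and positive $\Lambda$ we have $(\Lambda\otimes I)(\rho)=\sum_i p_i\,\Lambda(\rho_1^i)\otimes\rho_2^i\geq 0$. The other direction is a separating-hyperplane argument: the set of separable states is closed and convex, so an entangled $\rho$ can be strictly separated from it by a Hermitian witness $W$ with $\tr(W\sigma)\geq 0$ for all separable $\sigma$ but $\tr(W\rho)<0$; the Choi--Jamio\l{}kowski isomorphism then turns $W$ into a positive (but not completely positive) map $\Lambda$ for which $(\Lambda\otimes I)(\rho)$ fails to be positive.

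Granting this lemma, the result is forced by the St\o{}rmer--Woronowicz theorem, which asserts that every positive map $M_2\to M_2$ is \emph{decomposable}, i.e. $\Lambda=\Lambda_1+\Lambda_2\circ T$ with $\Lambda_1,\Lambda_2$ completely positive. For such a $\Lambda$,
$$(\Lambda\otimes I)(\rho)=(\Lambda_1\otimes I)(\rho)+(\Lambda_2\otimes I)\left((T\otimes I)(\rho)\right).$$
The first summand is positive because $\Lambda_1$ is completely positive and $\rho\geq 0$; the second is positive because $\Lambda_2$ is completely positive and, by the positive-partial-transpose hypothesis, $(T\otimes I)(\rho)\geq 0$. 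Hence $(\Lambda\otimes I)(\rho)\geq 0$ for \emph{every} positive $\Lambda$, and the Horodecki criterion yields separability.

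The hard part is entirely the sufficiency direction, and it rests on two nontrivial inputs. The first is the separating-hyperplane half of the Horodecki lemma together with the Choi--Jamio\l{}kowski correspondence between entanglement witnesses and positive maps, which is where the convex geometry does the real work. The second, and genuinely dimension-dependent, step is the St\o{}rmer--Woronowicz decomposability theorem: it is precisely the fact that on $M_2$ there are no indecomposable positive maps that promotes transposition to a \emph{universal} witness and confines the equivalence to the $2\times 2$ and $2\times 3$ cases. In higher dimensions indecomposable positive maps exist, bound-entangled PPT states appear, and the equivalence breaks down.
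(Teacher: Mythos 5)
Your proof is correct, but note that the paper itself offers no proof of this statement: it is quoted as a known result, with the citations \cite{Peres96, HHH96} doing the work. What you have written is essentially the argument of the second of those references (Horodecki--Horodecki--Horodecki): necessity by direct computation on a separable decomposition, and sufficiency by combining the positive-maps characterization of separability (proved via a separating hyperplane, entanglement witnesses, and the Choi--Jamio\l{}kowski correspondence) with the St{\o}rmer--Woronowicz theorem that every positive map on $M_2$ is decomposable as $\Lambda_1+\Lambda_2\circ T$ with $\Lambda_1,\Lambda_2$ completely positive. Your identity
$$(\Lambda\otimes I)(\rho)=(\Lambda_1\otimes I)(\rho)+(\Lambda_2\otimes I)\bigl((T\otimes I)(\rho)\bigr)$$
is the correct way to exploit that decomposition, and your closing remark correctly locates the dimension dependence (and hence the failure in higher dimensions, where indecomposable positive maps and bound-entangled PPT states exist), which is exactly the caveat the paper states in the paragraph following the theorem. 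The only place where your sketch compresses genuinely technical material is the witness-to-map step: one must check that $\tr(W\rho)<0$ for the witness $W$ translates into non-positivity of $(\Lambda_W\otimes I)(\rho)$ for the associated positive map $\Lambda_W$, which requires fixing conventions in the Choi--Jamio\l{}kowski correspondence; this is standard but worth spelling out if the proof were to be included in full.
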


For other composite system of higher dimension, the partial transposition of every separable density matrix is also a 
density matrix, but the converse does not hold, unless  one of the subsystem has dimension three and the other has dimension
two. In these cases, deciding if a state is separable or not is not easy. For more information on separability criteria, see
\cite{NC00, BZ06, HHHH09, Amaral10} and references therein.
\end{ex}

Entangled states are responsible for many interesting features in quantum theory and  also play an important role
in many protocols that give us strong evidence that quantum information is more powerful than classical information.
For example, entanglement is the key resource in superdense coding \cite{BW92}, teleportation \cite{BBCJPW93}, 
quantum cryptography (see \cite{wikiquantumcryptography, HHHH09} and references therein) 
 Deutsch's and Shor's algorithms \cite{DJ92, Shor99}, just to cite a few examples. Not all entangled states are useful 
 for all tasks: the 
performance of a given state depends on the degree of entanglement it possesses in a very subtle way. Large amounts of
entanglement are
not necessarily good.
Quantifying entanglement
is then very important, but unfortunately it is a very hard task. There are many entanglement quantifiers, and they do not
define the same preorder in the set of density matrices of a system. The reader can find an introduction 
to entanglement
quantifiers in \cite{NC00, BZ06, HHHH09, Amaral10} and references therein.

As a corollary of  assumption \ref{asglobalstate} and the no-signaling principle, given a state of a composite system, 
we can associate a reduced state to every subsystem. In a quantum model, each reduced state is given by a 
density matrix in the corresponding state space.

\begin{defi}[Reduced density matrices]
Given a multipartite system composed of subsystems  $1$ and $2$ in a state $\rho_{12}$, the reduced states of $1$ and $2$ are 
given by
$$\rho_1=\tr_2(\rho_{12}), \ \ \rho_2=\tr_1(\rho_{12}),$$
where $\tr_1= \tr \otimes I$ and $\tr_2=I \otimes \tr$,
$\tr: L\left(\mathcal{H}\right) \rightarrow \mathbb{C}$ denoting the usual trace functional over the space of operators,
are called 
\emph{partial traces}.
Matrix $\rho_i$ is called the \emph{reduced  density matrix} of subsystem $i$. 
\end{defi}

\begin{ex}
 It was proved in example \ref{extwoqbits} that  a density 
matrix of the system of two qubits can be written in the form
$$\rho=\frac{1}{4}\left(I + \sum_i R_{0i} \  I \otimes \sigma_i + \sum_i R_{i0} \ \sigma_i \otimes I +
\sum_{ij}R_{ij} \ \sigma_i 
\otimes \sigma_j\right)$$ where
$$R_{ij}=\tr(\sigma_i \otimes \sigma_j \ \rho ).$$
This state can also be represented by the $4 \times 4$ 
matrix $R$, whose entries are the coefficients $R_{ij}$ defined above, with
$R_{00}=1/4$. 

Using the partial trace, we find that  $$\left[\begin{array}{ccc}
R_{01} &R_{02}& R_{03}
\end{array}\right]$$
is the Bloch vector of the second qubit, while $$\left[\begin{array}{ccc}
R_{10} &R_{20}& R_{30}
\end{array}\right]$$
is the Bloch vector of the first qubit.
\end{ex}

In this section we have discussed results related to bipartite systems, but all of them can be generalized
to system with more parties. The state space has a much richer structure  in those cases and 
finding separability criteria and entanglement quantifiers is even harder \cite{HHHH09}.

\subsection{\textsf{Transformations}}

The set of allowed transformations $\mathcal{T}\left(\mathcal{H}\right)$ in a quantum model 
corresponds to the largest set of linear transformations acting
on the set of operators in $\mathcal{H}$ such that constraint \ref{coperation} is satisfied.

Let \begin{eqnarray*}
\Phi: M(\mathcal{H}) &\longrightarrow &M(\mathcal{H})\\
\rho &\longmapsto &\rho',
\end{eqnarray*}
be a linear map in $M(H)$. Let us now verify what conditions are imposed on $\Phi$ by constraint \ref{coperation}.

Suppose $\mathcal{H}$ is a Hilbert space of complex dimension $d$. The elements of $M(H)$ can be written
as $d \times d$ matrices and the elements of $\mathcal{T}\left(\mathcal{H}\right)$ can be written as $d^2\times d^2$ matrices. We will use two 
indices to write the components of a matrix in $M(H)$ and four indices to write the components of a matrix in $\mathcal{T}\left(\mathcal{H}\right)$.
Then, if $\rho'=\Phi\left(\rho\right)$, we have

$$\rho'_{m\mu}=\sum_{n\nu}\Phi_{\hspace{-0.5em}\tiny\begin{array}{c}
m\mu \vspace{-0.5em}\\
\vspace{-2em} n\nu
 \end{array}}\rho_{n\nu}.$$

Let us see what we can say about the components of the map $\Phi $. We require that $\Phi$ takes states in $\mathcal{S}(H)$
to states in $\mathcal{S}(H)$ and this implies that a number of properties for $\Phi$ must hold. 
The first one is that $\rho '=\Phi(\rho)$ must be
 an Hermitian matrix for every state $\rho$: 
$$\rho'= \left(\rho'\right)^{\dagger}\ \Rightarrow  \ \rho'_{m\mu}=(\rho'_{\mu m})^*.$$
This implies that

\be \sum_{n\nu}\Phi_{\hspace{-0.5em}\tiny\begin{array}{c}
m\mu \vspace{-0.5em}\\
\vspace{-2em} n\nu
 \end{array}}\rho_{n\nu} = \sum_{n\nu}\Phi^*_{\hspace{-0.5em}\tiny\begin{array}{c}
\mu m \vspace{-0.5em}\\
\vspace{-2em} \nu n
 \end{array}}\rho^*_{\nu n}=
\sum_{n\nu} \Phi^*_{\hspace{-0.5em}\tiny\begin{array}{c}
\mu m \vspace{-0.5em}\\
\vspace{-2em}\nu n
 \end{array}}\rho_{n\nu}
 \label{equationphi1}
 \ee
which holds for all choices of $\rho$ only if
 \be\Phi_{\hspace{-0.5em}\tiny\begin{array}{c}
m\mu \vspace{-0.5em}\\
\vspace{-2em} n\nu
 \end{array}}=\Phi^*_{\hspace{-0.5em}\tiny\begin{array}{c}
\mu m \vspace{-0.5em}\\
\vspace{-2em}\nu n
 \end{array}}.
 \label{equationphi2}
\ee


The second condition we have to impose is that $\tr(\rho') \leq 1$. Then
\be\sum_m\rho'_{mm}=\sum_m\sum_{n\nu}\Phi_{\hspace{-0.5em}\tiny\begin{array}{c}
mm \vspace{-0.5em}\\
\vspace{-2em} n\nu
 \end{array}}\rho_{n\nu}\leq 1.\label{equationphi3}\ee
Let $\left\{\ket{1}, \ldots, \ket{d}\right\}$ be an
orthonormal basis for $\mathcal{H}$ and define $\rho_n=\ket{n}\bra{n}$, $\ 1\leq n \leq d.$ Using $\rho=\rho_n$ in
equation \eqref{equationphi3}, we conclude that
\be\sum_m \Phi_{\hspace{-0.5em}\tiny\begin{array}{c}
mm \vspace{-0.5em}\\
\vspace{-2em} nn
 \end{array}} \leq 1.\ee
 Using $\rho$ as the matrix with all components equal to zero except $\rho_{nn}, \rho_{\nu,\nu}, \rho_{n\nu}, \rho_{\nu n}$, that 
 are all equal, we conclude also that
 \be\sum_m \Phi_{\hspace{-0.5em}\tiny\begin{array}{c}
mm \vspace{-0.5em}\\
\vspace{-2em} nn
 \end{array}}+ \sum_m \Phi_{\hspace{-0.5em}\tiny\begin{array}{c}
mm \vspace{-0.5em}\\
\vspace{-2em} n\nu
 \end{array}} + \sum_m \Phi_{\hspace{-0.5em}\tiny\begin{array}{c}
mm \vspace{-0.5em}\\
\vspace{-2em} \nu n
 \end{array}}+ \sum_m \Phi_{\hspace{-0.5em}\tiny\begin{array}{c}
mm \vspace{-0.5em}\\
\vspace{-2em} \nu \nu
 \end{array}}\leq 2.\ee

 When $\Phi$ preserves the norm of the states, the same calculation show that
\be\sum_m\rho'_{mm}=\sum_m\sum_{n\nu}\Phi_{\hspace{-0.5em}\tiny\begin{array}{c}
mm \vspace{-0.5em}\\
\vspace{-2em} n\nu
 \end{array}}\rho_{n\nu}=1\label{equationphi4}.\ee
 Using $\rho$ as the matrix with  $\rho_{nn}= \rho_{\nu\nu}$,  $\rho_{n\nu}= \rho^{*}_{\nu n}=i\rho_{nn}$ and all other
 entries equal to zero, we get one extra constraint that implies the foolowing condition
\be\sum_m \Phi_{\hspace{-0.5em}\tiny\begin{array}{c}
mm \vspace{-0.5em}\\
\vspace{-2em} n\nu
 \end{array}} =\delta_{n\nu}.\ee

 The elements of $\mathcal{T}\left(\mathcal{H}\right)$ that preserve the norm of the states are
called \emph{trace preserving maps}. The maps that  do not increase the norm of some states are
called \emph{trace non-increasing maps}. 
 
 The next requirement we impose is that if $\rho$ is a positive matrix, then  $\Phi(\rho)$ must also be positive.
 
 \begin{defi}
  A map $\Phi: M(\mathcal{H})\rightarrow M(\mathcal{H})$ is called positive if the image of a positive matrix under $\Phi$ is
  also a positive matrix.
 \end{defi}

Every map $\Phi \in \mathcal{T}\left(\mathcal{H}\right)$ is a positive map. The converse does not hold, as we will see in a moment.

To help in the characterization of positive maps, we define the \emph{dynamical matrix} of $\Phi$ as the $d^2 \times d^2$ matrix 
with entries
\be D_{\hspace{-0.5em}\tiny\begin{array}{c}
mn \vspace{-0.5em}\\
\vspace{-2em} \mu \nu
 \end{array}}=\Phi_{\hspace{-0.5em}\tiny\begin{array}{c}
m\mu \vspace{-0.5em}\\
\vspace{-2em} n\nu
 \end{array}}.\ee
 \vspace{1em}

When $\Phi$ is an Hermitian map, its dynamical matrix is also Hermitian. When $\Phi$ is trace non-increasing we have
\begin{eqnarray}
 \sum_m D_{\hspace{-0.5em}\tiny\begin{array}{c}
mn \vspace{-0.5em}\\
\vspace{-2em} m \nu\end{array}}&\leq &1 \nonumber \\
\vspace{1em}\sum_m D_{\hspace{-0.5em}\tiny\begin{array}{c}
mn \vspace{-0.5em}\\
\vspace{-2em} mn
 \end{array}}+ \sum_m D_{\hspace{-0.5em}\tiny\begin{array}{c}
mn \vspace{-0.5em}\\
\vspace{-2em} m\nu
 \end{array}} + \sum_m D_{\hspace{-0.5em}\tiny\begin{array}{c}
m\nu \vspace{-0.5em}\\
\vspace{-2em} m n
 \end{array}}+ \sum_m D_{\hspace{-0.5em}\tiny\begin{array}{c}
m\nu \vspace{-0.5em}\\
\vspace{-2em} m \nu
 \end{array}}&\leq & 2
\end{eqnarray}

and when $\Phi$ is trace preserving we have
 \be\sum_m D_{\hspace{-0.5em}\tiny\begin{array}{c}
mn \vspace{-0.5em}\\
\vspace{-2em} m \nu\end{array}}=\delta_{n\nu}.\ee
 
Let us see now what are the consequences of the positivity of $\Phi$ in the dynamical matrix $D$.
Suppose $\rho$ is  a pure state. Then $\rho=\ket{\phi}\bra{\phi}$ and $\rho_{m\mu}=\phi_{m}\phi_{\mu}^*$. When $\Phi$ is 
positive, $\rho'$ is positive and then, for all $\ket{\psi} \in \mathcal{H}$
$$0 \leq \langle \psi|\rho'|\psi\rangle=\sum_{m\mu}\psi_m^*\rho'_{m\mu}\psi_\mu =\sum_{m\mu n \nu}\psi_m^* \phi_n D_{\hspace{-0.5em}\tiny\begin{array}{c}
mn \vspace{-0.5em}\\
\vspace{-2em} \mu \nu \end{array}}\psi_\mu \phi^*_\nu = \langle \phi^*|\langle \psi |D|\psi\rangle |\phi^*\rangle .$$
Then, if $\Phi$ is a positive map, $\langle \phi^*|\langle \psi |D|\psi\rangle |\phi^*\rangle \geq 0$ for all 
$\ket{\phi}, \ket{\psi} \in \mathcal{H}$. 

\begin{defi}
 A $d^2 \times d^2$ matrix $D$  is called   \emph{block positive } if 
 $$\langle \phi^*|\langle \psi |D|\psi\rangle |\phi^*\rangle \geq 0 \ \forall \
\ket{\phi}, \ket{\psi} \in \mathcal{H}.$$ 
\end{defi}

Then, if $\Phi$ is a positive map, $D$ is a block-positive matrix. This condition is also sufficient.

\begin{teo}[\textsf{Jamio\l kowski}]
A linear map $\Phi: M(\mathcal{H})\longrightarrow M(\mathcal{H})$ is positive iff its dynamical matrix is 
block positive.
\end{teo}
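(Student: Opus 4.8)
The forward direction is already in hand: the computation carried out just before the statement shows that for a pure input $\rho=\ket{\phi}\bra{\phi}$ and any $\ket{\psi}\in\mathcal{H}$ one has $\langle\psi|\Phi(\rho)|\psi\rangle=\langle\phi^*|\langle\psi|D|\psi\rangle|\phi^*\rangle$, so if $\Phi$ is positive then the right-hand side is nonnegative for all $\ket{\phi},\ket{\psi}$, which is precisely block positivity of $D$. The plan is therefore to establish the converse and to read the whole equivalence off this single identity.

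For the converse I would assume $D$ is block positive and first prove positivity of $\Phi$ on rank-one inputs. Fix an arbitrary vector $\ket{\phi}$ and set $\rho=\ket{\phi}\bra{\phi}$. For every $\ket{\psi}\in\mathcal{H}$ the same identity gives $\langle\psi|\Phi(\rho)|\psi\rangle=\langle\phi^*|\langle\psi|D|\psi\rangle|\phi^*\rangle$, and the right-hand side is $\geq 0$ by the block positivity hypothesis. Since $\ket{\psi}$ was arbitrary, $\Phi(\rho)$ is a positive matrix, that is, $\Phi$ sends every one-dimensional projector to a positive matrix.

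To pass from rank-one projectors to an arbitrary positive input I would invoke the spectral theorem together with linearity of $\Phi$. Any positive $\rho\in M(\mathcal{H})$ decomposes as $\rho=\sum_i p_i\ket{\phi_i}\bra{\phi_i}$ with $p_i\geq 0$, so $\Phi(\rho)=\sum_i p_i\,\Phi(\ket{\phi_i}\bra{\phi_i})$ by linearity. Each summand is positive by the previous step and each coefficient $p_i\geq 0$; since a nonnegative combination of positive matrices is positive, we conclude $\Phi(\rho)\geq 0$, that is, $\Phi$ is positive.

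The argument is short because all of the analytic content is packed into the index identity relating $\Phi$ and $D$. The one conceptual point worth flagging is that block positivity tests $D$ only against the special product vectors $\ket{\psi}\otimes\ket{\phi^*}$, and these are exactly the vectors that appear when the input matrix is pure; the only genuine step is recognizing that this restricted, product-vector positivity of $D$ nevertheless suffices. The spectral decomposition supplies this bridge: positivity of a matrix input is a convex, linear condition, so controlling $\Phi$ on the extreme rays $\ket{\phi}\bra{\phi}$ controls it on all positive matrices. A minor bookkeeping caveat is the complex conjugate $\ket{\phi^*}$ appearing on the $D$ side, which must be carried consistently through the identity but causes no real difficulty.
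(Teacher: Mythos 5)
Your proof is correct. Note that the paper itself does not prove this theorem at all: it simply defers to the references \cite{BZ06,Amaral10}, so your argument supplies a proof where the text leaves a gap, and it is in fact the standard one found in those references. The forward direction is, as you say, already contained in the computation preceding the statement, and your converse via the spectral decomposition is exactly the right bridge from product vectors to general positive inputs.

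One small point worth making explicit: when you conclude from $\sand{\psi}{\Phi(\rho)}{\psi}\geq 0$ for all $\ket{\psi}$ that $\Phi(\rho)$ is a positive \emph{matrix}, you are implicitly using the fact that, over a complex Hilbert space, a matrix whose quadratic form $\sand{\psi}{A}{\psi}$ is real (in particular, nonnegative) for every vector is automatically Hermitian; writing $A=H+iK$ with $H,K$ Hermitian, reality of the form forces $\sand{\psi}{K}{\psi}=0$ for all $\ket{\psi}$, hence $K=0$. This is standard and costs one line, but since positivity of a matrix is usually defined as Hermiticity plus nonnegativity of the form, it deserves mention so the rank-one step is airtight.
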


The proof of this result can be found in references \cite{BZ06,Amaral10}.

As we already discussed previously, the condition that $\Phi$ takes states to states in $\mathcal{S}(\mathcal{H})$ is
not sufficient to consider $\Phi$ as an allowed transformation. The constraint of complete state preservation
requires that this must also 
happen when the system is part of a multipartite system. 

\begin{defi}
Let $\Phi$ be a positive map acting on $M(\mathcal{H})$. Let $\mathcal{H}'$  be any other vector space of dimension $k$ and $I$
be the identity map acting on $M(\mathcal{H}')$.
If the map $\Phi \otimes I$, acting on
 $M(\mathcal{H}\otimes \mathcal{H}')$, is positive,  we say that $\Phi$ is   $k$-\emph{positive}.
If $ \Phi$ is $k$-positive for every $k \in \mathbb{N}$, we say that $\Phi$ is  \emph{completely positive}.
\end{defi}

We have seen that in classical theories every state preserving transformation 
is automatically completely state preserving. This is a consequence of the fact that every state is written as convex
combination of product states. The existence of entangled states in quantum theory implies, among many other interesting 
things,
that there are many state preserving maps, namely, the trace non-increasing positive maps acting in $M(\mathcal{H})$, 
that are not completely state preserving.

\begin{ex}
Not every positive map is completely positive. For example, consider the transposition map $T$
acting on the sate space of one qubit.
This map is positive, but 
$$T\otimes I\left( \ket{\Psi_-}\bra{\Psi_-}\right) = T\otimes I \left(\frac{1}{2} \left[\begin{array}{cccc}
0&0&0&0\\
         0&1&-1&0\\
         0&-1&1&0\\
         0&0&0&0
        \end{array}\right]\right)= \frac{1}{2} \left[\begin{array}{cccc}
0&0&0&-1\\
         0&1&0&0\\
         0&0&1&0\\
         -1&0&0&0
        \end{array}\right]$$
which is not  positive.

\end{ex}

If $\Phi$ belongs to $\mathcal{T}(\mathcal{H})$, condition \ref{coperation4} implies that $\Phi \otimes I$ also takes states to states 
in $M(\mathcal{H}\otimes\mathcal{H}')$ for every Hilbert space $\mathcal{H}'$. This means that $\Phi$ 
must be a completely positive
map.

\begin{teo}
 The set of allowed transformations $\mathcal{T}(\mathcal{H})$ is the set of trace non-increasing completely-positive maps acting on
 $M(\mathcal{H})$.
\end{teo}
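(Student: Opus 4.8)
The statement is an equality of two sets, so the plan is to prove the two inclusions separately. In both directions almost all of the work has already been done by constraint \ref{coperation} and by the maximality assumption \ref{allowedt}; what remains is to match each item of the constraint against one of the two properties (trace non-increasing, completely positive) and to keep the normalization bookkeeping straight.

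For the inclusion of $\mathcal{T}(\mathcal{H})$ into the set of trace non-increasing completely positive maps, I would take $\Phi \in \mathcal{T}(\mathcal{H})$, so that by definition $\Phi$ obeys constraint \ref{coperation}. The positivity and normalization requirements, analyzed just above through equations \eqref{equationphi3} and \eqref{equationphi4}, already force $\tr(\Phi(\rho)) \le \tr(\rho)$ for every state, that is, $\Phi$ is trace non-increasing (and the norm-preserving maps are exactly those that saturate this with equality). Item \ref{coperation4}, complete state preservation, demands that $\Phi \otimes I$ send states to states in $M(\mathcal{H} \otimes \mathcal{H}')$ for every ancilla $\mathcal{H}'$; since a (subnormalized) state is precisely a positive operator of trace at most one, this is exactly the assertion that $\Phi \otimes I$ is positive for every dimension of $\mathcal{H}'$, which is the definition of complete positivity. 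This is the observation recorded in the paragraph immediately preceding the statement.

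For the reverse inclusion I would let $\Phi$ be trace non-increasing and completely positive and verify that it satisfies every item of constraint \ref{coperation}, so that the maximality assumption \ref{allowedt} places it in $\mathcal{T}(\mathcal{H})$. Taking $k=1$ in the definition of complete positivity shows $\Phi$ is positive, hence it maps each positive $\rho$ to a positive operator whose trace lies in $[0,\tr(\rho)]$ because $\Phi$ is trace non-increasing; this yields items \ref{coperation1} and \ref{coperation3}. For complete state preservation, item \ref{coperation4}, I would use that $\Phi \otimes I$ is positive by hypothesis and check that it is also trace non-increasing: for any bipartite state $\rho^{12}$ the partial trace gives $\tr((\Phi \otimes I)(\rho^{12})) = \tr(\Phi(\tr_2 \rho^{12})) \le \tr(\tr_2 \rho^{12}) = \tr(\rho^{12})$, so $(\Phi \otimes I)(\rho^{12})$ is again a positive operator of trace at most one, i.e. a valid state. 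Thus $\Phi$ meets all of constraint \ref{coperation} and belongs to $\mathcal{T}(\mathcal{H})$.

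The content of the theorem is therefore packaged almost entirely into the definition of complete positivity together with maximality; the only genuinely substantive recognition is that complete state preservation is equivalent to positivity of every extension $\Phi \otimes I$. The main obstacle to watch is the normalization bookkeeping, namely confirming that the trace non-increasing property of $\Phi$ transfers to $\Phi \otimes I$ via the short partial-trace computation above, since item \ref{coperation4} asks for allowed \emph{states} and these carry the trace bound, not merely positivity. One may optionally phrase complete positivity through the dynamical matrix in the spirit of the Jamio\l kowski theorem stated above — complete positivity corresponds to the dynamical matrix being positive semidefinite rather than only block positive — but this operational criterion is not needed for the argument.
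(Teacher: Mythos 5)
Your proof is correct and takes essentially the same route as the paper: the paper's own treatment is exactly the discussion preceding the theorem (constraint \ref{coperation} forces trace non-increase via equations \eqref{equationphi3}--\eqref{equationphi4}, and item \ref{coperation4} forces complete positivity), with the reverse inclusion left to the maximality assumption \ref{allowedt}. You merely make both inclusions explicit, including the short partial-trace verification that $\Phi \otimes I$ remains trace non-increasing, which the paper leaves implicit; this is a completion of detail, not a different argument.
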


We can also use the dynamical matrix $D$ to find necessary and sufficient conditions for the complete positivity of $\Phi$.

\begin{teo}[Choi]
 A map $\Phi$ acting on $M(\mathcal{H})$ is completely positive iff the corresponding dynamical matrix $D$ is positive.
\end{teo}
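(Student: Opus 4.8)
The plan is to identify the dynamical matrix $D$ with the image, under $\Phi\otimes I$, of the (unnormalized) maximally entangled state, and then read off both implications from this single observation. Concretely, let $d=\dim\mathcal{H}$, fix the orthonormal basis $\ket{1},\ldots,\ket{d}$ used to define the components, and set $\ket{\Omega}=\sum_{n}\ket{n}\otimes\ket{n}\in\mathcal{H}\otimes\mathcal{H}$, so that $\ket{\Omega}\bra{\Omega}=\sum_{n\nu}\ket{n}\bra{\nu}\otimes\ket{n}\bra{\nu}$. Comparing components, the entry of $(\Phi\otimes I)\left(\ket{\Omega}\bra{\Omega}\right)$ with row labels $(m,n)$ and column labels $(\mu,\nu)$ is $[\Phi(\ket{n}\bra{\nu})]_{m\mu}=\Phi_{\substack{m\mu\\ n\nu}}$, which is exactly $D_{\substack{mn\\\mu\nu}}$. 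Thus $D=(\Phi\otimes I)\left(\ket{\Omega}\bra{\Omega}\right)$, and the whole theorem becomes a statement about how $\Phi\otimes I$ acts on one distinguished positive operator.

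For the forward direction, suppose $\Phi$ is completely positive. Then in particular $\Phi\otimes I$ is positive when $I$ acts on a $d$-dimensional ancilla, so it sends the positive operator $\ket{\Omega}\bra{\Omega}$ to a positive operator; by the identity above this image is $D$, hence $D\geq 0$. Notice that this direction uses only $d$-positivity, not the full strength of complete positivity.

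For the converse, suppose $D\geq 0$ and diagonalize it as $D=\sum_k\ket{v_k}\bra{v_k}$ with $\ket{v_k}\in\mathcal{H}\otimes\mathcal{H}$. I would reshape each eigenvector into a matrix (Kraus operator) $K_k$ by writing $\ket{v_k}=\sum_{mn}(K_k)_{mn}\,\ket{m}\otimes\ket{n}$, and then check directly that $\Phi(\rho)=\sum_k K_k\,\rho\,K_k^\dagger$ for every $\rho$. Once this Kraus form is established, complete positivity is immediate: for any ancilla and any $\sigma\geq 0$ we have $(\Phi\otimes I)(\sigma)=\sum_k (K_k\otimes I)\,\sigma\,(K_k\otimes I)^\dagger$, a sum of positive operators.

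The one place requiring genuine care — the main obstacle — is the reshaping step, i.e.\ verifying that $\sum_k (K_k)_{mn}(K_k)^*_{\mu\nu}=D_{\substack{mn\\\mu\nu}}=\Phi_{\substack{m\mu\\ n\nu}}$ really does reproduce $\Phi$ through the Kraus sum. This is a bookkeeping computation in the four indices $m,n,\mu,\nu$: expanding $[\sum_k K_k\rho K_k^\dagger]_{m\mu}=\sum_{n\nu}\big(\sum_k (K_k)_{mn}(K_k)^*_{\mu\nu}\big)\rho_{n\nu}$ and matching the bracketed factor against the definition of $D$ gives back $\rho'_{m\mu}=\sum_{n\nu}\Phi_{\substack{m\mu\\ n\nu}}\rho_{n\nu}$, which is the action of $\Phi$. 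Everything else is routine once the dictionary between $D$ and $\ket{\Omega}\bra{\Omega}$ is fixed.
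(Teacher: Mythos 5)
Your proof is correct, but there is nothing in the paper to compare it against: the thesis states Choi's theorem and immediately defers to the references \cite{BZ06, Amaral10} for its proof, so you have supplied an argument the text omits. What you wrote is the standard Choi argument, and it meshes nicely with the surrounding material. Your key identity $D=(\Phi\otimes I)\left(\ket{\Omega}\bra{\Omega}\right)$ is precisely the Choi--Jamio\l kowski isomorphism that the paper states (also without proof) a few paragraphs later, up to the normalization of the maximally entangled vector; your index bookkeeping checks out against the paper's convention $D_{mn,\mu\nu}=\Phi_{m\mu,n\nu}$. Two features of your route are worth flagging. First, your forward direction only invokes positivity of $\Phi\otimes I$ on a $d$-dimensional ancilla, so your argument actually shows that $d$-positivity already implies complete positivity --- a strengthening the paper never makes explicit. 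Second, your converse proceeds by reshaping the eigenvectors of $D$ into Kraus operators, which means you prove the Kraus representation theorem along the way; the paper instead states Kraus as a separate theorem that ``it is possible to prove'' using Choi's theorem, again without proof. So your single argument covers both of the paper's unproved statements at once, which is arguably the cleaner organization.
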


Using this theorem it is possible to prove that completely positive maps can be written in a simple way using
the Kraus representation.

\begin{teo}[Kraus representation]
A linear map $\Phi$ is completely positive iff it can be written in the form
$$\rho\longmapsto \rho'=\sum_iA_i\rho A_i^\dagger,$$
where each $A_i$ is a square matrix of the same size of  $\rho$. Furthermore, $\Phi$ is trace preserving iff the matrices
$A_i$ satisfy
$$\sum_iA_i^\dagger A_i = I.$$
\label{propkraus}
\end{teo}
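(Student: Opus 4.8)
The plan is to reduce the whole statement to Choi's theorem, which has just been proved: $\Phi$ is completely positive if and only if its dynamical matrix $D$ is positive. So I would assume $\Phi$ completely positive, use the positivity of $D$ to manufacture the operators $A_i$ by a vectorization (reshaping) argument, and then establish the converse and the trace condition by direct computation. Throughout I use the index conventions already fixed in the text, namely $\rho'_{m\mu}=\sum_{n\nu}\Phi_{m\mu,n\nu}\,\rho_{n\nu}$ together with the definition of the dynamical matrix $D_{mn,\mu\nu}=\Phi_{m\mu,n\nu}$, so that $\rho'_{m\mu}=\sum_{n\nu}D_{mn,\mu\nu}\,\rho_{n\nu}$.

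For the forward direction, since $D$ is a positive $d^2\times d^2$ matrix it admits a spectral decomposition $D=\sum_i\lambda_i\ket{v_i}\bra{v_i}$ with $\lambda_i\geq 0$; absorbing the eigenvalues into the eigenvectors I write $D=\sum_i\ket{w_i}\bra{w_i}$ with $\ket{w_i}=\sqrt{\lambda_i}\ket{v_i}$, a sum of at most $d^2$ terms. The key step, and the one demanding the most care, is the reshaping: each $\ket{w_i}$ lives in a space whose basis is indexed by the \emph{pair} $(m,n)$, so I define a $d\times d$ matrix $A_i$ by $(A_i)_{mn}=(w_i)_{mn}$, converting the single vector index into a matrix row/column pair. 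Then $D_{mn,\mu\nu}=\sum_i(A_i)_{mn}(A_i)^*_{\mu\nu}$, and a direct computation gives
\[
\rho'_{m\mu}=\sum_{n\nu}D_{mn,\mu\nu}\,\rho_{n\nu}=\sum_i\sum_{n\nu}(A_i)_{mn}\,\rho_{n\nu}\,(A_i)^*_{\mu\nu}=\sum_i\left(A_i\rho A_i^\dagger\right)_{m\mu},
\]
using $(A_i^\dagger)_{\nu\mu}=(A_i)^*_{\mu\nu}$. This is precisely the Kraus form $\Phi(\rho)=\sum_i A_i\rho A_i^\dagger$.

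For the converse, if $\Phi(\rho)=\sum_i A_i\rho A_i^\dagger$ then for positive $\rho$ each summand is positive, since $\bra{\psi}A_i\rho A_i^\dagger\ket{\psi}=\bra{A_i^\dagger\psi}\rho\ket{A_i^\dagger\psi}\geq 0$; applying the same argument to $\Phi\otimes I$, whose Kraus operators are $A_i\otimes I$, shows $\Phi$ is completely positive. Finally, for the trace condition, cyclicity of the trace gives $\tr(\Phi(\rho))=\sum_i\tr(A_i\rho A_i^\dagger)=\tr\!\left(\left(\sum_i A_i^\dagger A_i\right)\rho\right)$, and since this must equal $\tr(\rho)$ for all $\rho$, while the states span $M(\mathcal{H})$ and the trace form is nondegenerate, it holds exactly when $\sum_i A_i^\dagger A_i=I$. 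I expect the only real obstacle to be bookkeeping: lining up the vectorization with the stacked-index convention of the dynamical matrix so that the spectral decomposition of $D$ translates cleanly into the Kraus operators.
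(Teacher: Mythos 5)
Your proof is correct and follows exactly the route the paper intends: the text remarks that the Kraus representation is obtained ``using this theorem'' (Choi's theorem) and then defers the details to the cited references, and your argument---spectral decomposition of the positive dynamical matrix $D$, reshaping the rescaled eigenvectors $\ket{w_i}$ into the operators $A_i$, the direct verification that a Kraus-form map is completely positive, and the nondegeneracy-of-the-trace-form argument for the condition $\sum_i A_i^\dagger A_i = I$---is precisely the standard way of filling in those details. The index bookkeeping you were worried about checks out: $D_{mn,\mu\nu}=\sum_i (A_i)_{mn}(A_i)^*_{\mu\nu}$ does reproduce $\rho'_{m\mu}=\sum_i\left(A_i\rho A_i^\dagger\right)_{m\mu}$ under the paper's conventions.
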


For  proofs of these results, see \cite{BZ06, Amaral10}.

\subsection{\textsf{Measurements}}

By definition \ref{allowedt},  measurements in quantum models are given by a set of
trace non-increasing completely positive maps
$\{\Phi_1, \Phi_2, \ldots, \Phi_n\}$ such that 
\be\sum_i \tr\left[\Phi_i(\rho)\right]=\tr\left(\rho\right)
\label{equationmeasurement}\ee
for every $\rho \in D\left(\mathcal{H}\right).$

There are two important special cases:  POVM's and  projective measurements.

\begin{defi}
 A \emph{positive-operator valued measurement} (POVM) is a measurement $\{\Phi_1, \Phi_2, \ldots, \Phi_n\}$ in which each
 transformation $\Phi$ is given by
 \begin{subequations}
 \be\Phi_i(\rho)=M_i \rho M_i^{\dagger}\ee
 where the $M_i$ are  matrices in $M(\mathcal{H})$ satisfying 
\be\sum_iM_i^\dagger M_i=I.\label{equationpovm}\ee
 The probability of outcome $i$  for the state $\rho$ is
\be p_i=Tr(M_i^\dagger M_i\rho),\ee
 and the unnormalized state  after outcome  $i$ is
 \be\rho_i=M_i\rho M_i^\dagger.\ee
 \end{subequations}
\end{defi}

A POVM is defined if we give a set of matrices $\{M_1, M_2, \ldots, M_n\}$ satisfying equation
\eqref{equationpovm}. 
Theorem \ref{propkraus} implies that every measurement in quantum mechanics is the coarse graining of a POVM.

\begin{defi}
 A measurement $\{\Phi_1, \Phi_2, \ldots, \Phi_n\}$ is called \emph{projective} if it is a POVM in which the matrices $M_i$ are
 projectors acting on
 $\mathcal{H}$. If every $M_i$ is a unidimensional projector, the measurement is called a \emph{complete projective 
 measurement}.
\end{defi}

A projective measurement is defined if we give a set of  projectors $\{P_1, P_2, \ldots, P_n\}$ satisfying 
$$\sum_iP_i=I.$$
This implies that the $P_i$ are orthogonal projectors.

Projective measurements are the ones satisfying outcome repeatability. A curious feature of quantum theory is that, 
contrary to classical theory, even when we restrict the measurements to outcome repeatable measurements, the pure states
are not dispersion free states. Indeed, given a projective measurement $\{P_1, P_2, \ldots, P_n\}$,
a pure state $\ket{\psi}\bra{\psi}$ gives outcome $i$ with probability one iff
$$P_i\ket{\psi}=\ket{\psi}$$
and this happens iff $\ket{\psi}$ belongs to the subspace in which $P_i$ projects. Of course, most of the pure states do not
satisfy this property, and hence there are different outcomes with non-zero probability.
Nevertheless, there is a difference in the behavior of pure and mixed states when it comes to outcome definiteness.

\begin{teo}
The density matrix $\rho$ represents a pure state if, and only if, there is a complete
projective measurement with probability $p_i=1$ for some outcome $i$. 
\end{teo}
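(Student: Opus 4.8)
The plan is to lean on the characterization of pure states established earlier, namely that a normalized state $\rho$ is pure if and only if it is a unidimensional projector $\ket{\psi}\bra{\psi}$ for some unit vector $\ket{\psi} \in \mathcal{H}$. A complete projective measurement is specified by an orthonormal basis $\{\ket{e_1}, \ldots, \ket{e_d}\}$ with $P_i = \ket{e_i}\bra{e_i}$ and $\sum_i P_i = I$; for a normalized $\rho$ the probability of outcome $i$ is $p_i = \tr(P_i \rho) = \sand{e_i}{\rho}{e_i}$, and $\sum_i p_i = \tr(\rho) = 1$. Both implications are then statements about when one of these $p_i$ can equal $1$.

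For the forward implication I would start from $\rho = \ket{\psi}\bra{\psi}$ and simply complete $\ket{\psi}$ to an orthonormal basis, setting $\ket{e_1} = \ket{\psi}$. Then $p_1 = \sand{\psi}{\rho}{\psi} = \abs{\braket{\psi}{\psi}}^2 = 1$, which exhibits the required measurement. This direction is essentially immediate.

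The real work is the converse. Assuming a complete projective measurement $\{\ket{e_i}\bra{e_i}\}$ with $p_j = 1$ for some outcome $j$, I want to deduce that $\rho$ is rank one. Since $\sum_i p_i = 1$ with every $p_i \geq 0$, the condition $p_j = 1$ forces $p_i = 0$ for all $i \neq j$. Writing $\rho$ in spectral decomposition $\rho = \sum_k \lambda_k \ket{\phi_k}\bra{\phi_k}$ with $\lambda_k \geq 0$ and $\sum_k \lambda_k = 1$, the equation $p_j = \sum_k \lambda_k \abs{\braket{e_j}{\phi_k}}^2 = 1$ is a convex combination of the numbers $\abs{\braket{e_j}{\phi_k}}^2 \leq 1$ attaining its maximal possible value. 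Hence every eigenvector $\ket{\phi_k}$ with $\lambda_k > 0$ must satisfy $\abs{\braket{e_j}{\phi_k}}^2 = 1$, and by the equality case of Cauchy--Schwarz this means $\ket{\phi_k}$ coincides with $\ket{e_j}$ up to a global phase. Therefore $\rho = \ket{e_j}\bra{e_j}$ is a unidimensional projector, and by the earlier theorem it is pure.

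The main obstacle is precisely this last step: converting the single scalar condition $p_j = 1$ into the structural conclusion that $\rho$ has rank one. I expect the convexity/Cauchy--Schwarz argument above to be the cleanest route; an equivalent alternative is to work in the $\{\ket{e_i}\}$ basis and use positivity of $\rho$ to argue that any vanishing diagonal entry $\rho_{ii} = p_i = 0$ annihilates its entire row and column (via $\abs{\rho_{ik}}^2 \leq \rho_{ii}\rho_{kk}$), leaving $\rho_{jj} = 1$ as the only nonzero entry and hence $\rho = \ket{e_j}\bra{e_j}$.
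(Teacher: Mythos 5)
Your proof is correct and follows essentially the same route as the paper's: the forward direction completes $\ket{\psi}$ to a basis, and the converse uses the fact that $p_j=\sum_k \lambda_k\abs{\braket{e_j}{\phi_k}}^2=1$ is a convex combination of numbers at most $1$, forcing every eigenvector with positive weight to coincide with $\ket{e_j}$ up to phase, so $\rho=\ket{e_j}\bra{e_j}$. The only differences are cosmetic: the paper phrases the converse as a contradiction starting from a mixed-state decomposition, whereas you argue directly from the spectral decomposition and spell out the convexity/Cauchy--Schwarz step that the paper leaves implicit.
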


\begin{dem}
Let $\rho=\ket{\psi}\bra{\psi}$. Take a complete projective measurement such that outcome $i$ is associated to
the one-dimensional projector $P_i=\ket{\psi}\bra{\psi}$. Then we have that $p_i=1$.

Suppose now that  
 $$\rho=\sum_j \lambda_i \ket{\psi_j}\bra{\psi_j}$$
 is a mixed state and $\{P_1, P_2, \ldots, P_n\}$ is a complete projective
 measurement. This means that $P_i=\ket{\phi_i}\bra{\phi_i}$ and that $\{\ket{\phi_i}\}$ is an orthonormal basis for 
 $\mathcal{H}$.
 If the probability of outcome $i$ is $p_i=1$ for the state $\rho$, $|\braket{\phi_i}{\psi_j}|=1$ for every $j$, which means
 that $\rho=\ket{\phi_i}\bra{\phi_i}$ is a pure state, a contradiction.
\end{dem}

\subsection{\textsf{Compatibility of projective measurements}}

Compatibility of two outcome-repeatable measurements can be easily decided in quantum models from the matrices 
defining the measurements.

\begin{teo}
 Two projective measurements $\{P_1, P_2, \ldots, P_n\}$ and $\{Q_1, Q_2, \ldots, Q_m\}$ are compatible iff
 $P_i$ and $Q_j$ commute for every $1 \leq i \leq n$ and $1 \leq j \leq m$.
\end{teo}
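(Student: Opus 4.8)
The plan is to turn the operational notion of compatibility from Definition~\ref{deficompatible} into the algebraic statement $P_iQ_j=Q_jP_i$, using in one direction an explicit common refinement and in the other the positivity of effects. For the easy direction, assume every $P_i$ commutes with every $Q_j$. I would verify that the operators $P_iQ_j$ form a projective measurement: commutativity makes each $P_iQ_j$ Hermitian, since $(P_iQ_j)^\dagger=Q_jP_i=P_iQ_j$, and idempotent, since $(P_iQ_j)^2=P_i^2Q_j^2=P_iQ_j$; the relations $P_iP_{i'}=\delta_{ii'}P_i$ and $Q_jQ_{j'}=\delta_{jj'}Q_j$ make the family mutually orthogonal; and $\sum_{i,j}P_iQ_j=(\sum_iP_i)(\sum_jQ_j)=I$. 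This family refines both measurements: the coarse graining $f_1(i,j)=i$ gives $\sum_jP_iQ_j=P_i$ and $f_2(i,j)=j$ gives $\sum_iP_iQ_j=Q_j$, and the outcome probabilities $\tr(P_iQ_j\rho)$ sum to the correct marginals, so the two measurements are compatible.

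The converse is the main obstacle, and is where positivity is essential. Compatibility furnishes a refining measurement; writing its outcome probabilities as $p(k)=\tr(E_k\rho)$ with effects $E_k\geq 0$ and $\sum_kE_k=I$ (Theorem~\ref{propkraus}), the coarse-graining identities $\tr(P_i\rho)=\sum_{k\in f_1^{-1}(i)}\tr(E_k\rho)$, valid for all $\rho$, force the operator equalities $P_i=\sum_{k\in f_1^{-1}(i)}E_k$ and likewise $Q_j=\sum_{k\in f_2^{-1}(j)}E_k$, because the trace pairing separates Hermitian operators. The key lemma I would then establish is an \emph{absorption} property: whenever a sub-sum $P=\sum_{k\in S}E_k$ of a POVM happens to be an orthogonal projector, each summand satisfies $PE_k=E_kP=E_k$. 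To prove it, take $\psi\in\ker P$; then $0=\sand{\psi}{P}{\psi}=\sum_{k\in S}\sand{\psi}{E_k}{\psi}$ is a sum of non-negative terms, so every $E_k$ annihilates $\ker P=(\operatorname{range}P)^\perp$, whence $\operatorname{range}(E_k)\subseteq\operatorname{range}(P)$ and the claimed identities follow. Applied with $P=P_{f_1(l)}$, this yields $P_iE_l=E_l$ when $f_1(l)=i$ and $P_iE_l=P_iP_{f_1(l)}E_l=0$ otherwise, and symmetrically for the $Q_j$.

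With the absorption property in hand the conclusion is a one-line computation: using $Q_j=\sum_{l\in f_2^{-1}(j)}E_l$ I would write $P_iQ_j=\sum_{l\in f_2^{-1}(j)}P_iE_l=\sum_{l:\,f_1(l)=i,\,f_2(l)=j}E_l$, and the identical manipulation of $Q_jP_i=\sum_{k\in f_1^{-1}(i)}Q_jE_k$ produces the same sum $\sum_{l:\,f_1(l)=i,\,f_2(l)=j}E_l$. Hence $P_iQ_j=Q_jP_i$ for all $i,j$, completing the equivalence. The only delicate point is the lemma, which is precisely where the hypothesis that the $E_k$ are \emph{positive} (rather than arbitrary operators summing to a projector) cannot be dropped; everything else is bookkeeping with the orthogonality relations of the two projective measurements.
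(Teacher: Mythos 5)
Your proof is correct, but it takes a genuinely different route from the paper's. The paper disposes of the theorem in two lines: compatibility is identified with both measurements being coarse grainings of one and the same \emph{complete projective} measurement, and this in turn is identified with simultaneous diagonalizability, hence with commutativity. Your forward direction replaces the appeal to simultaneous diagonalization by the explicit joint measurement $\{P_iQ_j\}$, which is elementary and constructive (the products project onto the joint eigenspaces, so it is the same object the spectral argument produces, reached without the spectral theorem). More significantly, your converse closes a gap that the paper glosses over: Definition \ref{deficompatible} only asks for \emph{some} common refinement, which in quantum theory is an arbitrary measurement whose outcome statistics are given by a POVM $\{E_k\}$ (via Theorem \ref{propkraus}), not necessarily a projective one. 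The paper's one-line argument silently assumes the refinement is projective; your absorption lemma --- a sub-sum of positive operators that equals an orthogonal projector is absorbed by that projector --- is exactly what is needed to handle the general case, and its proof via $\ker P$ and positivity is sound, as is the derivation of the operator identities $P_i=\sum_{k\in f_1^{-1}(i)}E_k$ from the trace identities. In short, the paper's route buys brevity at the price of an unjustified reduction to projective refinements; yours buys rigor and slightly greater generality at the price of one extra lemma.
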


\begin{dem}
 The measurements are compatible if they  are both coarse grainings of the same complete projective measurement.
 This happens iff all $P_i$ and $Q_j$ are simultaneously diagonalized, and hence, iff they commute. 
\end{dem}

\subsection{\textsf{Expectation value of a measurement}}

In classical probability theory, the concept of \emph{random variable}, 
a real-valued function defined on the sample space $\Omega$, is a useful tool that allows
the definition of many important quantities such as expectation values and  variances. Something similar
can be done
in generalized probabilistic theories. We  simply label the outcomes of a measurement  by real numbers, and then we are
able to 
define the same quantities, related to the value of each outcome and the corresponding  
probabilities.

\begin{defi}
\label{defiexpectation}
 The \emph{expectation value} of a measurement $\mathrm{ M}$ with outcomes $a_i \in \mathbb{R}$ in a state $\rho$ is
 \be \left\langle\  \mathrm{ M} \ \right\rangle= \sum_i a_ip_i\ee
 where $p_i$ is the probability of obtaining $a_i$ when measurement $\mathrm{M}$ is applied on state $\rho$.
\end{defi}

For projective measurement in a quantum model, each outcome $a_i$ is associated
to a projector $P_i$ and the probability $p_i$ is given by 
\be p_i=\tr\left(P_i \rho\right)\ee
where $\rho$ is the operator corresponding to the state of the system. Hence, the expectation value of a projective measurement
 $\mathrm{ P}$ can be easily calculated
$$\left\langle \ \mathrm{ P} \ \right\rangle= \sum_i a_ip_i=\sum_i a_i\tr\left(P_i \rho\right)$$
and by the linearity of the trace
\be\left\langle \ \mathrm{P} \  \right\rangle=\tr\left(\sum_i a_iP_i \rho\right)=\tr\left(O\rho\right)\ee
where $O=\sum_ia_iP_i$ is an Hermitian operator with eigenvalues $a_i$. The eigenspace associated to $a_i$ is the
 subspace in which $P_i$ projects. This operator is called the \emph{observable} associated to the measurement.
 This proves the following
 
 \begin{teo}
  The expectation value of an observable $O$, associated to a  projective measurement $\mathrm{ P}$,  for a 
  given state is
 \be\left\langle \ \mathrm{ P} \  \right\rangle=\tr\left(O\rho\right),\label{eqqexpvalue}\ee
 where $\rho$ is the  density operator associated to the state.
 \end{teo}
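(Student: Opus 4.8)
The plan is to obtain the formula directly from the definition of the expectation value combined with the Born rule for the probabilities of a projective measurement, using nothing more than the linearity of the trace. First I would invoke Definition \ref{defiexpectation}, which gives $\left\langle \mathrm{P} \right\rangle = \sum_i a_i p_i$, where $a_i \in \mathbb{R}$ is the label attached to outcome $i$ and $p_i$ is its probability in the state $\rho$. Because $\mathrm{P}$ is projective, each outcome $i$ corresponds to an orthogonal projector $P_i$ and its probability is $p_i = \tr\left(P_i \rho\right)$.

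Substituting this into the sum gives $\left\langle \mathrm{P} \right\rangle = \sum_i a_i \tr\left(P_i \rho\right)$. The single substantive step is then the linearity of the trace functional, which permits pulling the finite sum and the real scalars $a_i$ inside the trace: $\left\langle \mathrm{P} \right\rangle = \tr\left( \left( \sum_i a_i P_i \right) \rho \right)$. Defining $O = \sum_i a_i P_i$ yields $\left\langle \mathrm{P} \right\rangle = \tr\left( O \rho \right)$, which is exactly the asserted identity \eqref{eqqexpvalue}.

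To finish, I would justify the name \emph{observable} for $O$ by checking that $O = \sum_i a_i P_i$ really is its spectral decomposition. This is immediate from the structure of a projective measurement established earlier in this section: the $P_i$ are orthogonal projectors summing to the identity, so they project onto mutually orthogonal subspaces of $\mathcal{H}$. Hence $O$ is Hermitian, its eigenvalues are the $a_i$, and the eigenspace associated with $a_i$ is the range of $P_i$.

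I do not expect any genuine obstacle here, since the whole argument rests solely on the linearity of $\tr$. The only point deserving mild care is that, if some of the outcome labels $a_i$ coincide, the corresponding projectors must be grouped so that $O = \sum_i a_i P_i$ is a true spectral decomposition with distinct eigenvalues rather than merely a weighted sum of projectors; this regrouping leaves the value of $\tr\left(O\rho\right)$ unchanged.
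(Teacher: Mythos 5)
Your proof is correct and follows exactly the same route as the paper: apply Definition \ref{defiexpectation}, insert the Born-rule probabilities $p_i=\tr\left(P_i\rho\right)$, and use linearity of the trace to pull the sum inside, defining $O=\sum_i a_iP_i$. Your extra remark about grouping repeated outcome labels so that $O$ has a genuine spectral decomposition is a minor refinement the paper glosses over, but the substance of the argument is identical.
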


 When the state is pure, $\rho=\ket{\psi}\bra{\psi}$ and equation \eqref{eqqexpvalue} reduces to
 $$\left\langle \ \mathrm{ P} \  \right\rangle=\sand{\psi}{O}{\psi}.$$

\subsection{\textsf{Processes}}

The same results we presented above for transformations in $\mathcal{T}\left(\mathcal{H}\right)$ can be proven for processes, maps that change
the type of system under consideration. The processes must also obey physical requirements similar to the ones imposed
to the elements of $\mathcal{T}\left(\mathcal{H}\right)$.  Let $\mathcal{H}_1$ and $\mathcal{H}_2$ be two Hilbert space, not necessarily of 
the same dimension and let
$$\Lambda:M(\mathcal{H}_1)\rightarrow M(\mathcal{H}_2)$$
be a linear map. The definitions of
\emph{positive}, \emph{k-positivity} and \emph{completely positivity}
can be generalized to this kind of map.


\begin{defi}
A map $\Phi: M(\mathcal{H}_1)\longrightarrow M(\mathcal{H}_2)$ is called \emph{positive} if $\Phi(\rho)$ 
is positive for every positive $\rho \in M(\mathcal{H}_1)$.
If 
$$\Phi \otimes I:M(\mathcal{H}_1\otimes \mathcal{H}')\rightarrow M(\mathcal{H}_2\otimes \mathcal{H}')$$
is positive, where  $\mathcal{H}'$ is a Hilbert space of dimension $k$, $\Phi$ is a $k$-\emph{positive} map.
$\Phi$ is called \emph{completely positive} if it is a $k$-positive map for every $k$.
\end{defi}

When the bases of $\mathcal{H}_1$ and $\mathcal{H}_2$ are fixed, we can represent the map $\Phi$ by a matrix, which
we will also denote by $\Phi$. Once more, since $\Phi$ acts  in $M(\mathcal{H}_1)$, the entries of the 
corresponding matrix will carry four indices. The action of $\Phi$ in a density matrix $\rho \in \mathcal{H}_1$ is a
density matrix $\rho' \in \mathcal{H}_2$, whose entries are given by
$$\rho'_{mn}=\sum_{\mu \nu} \Phi_{\hspace{-0.5em}\tiny\begin{array}{c}
m\mu \vspace{-0.5em}\\
\vspace{-2em} n\nu
 \end{array}} \rho_{\mu\nu}.$$

We can also define the dynamical matrix $D$ associated to the process $\Phi$
$$D_{\hspace{-0.5em}\tiny\begin{array}{c}
mn \vspace{-0.5em}\\
\vspace{-2em} \mu \nu
 \end{array}}=\Phi_{\hspace{-0.5em}\tiny\begin{array}{c}
m\mu \vspace{-0.5em}\\
\vspace{-2em} n\nu
 \end{array}}.$$
If $\mathcal{H}_1$ and $\mathcal{H}_2$ do not have the same dimension, the matrix of $\Phi$ is not a square matrix but the
associated dynamical matrix $D$ is.
If $\dim(\mathcal{H}_1)=k$
 and $\dim(\mathcal{H}_2)=l$, then the matrix of $\Phi$ is a $k^2 \times l^2$ matrix, while $D$ is a square matrix of size
  $kl \times kl$. The version of  Jamio\l kowski's and Choi's theorems for processes can also be proven.


 \begin{teo}
A linear map $\Phi: M(\mathcal{H}_1)\longrightarrow M(\mathcal{H}_2)$ 
is positive iff the associated dynamical matrix $D$ is block-positive.
 It is completely positive iff $D$ is positive.
 \end{teo}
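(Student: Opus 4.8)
The plan is to reproduce, almost verbatim, the two equal-dimension results already in hand (Jamio\l kowski's and Choi's theorems), the only change being that inputs live in $M(\mathcal{H}_1)$ and outputs in $M(\mathcal{H}_2)$, so that the dynamical matrix $D$ now acts on $\mathcal{H}_2\otimes\mathcal{H}_1$ and is the square $kl\times kl$ object described above. Since every index manipulation in the square case was purely formal, it survives the passage to unequal dimensions provided one keeps the Latin indices $m,n$ running over a basis of $\mathcal{H}_2$ and the Greek indices $\mu,\nu$ over a basis of $\mathcal{H}_1$, with $D_{m\mu,n\nu}=\Phi_{m\mu,n\nu}$.

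For the block-positivity equivalence I would first test $\Phi$ on pure inputs. Writing $\rho=\ket{\phi}\bra{\phi}$ with $\rho_{\mu\nu}=\phi_\mu\phi_\nu^*$ and picking an arbitrary $\ket{\psi}\in\mathcal{H}_2$, the same computation as in the square case gives
\be
\sand{\psi}{\Phi(\rho)}{\psi}=\sum_{m n \mu\nu}\psi_m^*\,\phi_\mu\, D_{m\mu,n\nu}\,\psi_n\,\phi_\nu^*=\sand{\psi\otimes\phi^*}{D}{\psi\otimes\phi^*}.
\ee
If $\Phi$ is positive, the left-hand side is nonnegative for every $\ket{\psi},\ket{\phi}$, which is exactly block positivity of $D$. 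Conversely, if $D$ is block positive the same identity forces $\sand{\psi}{\Phi(\rho)}{\psi}\geq 0$ for every pure $\rho$ and every $\ket{\psi}$, hence $\Phi(\rho)\geq 0$ on pure inputs; since an arbitrary positive operator is a nonnegative combination of rank-one projectors and $\Phi$ is linear, $\Phi$ is positive. Restricting the test to pure inputs is legitimate precisely because positive operators are convex combinations of pure states.

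For the complete-positivity equivalence I would first identify $D$ with the image of the (unnormalized) maximally entangled vector $\ket{\eta}=\sum_\mu\ket{\mu}\otimes\ket{\mu}\in\mathcal{H}_1\otimes\mathcal{H}_1$, namely $D=(\Phi\otimes I)\left(\ket{\eta}\bra{\eta}\right)$, which follows directly from $D_{m\mu,n\nu}=\Phi_{m\mu,n\nu}=[\Phi(\ket{\mu}\bra{\nu})]_{mn}$. One direction is then immediate: if $\Phi$ is completely positive then $\Phi\otimes I$ is positive and sends the positive operator $\ket{\eta}\bra{\eta}$ to the positive operator $D$. For the converse, assuming $D\geq 0$ I would take a positive decomposition $D=\sum_i\ket{v_i}\bra{v_i}$ with $\ket{v_i}\in\mathcal{H}_2\otimes\mathcal{H}_1$ and invoke the operator--vector correspondence $\ket{v_i}=(A_i\otimes I)\ket{\eta}$, where $A_i:\mathcal{H}_1\to\mathcal{H}_2$ is the matrix whose entries are the components of $\ket{v_i}$. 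A short reshaping then yields $\Phi(\rho)=\sum_i A_i\rho A_i^\dagger$, a Kraus form, which is completely positive by theorem \ref{propkraus}.

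The main obstacle is purely bookkeeping rather than conceptual: one must set up the operator--vector correspondence cleanly when $\dim\mathcal{H}_1\neq\dim\mathcal{H}_2$ so that each $A_i$ is a genuine $l\times k$ matrix, and then verify that $D=\sum_i\ket{v_i}\bra{v_i}$ reshapes back into $\Phi(\rho)=\sum_i A_i\rho A_i^\dagger$ with the index conventions fixed above. Once the dictionary between the four-index object $\Phi_{m\mu,n\nu}$, the square matrix $D$ on $\mathcal{H}_2\otimes\mathcal{H}_1$, and the entangled-state picture is pinned down, both equivalences follow from the equal-dimension arguments with no further work; the full reshaping identities can be found in \cite{BZ06, Amaral10}.
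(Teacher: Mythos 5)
Your proposal is correct, but the comparison with ``the paper's proof'' is lopsided: the paper offers no proof of this statement at all. It introduces the rectangular dynamical matrix, remarks that ``the version of Jamio\l kowski's and Choi's theorems for processes can also be proven,'' and defers even the equal-dimension proofs to \cite{BZ06, Amaral10}. What you supply is essentially the argument those references contain. Your block-positivity equivalence reproduces, with the Latin/Greek index split between $\mathcal{H}_2$ and $\mathcal{H}_1$, the one computation the paper does carry out (in the square case): positivity of $\Phi$ on pure inputs gives $\sand{\psi\otimes\phi^*}{D}{\psi\otimes\phi^*}\geq 0$, i.e.\ block positivity; and you add the converse, which the paper merely asserts, with the correct justification that positive operators are nonnegative combinations of rank-one projectors. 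Your complete-positivity equivalence is the standard Choi argument: $D=(\Phi\otimes I)\left(\ket{\eta}\bra{\eta}\right)$ settles one direction, and the spectral decomposition of $D\geq 0$ together with the operator--vector correspondence $\ket{v_i}=(A_i\otimes I)\ket{\eta}$ reshapes into the Kraus form $\Phi(\rho)=\sum_i A_i\rho A_i^{\dagger}$; this is exactly what the paper outsources. The only blemish is the final citation of theorem \ref{propkraus}: as stated in the paper, that theorem concerns maps $M(\mathcal{H})\rightarrow M(\mathcal{H})$ with square Kraus operators ``of the same size of $\rho$,'' so it does not literally cover your rectangular $A_i$. This is not a real gap, because the implication you need is one line and dimension-independent --- for $X\geq 0$ one has $(\Phi\otimes I)(X)=\sum_i (A_i\otimes I)\,X\,(A_i\otimes I)^{\dagger}\geq 0$ term by term --- but you should state that line explicitly rather than lean on the square-case theorem.
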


The dynamical matrix can be writen in terms of the action of  $\Lambda \otimes I:M(\mathcal{H}_1 \otimes
\mathcal{H}_2)\rightarrow M(\mathcal{H}_2 \otimes \mathcal{H}_2)$ in the state
$P_+=|\Phi_+\rangle \langle \Phi_+| \in M(\mathcal{H}_A \otimes
\mathcal{H}_A)$ where $$|\Phi_+\rangle=\frac{1}{d}\sum_i |ii\rangle,$$
$d$ being the dimension of $\mathcal{H}_1$.


\begin{teo}[Choi-Jamio\l kowski's Isomorphism]
Given a linear map $\Lambda:M(\mathcal{H}_1)\rightarrow M(\mathcal{H}_2),$
$$D_\Lambda=\Lambda \otimes I(|\Phi_+\rangle \langle \Phi_+|).$$
\end{teo}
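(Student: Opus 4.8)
The plan is to verify the claimed operator identity entry by entry in a product basis, reducing everything to the component definition of the dynamical matrix together with the linearity of $\Lambda\otimes I$. I would first pin down the notation, reading the statement with the ancilla taken to be a second copy of $\mathcal{H}_1$, so that $\ket{\Phi_+}\in\mathcal{H}_1\otimes\mathcal{H}_1$ and $\Lambda\otimes I:M(\mathcal{H}_1\otimes\mathcal{H}_1)\to M(\mathcal{H}_2\otimes\mathcal{H}_1)$. Fix orthonormal bases $\{\ket{i}\}$ of $\mathcal{H}_1$ (of dimension $d$) and $\{\ket{m}\}$ of $\mathcal{H}_2$. The first step is to expand the projector onto the maximally entangled vector as a sum of elementary dyads,
\[
\ket{\Phi_+}\bra{\Phi_+}=\frac{1}{d^2}\sum_{i,j}\ket{ii}\bra{jj}=\frac{1}{d^2}\sum_{i,j}\bigl(\ket{i}\bra{j}\bigr)\otimes\bigl(\ket{i}\bra{j}\bigr),
\]
so that, by linearity,
\[
(\Lambda\otimes I)\bigl(\ket{\Phi_+}\bra{\Phi_+}\bigr)=\frac{1}{d^2}\sum_{i,j}\Lambda\bigl(\ket{i}\bra{j}\bigr)\otimes\bigl(\ket{i}\bra{j}\bigr),
\]
an operator on $\mathcal{H}_2\otimes\mathcal{H}_1$.

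Next I would read off matrix elements in the basis $\{\ket{m}\otimes\ket{\mu}\}$. The crucial observation is that applying $\Lambda$ to a matrix unit reproduces exactly the four-index array defining $\Phi$ (writing $\Phi_{m\mu,n\nu}$ for the components used above): since $(\ket{i}\bra{j})_{\alpha\beta}=\delta_{\alpha i}\delta_{\beta j}$, the component form $\rho'_{mn}=\sum_{\mu\nu}\Phi_{m\mu,n\nu}\rho_{\mu\nu}$ gives $\sand{m}{\Lambda(\ket{i}\bra{j})}{n}=\Phi_{mi,nj}$. The second tensor factor contributes $\sand{\mu}{\ket{i}\bra{j}}{\nu}=\delta_{\mu i}\delta_{j\nu}$, which simply selects $i=\mu$ and $j=\nu$ out of the double sum. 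Combining the two factors,
\[
\sand{m\mu}{(\Lambda\otimes I)(\ket{\Phi_+}\bra{\Phi_+})}{n\nu}=\frac{1}{d^2}\sum_{i,j}\Phi_{mi,nj}\,\delta_{\mu i}\delta_{j\nu}=\frac{1}{d^2}\,\Phi_{m\mu,n\nu},
\]
and comparison with the definition $D_{mn,\mu\nu}=\Phi_{m\mu,n\nu}$ identifies the right-hand side with the dynamical matrix, which is the asserted isomorphism.

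The only genuinely delicate point is bookkeeping, and it has two parts. First, one must track the index \emph{reshuffling}: the object $\Phi$ carries its pairs as $(m\mu\,;\,n\nu)$ whereas $D$ regroups them as $(mn\,;\,\mu\nu)$, so the computation is precisely the statement that acting by $\Lambda$ on the ``left half'' of $\ket{\Phi_+}\bra{\Phi_+}$ while leaving the ``right half'' untouched performs this regrouping. Second, and this is the one place to watch, the overall scalar is fixed by the normalization chosen for $\ket{\Phi_+}$: with the convention $\ket{\Phi_+}=\frac{1}{d}\sum_i\ket{ii}$ the computation produces $D_\Lambda$ up to the factor $1/d^2$, and the clean equality in the statement corresponds to absorbing this constant (equivalently, to using the unnormalized vector $\sum_i\ket{ii}$). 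Once the convention is pinned down, the proof is just the two displayed identities.

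Finally I would remark that this componentwise identification is what makes the earlier criteria transparent: positivity and complete positivity of $\Lambda$ correspond, under the isomorphism, to block-positivity and positivity of $D_\Lambda$, so the Jamio\l kowski and Choi theorems stated above are simply the spectral reading of the same object $D_\Lambda=(\Lambda\otimes I)(\ket{\Phi_+}\bra{\Phi_+})$.
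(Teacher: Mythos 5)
Your proof is correct. Note that the thesis does not actually prove this theorem: immediately after the statement it defers to the references \cite{BZ06, Amaral10}, so there is no in-text argument to compare with; the entrywise verification you give is the standard one found in those references. Your bookkeeping is consistent with the paper's conventions: from $\rho'_{mn}=\sum_{\mu\nu}\Phi_{m\mu,n\nu}\rho_{\mu\nu}$ one indeed gets $\sand{m}{\Lambda\left(\ket{i}\bra{j}\right)}{n}=\Phi_{mi,nj}$, and the Kronecker deltas coming from the untouched tensor factor implement exactly the index regrouping $(m\mu\,;n\nu)\mapsto(mn\,;\mu\nu)$ by which the dynamical matrix is defined, so the operator $(\Lambda\otimes I)\left(\sum_{i,j}\ket{ii}\bra{jj}\right)$ has matrix elements $D_{mn,\mu\nu}$ in the product basis. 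Your normalization caveat is also a genuine catch rather than pedantry: with the convention $\ket{\Phi_+}=\frac{1}{d}\sum_i\ket{ii}$ stated in the paper (a vector of norm $1/\sqrt{d}$, not a unit vector), the computation gives $(\Lambda\otimes I)\left(\ket{\Phi_+}\bra{\Phi_+}\right)=\frac{1}{d^2}\,D_\Lambda$, so the literal equality in the statement holds only after absorbing the factor $d^2$, equivalently after replacing $\ket{\Phi_+}\bra{\Phi_+}$ by $\sum_{i,j}\ket{ii}\bra{jj}$. Identifying and isolating that convention issue is exactly what a careful proof should do.
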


A proof of this result can be found in references \cite{BZ06, Amaral10}.

\section{\textsf{Final Remarks}}
\label{sectiongeneral}

In this section we will discuss briefly  general properties that follow from the assumptions we have made about the structure
of  general probability theories. A number of properties are satisfied by all of them but others
 are present only in specific kinds of models. Classical probability theory, for
example, has a number of characteristics that distinguish it from all others. 
Some properties thought as special features of quantum theory 
are in fact general, and in many aspects it is classical probability theory that emerges as a very particular case. In this 
sense,
many of these properties can be seen as a signature of the ``non-classicality'' of the theory, rather than a signature of
its ``quantumness''.  For more detailed discussion and for
the proofs of the results presented below, see reference \cite{Barret06}.

The first one, that we already mentioned, is the fact that classical theory is the only one in which every 
mixed state can be decomposed uniquely as a convex combination of pure states. This is due to the fact that
the state space of a classical model is a simplex, and this is the only convex body with this property.

Another interesting property of classical theories is the effect of an  
outcome-repeatable measurement in  the system. The definition of measurement
we gave includes a transformation of the state of the system. 
Note that this fact by itself should not create any panic, since even in classical probability
theory the state of the system can change after a measurement. 
What is special about quantum theory is that pure states can change after a measurement,
whereas in classical probability theory only  mixed states can change, as we saw in 
section \ref{sectionclassical}. This is not the case for most theories in this framework. 
The same questions of interpretation of the change of the state after a measurement that bother quantum theory 
for so many years may show up once again. We will not jump into the quicksand of philosophical debate here and 
we will assume a clear practical position when it comes to interpretation of our assumptions and  their consequences. 
Nevertheless we mention that there is room for a lot of different points of view in this subject and that the reader 
should feel free to think about it as much as (s)he wants \cite{Robba13}.

In theorem \ref{theoremlcpure} we proved that any state of a 
composite system can be written as a linear combination of product states. This does not 
imply, and we also did not assume, that every state can be written as a \emph{convex combination} of product states. 
States with this property are called \emph{separable}, and the states that are not separable are called \emph{entangled}.
As we saw in section \ref{sectionquantum}, in some models there may be entangled states. 
Entangled states are closely related to an interesting feature of quantum theory called nonlocality, 
that we will define properly in appendix \ref{chapternonlocality}, although they are not always equivalent \cite{VB14,BCPSW13}. 
Classical probability theories do not allow entangled states and do not exhibit nonlocality,
but quantum theory and many other theories do.

Another feature of all classical theories is that they are the only ones allowing cloning of an arbitrary pure state. 
A probabilistic cloning procedure is given by the following steps: begin with a system  in a pure state $\rho$; introduce an ancilla system 
of the same type, prepared in a fixed pure state $\rho_0$;  apply a joint transformation  on the pair of systems
such that the final state is
$$\rho\times \rho$$
with probability larger than zero.

\begin{teo}
 If in a given probability theory there is a probabilistic cloning procedure to every model, then the theory is
 classical.
\end{teo}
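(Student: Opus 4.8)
The plan is to prove the statement in structural form: cloning of every pure state in every model forces each state space $\mathcal{S}$ to be a simplex, and then the earlier result that a simplex state space is described by a classical probability space finishes everything. So the whole argument reduces to the single implication that \emph{if every pure state of a model can be cloned, then $\mathcal{S}$ is a simplex}, applied model by model.

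First I would linearize the cloning procedure. Let $V$ be the vector space of the single system and $\rho_0$ the fixed pure ancilla. A successful cloning run is a transformation $C$ on the composite system, which by the theorem that transformations extend to linear maps and by Theorem~\ref{tensor} is a linear map on $V\otimes V$; composing it with the linear embedding $x\mapsto x\otimes\rho_0$ produces a linear map $L\colon V\to V\otimes V$, $L(x)=C(x\otimes\rho_0)$. Since a transformation scales the norm by its success probability, the cloning hypothesis reads exactly as $L(\psi)=p_\psi\,(\psi\otimes\psi)$ with $p_\psi>0$ for every pure state $\psi$, where $\psi\otimes\psi$ denotes the normalized clone.

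Next I would exploit the tension between the linearity of $L$ and the quadratic nature of $\psi\mapsto\psi\otimes\psi$. Suppose $\mathcal{S}$ is \emph{not} a simplex. Because normalized states lie in the affine hyperplane on which the (linear) norm functional equals $1$, on that hyperplane linear and affine dependence coincide, so the extreme points fail to be linearly independent. I would then pick a minimal linearly dependent family $\{\psi_0,\dots,\psi_r\}$ of pure states: all its coefficients are nonzero, they sum to zero (apply the norm functional), one checks $r\geq 2$, the proper subset $\{\psi_0,\dots,\psi_{r-1}\}$ is linearly independent, and $\psi_r=\sum_{k<r}a_k\psi_k$ with $\sum_k a_k=1$ and every $a_k\neq 0$. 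Evaluating $L(\psi_r)$ in two ways and expanding $\psi_r\otimes\psi_r=\sum_{k,l}a_k a_l\,\psi_k\otimes\psi_l$, then using that $\{\psi_k\otimes\psi_l\}_{k,l<r}$ is linearly independent in $V\otimes V$, yields the off-diagonal relations $p_{\psi_r}a_k a_l=0$ for $k\neq l$. As $r\geq 2$ and all $a_k\neq 0$, this forces $p_{\psi_r}=0$, contradicting $p_{\psi_r}>0$. Hence $\mathcal{S}$ is a simplex, the model is classical, and since this holds for every model the theory is classical.

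I expect the main obstacle to be the bookkeeping around independence: reducing an arbitrary linear dependence among pure states to a minimal one with all coefficients nonzero, justifying $r\geq 2$, and deriving the linear independence of the product vectors $\psi_k\otimes\psi_l$ from that of the $\psi_k$. The conceptual engine — a linear $L$ cannot reproduce a genuinely quadratic map on affinely dependent points — is short; the care lies in that finite-dimensional reduction and in the remark that normalization turns affine independence into linear independence, which is precisely what licenses the appeal to the simplex characterization.
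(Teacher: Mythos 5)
Your proof is correct, but it cannot be compared step-by-step with the paper's own argument for a simple reason: the paper does not prove this theorem at all, it only states it and defers to Barrett's paper \cite{Barret06}. What you have written is therefore a self-contained proof where the thesis supplies none, and it is the natural one: the generalized-probabilistic-theory analogue of the Wootters--Zurek linearity proof of quantum no-cloning. Your route uses only material already established in the chapter (transformations extend to linear maps, composite state spaces live in the tensor product by theorem \ref{tensor}, product states are tensor products, and a simplex state space is classical), and reduces everything to the incompatibility of a linear $L$ with $L(\psi)=p_\psi\,\psi\otimes\psi$ on a linearly dependent set of pure states. The bookkeeping you flag does go through: compactness of the normalized state set plus Minkowski's theorem shows a non-simplex has a finite affinely dependent family of distinct extreme points; the normalization functional converts affine dependence into linear dependence; a minimal dependent family must have $r\geq 2$, since a two-element dependence with coefficients summing to zero would force two distinct pure states to coincide; and independence of $\{\psi_k\}_{k<r}$ gives independence of $\{\psi_k\otimes\psi_l\}_{k,l<r}$, so the off-diagonal coefficients $p_{\psi_r}a_k a_l$ must vanish, contradicting $p_{\psi_r}>0$. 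Two points are worth making explicit in a write-up: (i) the cloning machine must be one and the same transformation $C$ for every input pure state, otherwise the hypothesis is vacuous; this is the intended reading of the paper's definition and the one you adopt; (ii) the paper's simplex-implies-classical theorem concerns the full state space $\mathcal{S}$, which includes $\overrightarrow{0}$, so you should remark that if the normalized states form a simplex then so does $\mathcal{S}$, being the convex hull of that simplex and the apex $\overrightarrow{0}$, which lies off the normalization hyperplane.
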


The proof of this result can be found in reference \cite{Barret06}.

We can recognize many properties exclusive of classical theories. This allows us to arrive in this kind of theory
if we make all the assumptions done in section \ref{sectionstates} and \ref{sectionmultipartite} and postulate
also any one of this properties that single out classical theories among the other ones in this framework. The main
question motivating this work is if we can do the same for quantum theory: \emph{ is there any physical principle
that singles out quantum theory in the universe of all generalized probability theories? What different ways
are there of uniquely identifying quantum theory from the
other theories in the framework by adding as few extra assumptions as possible?}

The features connected to the quantum character of the theories are still not completely understood, but we believe
that the study of quantum contextuality is shedding light upon this quest.  

\chapter{\textsf{Non-contextuality inequalities}}
\label{chapterncinequalities}

Quantum theory   has an intrinsic   statistical  character. It does not provide the exact value of all measurements
 for any state  of the system, but rather the probabilities of the occurrence of each possible outcome,
even when the state of the system is pure. We have seen in section \ref{sectionquantum} that the expectation
value of a projective
measurement $\mathrm{P}$   in a state $\rho$ is given by
\be\left\langle \mathrm{ P}\right\rangle=\mbox{Tr}\left(O\rho\right)\label{eqmeanvalue}\ee
where $O$ is the observable associated to the measurement. We have seen also that there is no dispersion for $\mathrm{P}$ 
iff the support of $\rho$ is contained in an eigenspace of $O$.
This means that in general, there is a statistical distribution for the outcomes of $\mathrm{P}$, even if the state of the 
system is of the form $\ket{\psi}\bra{\psi}$. In this chapter we want to discuss this 
probabilistic character of quantum theory, focusing only in outcome-repeatable measurements, which means that
we will work with projective measurements from now on.

Consider a set with a huge number of copies of the same system, all prepared in the same way. Such a set will be called
an \emph{ensemble}.
To calculate the probability distribution of a given measurement for this preparation one can 
  perform this measurement in several copies, and count the relative frequencies of each outcome.
 For most measurements, this 
distribution has dispersion larger then zero.
Two possible explanations for this indeterminacy on the outcomes of the measurements are \emph{a priori} 
conceivable:

\begin{enumerate}
 \item[I.] The individual systems of the ensemble are in different states, in such a way that we could separate the copies
 in a number of sub-ensembles, each of them consisting in a definite state that is dispersion-free for all the measurements.
 The probabilistic character of the experiments is, in this case, explained by our lack of information: we do not know  
 everything about the system we are measuring and hence we can not predict the results.
 
  \item[II.] All individual systems are in the same pure state and that is all the information we can get. The laws of
  nature allow that different outcomes are possible even when we perform the same measurement in two 
  identically prepared systems.
\end{enumerate}
In this chapter we present a number of attempts to find objective criteria which allow us  to decide between these two options.
We will see that, under some very reasonable circumstances, there is no way out but to accept option II.

Before we enter the specific details of the proofs of the impossibility of option I, let us think about why 
option I seems so logical to our classical minds, modeled by our daily experience with macroscopic systems.
The necessity of the use of probabilities in the description of an experiment naturally arises
from the incompleteness of our knowledge about the parameters involved in it. Due to our classical intuition, 
we are used to think that if we knew everything about our experiment, 
two repetitions of the same procedure with exactly the same value for every possible parameter involved had to
provide the same result at the end.
It is reasonable to imagine that two replicas of the same object will remain identical if they are subjected to the exactly  
same process. If this is not the case, we would have no reason to call them identical in the first place.

Let us focus now in quantum theory and apply this reasoning to an ensemble of systems in the same state $\ket{\psi}\bra{\psi}$.
Since this ensemble will exhibit dispersion for most measurements, the elements of the ensemble
could not be identical and hence they could not all
be in the same state. Hence, the state assigned to this preparation by quantum theory can not be everything: there are more
parameters we must use in the description of these systems in order to get dispersion-free states. This unknown parameters 
may have different values in our ensemble, and  the probabilistic behavior is due to our lack of knowledge on these
``hidden variables.''

This line of thought lead many physicists to believe that quantum theory might be wrong, or at least, incomplete. Since
quantum theory is capable of reproducing every experimental data people could get in the laboratory up to these days, 
we have absolutely no evidence that it might be wrong. Hence, our best shot is to suppose the possibility of 
completing quantum theory, adding extra variables to the description of pure states, in a way that with all 
this information (of pure quantum state plus extra variables) we would be able to predict with certainty the outcome of
all measurements and in a way that when averaging over these extra variables we would get the quantum predictions.
This kind of completion of quantum theory is often called a \emph{hidden-variable model}.

A good example in which a similar argument applies is classical thermodynamics, which states physical laws involving
macroscopic aspects 
of matter, such as pressure, volume and temperature. These laws do not provide all the information about the systems
studied, since they appear when we average over a large number of atoms and we do not take into account the individual
parameter such as position and velocity of each atom. Although very useful for many applications, classical thermodynamics
does not explain phenomena such as Brownian motion, which require a more complete treatment,  provided by
statistical physics.

It happens that under the assumption of \emph{noncontextuality}, hidden-variable models compatible with 
quantum theory are not possible. This result is known as the \emph{Bell-Kochen-Specker theorem.} The noncontextuality hypothesis states that the value assigned by the model to a measurement can not depend on other compatible measurements performed jointly.

The first proof of this result was provided by Kochen and Specker  \cite{KS67}. It is based on a set of $117$ observables with possible outcomes
$0$ or $1$. This set is constructed in such a way that if we assign one of this values to each of them noncontextualy, we reach a contradiction with what we expect from quantum theory. The assumption of noncontextuality was so natural that it was only pointed out after by Bell \cite{Bell66}. Many other proofs using the same idea have been provided, using sets with a smaller number of observables.
They have an important  common feature: they are all \emph{state-independent}. This means 
that if we choose the set of observables as in any of these proofs, the assignment of definite values for the corresponding projective  measurements can not reproduce the statistics given
by any quantum state when we average over all possible values of the hidden variables. The reader interested in such proofs may find a number of examples in 
appendix \ref{chaptercontextuality}.

It is possible to provide simpler state-dependent proofs of the impossibility of hidden variables compatible with quantum
theory. The idea behind this kind of proof is to show that no hidden-variable model can reproduce the statistics of some
measurements for a given state of the corresponding system.  Some of this proofs use a very small number of vectors and hence
are much simpler than the state-independent ones.

One of the most common ways to provide a state-dependent proof of the Kochen-Specker theorem is using the so
called \emph{noncontextuality inequalities}. They are linear inequalities involving the probabilities of 
certain outcomes of the joint measurement of compatible observables that must be obeyed by any hidden-variable model
and can be violated by quantum theory with a particular choice of state and observables. In this 
chapter we study  noncontextuality inequalities 
and some different ways to approach the subject.

One advantage of the impossibility proofs using noncontextuality inequalities is that many of them use a small number of
observables, which may make them much more suitable for experimental implementations.  The experimental verification
of quantum violations was already performed for a number of 
inequalities, specially in the particular case of Bell inequalities, which are introduced in appendix \ref{chapternonlocality}.

Here we discuss two approaches to noncontextuality inequalities: the compatibility hypergraph approach, in section 
\ref{sectioncompatibility} and the Exclusivity graph approach, in section \ref{sectionexclusivity}. 
In section \ref{sectioncontextuality} we discuss the assumption of noncontextuality.
In section \ref{sectionsheaftheory} we explain the connection between the first approach and Sheaf theory.
In section \ref{sectionphysical} we discuss the probability distributions obtained with classical and quantum theories.
In section \ref{sectionncinequalities} we define  noncontextuality inequalities. The important examples of 
the KCBS inequality and the $n$-cycle inequalities are discussed in sections \ref{sectionKCBS} and
 \ref{sectionncycle}, respectively. In section \ref{sectionexclusivitygraph} we introduce the exclusivity graph,
 which is an important tool for both approaches. In section \ref{sectionexclusivityinequalities} we define 
 noncontextuality in the second approach and review the examples given before in this new perspective. 
 The graph theoretical formulation of quantum contextuality supplies new tools to understand the 
 differences between quantum and classical theories. In section \ref{sectionlargestcontextuality} we use some of
 these tools to find the scenarios exhibiting the largest quantum contextuality. We close the chapter with
 some final remarks.

\section{\textsf{The assumption of noncontextuality}}
\label{sectioncontextuality}

Let $\{O_1, O_2, \ldots, O_m\}$ be a set of compatible measurements. Such a set will be called a \emph{context}. 
Let $\{O_1, O'_2, \ldots, O'_n\}$ be another context containing $O_1$ and such that  $O_i$ and $O'_j$ are not necessarily
compatible. 
The compatibility between the elements of each context implies that they have a common refinement, which
allows us to design an experiment in which all of them can be 
jointly measured.  A hidden-variable model must provide a definite outcome for this measurement
and hence the model provides a set of definite outcomes for each context.

\begin{defi}
A \emph{hidden-variable model} for a system is a set of extra variables $\Lambda$ and a rule that 
 specifies for each pair $(\rho, \lambda)$, where $\rho$ is a pure state of the system and $\lambda \in \Lambda$,
 a definite set of outcomes for every maximal context\footnote{We say that a context is maximal if there is no
 other set of compatible measurements that contains it properly.}
 $\{O_1, O_2, \ldots, O_m\}$.
\end{defi}

Some authors consider hidden-variable models that are not deterministic, that is, the measurements may not have
definite outcomes for every state. Nonetheless, the ``non-determinism'' in those
models comes from the fact that we do not know everything about the system, and hence they can be completed to give a
deterministic model. We will not consider this kind
of model in this text.

Suppose now that a hidden-variable model is provided for the system. Such a model 
assigns a string of definite values to  both $\{O_1, O_2, \ldots, O_m\}$ and $\{O_1, O'_2, \ldots, O'_n\}$.
We demand that the value assigned to $O_1$ be independent of the context in which it appears: if the outcome of $O_1$
according to the model is $o_1$ when a joint measurement of $\{O_1, O_2, \ldots, O_m\}$ is performed, 
the same outcome $o_1$ must be assigned to $O_1$ by the model if we jointly measure
$\{O_1, O'_2, \ldots, O'_n\}$.

\begin{defi}
 We say that a hidden-variable model is \emph{noncontextual} if the value associated by the model to an observable $O$
 is independent of which and which  compatible measurements are performed jointly. 
\end{defi}

This observation was first pointed out by Bell \cite{Bell66}, who argued that there is no \emph{a priori} reason to require
noncontextuality from a hidden-variable model. Suppose we perform the measurement of an observable $O_1$ and together one may
choose to measure either $\{O_2, \ldots, O_m\}$ or $\{O'_2, \ldots, O'_n\}$, both compatible with $O_1$ but not to one another. These different possibilities 
may require completely different experimental arrangements, and hence to demand that the values associated to $O_1$
be the same can not be physically justified. The outcome of a measurement may depend not only on the state of the system, but 
also on the apparatus used to measure it. 

Although the measurement process and the interaction between system and apparatus are important issues
in quantum theory, this is not the problem here, since we could include all variables of the apparatus in
the model, and apply the same reasoning again. The point that makes the noncontextuality assumption plausible is that
there is no need to measure the compatible observables simultaneously. Suppose we measure $O_1$ and then we choose
what else we are going to measure, $\{O_2, \ldots, O_m\}$ or $\{O'_2, \ldots, O'_n\}$ or even if we are not
measuring anything else. The hidden-variable model 
should predict the outcome of $O_1$, but if this model is contextual this value would depend on a measurement 
that will be performed in the future or, even worst, on a decision to measure or nor, yet to be made!

Another way to enforce naturally the noncontextuality assumption is to design the experiment in such a way that 
the choice of $\{O_2, \ldots, O_m\}$ or $\{O'_2, \ldots, O'_n\}$ 
is made in a different region of the space in a time interval that forbids any signal to be
sent from one region to the other. Since no signal was sent, the choice of what is going to be measured in
one part can not disturb what happens in the other, what demands the model to be noncontextual.
In this situation, we say that the model is local and the noncontextuality assumption is usually referred to as the  \emph{locality assumption}. We talk about  this special case in appendix \ref{chapternonlocality}.

\section{\textsf{Contextuality: the compatibility hypergraph approach}}
\label{sectioncompatibility}

Suppose an experimentalist has many possible measurements to carry out in a physical system. Each measurement has a number of 
possible outcomes, that occur with a certain probability for a given state of the system.

\begin{defi}
\label{defimeasurementcover}
Let $X$ denote the set of possible measurements available.
A \emph{compatibility cover} $\mathcal{C}$ is a family of subsets  of $X$ such that
\begin{enumerate}
\item Each $C \in \mathcal{C}$ is a set of compatible measurements;
 \item $\cup_{C \in \mathcal{C}} C =X$;
\item   $C,C' \in \mathcal{C}$ and $C \subseteq C'$
implies $C = C'$. \label {antichain}
\end{enumerate}
\end{defi}

 As we mentioned previously, each $C \in \mathcal{C}$ is called a 
\emph{context}.
Condition \ref{antichain} is called \emph{anti-chain}
condition and it
guarantees that  all contexts in $\mathcal{C}$ are maximal.

We will assume without loss of generality that all measurements have the same number of outcomes. The set of possible outcomes
will be denoted by $O$. We remark here that the actual labels given to the outcomes are not important. The only important thing in what 
follows is the number of elements in $O$. 

\begin{defi}
 A triple $\left(X, \mathcal{C}, O \right)$ is called a \emph{compatibility scenario}\footnote{In this thesis, we will often use 
  the word \emph{scenario} instead of \emph{compatibility scenario}.}.
\end{defi}

The compatibility relations among the elements of $X$ can be represented with the help of a hypergraph.

\begin{defi}
The \emph{compatibility hypergraph} of a scenario $\left(X, \mathcal{C}, O \right)$ is a hypergraph such that the vertices are 
the measurements in $X$ and the hyperedges are the contexts $C \in \mathcal{C}$.
\end{defi}

Notice that the compatibility hypergraph does not suffice to identify the scenario, since the number of outcomes for 
each measurement is not determined.
For a given subset $C \in \mathcal{C}$, consider the set of possible outcomes for a joint measurement of the elements of $C$.
This set is the Cartesian product of $|C|$ copies of $O$ and will be denoted by $O^{C}$. This set can  be identified with the set of functions
$$\lambda: C \longrightarrow O.$$
Each function $\lambda \in O^C$ is called a \emph{section over} $C$.

When a system is prepared in a given state and the measurements in $C$ are performed subsequently, a set of outcomes in $O^C$
will be observed. This individual run of the experiment will be called an \emph{event}. Each event is an element of
$O^C$ and hence is represented by a section over $C$.

\begin{defi}
A \emph{probability distribution} $p$ for $\mathcal{C}$ is a family of functions
$p_C:  O^C\rightarrow [0,1]$ such that $\sum_{s\in O^C} p_C(s)=1$, $C \in \mathcal{C}$.
\end{defi}

Each probability distribution can be associated to a vector $p \in \mathbb{R}^n, 
n=\displaystyle{\sum_{C \in \mathcal{C}}\left|O^C\right|}$.
If we have $\mathcal{C}=\left\{C_1, C_2, \ldots, C_n\right\}$ and for each $C_i$ we have
$O^{C_i}= \left\{s_i^1, s_i ^2, \ldots , s_i^{m_i}\right\}$, we define
\be\small{
 p=\left[\begin{array}{ccccccccc}
             p_{C_1}\left(s_1^1\right)&p_{C_1}\left(s_1^2\right)&\ldots&p_{C_1}\left(s_1^{m_1}\right)&  \ldots&p_{C_n}\left(s_n^1\right)&p_{C_n}\left(s_n^2\right)&\ldots&p_{C_n}\left(s_n^{m_n}\right)
            \end{array}\right] }\ee
This association is discussed in more detail in reference \cite{AQBTC13}.

For a given compatibility cover, the set of possible probability distributions is a polytope with
$\displaystyle{\prod_{C \in \mathcal{C}}\left|O^{C}\right|}$ vertices. Each vertex corresponds to 
probability one for one of the outcomes
$s \in O^C$ for each context $C \in \mathcal{C}$.
All other distributions are convex combinations of these vertices.

Let $C=\{M_1, \ldots, M_n\}$ be a context in $\mathcal{C}$. Each element of $O^C$ is a string $s= \left(a_1, \ldots, a_n\right)$ with
 $n$ elements of $O$. For each $U \subset C$, there is a natural restriction
\begin{eqnarray}
 r^C_U:O^C &\rightarrow& O^U \\
s=\left(a_i\right)_{M_i \in C} &\mapsto &s|_U=\left(a_i\right)_{M_i \in U}.
\label{eqrestriction}
\end{eqnarray}
This operation corresponds to dropping the elements in the string $s$ that do not correspond to measurements in $U$.

Given a  probability distribution in $C \in \mathcal{C}$ we can also naturally define  marginal distributions for each  $U\subset C$:
\begin{eqnarray}
 p^C_{U}\ :\ O^{U}&\rightarrow &[0,1] \nonumber\\
p^C_{U}(s)&=& \sum_{s' \in O^C; r^C_U(s') = s} p_C(s').
\end{eqnarray}
The superscript $C$ in $p^C_U$ is necessary  because the marginals may depend on the context $C$. 
 
 \begin{ex}
Consider 
 the situation
where $$X=\{M_1, M_2, M_3\} \  \ \mbox{and} \ \  \mathcal{C}=\left\{C_1=\{M_1, M_2\}, C_2=\{M_2, M_3\}\right\},$$ 
each measurement with two possible outcomes $\pm 1$.
The extreme distribution with $p_{C_1}(1,1)=1$ and $p_{C_2}(-1,-1)=1$ gives the marginals $p_{M_2}^{C_1}(1)=1$ and $p_{M_2}^{C_2}(1)=0$.
 \end{ex}

We will reject  distributions with this property: we require that if
 two contexts $C_1$ and $C_2$ overlap, the marginals defined by $p_{C_1}$ and $p_{C_2}$  in the intersection be the same.

\begin{defi}
\label{definondisturbance}
The \emph{non-disturbance} set $\mathcal{X}\left(\Gamma\right)$ is the set of probability distributions such that if the intersection of two contexts $C$ and $C'$
is non-empty, then $p^C_{C\cap C'}= p^{C'}_{C\cap C'}$. A probability distribution $p \in \mathcal{X}\left(\Gamma\right)$ is
called an \emph{empirical model}.
\end{defi}

The non-disturbance set is a polytope, since it is defined by a finite number of linear inequalities and equalities: 
the inequalities
imposed by the fact that its elements represent probabilities and the equalities imposed by definition \ref{definondisturbance}.

After imposing conditions on the restriction of the probability distributions, we
ask now if it is possible to extend the distributions $p_c$ to larger sets containing $C$. The naive ultimate 
goal would be to define a distribution on
the set $O^X$, which specifies assignment of outcome to all measurements, in a way that the restrictions
 yield the probabilities specified by the empirical model on all
 contexts in $\mathcal{C}$. A more subtle and adequate question is to decide when it is possible to 
achieve this goal. This question was first studied by Fine in reference  \cite{Fine82}, for the 
restricted case of Bell scenarios (see appendix \ref{chapternonlocality}) and 
generalized by Brandenburger
 and Abramsky  in reference  \cite{AB11} .

\begin{defi}
A \emph{global section} for $X$ is a probability distribution $p_X: O^X \rightarrow  [0,1]$. A \emph{global section for a distribution} 
$p \in \mathcal{X}\left(\Gamma\right)$
is a global section for $X$
such that the restriction of $p_X$ to each context $C \in \mathcal{C}$ is equal to $p_C$. 
The distributions with global section
are called \emph{noncontextual}.
\end{defi}

A global
section for a distribution $p$ corresponds exactly to the existence of a distribution defined
on all measurements, which marginalizes to yield the  probabilities determined by the empirical model. If a global section for $p$ exists, $p$ is called noncontextual 
because this global section  is
deeply connected to the existence of a noncontextual hidden-variable model reproducing the statistics of $p$.
In fact, if there is a global section for $p$ we can construct the hidden-variable model in the following way:
as hidden variable we use an element of the  classical probability space $O^X$, and the value assigned by $\lambda \in O^X$ to
a measurement $M$ is $\lambda(M)$. Then, the global section $p_X$ for $p$ provides a probability distribution in the set of
hidden variables with the property that if we average over all hidden variables according to this function we recover the 
quantum predictions. A proof of the converse can be found in section 8 of reference  \cite{AB11}, and this  gives:

\begin{teo}[Brandenburger and  Abramsky, 2011]
A probability distribution $p \in \mathcal{X}\left(\Gamma\right)$ has a global section if and only if there
is a  noncontextual hidden-variable model recovering its statistics.
\label{propglobalsection}
\end{teo}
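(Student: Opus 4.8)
The plan is to prove the equivalence by exhibiting an explicit correspondence between global sections of $p$ and noncontextual deterministic hidden-variable models reproducing $p$, taking as the canonical space of hidden variables the set $O^X$ of all global value assignments. In one direction the global section will \emph{be} the measure on this canonical space; in the other, a given model will be pushed forward onto $O^X$. Since the excerpt restricts attention to deterministic models and to finite $X$, $O$ and $\mathcal{C}$, all the bookkeeping reduces to finite sums, and the only genuinely conceptual point is that noncontextuality is precisely the hypothesis that lets a single hidden variable determine one global function $X \to O$.

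For the forward direction, I would assume that $p$ admits a global section $p_X \colon O^X \to [0,1]$. Take $\Lambda = O^X$ as the hidden-variable space, equip it with the probability distribution $p_X$, and let each $\lambda \in O^X$ — which is by definition a function $\lambda \colon X \to O$ — assign to a measurement $M$ the value $\lambda(M)$. This assignment depends only on $M$ and $\lambda$, never on the context containing $M$, so the model is noncontextual by construction. To see that it reproduces $p$, fix a context $C \in \mathcal{C}$ and a section $s \in O^C$; the probability that the model outputs $s$ on $C$ is $\sum_{\lambda \colon \lambda|_C = s} p_X(\lambda)$, which is exactly the marginal of $p_X$ onto $C$, and this equals $p_C(s)$ by the defining property of a global section. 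This is the argument already sketched in the text.

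For the converse, I would assume that a noncontextual hidden-variable model, with hidden-variable space $\Lambda$ and distribution $\mu$, recovers the statistics of $p$. For each $\lambda \in \Lambda$ the model assigns a definite outcome to every measurement appearing in a maximal context; because the model is noncontextual, the value given to $M$ is independent of the context in which $M$ is measured, so these assignments glue into a single well-defined function $v_\lambda \colon X \to O$, i.e.\ an element $v_\lambda \in O^X$. I would then define the candidate global section as the pushforward of $\mu$ along $\lambda \mapsto v_\lambda$, namely $p_X(g) = \mu(\{\lambda \in \Lambda : v_\lambda = g\})$ for $g \in O^X$; this is a probability distribution on $O^X$. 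Restricting to a context $C$ and summing over all $g$ with $g|_C = s$ gives $\mu(\{\lambda : v_\lambda|_C = s\})$, which is precisely the probability that the model produces $s$ on $C$, and hence equals $p_C(s)$ because the model recovers $p$. Thus $p_X$ restricts to $p_C$ on every context, so it is a global section for $p$.

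The step I expect to carry the real weight is the well-definedness of $v_\lambda$ in the converse: it is here, and only here, that the noncontextuality hypothesis is used, since in a contextual model a hidden variable would assign to $M$ a value depending on the accompanying measurements and would therefore fail to define a single element of $O^X$. A secondary point worth stating carefully is the role of $\mathcal{X}(\Gamma)$: either construction forces $p$ to have consistent marginals, so membership in the non-disturbance set is a necessary precondition for both sides of the equivalence, and it is exactly the hypothesis under which the pushforward and its restrictions are guaranteed to agree with $p$ on overlapping contexts. In a fully measure-theoretic setting one would also need the sets $\{\lambda : v_\lambda = g\}$ to be measurable, but under the finiteness assumptions of the excerpt this is automatic.
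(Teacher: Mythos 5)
Your proof is correct and its forward direction is exactly the paper's construction: the hidden variables are the elements of $O^X$, each $\lambda$ assigns the value $\lambda(M)$ to the measurement $M$, and the global section $p_X$ supplies the distribution on this hidden-variable space whose marginals recover $p_C$ on every context. The converse, which the paper does not prove but delegates to section 8 of reference \cite{AB11}, you fill in with precisely the argument used there: noncontextuality is what glues the per-context outcome assignments of each $\lambda$ into a single well-defined $v_\lambda \in O^X$, and the pushforward of $\mu$ along $\lambda \mapsto v_\lambda$ is then a global section because its marginal on each context $C$ equals the model's (hence $p$'s) statistics.
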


%

Some distributions do not admit global sections. They are called \emph{contextual}.

\begin{ex}[Contextual non-disturbing distribution]
  Consider the scenario $\left(X, \mathcal{C}, O\right)$, where
  $$X=\{M_1,M_2, M_3\}, \ \mathcal{C}=\left\{\{M_1,M_2\}, \{M_2,M_3\}, \{M_1,M_3\}\right\}\ \mbox{and} \ O=\{-1,1\}.$$
 The distribution

\begin{center}
\begin{tabular}{c|c|c|c|c}
   &$(1,1)$&$(1,-1)$&$(-1,1)$&$(-1,-1)$\\ \hline
$M_1M_2$&$\frac{1}{2}$&$0$&$0$&$\frac{1}{2}$\\ \hline
$M_2M_3$&$\frac{1}{2}$&$0$&$0$&$\frac{1}{2}$\\ \hline
$M_1M_3$&$0$&$\frac{1}{2}$&$\frac{1}{2}$&$0$
  \end{tabular}
\end{center}
where entry $ij$ of the table is the probability of obtaining outcome $j$ when measurement $i$ is performed, 
is a non-disturbing distribution, but it does not have a global section. This distribution is the one that
appears in the famous Specker's parable of the Over-protective Seer \cite{LSW11}.

\end{ex}

\section{\textsf{Sheaf-theory and contextuality}}
\label{sectionsheaftheory}

It is possible to provide a more formal mathematical formulation of contextuality using categories and sheaf theory,
as pioneered by Abramsky and co-workers \cite{AD05, AB11}. 
This approach provides a direct and unified characterization of both contextuality and non-locality, along with different
new tools, insights and results. We provide a brief introduction to the sheaf theoretical aspects of contextuality in this
section and we refer to \cite{AB11} for more detailed definitions and discussions. We use some terminology of category
theory, which are  explained in references \cite{MM92, Lane98}.

We start once again with a set $X$ of possible measurements.  The set of possible 
outcomes for each measurement is $O$, and when a set of compatible measurements $U \subset X $ is performed, a set of outcomes
in $O^U$ will be observed. Each individual run of the experiment is what we called an \emph{event}. 

Events in $O^U$ and sections over $U$ are in bijective correspondence. Let $s: U\rightarrow O$ be  a section. The event
associated to $s$ is the event in which the measurements in $U$ were performed and for each $M \in U$ outcome
$s(M)$ was obtained.

Define the function $\varepsilon$ that takes each subset $U \subset X$ to $O^U$, the set  of sections over $U$. We can also
define  a natural action by restriction according to equation \eqref{eqrestriction}: if $U \subset U'$
\begin{eqnarray}
  r^{U'}_U:\varepsilon(U')&\longrightarrow &\varepsilon(U) \nonumber\\
  s&\longmapsto &s|_U.
\end{eqnarray}
This restriction is such that 
\be r^{U}_U=id_U \label{eqrestrictionid}\ee
and if $U\subset U'\subset U''$, 
\be r^{U'}_{U'}\circ r^{U'}_U=r^{U''}_U.\label{eqrestrictioncomp}\ee

Let \textbf{Set} be the category  whose objects are sets and arrows are functions between sets. 
Let $\mathcal{P}(X)$ 
be the category whose objects are the subsets of $X$ and there is a unique arrow from $U$ to $U'$ if and only if
$U \subset U'$.
Let $\mathcal{P}(X)^{OP}$ 
be the category whose objects are the subsets of $X$ and there is a unique arrow from\footnote{The \emph{opposite category} 
or \emph{dual category} $\mathcal{C}^{op}$ of a given category $\mathcal{C}$ is formed by reversing the morphisms, that is,
interchanging the source and target of each morphism \cite{MM92, Lane98}.} $U'$ to $U$ if and only if
$U \subset U'$. Then, we can use the function $\varepsilon$ defined above as a functor 
$$\varepsilon: \mathcal{P}(X)^{OP}\longrightarrow \mbox{\textbf{Set}}$$
that takes each $U \subset X$ to $\varepsilon(U)=O^U$ and the unique arrow $U'\rightarrow U$ to the restriction $r^{U'}_U$,
when $U \subset U'$. Equations \eqref{eqrestrictionid} and \eqref{eqrestrictioncomp} prove that $\varepsilon$ is in fact a 
functor and hence $\varepsilon$ is a \emph{presheaf}.

\begin{defi}
 Given a category $C$, a functor $F:C^{OP}\rightarrow \mbox{\textbf{Set}}$ is called a \emph{presheaf}.
\end{defi}

The functor $\varepsilon$ has another distinguished property. Let $\{U_i\}_{i \in I}$ be a family of subsets of $U$ such that 
$\bigcup_i U_i=U$ and $\{s_i \in \varepsilon(U_i)\}_{i \in I}$ a family of sections that agree in all intersections, that is
$$s_i|_{U_i \bigcap U_j}=s_j|_{U_i \bigcap U_j}$$
for every $i, j \in I$. Then there is a unique section $s \in \varepsilon(U)$ such that $s|_{U_i}=s_i$. In fact, given $M \in U$ 
there is at least one $i \in I$ such that $M \in U_i$. Let $m=s_i(M)$. Since all sections $s_i$ agree on the overlaps, $m$
does not depend on the index $i$ chosen. We define then $s(M)=m$.

This distinguished property is called the \emph{sheaf condition} and $\varepsilon$ is called the \emph{sheaf of events}
\cite{MM92}.

\begin{defi}
Let $F: \mathcal{P}(X)^{OP}\rightarrow \mbox{\textbf{Set}}$ be a presheaf and $f^{U'}_U: F(U') \rightarrow F(U)$ be the
arrow in $\mbox{\textbf{Set}}$ associated to the unique arrow $U'\rightarrow U$ if $U \subset U'$. 
If $s\in F(U')$, let $s|_U = f^{U'}_U(s)$. We say that $F$ is a \emph{sheaf} if it  satisfies the following two conditions:
\begin{enumerate}
 \item Locality: If $(U_i \subset X)$ is a covering of  $U \in X$, and if $s,t \in F(U)$ are such that $s|U_i = t|U_i$ 
 for each set $U_i$, then $s = t$; 
\item Gluing: If $(U_i)$ is a covering of  $U$, and if for each $i$ there is a section $s_i$
 over $U_i$ such that for each pair $U_i,U_j$, the restrictions of $s_i$ and $s_j$
agree on the overlaps, that is 
$$s_i|_{U_i\cap U_j} = s_j|_{U_i \cap U_j},$$ then there is a section 
$s \in F(U)$ such that $s|_{U_i} = s_i$ for each $i$. 
 \end{enumerate}
\end{defi}

Sections correspond to definite outcomes, but most of the times it is not  possible to predict
with certainty the outcome of every measurement. When probabilistic theories enter the game we must use 
\emph{probability distributions} over
the set of sections $O^U$. To make definitions more general, we will consider distributions taking values over a 
 commutative semiring $R$ \cite{AB11}. 

\begin{defi}
 An $R$-\emph{distribution} on $U$ is a function $d:U\rightarrow R$ such that $\sum_{M \in U} d(U)=1$.
\end{defi}

When we are interested in probability distributions, $R$ is the semiring of positive real numbers.
Nonetheless, it is quite instructive to keep $R$ general, even when we are working with probabilities in  a
compatibility scenario.

We write $\mathcal{D}_R(U)$ for the set of $R$-distributions on $U$.

Let $f: U'\rightarrow U$ be a function among two sets $U'$ and $U$. We define
$$ \mathcal{D}_R(f): \mathcal{D}_R(U')\longrightarrow \mathcal{D}_R(U)$$
that takes each distribution $d$ to the distribution $\mathcal{D}_R(f)(d)=d': Y\longrightarrow R$ defined by
$$d'(y)=\sum_{x; f(x)=y} d(x).$$
This definition is functorial since $\mathcal{D}_R(id)=id$ and
$\mathcal{D}_R(g\circ f)=\mathcal{D}_R(g)\circ \mathcal{D}_R(f)$.

With the definitions above we can construct the functor
$$\mathcal{D}_R: \mbox{\textbf{Set}}\rightarrow \mbox{\textbf{Set}}$$
that takes each set $U$ to
the set of $R$-distributions on $U$ and each function $f: U'\rightarrow U$ to the function 
$\mathcal{D}_R(f): \mathcal{D}_R(U')\longrightarrow \mathcal{D}_R(U)$.

We can compose this functor with the sheaf $\varepsilon$ to define the presheaf
$$\mathcal{D}_R \circ \varepsilon :\mathcal{P}(X)^{OP}\longrightarrow \mbox{\textbf{Set}}$$
which assigns to each subset $U \subset X$ the set of $R$-distributions on the sections over $U$.
If $U \subset U'$, the unique arrow $U'\rightarrow U$ is taken by this presheaf to the  map
$\mathcal{D}_R\left(r_U^{U'}\right)$
acting on the set of
$R$-distribution on $O^{U'}$: if $d \in \mathcal{D}_R(\varepsilon(U'))$, then
$$\mathcal{D}_R\left(r_U^{U'}\right)(d)= d|_U$$
where $d|_U(s)=\sum_{s'; s'|_U=s}d(s)$.

The restriction $d|_U$ is the \emph{marginal} distribution of $d$, which assigns to each section $s$ in the smaller set $U$
the sum of the weights of all sections $s'$ in the larger set that restrict to $s$.

We now take into count the fact that not all measurements can be performed together, what can be done
by considering a compatibility cover   $\mathcal{C}$ of $X$ (see definition \ref{defimeasurementcover}). 
Each subset of $X$ that belongs to $\mathcal{C}$ is a maximal set of compatible measurements.

With the language of categories introduced above, an empirical model for the scenario $(X, \mathcal{C}, O)$ is
a family of $R$-distributions $e_C \in \mathcal{D}_R(\varepsilon(C))$, $C \in \mathcal{C}$.  Once more, we 
consider only non-disturbing models, that is, we demand that
$$e_C|_{C\cap C'}=e_{C'}|_{C\cap C'}$$
whenever $C\cap C' \neq \emptyset$.

We have already observed that the presheaf  $\varepsilon$ is indeed a sheaf. It is natural to ask if the same holds
for the presheaf $\mathcal{D}_R \circ \varepsilon$. The no-disturbance condition corresponds precisely to the first
condition required for a presheaf to be a sheaf, and hence
the sheaf condition
for $\mathcal{D}_R \circ \varepsilon$ is equivalent to the existence of a global distribution $d \in 
\mathcal{D}_R \circ \varepsilon(X)$ such that $d|_C=e_C$ to each context $C$.

Theorem  \ref{propglobalsection} implies that such a distribution $d$ exists if and only if there is a hidden
variable model reproducing the statistics of the empirical model. Hence, we have:

\begin{teo}
 The empirical model $(e_C)$ satisfies the sheaf condition if and only if there is a hidden-variable model
 reproducing its statistics.
\end{teo}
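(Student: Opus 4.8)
The plan is to obtain the statement as a short chain of equivalences, deferring the genuine content to Theorem \ref{propglobalsection}. All of the categorical machinery of the previous paragraphs has been arranged precisely so that the sheaf condition for the particular family $(e_C)$ coincides with the hypothesis of that theorem; there is therefore almost nothing new to prove, only identifications to make explicit.

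First I would unpack what it means for the empirical model $(e_C)$ to satisfy the sheaf condition. The family $\{e_C\}_{C \in \mathcal{C}}$ is a family of local sections of the presheaf $\mathcal{D}_R \circ \varepsilon$ indexed by the cover $\mathcal{C}$ of $X$. As observed just above the statement, the non-disturbance requirement (definition \ref{definondisturbance}), namely $e_C|_{C \cap C'} = e_{C'}|_{C \cap C'}$ whenever $C \cap C' \neq \emptyset$, is exactly the first of the two conditions in the definition of a sheaf, so that the $e_C$ form a \emph{compatible} family. Hence, for $(e_C)$, satisfying the sheaf condition reduces to the gluing axiom alone: the existence of a global distribution $d \in \mathcal{D}_R(\varepsilon(X)) = \mathcal{D}_R(O^X)$ whose marginal onto each context reproduces the model, $d|_C = e_C$ for every $C \in \mathcal{C}$.

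Next I would specialize to $R = \mathbb{R}_+$, so that $R$-distributions are honest probability distributions. Under this identification a global distribution $d$ with $d|_C = e_C$ for all $C$ is nothing other than a global section for the empirical model in the earlier sense: a probability distribution $p_X : O^X \to [0,1]$ whose restriction (marginal) to each context $C$ reproduces $e_C$. Thus ``$(e_C)$ satisfies the sheaf condition'' and ``$(e_C)$ admits a global section'' are literally the same assertion. I would then invoke Theorem \ref{propglobalsection}: a non-disturbing probability distribution in $\mathcal{X}(\Gamma)$ admits a global section if and only if there is a noncontextual hidden-variable model recovering its statistics. Composing the two identifications yields the claimed equivalence.

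The main point deserving care — the nearest thing to an obstacle — is the bookkeeping that the gluing axiom of $\mathcal{D}_R \circ \varepsilon$ matches the marginalization condition $d|_C = e_C$ exactly, i.e.\ that the functorial restriction $\mathcal{D}_R(r_C^X)$ is genuinely the marginalization used to define a global section, and that non-disturbance supplies precisely the compatibility hypothesis the gluing axiom requires and no more. One should also keep in mind that $\mathcal{D}_{\mathbb{R}_+} \circ \varepsilon$ is \emph{not} a sheaf in general (otherwise contextuality could never arise, since every empirical model would glue), so the assertion concerns whether this one compatible family can be glued, not the sheaf property of the functor as a whole.
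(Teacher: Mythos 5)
Your proposal is correct and follows essentially the same route as the paper: identify non-disturbance with the compatibility hypothesis so that the sheaf condition for $(e_C)$ amounts to the existence of a global distribution $d$ with $d|_C = e_C$, recognize this (for $R=\mathbb{R}_+$) as a global section in the earlier sense, and then defer all genuine content to Theorem \ref{propglobalsection}. Your explicit bookkeeping that the functorial restriction is marginalization, and the remark that $\mathcal{D}_R \circ \varepsilon$ is not a sheaf in general, only make explicit what the paper leaves implicit before citing \cite{AB11}.
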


A proof of this result can be found in reference \cite{AB11}.

Thus, we have a characterization of the phenomena of 
contextuality in terms of \emph{obstructions to the existence of global sections in a presheaf}, which opens the door
to the use of the methods of sheaf theory to the study of contextuality.

\section{\textsf{Probability Distributions and Physical Theories}}
\label{sectionphysical}

\subsection{\textsf{Classical Non-Contextual Realizations}}

Given a hypergraph $\Gamma$, a classical realization for $\Gamma$
is a  probability space $(\Omega,\Sigma, \mu )$, where $\Omega$ is a sample space, 
$\Sigma$ a $\sigma -$algebra and $\mu$ a probability measure in $\Sigma$, and for each $i \in V$ a partition
of $\Omega$ into $|O|$ disjoint subsets $A_j^i \in \Sigma, \ \ j \in O$, where $V$ is the set of vertices of\footnote{Equivalently 
we can say that a distribution is non-contextual if for each $i\in V$ there is a random variable $R_i : \Omega \rightarrow O$
and $p(a_1, \ldots, a_n|M_1, \ldots, M_n)= \mu\left(R_i=a_i\right).$} $\Gamma$. 
For each context $C= \{M_1, \ldots, M_n\}$, the probability of the outcome $a_1, \ldots, a_n$  is 
$$p(a_1, \ldots, a_n|M_1, \ldots, M_n)=\mu\left(\bigcap_kA_{a_k}^{k}\right).$$
 The probability distributions that can be written in this form are called \emph{classical distributions}.
The set of  classical distributions\footnote{This set depends also on the set of possible outcomes $O$, but
we will not write this explicit to simplify the notation.}
$\mathcal{NC}\left(\Gamma\right)$ is a polytope with $\left|O^X\right|$ vertices, all of them  noncontextual.

As an immediate consequence of theorem \ref{propglobalsection}, we have the following result:

\begin{cor}
\label{propclassicalnoncontextual}
 A distribution has a global section if and only if it is classical.\footnote{This result shows that is possible to use the 
 notion of global section to define non-contextual distributions: we say that a distribution is non-contextual if
 it has a global section.}
\end{cor}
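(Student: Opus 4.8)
The plan is to derive this corollary directly from Theorem \ref{propglobalsection}, which already establishes that a distribution $p \in \mathcal{X}(\Gamma)$ has a global section if and only if it is reproduced by a noncontextual hidden-variable model. It therefore suffices to prove that the class of classical distributions coincides exactly with the class of distributions admitting a noncontextual hidden-variable model; combining this identification with the theorem yields the stated equivalence.

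First I would establish the direction that every classical distribution arises from a noncontextual hidden-variable model. Given a classical realization $(\Omega, \Sigma, \mu)$ together with the partitions $\{A_j^i\}_{j \in O}$, I take the hidden-variable space to be $\Omega$ itself, with the measure $\mu$ playing the role of the distribution over hidden variables. For each measurement $i$ the partition $\{A_j^i\}$ defines a random variable $R_i : \Omega \to O$ by setting $R_i(\omega) = j$ precisely when $\omega \in A_j^i$; this is well defined and single-valued because the sets $A_j^i$ are disjoint and cover $\Omega$. The assignment $\omega \mapsto R_i(\omega)$ depends only on $\omega$ and $i$, never on the context in which $M_i$ is measured, so the resulting value assignment is noncontextual, and averaging $\bigcap_k \{R_k = a_k\}$ against $\mu$ reproduces exactly the defining formula $p(a_1,\ldots,a_n \mid M_1, \ldots, M_n) = \mu(\bigcap_k A_{a_k}^k)$.

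Conversely, I would start from a noncontextual hidden-variable model with variable space $\Lambda$ and its distribution, and construct a classical realization. By noncontextuality each $\lambda \in \Lambda$ assigns to every measurement $M_i$ a single outcome $V_\lambda(M_i) \in O$ independent of the context, so $\lambda$ determines a global value assignment $V_\lambda \in O^X$. Taking $\Omega = \Lambda$, $\mu$ the given distribution, and $A_j^i = \{\lambda : V_\lambda(M_i) = j\}$, one checks that for each $i$ the family $\{A_j^i\}_{j \in O}$ partitions $\Omega$ and that the classical formula for $p$ is satisfied.

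The main step — really the only nontrivial bookkeeping — is the identification in the preceding paragraph that a noncontextual value assignment is the same datum as a function $X \to O$, so that the partitions $\{A_j^i\}$ and the deterministic assignments $V_\lambda$ are interchangeable descriptions of the same object. Once this translation is in place, both inclusions are immediate and the corollary follows by invoking Theorem \ref{propglobalsection}. I do not expect any genuine obstacle here, since the footnote accompanying the definition of classical realizations already records the equivalent formulation in terms of random variables $R_i$, which is precisely the bridge the proof exploits.
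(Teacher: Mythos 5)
Your proof is correct and takes essentially the same route as the paper's: the paper likewise presents this corollary as an immediate consequence of Theorem \ref{propglobalsection}, identifying classical realizations with noncontextual hidden-variable models via exactly the translation you describe (the partitions $A^i_j$ correspond to the deterministic assignments $R_i$ with $\mu$ as the distribution over hidden variables, and conversely one takes $\Omega = O^X$ with the global section as the measure). The only difference is cosmetic: the paper also remarks that the forward direction can be obtained by constructing the global section directly, since closure of the $\sigma$-algebra under finite intersections makes $s \mapsto \mu\bigl(\bigcap_i A^i_{s(M_i)}\bigr)$ well defined.
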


In fact, once a classical realization is given, the construction of the global section is guaranteed by 
the fact that the intersection of a finite number of sets in a $\sigma$-algebra also belongs to the $\sigma$-algebra. Conversely, given the 
global section, we can construct the classical realization using the same argument present in the paragraph preceding 
theorem \ref{propglobalsection}. 

\subsection{\textsf{Quantum Realizations}}

A quantum realization is given by a Hilbert space $\mathcal{H}$, for each $i \in V$ a
Hermitian matrix $O_i$ in this Hilbert 
space,
  and a density matrix $\rho$ acting on $\mathcal{H}$. For a given context $C \in \mathcal{C}$,
the compatibility condition demands the existence of a basis for $\mathcal{H}$ in
which all  $O_i$ belonging to $C$ are diagonal.
For each context $C= \{M_1, \ldots, M_n\}$, the probability of the outcome $a_1, \ldots, a_n$  is 
$$p(a_1, \ldots, a_n|M_1, \ldots, M_n)=\tr\left(\prod_kP_{a_k}\rho\right)$$
where $P_{a_k}$ is the projector over the eigenspace corresponding to outcome $a_k$ of observable $O_k$.
The probability distributions that can be written in this form  are called \emph{quantum  distributions}. Notice
that the Hilbert space is not fixed and the set of quantum distributions contains realizations in all dimensions.
This set, which we denote by $\mathcal{Q}(\Gamma)$, is a convex set but is not a polytope in general.

\begin{teo}
 The set of quantum distributions $\mathcal{Q}(\Gamma)$ is a convex set.
\end{teo}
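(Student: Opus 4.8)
The plan is to show directly that $\mathcal{Q}(\Gamma)$ is closed under convex combinations by exhibiting, for any two quantum realizations and any mixing weight, a third realization that reproduces the mixture. The crucial ingredient is that the Hilbert space in the definition of a quantum realization is not fixed in advance, so one is free to enlarge it by taking a direct sum.

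Concretely, suppose $p^{(1)}, p^{(2)} \in \mathcal{Q}(\Gamma)$ and fix $\lambda \in [0,1]$. Let these be realized by $\left(\mathcal{H}_1, \{O_i^{(1)}\}, \rho_1\right)$ and $\left(\mathcal{H}_2, \{O_i^{(2)}\}, \rho_2\right)$, and for each vertex $i$ and outcome $a$ let $P_a^{i,(1)}$ and $P_a^{i,(2)}$ denote the orthogonal projectors onto the corresponding eigenspaces. I would set $\mathcal{H} = \mathcal{H}_1 \oplus \mathcal{H}_2$, define the block-diagonal projectors $P_a^i = P_a^{i,(1)} \oplus P_a^{i,(2)}$, and take the state $\rho = \lambda \rho_1 \oplus (1-\lambda)\rho_2$.

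First I would verify this is a legitimate quantum realization. The operator $\rho$ is positive and $\tr(\rho) = \lambda\tr(\rho_1) + (1-\lambda)\tr(\rho_2) = 1$, so it is a density matrix on $\mathcal{H}$. For each $i$ the family $\{P_a^i\}_a$ consists of mutually orthogonal projectors summing to the identity, since each of these identities holds block by block. Compatibility must also be inherited: if within a context $C$ the observables $O_i^{(1)}$ commute and the $O_i^{(2)}$ commute, then the direct sums $O_i^{(1)} \oplus O_i^{(2)}$ commute as well, because block-diagonal operators multiply blockwise; hence the observables in each context remain simultaneously diagonalizable.

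The computation then follows from the same block structure. Products and traces of block-diagonal operators split across the blocks, so for any context $C = \{M_1, \ldots, M_n\}$ and any outcome string $a_1, \ldots, a_n$ one has $\prod_k P_{a_k} = \left(\prod_k P_{a_k}^{(1)}\right) \oplus \left(\prod_k P_{a_k}^{(2)}\right)$, and therefore
$$\tr\left(\prod_k P_{a_k}\,\rho\right) = \lambda\,\tr\left(\prod_k P_{a_k}^{(1)}\rho_1\right) + (1-\lambda)\,\tr\left(\prod_k P_{a_k}^{(2)}\rho_2\right),$$
which is exactly $\lambda p^{(1)} + (1-\lambda)p^{(2)}$ evaluated on that string. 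Hence the mixture belongs to $\mathcal{Q}(\Gamma)$, proving convexity. The only conceptual point worth flagging is that convexity genuinely uses the freedom to leave the dimension unspecified — the construction inflates the Hilbert space to $\mathcal{H}_1 \oplus \mathcal{H}_2$, and there is no reason a fixed-dimension set of quantum distributions need be convex. Everything else is routine block-diagonal bookkeeping, so I anticipate no real obstacle beyond keeping track of the projector families and confirming that commutation, and hence compatibility, is inherited by direct sums.
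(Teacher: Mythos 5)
Your proof is correct and is essentially the paper's own argument: the paper also inflates the Hilbert space and places the two realizations in orthogonal blocks, writing $\rho=\alpha\,\rho^1\otimes\ket{1}\bra{1}+\beta\,\rho^2\otimes\ket{2}\bra{2}$ with projectors $P^1_{a_k}\otimes\ket{1}\bra{1}+P^2_{a_k}\otimes\ket{2}\bra{2}$ on $\mathcal{H}\otimes\mathbb{C}^2$, which under the identification $\mathcal{H}\otimes\mathbb{C}^2\cong\mathcal{H}\oplus\mathcal{H}$ is exactly your block-diagonal construction. The only cosmetic difference is that the paper first pads both realizations to a common Hilbert space before tensoring with the ancilla, whereas your direct sum $\mathcal{H}_1\oplus\mathcal{H}_2$ makes that preliminary step unnecessary; you also correctly flag, as the paper does, that the freedom to enlarge the dimension is what makes convexity work.
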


\begin{dem}
 Let $p^1$ and $p^2$ be two quantum distributions. We want to prove that any convex combination
 $$\alpha p^1 + \beta p^2, \ \ 0\leq \alpha,  \beta\leq 1,\  \  \alpha +
\beta =1$$
is a quantum distribution.
 
 Let $\rho^1$ and observables $\left\{O^1_{i}\right\}$ be a quantum 
 realization for $p^1$ and $\rho^2$ and observables $\left\{O^2_{j}\right\}$ be a quantum 
 realization for $p^2$, that is 
$$p^1\left(a_1, \ldots, a_n|M_1, \ldots, M_n\right)=\tr\left(\prod_kP^1_{a_k}\rho^1\right)$$
and similar for $p_2$
$$p^2\left(a_1, \ldots, a_n|M_1, \ldots, M_n\right)=\tr\left(\prod_kP^2_{a_k}\rho^2\right)$$
where $P^1_{a_k}$ is the projector over the eigenspace corresponding to outcome $a_k$ of observable $O^1_k$ and
analogously for $P^2_{a_k}$. 

It is important to notice here that the density matrices and projectors in the quantum realizations for $p_1$ and $p_2$ 
given above do not have necessarily  the same dimension. Nonetheless, it is always possible to extend one of them to
a Hilbert
space of higher dimension, so without loss of generality we will consider that all density matrices and projectors
act in the same Hilbert space $\mathcal{H}$.

Let $\left\{\ket{1}, \ket{2}\right\}$ be an orthonormal basis for $\mathbb{C}^2$ and define the 
density matrix 
$$\rho=\alpha \rho^1\otimes \ket{1}\bra{1} + \beta \rho^2 \otimes \ket{2}\bra{2}$$ and the projectors
$$P_{a_k}=  P^1_{a_k} \otimes \ket{1}\bra{1} + P^2_{a_k}\ket{2}\bra{2},$$ acting on $\mathcal{H} \otimes \mathbb{C}^2$.
Then we have that 
$$p\left(a_1, \ldots, a_n|M_1, \ldots, M_n\right):= \tr \left(\prod_kP_{a_k}\rho\right)=
\alpha\left(\prod_kP^1_{a_k}\rho^1\right) + \beta\left(\prod_kP^2_{a_k}\rho^2\right)$$
which implies that 
$$p=\alpha p^1 + \beta p^2.$$
Hence, any convex combination of quantum distributions is also a quantum distribution.
\end{dem}

It is important to mention that the use of a Hilbert space of higher dimension than $\mathcal{H}$ can not be avoided. 
In fact, if we bound the dimension
of the quantum realizations, we get a set that is not convex, as shown by  P\'al and  V\'ertesi in reference 
\cite{PV09}.

The set of classical distributions is contained in the set of quantum distributions. To prove that, we just have to notice that 
 the set of distributions obtained from a  probability space with $n$ elements is equivalent to the set of distributions
 obtained with diagonal projectors and density matrices in a Hilbert  space of dimension $n$ with a fixed basis.
 The set of elements in the sample space $\Omega$ is the set of unidimensional projectors and the measure is given by
 $\mu(P)=Tr\left(\rho P\right).$

\section{\textsf{Non-Contextuality Inequalities}}
\label{sectionncinequalities}

We would like to find 
simple criteria to decide whether a probability distribution $p$ is noncontextual or not.
According to theorem \ref{propclassicalnoncontextual}, this is equivalent to test
if $p \in \mathcal{NC}\left(\Gamma\right)$. We will use the fact  that $\mathcal{NC}\left(\Gamma\right)$
is a polytope to derive a finite number of inequalities 
that provide
 necessary and sufficient conditions for membership in this set. 

A convex polytope may be defined as an intersection of a finite number of half-spaces. 
Such definition is called a \emph{half-space representa\-tion} 
(H-representation or H-description). There exist infinitely many H-descriptions of a convex polytope. However, 
for a full-dimensional 
convex polytope, the minimal H-description is in fact unique and is given by the set of facet-defining halfspaces.

Since $\mathcal{NC}(\Gamma)$ is a polytope, there is a minimal set  of inequalities giving a H-representation. Some of this
inequalities  are the trivial inequalities related to the definition of  probability distributions
(positivity and normalization),
 but others are not and in general are not satisfied by all
quantum distributions. These inequalities are called \emph{noncontextuality inequalities}.

\begin{defi}
 A \emph{noncontextuality inequality} is a linear inequality 
 \be S:=\sum \gamma_{a_1, \ldots, a_n| M_1, \ldots, M_n} p(a_1, \ldots, a_n| M_1, \ldots, M_n) \leq b,
 \label{eqncinequalities}\ee
 where all $\gamma_{a_1, \ldots, a_n| M_1, \ldots, M_n}$ and $b$ are real numbers, which is 
 satisfied by all elements of  the classical polytope 
 $\mathcal{NC}\left(\Gamma\right)$ and violated by some contextual distribution.
 A \emph{tight noncontextuality inequality} is a linear inequality defining a non-trivial facet of the classical polytope 
 $\mathcal{NC}\left(\Gamma\right)$.
\end{defi}

Any H-description provides a necessary and sufficient condition for membership in $\mathcal{NC}\left(\Gamma\right)$: 
a distribution $p$ is classical if and only if
it satisfies all noncontextuality inequalities for this scenario. Although verifying if a distributions satisfies or not the inequalities
is very simple, finding  all inequalities that provide an H-description for a general scenario
is a very difficult computational task, related to the max-cut problem, which belongs to the  
NP-hard class of computational complexity \cite{BM86, DL97, AIT06}.

\section{\textsf{The KCBS inequality}}
\label{sectionKCBS}

The KCBS scenario was introduced by Klyachko, Can, Binicio\u{g}lu, and Shumovsky in reference \cite{KCBS08}.
It consists of  five measurements $X=\{M_0,M_1, M_2, M_3, M_4\}$, 
with compatibility structure given by
$$\mathcal{C}=\{\{M_0,M_1\}, \{M_1,M_2\}, \{M_2,M_3\}, \{M_3,M_4\}, \{M_0,M_4\}\}.$$
The set of possible outcomes is $O=\{\pm 1\}$. The hypergraph $\Gamma$ in this case is a familiar simple graph: the pentagon.

\begin{figure}[h]
\centering
 \includegraphics[scale=1.5]{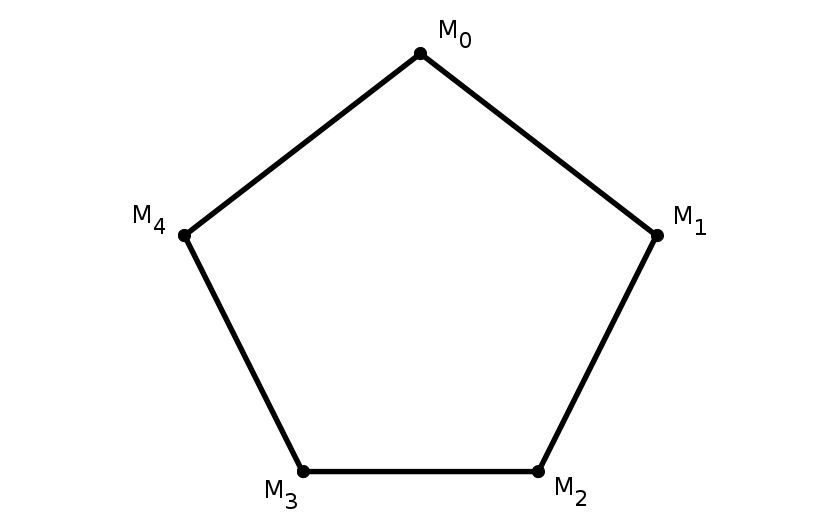}
 \caption{The compatibility hypergraph of the KCBS scenario. \label{figpentagon}}
\end{figure}

This scenario was completely characterized in references \cite{Araujo12, AQBTC13}. There are $2^4$ tight
 noncontextuality inequality
and all of them can be written in the form
\begin{equation}
 \sum_{i=0}^{4}\gamma_i\left\langle M_i M_{i+1}\right\rangle \leq 3,
\label{KCBS1}
\end{equation}
where 
$\left\langle M_i M_i\right\rangle=p\left(M_i=M_{i+1}\right) -  p\left(M_i\neq M_{i+1}\right)$,  $\gamma_i \in \{\pm 1\}$ 
and the number of $\gamma_i = - 1$ is odd.

The inequality obtained when all $\gamma_i = - 1$ is the famous KCBS inequality, presented in the seminal paper \cite{KCBS08}. 
It is equivalent to the inequality
\begin{equation}
\sum_{i=0}^{4}\left\langle P_i\right\rangle  \leq 2,
\label{KCBS2}
\end{equation}
where $P_i = 1-2M_i$ and $\left\langle P_i \right\rangle=p\left(P_i=1\right) -  p\left(P_i= - 1\right)$.

These inequalities are violated by some quantum distributions in dimension three or higher. 
The maximal violation for inequality \eqref{KCBS1} for
quantum distributions is $5-4\sqrt{5}$, which corresponds 
to a maximal quantum value of $\sqrt{5}$ for inequality \eqref{KCBS2}.
These violations can be   obtained with the state $\ket{\psi}=(1,0,0)$ and with
projectors
$$P_i= \left(\cos(\theta), \sin(\theta)\cos\left( \frac{ 4i\pi}{5}\right), \sin(\theta)\sin\left( \frac{ 4i\pi}{5}\right)\right),$$
where $\cos^2(\theta) = \frac{\cos\left(\frac{\pi}{5}\right)}{\left(1 + \cos\left(\frac{\pi}{5}\right)\right)}.$

An interesting property of these projectors is that they are orthogonal if $\left(i, j\right) \in E(\Gamma)$. 
This implies that the outcome
$11$ can never occur in a measurement of $M_i$ and $M_j$.

Some non-disturbing distributions can achieve the algebraic maximum violation of $5$ for inequality \eqref{KCBS1}.

\begin{ex}
The no-disturbing distribution 
\begin{center}
\begin{tabular}{c|c|c|c|c}
   &$(1,1)$&$(1,-1)$&$(-1,1)$&$(-1,-1)$\\ \hline
$M_0M_1$&$\frac{1}{2}$&$0$&$0$&$\frac{1}{2}$\\ \hline  
$M_1M_2$&$\frac{1}{2}$&$0$&$0$&$\frac{1}{2}$\\ \hline
$M_2M_3$&$\frac{1}{2}$&$0$&$0$&$\frac{1}{2}$\\ \hline
$M_3M_4$&$\frac{1}{2}$&$0$&$0$&$\frac{1}{2}$\\ \hline
$M_4M_0$&$0$&$\frac{1}{2}$&$\frac{1}{2}$&$0$
  \end{tabular}
\end{center}
gives
$$\sum_{i=0}^{4}\gamma_i\left\langle M_i M_{i+1}\right\rangle =5,$$
where $\gamma_i =1$ for $i=0,1,2,3$ and $\gamma_4=-1$,
reaching the algebraic maximum for  the KCBS inequality \eqref{KCBS1}.
 
\end{ex}

This shows that, in general, the violation obtained with no-disturbing distributions is higher
than the quantum maximum, and hence, that the non-disturbance
polytope contains properly the quantum set.

\section{\textsf{The $n$-cycle inequalities}}
\label{sectionncycle}

A simple generalization of the KCBS inequality is obtained when we use as the compatibility hypergraph an $n$-cycle: a graph with
$n$ vertices $0, 1, \ldots, n-1$ and such that two vertices $i,j$ are connected iff $\left|i-j\right|=1 \ \mbox{mod} \ n$. 
The corresponding
scenario has $n$ measurements $X=\{M_0,M_1, \ldots, M_{n-1}\}$, 
with compatibility structure given by
$$\mathcal{C}=\{\{M_0,M_1\}, \{M_1,M_2\}, \ldots, \{M_{n-2},M_{n-1}\}, \{M_{n-1},M_{0}\}\}.$$
The set of possible outcomes is also $O=\{\pm 1\}$. The complete set of noncontextuality inequalities for this scenario
was found in reference \cite{AQBTC13}.

\begin{teo}
 There are $2^{n-1}$ tight noncontextuality inequalities for the $n$-cycle scenario, and they are of the form
 \be \sum_{i=0}^{n-1}\gamma_i\left\langle X_i X_{i+1}\right\rangle \leq n-2,\ee
 where the sum is taken modulo $n$, $\gamma_i= \pm 1$, and the number of indices $i$ such that $\gamma_i=-1$ is odd.
\end{teo}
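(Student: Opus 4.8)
The plan is to work in the space of correlators $c_i := \langle X_i X_{i+1}\rangle$, $i = 0, \ldots, n-1$ (indices mod $n$), since the inequalities in the statement involve only these $n$ quantities. By Corollary~\ref{propclassicalnoncontextual} the relevant object is the projection of $\mathcal{NC}(\Gamma)$ onto the correlator coordinates; its vertices are the images of the deterministic assignments $v : \{M_0, \ldots, M_{n-1}\} \to \{-1,+1\}$, giving the points $c_i = v_i v_{i+1}$ with $v_i := v(M_i)$. First I would observe that $\prod_{i=0}^{n-1} c_i = \prod_i v_i^2 = 1$, so every vertex is a $\pm 1$ vector with an \emph{even} number of entries equal to $-1$; conversely every such vector is realized (fix $v_0 = 1$ and read off the remaining signs around the cycle). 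Hence the correlator polytope $P$ is the convex hull of the $2^{n-1}$ even-sign-pattern vertices of $\{-1,+1\}^n$ (the demihypercube), and one checks it is full-dimensional in $\mathbb{R}^n$ for $n \geq 3$ by exhibiting $n+1$ affinely independent vertices.

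Next I would establish validity and tightness. Fix $\gamma \in \{-1,+1\}^n$ with an \emph{odd} number of entries equal to $-1$; there are exactly $2^{n-1}$ such $\gamma$. For any vertex $c$ each product $\gamma_i c_i \in \{-1,+1\}$, and $\sum_i \gamma_i c_i = n - 2k$, where $k$ is the number of coordinates in which $c$ and $\gamma$ disagree. Since $c$ has even and $\gamma$ odd weight (counting $-1$'s), the symmetric difference of their supports has odd cardinality, so $k$ is odd and in particular $k \geq 1$; thus $\sum_i \gamma_i c_i \leq n-2$, with equality precisely when $c$ disagrees with $\gamma$ in exactly one coordinate. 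This shows $\sum_i \gamma_i \langle X_i X_{i+1}\rangle \leq n-2$ is valid and attained. To see it is facet-defining, the $n$ equality vertices are $c^{(j)} = \gamma - 2\gamma_j e_j$, $j = 0, \ldots, n-1$ (flip the $j$-th sign of $\gamma$), with $e_j$ the standard basis vectors; a vanishing affine combination $\sum_j \lambda_j c^{(j)} = 0$ with $\sum_j \lambda_j = 0$ reduces to $-2\sum_j \lambda_j \gamma_j e_j = 0$, forcing all $\lambda_j = 0$, so these $n$ points are affinely independent and span a facet of the full-dimensional $P$.

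The real work is completeness: that these $2^{n-1}$ inequalities, together with the trivial bounds $-1 \leq c_i \leq 1$, exhaust the facets of $P$. Here I would pass to cut-polytope language via the affine bijection $y_i = (1 - c_i)/2 \in \{0,1\}$; the condition $\prod_i c_i = 1$ becomes $\sum_i y_i$ even, so $P$ is linearly equivalent to the cut polytope $\mathrm{CUT}(C_n)$ of the $n$-cycle. Since $C_n$ contains no $K_5$ minor, the theorem of Barahona and Mahjoub \cite{BM86} gives the complete facet description of $\mathrm{CUT}(C_n)$ by the box constraints $0 \leq y_i \leq 1$ and the cycle inequalities $\sum_{i \in F}(1 - y_i) + \sum_{i \notin F} y_i \geq 1$ for odd subsets $F$. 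Substituting $y_i = (1 - c_i)/2$ turns the box constraints into the trivial bounds $-1 \le c_i \le 1$ and turns each cycle inequality with $|F|$ odd into exactly $\sum_i \gamma_i c_i \leq n-2$, where $\gamma_i = -1$ iff $i \in F$; these are the $2^{n-1}$ inequalities above. Discarding the trivial bounds leaves precisely the claimed tight noncontextuality inequalities, and the count $2^{n-1}$ follows.

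The main obstacle is this completeness step. The validity, tightness and facet properties are elementary parity and affine-independence arguments, but ruling out \emph{all} other nontrivial facets genuinely relies on the no-$K_5$-minor characterization of cut polytopes (equivalently, the explicit polytope enumeration carried out in \cite{AQBTC13}); without such a global structural input one only obtains that the listed inequalities \emph{are} facets, not that they are the only ones.
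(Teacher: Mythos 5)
Your treatment of the correlator polytope is correct as far as it goes: the projection of $\mathcal{NC}(\Gamma)$ onto the coordinates $c_i=\langle X_iX_{i+1}\rangle$ is indeed the parity polytope, i.e.\ the cut polytope $\mathrm{CUT}(C_n)$; your parity argument for validity, the affine-independence computation for the $n$ equality vertices, and the translation of the Barahona--Mahjoub description \cite{BM86} into the inequalities of the statement are all sound. The gap is that the theorem, under the paper's own definition of a tight noncontextuality inequality (a \emph{non-trivial facet of the classical polytope} $\mathcal{NC}(\Gamma)$), is a statement about the full classical polytope, not about its projection onto correlator space. That polytope is $2n$-dimensional: besides the $n$ two-point correlators, the $n$ marginals $\langle X_i\rangle$ are independent affine coordinates, since the functions $1$, $v_i$, $v_iv_{i+1}$ are linearly independent on the hypercube $\{-1,+1\}^n$. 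Two steps are therefore missing. The first is facet-ness: exhibiting $n$ affinely independent tight points proves the inequality is a facet of the $n$-dimensional projection, but only that it defines a \emph{face} of the $2n$-dimensional polytope $\mathcal{NC}(\Gamma)$; a facet requires $2n$ affinely independent tight vertices. This part is repairable: the outcome flip $v\mapsto -v$ preserves all correlators, doubling your list of tight vertices to $2n$, and one can check the resulting points are affinely independent.

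The second missing step is the serious one: completeness. The Barahona--Mahjoub theorem excludes further facets \emph{in correlator space}, but it says nothing about facets of $\mathcal{NC}(\Gamma)$ whose defining functional involves the marginals $\langle X_i\rangle$ -- the trivial positivity facets are of exactly this kind, and your argument cannot rule out that non-trivial ones exist as well. What is actually needed is the Fine-type statement that constitutes the substantive content of the proof in \cite{AQBTC13}: every non-disturbing behavior $\left(\langle X_i\rangle,\langle X_iX_{i+1}\rangle\right)$ whose correlators satisfy the $2^{n-1}$ inequalities admits a global section, i.e.\ a noncontextual model reproducing the marginals \emph{as well as} the correlators. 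Knowing that the correlator vector $c$ lies in the projection of $\mathcal{NC}(\Gamma)$ only yields a noncontextual model with the right correlators; such a model need not have the prescribed marginals, so membership of the full behavior in $\mathcal{NC}(\Gamma)$ does not follow. Without this lifting argument you have shown that the listed inequalities are the complete facet description of the \emph{shadow} of the noncontextual polytope, which neither establishes that they are facets of $\mathcal{NC}(\Gamma)$ nor that they exhaust its non-trivial facets -- and hence does not establish the count $2^{n-1}$ asserted by the theorem.
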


Some quantum distributions violate this bound if $n \geq 4$. The maximum quantum violation is given by
\be  \left\{\begin{array}{cc}
                \frac{3n\cos\left(\frac{\pi}{n}\right)-n}{1 + \cos\left(\frac{\pi}{n}\right)} & \ \mbox{if} \ n \ \mbox{is odd,}\\
               n\cos\left(\frac{\pi}{n}\right) & \ \mbox{if} \ n \ \mbox{is even}.
               \end{array}\right.
\ee

For $n$ odd, the quantum bound can be achieved already in a three-dimensional system,  with the state
$\left(\begin{array}{ccc}
 1&0&0
\end{array}\right)$ and measurements $M_i= 2\ket{v_i}\bra{v_i}-I$, where
 $$\ket{v_i}=\left(\begin{array}{ccc}
 \cos(\theta)& \sin(\theta)\cos\left(\frac{i\pi(n-1)}{n}\right)& \sin(\theta)\sin\left(\frac{i\pi(n-1)}{n}\right)
\end{array}\right)$$ and $\cos^2(\theta)=\frac{\cos\left(\frac{\pi}{n}\right)}{\left(1+ \cos\left(\frac{\pi}{n}\right)\right)}.$ 

For $n$ even, the quantum bound can be achieved in a four-dimensional system, with the state
$\left(\begin{array}{cccc}
 0&\frac{1}{\sqrt{2}}&-\frac{1}{\sqrt{2}}&0
\end{array}\right)$ and measurements $M_i= X_i \otimes I$ for odd $i$ and $M_i= I \otimes X_i$ for even $i$, where
 $X_i= \cos\left(\frac{i\pi}{n}\right)\sigma_x + \sin\left(\frac{i\pi}{n}\right)\sigma_z$.

These bounds were calculated with the help of the tools we will introduce in the next section.

The interest in this scenario comes from the fact that all distributions in scenarios where the compatibility 
graph has no closed loop are noncontextual. 

\begin{teo}
There  is a quantum noncontextual distribution  if and only if $\Gamma$ has an $n$-cycle as induced subgraph with 
$n > 3$.
\end{teo}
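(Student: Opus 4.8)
The plan is to prove the equivalence by reducing it to a marginal-extension (``gluing'') problem and then reading off the answer from the chordality of $\Gamma$. By Corollary \ref{propclassicalnoncontextual} a distribution is noncontextual exactly when it admits a global section, so the question of whether every quantum distribution on $\Gamma$ is forced to be noncontextual becomes: can the quantum marginals $p_C$ on the contexts $C \in \mathcal{C}$ always be assembled into a single distribution on $O^X$? The first step is to recognise that the hypothesis ``$\Gamma$ has no $n$-cycle with $n>3$ as an induced subgraph'' is precisely the statement that $\Gamma$ is \emph{chordal} (triangles are permitted; only long induced cycles are forbidden, which matches the preceding remark about closed loops). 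Thus the theorem is a chordality characterisation, and I would prove its two directions separately, each phrased in terms of noncontextuality.

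For the chordal case I would show that every quantum distribution is noncontextual. A chordal graph admits a perfect elimination ordering, equivalently a clique tree on its maximal cliques satisfying the running-intersection property. Using the non-disturbance condition (Definition \ref{definondisturbance}), I would glue the local quantum distributions inductively along this tree: attaching one clique at a time, the marginals of the new context and of the already-assembled part agree on their intersection (the separator), so a consistent joint distribution on the union exists; running intersection guarantees that separators overlap coherently, so no obstruction can accumulate and the induction terminates in a global section $p_X$. By Corollary \ref{propclassicalnoncontextual} the distribution is then noncontextual. This is exactly Vorob'ev's extension theorem specialised to graphs, whose positive answer holds precisely for chordal $\Gamma$.

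For the converse I would exhibit, whenever $\Gamma$ contains an induced $n$-cycle $C$ with $n>3$, a quantum distribution that is \emph{not} noncontextual, so that noncontextuality is no longer automatic. On the induced subgraph $C$ the compatibility structure is exactly that of the $n$-cycle scenario of Section \ref{sectionncycle}; the explicit states and projective measurements quoted there yield a quantum distribution violating the $n$-cycle noncontextuality inequality for every $n\geq 4$. Because the cycle is \emph{induced}, there are no chords adding joint-measurability constraints, so this realisation is legitimate inside $\Gamma$. I would extend it to all of $X$ by assigning the remaining observables compatibly on an enlarged Hilbert space, and then note that the marginal on $C$ of any hypothetical global section would itself be a global section for the cycle, contradicting the inequality violation via Corollary \ref{propclassicalnoncontextual}. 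Hence the extended quantum distribution has no global section and fails to be noncontextual.

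The main obstacle I anticipate is the gluing step: one must verify that the running-intersection property truly suffices to merge pairwise-consistent marginals into one joint distribution without overlaps propagating a hidden inconsistency, which is where the Vorob'ev-type bookkeeping lives. The accompanying subtlety is that only an \emph{induced} long cycle destroys chordality --- a chord triangulates the configuration and restores extendability --- so pinning down the equivalence ``no induced $n$-cycle with $n>3$ $\Leftrightarrow$ chordal $\Leftrightarrow$ global section always exists'' cleanly, and thereby controlling exactly when noncontextuality can fail, is the crux of the argument.
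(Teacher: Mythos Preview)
The paper does not actually prove this theorem; immediately after stating it, the text says ``For a proof of this result, see reference \cite{BM10}.'' So there is no in-paper argument to compare against. (Note also that the statement as printed contains an evident slip: from the surrounding discussion and the equivalent reformulation in terms of violating a noncontextuality inequality, it should read ``there is a quantum \emph{contextual} distribution,'' which is how you have, correctly, interpreted it.)

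Your proposal is essentially the standard line of argument and is sound. Identifying ``no induced $n$-cycle with $n>3$'' with chordality is correct, and Vorob'ev's theorem is exactly the tool that guarantees a global extension of any compatible family of marginals over a chordal (more generally, acyclic) compatibility structure --- so your first direction in fact establishes the stronger statement that \emph{every} non-disturbing distribution, not only the quantum ones, is noncontextual when $\Gamma$ is chordal. For the converse, invoking the explicit $n$-cycle violations of Section~\ref{sectionncycle} is the right move.

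Two points deserve more care than you give them. First, the extension step ``assigning the remaining observables compatibly on an enlarged Hilbert space'': vertices outside the induced cycle may be required by $\Gamma$ to be compatible with several cycle vertices (and with each other), so you should say explicitly how you realise them --- e.g.\ by trivial observables (multiples of the identity) --- and check that the resulting empirical model on every context of $\Gamma$ is well-defined and non-disturbing, so that restriction to the cycle still witnesses contextuality. Second, $\Gamma$ in the paper is a \emph{hypergraph}; if contexts can have size larger than two, ``chordal'' must be read as the appropriate notion of hypergraph acyclicity (running-intersection / $\alpha$-acyclicity), which is what Vorob'ev's theorem actually characterises. Your write-up tacitly treats $\Gamma$ as a graph, which is fine for the $n$-cycle discussion but should be flagged if you want the argument to cover the general setting of Section~\ref{sectioncompatibility}.
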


  Equivalently, we may say that there is quantum violation of some noncontextuality inequality for the scenario if, and 
  only if $\Gamma$ has an $n$-cycle as induced subgraph with 
$n > 3$. In this sense, the $n$-cycle scenarios are the simplest ones where it is possible to find quantum violations of 
noncontextuality
inequalities. For a proof of this result, see reference \cite{BM10}.

\section{\textsf{The Exclusivity Graph}}
\label{sectionexclusivitygraph}

Given a scenario it is possible to define another graph related to it that allows the calculation of several bounds for the
associated inequalities. We introduce some definitions first. In what follows
 $$a_1, \ldots, a_n| M_1, \ldots, M_n$$
will denote the event where compatible measurements $M_1, \ldots, M_n$ were performed and 
outcomes $a_1, \ldots, a_n$ were obtained.

Since each outcome $a_i$ in  measurement $M_i$ is associated to an element of $\mathcal{T}$, the event 
$a_1, \ldots, a_n| M_1, \ldots, M_n$ is associated 
to a composition of transformations, which is also a transformation according  to corollary \ref{corcomposition}. 

\begin{defi}
 We say that two events are \emph{exclusive} if
 the corresponding transformations represent different outcomes of the same measurement.
\end{defi}

\begin{defi}
 Given a scenario, the \emph{exclusivity graph} $\mathcal{G}$ of this scenario is the simple graph whose vertices are
 labeled by 
 all possible events
 $$a_1, \ldots, a_n| M_1, \ldots, M_n$$
 in this scenario. Two vertices  are connected by an edge if and only if the corresponding events are exclusive.
\end{defi}

Generally,  not all possible  events are involved in a given inequality. The ones involved define an induced subgraph of 
$\mathcal{G}$
from which we can get a lot of information about the inequality.

\begin{defi}
The \emph{exclusivity graph} $G$ of  a noncontextuality inequality is the induced subgraph of $\mathcal{G}$ defined by
the vertices that correspond to events appearing in the inequality.
\end{defi}

\begin{ex}[The exclusivity graphs of the $n$-cycle inequalities]
\label{exncycle}
 Since $$\left\langle M_iM_j \right\rangle = 2\left(p(11| M_iM_j)+p(-1-1| M_iM_j)\right)-1$$ and 
 $$-\left\langle M_iM_j \right\rangle = 2\left(p(1-1| M_iM_j)+p(-11| M_iM_j)\right)-1,$$ there are
 $2n$ events in each noncontextuality inequality for the $n$-cycle scenario. If $n$ is odd, the corresponding 
 exclusivity graph is 
 the \emph{prism graph} of order $n$, $Y_n$,  and if $n$ is even, the exclusivity graph is 
 the \emph{M\"obius ladder} of order $2n$, $M_{2n}$. The first four
 of these graphs are depicted in figure \ref{figncycle}. 
 
 \begin{figure}
 \centering
  \includegraphics[scale=0.4]{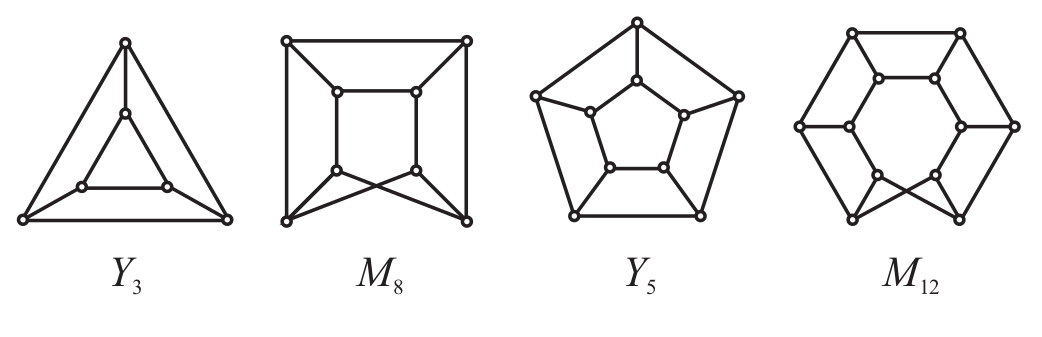}
\caption{Exclusivity graphs for the $n$-cycle inequalities for $n=3,4,5,6$.}
\label{figncycle}
 \end{figure} 
\end{ex}

We restrict ourselves now to the case where all coefficients $ \gamma_{a_0, \ldots, a_n| M_1, \ldots, M_n}$ in 
equation \eqref{eqncinequalities} are equal to one. Many important inequalities can be written in this
form, including the $n$-cycle inequalities. In this case, we can use the exclusivity graph $G$ and 
some graph functions to get information about
the maximal bounds for the quantity $S$ in different probabilistic theories. First, a few definitions from graph theory.

\begin{defi}
An \emph{independent set} or \emph{stable set} in a graph $G$ 
is a set of vertices of $G$, no two of which are adjacent. 
A \emph{maximum independent set} is an independent set of largest possible size for  G.
\end{defi}

\begin{defi}
 The \emph{independence number} $\alpha(G)$ of a graph $G$ is the cardinality of a maximum  independent set of $G$.
\end{defi}

\begin{defi}

Let $\{1, \ldots , n\}$ be the set of vertices of a graph $G$.
An \emph{orthonormal representation} for $G$ in a finite-dimensional vector space with inner product 
$V$ is a set of unit vectors $\{\ket{u_1}, \ldots , \ket{u_n}\}$ such that 
$\ket{u_i}$ and $\ket{u_j}$ are orthogonal whenever $i$ and $j$ are not connected in $G$. 
\end{defi}

\begin{defi}
 The \emph{Lov\'asz number} of a graph $G$ is 
 $$\vartheta(G)=\max \sum_i \braket{u_i}{\psi}$$
 where the maximum is taken over all $V$ and over all orthogonal representations 
 $\{\ket{u_1}, \ldots , \ket{u_n}\}$
 for $\overline{G}$ and all unit vectors $\ket{\psi}$ in  $V$.
 An orthonormal representation achieving the maximum, called
 an \emph{optimal orthonormal representation}, always exists. 
\end{defi}

 Both $\alpha(G)$ and
$\vartheta(G)$ are extremely important for the study of classical and quantum bounds of 
noncontextuality inequalities. For a  more detailed discussion about these graph functions,
see \cite{Lovasz79, Lovasz95, Knuth93, Rosenfeld67, Bollobas98}.

\begin{teo}[Cabello, Severini and Winter, 2010]
 The classical bound of a noncontextuality inequality is the independence number $\alpha(G)$ of the exclusivity
 graph $G$ of the inequality.
\end{teo}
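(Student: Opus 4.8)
The plan is to exploit the fact that, once the coefficients have been fixed to $\gamma \equiv 1$, the quantity $S = \sum_{e} p(e)$ is a \emph{linear} functional of the distribution $p$, the sum running over the events $e$ that label the vertices of $G$. By the discussion in Section~\ref{sectionphysical}, the classical distributions form the polytope $\mathcal{NC}(\Gamma)$, whose $|O^X|$ vertices are exactly the deterministic noncontextual assignments, i.e.\ the point-mass distributions induced by a global valuation $\lambda: X \to O$. Since a linear functional on a polytope attains its maximum at a vertex, the classical bound equals $\max_{\lambda} S(\lambda)$, the maximum taken over all functions $\lambda: X \to O$. The whole proof then reduces to a purely combinatorial identity between this maximum and $\alpha(G)$.

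First I would evaluate $S(\lambda)$ for a fixed deterministic $\lambda$. For such a $\lambda$ the probability of an event $e = (a_1, \ldots, a_n \mid M_1, \ldots, M_n)$ is $1$ if $\lambda(M_i) = a_i$ for all $i$ (in which case we say $\lambda$ is \emph{consistent} with $e$) and $0$ otherwise, so $S(\lambda)$ merely counts the events of $G$ consistent with $\lambda$. The key observation is that the set $I_\lambda$ of such events is an \emph{independent set} of $G$: if two consistent events $e, e'$ were also exclusive, there would be a measurement $M$ common to both contexts on which they assign different outcomes, yet both outcomes would have to equal $\lambda(M)$, a contradiction. Hence $S(\lambda) = |I_\lambda| \leq \alpha(G)$ for every $\lambda$, which already establishes that the classical bound is at most $\alpha(G)$.

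For the reverse inequality I would start from a maximum independent set $I$ of $G$, with $|I| = \alpha(G)$, and reconstruct a global valuation realizing it. Each event $e \in I$ determines a partial function $\sigma_e$ on its context by $\sigma_e(M_i) = a_i$. Because the events of $I$ are pairwise non-exclusive, any two of them agree on the outcome of every shared measurement, so the family $\{\sigma_e\}_{e \in I}$ glues to a single well-defined partial function on $\bigcup_{e \in I} \mathrm{context}(e) \subseteq X$; extending it arbitrarily to all of $X$ yields a deterministic $\lambda$ consistent with every $e \in I$, whence $S(\lambda) \geq |I| = \alpha(G)$. Combining the two bounds gives $\max_\lambda S(\lambda) = \alpha(G)$, which is the assertion. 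The step I expect to be the crux is exactly this gluing argument: it is where the precise meaning of \emph{exclusive} (two events assigning different values to a common measurement) is needed to guarantee that the local assignments supplied by an independent set never conflict, so that they patch together into a single global — and therefore noncontextual — valuation.
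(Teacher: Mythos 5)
Your proof is correct and follows essentially the same route as the paper's: the classical maximum is attained at a deterministic vertex of the noncontextual polytope, the events assigned probability one there form an independent set of $G$ (giving the bound $\alpha(G)$), and a maximum independent set is realized by a suitable deterministic assignment. Your gluing argument simply makes explicit the step the paper leaves implicit, namely that pairwise non-exclusive events never conflict on shared measurements and therefore extend to a single global valuation.
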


\begin{dem}
 Since the noncontextuality inequalities are linear, the maximum classical bound is achieved in a vertex of the 
 noncontextual polytope. For such a vertex, the probability of each event is either zero or one and the value 
 of the sum $S$ for this distribution is equal to the number of events with probability one.
 Since the sum of the 
 probabilities of two exclusive events can not be higher than one, two connected vertices can not have probability equal to
 one at the same time. Hence, the set of vertices whose probabilities are one is an independent set, and hence can not
 have more than $\alpha(G)$ elements. 

To prove that equality holds, it suffices to take any maximum independent set and 
use the classical distribution that assigns probability one to each vertex in this set.

\end{dem}

\begin{teo}[Cabello, Severini and Winter, 2010]
\label{teoqbound}
 The quantum bound of a noncontextuality inequality is upper bounded by the Lov\'asz  number $\vartheta(G)$ of the exclusivity
 graph $G$ of the inequality.\footnote{If the coefficients of the inequality are not all equal to one, we use the wighted 
 versions of $\alpha$ and $\vartheta$.}
\end{teo}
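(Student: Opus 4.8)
The plan is to show that, for an arbitrary quantum realization, the value of $S$ can be rewritten exactly as a sum of squared overlaps of the type appearing in the variational definition of the Lov\'asz number, so that the bound $S \leq \vartheta(G)$ becomes immediate. The first step is to reduce to pure states. For a quantum realization the value of the inequality is $S = \sum_v \tr\left(P_v \rho\right)$, where the sum runs over the vertices $v$ of the exclusivity graph $G$ (the events appearing in the inequality), $P_v$ is the projector associated to event $v$, and $\rho$ is the state. Since $S$ is a linear, hence convex, function of $\rho$, and the state space $D\left(\mathcal{H}\right)$ is convex with the unidimensional projectors as its extreme points, the maximum of $S$ over states in a fixed Hilbert space is attained at a pure state $\rho = \ket{\psi}\bra{\psi}$. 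It therefore suffices to prove the bound for pure states, and since this will hold for every Hilbert space and every choice of projectors, it will hold for all quantum realizations.

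Fix a pure state $\ket{\psi}$. For each vertex $v$ with $P_v\ket{\psi} \neq 0$, I would define the unit vector $\ket{u_v} = P_v\ket{\psi}/\norm{P_v\ket{\psi}}$. Then $\braket{u_v}{\psi} = \norm{P_v\ket{\psi}}$, so that $\left|\braket{u_v}{\psi}\right|^2 = \sand{\psi}{P_v}{\psi} = \tr\left(P_v\rho\right)$, which is precisely the quantum probability of the event labelled by $v$. Summing over all vertices gives $S = \sum_v \left|\braket{u_v}{\psi}\right|^2$, which already has the shape of the expression maximized in the definition of $\vartheta(G)$.

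The next step is to verify that $\left\{\ket{u_v}\right\}$ is an orthonormal representation of $\overline{G}$. If $v$ and $w$ are exclusive, i.e.\ adjacent in $G$, the corresponding events are distinct outcomes of a common projective measurement, so the associated projectors are orthogonal, $P_vP_w = 0$; this yields $\braket{u_v}{u_w} = \sand{\psi}{P_vP_w}{\psi}/\left(\norm{P_v\ket{\psi}}\,\norm{P_w\ket{\psi}}\right) = 0$. Thus the $\ket{u_v}$ are orthogonal whenever the vertices are adjacent in $G$, i.e.\ non-adjacent in $\overline{G}$, which is exactly the defining condition of an orthonormal representation of $\overline{G}$. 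Consequently, by the characterization of the Lov\'asz number as the maximum of $\sum_v \left|\braket{u_v}{\psi}\right|^2$ over all orthonormal representations of $\overline{G}$ and all unit vectors $\ket{\psi}$, we conclude $S = \sum_v \left|\braket{u_v}{\psi}\right|^2 \leq \vartheta(G)$.

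The main technical obstacle I expect lies in two bookkeeping points. First, a vertex $v$ with $P_v\ket{\psi} = 0$ contributes zero to $S$ but still needs a unit vector to complete a genuine orthonormal representation; I would handle this by embedding everything in a slightly larger space and assigning to such vertices auxiliary mutually orthogonal unit vectors that are also orthogonal to $\ket{\psi}$ and to all other $\ket{u_w}$, which leaves both $S$ and the orthogonality constraints of the representation unaffected. Second, the step asserting $P_vP_w = 0$ for exclusive events must be justified carefully: it rests on the definition of exclusivity as \emph{distinct outcomes of a single measurement}, together with the fact that in the projective setting distinct outcomes of one measurement are represented by orthogonal projectors, so that the exclusivity relation encoded in $G$ is faithfully reflected by orthogonality of the $\ket{u_v}$. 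Once these points are settled, the inequality $S \leq \vartheta(G)$ follows from the definition of $\vartheta$ with no further computation.
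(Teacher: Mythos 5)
Your proof is correct and follows essentially the same route as the paper's own argument: reduce to a pure state, map each event's projector $P_v$ to the normalized vector $P_v\ket{\psi}/\norm{P_v\ket{\psi}}$, observe that exclusivity forces orthogonality so that these vectors together with $\ket{\psi}$ form an orthonormal representation of $\overline{G}$, and invoke the variational definition of $\vartheta(G)$. Your two bookkeeping points (the convexity justification for restricting to pure states, and the auxiliary vectors for vertices with $P_v\ket{\psi}=0$) are welcome refinements that the paper's proof leaves implicit, but they do not change the substance of the argument.
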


\begin{dem}
The maximal quantum value for $S$ is obtained for a pure state $\rho=\ket{\psi}\bra{\psi}$. 
Let $\left\{e_i\right\}$, $e_i=a^i_0, \ldots, a^i_{n_i}\left| M^i_1\right.\, \ldots, M^i_{n_i}$, be 
the set of events present in the inequality and 
 $P_i=\prod_k P_{a^i_k}^{M_k^i}$ be the projector corresponding to $e_i$, where $P_{a^i_k}^{M_k^i}$ is 
 the projector associated to outcome $a_k^i$ for measurement $M^i_k$ . Define
 $$\ket{v_i}=\frac{P_i\ket{\psi}}{\left|P_i\ket{\psi}\right|}.$$
 Then we have
 $$S=\sum_i p(a^i_0, \ldots, a^i_{n_i}| M^i_1, \ldots, M^i_{n_i})=\sum_i|\braket{\psi}{v_i}|^2.$$
 If $e_i$ and $e_j$ are exclusive events, the corresponding projectors $P_i$ and $P_j$ are orthogonal, and 
 hence $\ket{v_i}$ and $\ket{v_j}$ are also orthogonal. The set of vectors $\ket{v_i}$ and the state $\ket{\psi}$
 provide an orthogonal representation for $\overline{G}$ and 
 $$\sum_i|\braket{\psi}{v_i}|^2 \leq \vartheta(G).$$
\end{dem}

\begin{ex}[Quantum bound for the $n$-cycle inequalities]
 The observation that $\vartheta(Y_n)=\frac{3n\cos\left(\frac{\pi}{n}\right)-n}{1 + \cos\left(\frac{\pi}{n}\right)}$,
 $\vartheta(M_{2n})= n\cos\left(\frac{\pi}{n}\right)$, and theorem \ref{teoqbound} were used by the authors in 
 reference \cite{AQBTC13} to 
 find the 
 quantum maximum violation of the $n$-cycle inequalities.
\end{ex}

Although in the previous example the bound was tight, this is not true in general. This can happen when the scenario imposes
extra constraints that make the Lov\'asz optimal representations for the graph 
unattainable for quantum systems. 

\begin{ex}
 In reference  \cite{SBBC13} we find three inequalities for which $\vartheta(G)$ is larger then the 
 quantum maximum. Consider the scenario where the system is composed by two spatially separated parties. In the first subsystem
 there are two measurements available, denoted by $A_0$ and $ A_1$, and in the second subsystem we also have
 two measurements available, denoted by $B_0$ and $B_1$. All measurements have two possible outputs, $0$ and $1$.
 In this case, the compatibility of the 
 measurements in different systems is guaranteed by spatial separation
 (for more details, see appendix \ref{chapternonlocality}). The compatibility hypergraph is a square,
 with edges linking measurements in different parties, as shown in figure \ref{figsquare}.

  \begin{figure}[h]
  \centering
  \includegraphics[scale=1.5]{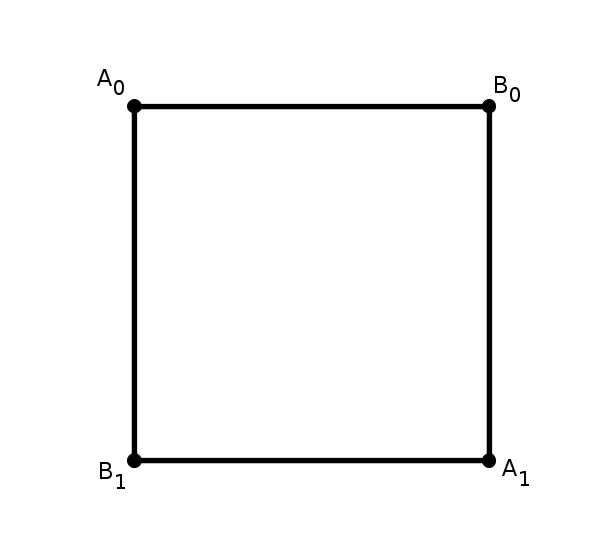}
  \caption{The compatibility hypergraph for the bipartite scenario with measurements $\{A_0, A_1\}$ for the first party
  and measurements $\{B_0, B_1\}$ for the second party. \label{figsquare}}
  \end{figure}

 This scenario admits two noncontextuality inequalities with quantum bound larger than the classical bound for which
 the exclusivity graph is a pentagon:
 \begin{eqnarray*}
  p(00|00) + p(11|01) + p (10|11) + p(00|10)+ p (11|00) & \leq & 2, \\
  p(00|00) + p(11|01) + p(10|11) + p(00|10) +p(\underline{\hspace{0.6em} }1|\underline{\hspace{0.6em} }0)
  &\leq  &2.
 \end{eqnarray*}
In the inequalities above, $ ab| xy$ denotes the event where the first party applies measurement $A_x$ and gets outcome
  $a$ and the second party applies measurement $B_y$ and gets outcome $b$;
  $\underline{\hspace{0.6em} }1|\underline{\hspace{0.6em} }0$ corresponds to
  the event where the second party applies measurement $B_0$ and gets outcome $1$, irrespectively of the first party's action.

  The quantum bound for the first inequality  is approximately $2.178$, while for the second it is approximately $2.207$.
The events appearing in these inequalities and their exclusivity structures
 are shown figure \ref{figbellpentagon}  (a) and (b).
 
 Consider also the scenario where the first party has three measurements, instead of two. 
 The compatibility hypergraph of this scenario is shown in figure \ref{figbell32}.
 
   \begin{figure}[h]
  \centering
  \includegraphics[scale=1.5]{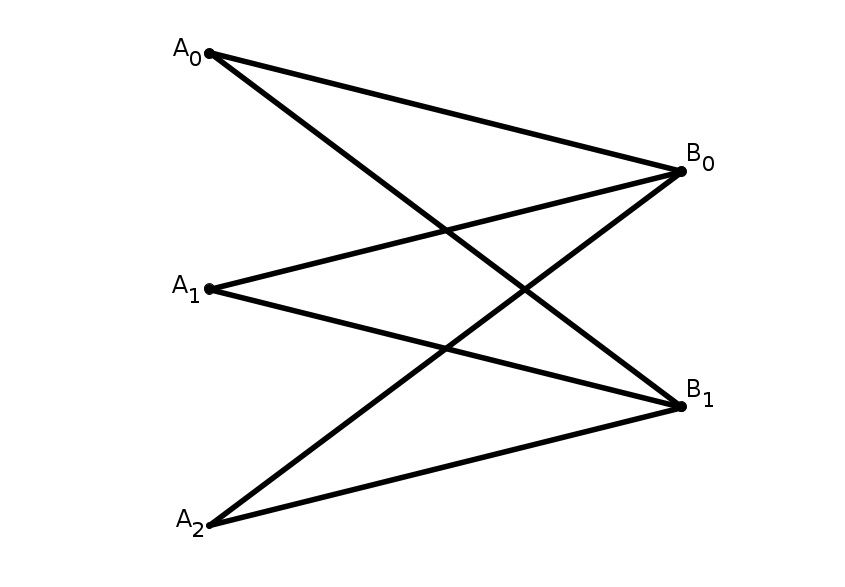}
  \caption{The compatibility hypergraph for the bipartite scenario with measurements $\{A_1, A_2, A_3\}$ for the first party
  and measurements $\{B_1, B_2\}$ for the second party. \label{figbell32}}
  \end{figure}

 This scenario admits one noncontextuality inequality with quantum bound larger than the classical bound for which
 the exclusivity graph is a pentagon:
 $$p(00|00) + p(11|01) + p(10|11) + p(00|10) +p(11|20) \leq 2.$$ 
 The quantum bound for this inequality is approximately $2.207$.
 The events appearing in these inequalities and their exclusivity structure
 are shown in figure \ref{figbellpentagon} (c).

 \begin{figure}[h]
  \centering
  \includegraphics[scale=0.4]{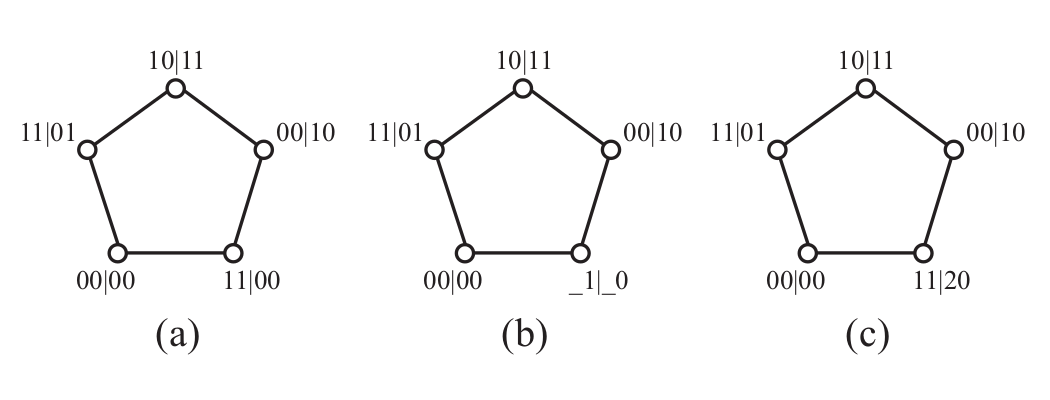}
  \caption{The labeling of the exclusivity graph for the three noncontextuality inequalities with 
  pentagonal exclusivity structure.\label{figbellpentagon}}
 \end{figure}

 For each of these inequalities, the quantum bound is strictly smaller then the Lov\'asz number of the pentagon
 $\vartheta(C_5)= \sqrt{5}\approx 2.236$. This 
 proves that, in general, $\vartheta(G)$ gives only a loose upper bound for the maximum quantum  value of the
 inequality.
 
 \end{ex}

\section{\textsf{Contextuality: the Exclusivity-Graph Approach}}
\label{sectionexclusivity}

\subsection{\textsf{A graph approach to the Bell-Kochen-Specker Theorem}}

The mathematical  content of the original proof of the  Bell-Kochen-Specker theorem  is that there are sets of one dimensional 
projectors for which it is not possible to 
assign definite values $0$ or $1$ noncontextually in such  a way that, if a set of mutually orthogonal
projectors add to identity, then the value $1$ must be assigned to one, and only one, of them 
(for more details, see section \ref{sectionks} of appendix \ref{chaptercontextuality}). 

The usual physical interpretation of this result connects each projector to a measurement in a quantum system
with possible outcomes $0$ and $1$. The noncontextuality assumption translates into the observation that the value
assigned to each measurement is independent of other compatible measurements performed simultaneously.
With this association, the theorem implies the impossibility of noncontextual assignment of definite values
to all measurements in a quantum system consistently with the quantum statistics, proving the impossibility of 
noncontextual hidden-variable models.

The set of one-dimensional projectors in a proof of the Bell-Kochen-Specker theorem can be represented using a graph, known as the \emph{Kochen-Specker diagram}. The vertices of the graph are the projectors in the set and two of 
them are joined by an edge whenever they are compatible.

We  can look at this result from a different perspective. Instead of associating each
projector with a measurement, we will use the fact that any projector $P$ belongs to the set of allowed transformations 
$\mathcal{T}$
of the system  and associate it with a possible outcome of a measurement.
With this interpretation,  each  vertex in a Kochen-Specker diagram corresponds to an element of $\mathcal{T}$
and two vertices are connected by an edge if the corresponding transformations can be associated 
to two different outcomes of one and the same measurement. 

Suppose now that a hidden-variable model is given. This model provides definite values to all measurements, and hence,
given a transformation $P$, we know if the outcome it corresponds to occurs or not. If  the outcome associated to the measurement
by the hidden-variable model is the one that corresponds to $P$, we associate the value $1$ to $P$. Otherwise, we associate
the value $0$ to $P$.

If we have a set of projectors $\left\{P_1, \ldots, P_n\right\}$ summing up to identity, we know that there is a measurement for which the outcomes are associated
to these projectors. Hence, since one, and only one, outcome must occur, one, and only one, of these projectors
is associated to the value $1$. Hence,  we have
\be \sum_iv\left(P_i\right)=1\ee 
where $v(P_i)$ is the value assigned by the model to projector $P_i$.

In this new perspective, the noncontextuality assumption means that the value associated to a projector $P$
by the hidden-variable model is independent of the other projectors used to define the measurement. As we have seen, 
the same transformation corresponds to an outcome of several different measurements. 
Then, whenever $P$ corresponds to an outcome of different measurements $\mathrm{ M}_1, \mathrm{ M}_2, \ldots, 
\mathrm{ M}_n$, a noncontextual hidden-variable model assigns the  outcome corresponding to $P$ to some
$\mathrm{ M}_i$ if and only if it does for all other $\mathrm{M}_j$.

We can also see the KCBS inequality in this new perspective. 
The compatibility graph $G$ of this scenario is a pentagon and the 
maximum quantum violation
is obtained with projectors $P_i$ such that $P_i$ and $P_j$ are orthogonal if $(i,j) \in E(G)$. 
This observation leads to two different interpretations of the graph $G$ 
in quantum realizations in this particular case.
 First, each vertex $i$ of $G$ 
can  be viewed as the observable associated to the projector $P_i$. 
The second way to interpret $G$  is associating a measurement  $M$ to every edge $(i,j) \in E(G)$  
which includes outcomes associated to $P_i$ and $P_j$.

In the  exclusivity graph approach, we start with  a graph $G$ with vertices $V(G)$ and edges $E(G)$. 
For each $i \in V$ there is a transformation $P_i \in \mathcal{T}$ in a
probabilistic model and for each
$ (i,j) \in E(G)$ a measurement among whose outcomes are the
$P_i$ and $P_j$. Hence, the events represented by each vertex are mutually exclusive.

Given a graph $G$, a physical model for $G$ is a set of measurements in a physical system, one
 for each edge in $ E\left(G\right)$. For a given state of the system, there is a probability associated to each 
event $i \in V$. We collect these probabilities in a vector $p \in \mathbb{R}^{|V|}$. The set of possible vectors depends on 
the physical theory used to describe the system and we will study this set for classical probability theories,
 quantum theory and generalized probabilistic theories with certain properties, as  explained below.

\subsection{\textsf{Classical Non-Contextual Realizations}}

A classical realization for $G$ is given by a probability space $(\Omega,\Sigma, \mu )$, 
where $\Omega$ is a sample space, 
$\Sigma$ a $\sigma -$algebra and $\mu$ a probability measure in $\Sigma$ and 
for each $i \in V$ a set $A_i \in \Sigma$
 such that $A_i \cap A_j = \varnothing$ if $(i,j)$ belongs to $E\left(G\right)$. 
 For each $i$ the probability of outcome $i$ is 
$$p_i=\mu\left(A_i\right).$$
The set of probability vectors obtained with classical models $\mathcal{E}_C(G)$ is a polytope. 
Distributions that belong to
this set are
called \emph{noncontextual distributions}. Incidentally, this set is a well-known convex polytope in computer science literature, where it is denoted by $STAB(G)$ \cite{Knuth93,Rosenfeld67}. 



\subsection{\textsf{Quantum Realizations}}

A quantum realization for $G$ is given by a density matrix $\rho$ acting  in a Hilbert space $\mathcal{H}$ and for each $i \in V$ a projector $P_i$ acting
in $\mathcal{H}$ such that $P_i$ and $P_j$ are orthogonal  if $(i,j)$ belongs to  $ E\left(G\right).$ 
For each $i$ the probability of the 
outcome $i$ is 
$$p_i=\mbox{Tr}\left(P_i\rho\right).$$
The set of probability vectors obtained with quantum realizations will be denoted by
$\mathcal{E}_Q(G)$ and it  is not a polytope in general. This set is a well-known convex body in computer science literature, where it is denoted by $TH(G)$ \cite{Knuth93,Rosenfeld67}.
Distributions that belong to
this set are
called \emph{quantum distributions}.

If we fix a basis for $\mathcal{H}$ and consider all matrices diagonal in this basis we recover the classical distributions.
Hence 
$$\mathcal{E}_C(G)\subset \mathcal{E}_Q(G).$$

\subsection{\textsf{The Exclusivity Principle}}

The main point of this work is to provide physical principles that 
single out quantum theory in the landscape of theories presented in chapter \ref{chaptergpt}.
With this purpose in mind, we will also  consider probability distributions obtained when we use generalized probability theories, 
but we demand that  they satisfy the following principle:

\begin{prin}[The Exclusivity Principle]
\label{defiexclusivityprinciple}
Given a set  $\{e_k\}$  of pairwise exclusive events, the corresponding probabilities $p_k$  satisfy the 
 following equation:
\begin{equation}
 \sum_{k} p_k \leq 1.
\end{equation}
\end{prin}

From now on, we refer to the Exclusivity principle simply as the \emph{E-principle}.

From the graph theoretical point of view, this restriction is equivalent to impose the condition that
whenever the set of vertices $\{v_k\}$ is a clique\footnote{A clique in $G$ is a complete induced subgraph of $G$.}
in $G$, the sum of the corresponding probabilities $p_k$ can
not exceed one.

Specker pointed out that, in
quantum theory, pairwise joint measurability of a set $\mathcal{M}$
of observables implies joint measurability of $\mathcal{M}$, while in
other theories this implication does not need to hold \cite{Specker60}. This property is known as the \emph{Specker principle}.
Later, Specker conjectured that this is \emph{the fundamental
theorem} of quantum theory \cite{Specker09}. The $E$ principle is a consequence of the Specker principle, as shown in reference 
\cite{NBAASBC13}. 

The E principle can be used to explain why (some) distributions outside the quantum set are forbidden. 
Many promising results where found so far, as we discuss in chapter \ref{chapterlovasz}.

\subsection{\textsf{E-Principle Realizations}}

An E-principle realization for $G$ is given by a state in a probabilistic model and for each $i \in V$ a  transformation
$T_i \in \mathcal{T}$, such that the corresponding  probability distribution  satisfies the E principle.
 
 The distributions obtained in this way are called \emph{E-principle distributions}.
 The set of all E-principle distributions, denoted by $\mathcal{E}_E(\Gamma)$,   is also a polytope. 
This set is a well
known convex polytope in computer science literature, where it is denoted by $QSTAB(G)$ \cite{Knuth93,Rosenfeld67}.

It is a known fact from computer science literature that $TH(G) \subset QSTAB(G)$, which is equivalent
to $\mathcal{E}_Q(G) \subset \mathcal{E}_E(G)$. This was also proven in references \cite{CSW14, FSABCLA12}.

\begin{teo}
\label{teoquantume}
 The quantum distributions satisfy the E principle.
\end{teo}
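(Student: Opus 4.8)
The plan is to reduce the E principle to an elementary operator inequality. In the graph language, the E principle asserts that for every clique $\{v_1,\dots,v_m\}$ of $G$ one has $\sum_{k=1}^m p_k \le 1$. In a quantum realization each probability is $p_k=\tr(P_k\rho)$, and by the definition of a quantum realization the projectors assigned to adjacent vertices are orthogonal. Since a clique is by definition a set of pairwise adjacent vertices, the associated projectors $P_1,\dots,P_m$ are pairwise orthogonal, that is $P_kP_l=0$ whenever $k\neq l$. Thus the whole statement will follow once I control the operator $\sum_k P_k$ and pair it against the state.

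The key step is the observation that a sum of pairwise orthogonal projectors is again a projector. I would set $Q=\sum_{k=1}^m P_k$ and verify that $Q$ is Hermitian and idempotent: expanding $Q^2=\sum_{k,l}P_kP_l$ and using $P_kP_l=0$ for $k\neq l$ together with $P_k^2=P_k$ gives $Q^2=\sum_k P_k=Q$. Hence $Q$ is an orthogonal projector, so $0\le Q\le I$ as operators; equivalently $I-Q\ge 0$.

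Finally I would pair this inequality with the state. By linearity of the trace, $\sum_{k=1}^m p_k=\tr(Q\rho)$. Since $\rho\ge 0$ and $I-Q\ge 0$, the standard fact that $\tr(AB)\ge 0$ for positive $A,B$ (write $\tr((I-Q)\rho)=\tr(\rho^{1/2}(I-Q)\rho^{1/2})\ge 0$) yields $\tr(Q\rho)\le\tr(\rho)$. As $\rho$ is a possibly subnormalized state, $\tr(\rho)\le 1$, and therefore $\sum_k p_k\le 1$, which is exactly the E principle. I expect no serious obstacle here: the content lies entirely in unwinding the definitions, and the only points deserving a word of care are that edge-orthogonality propagates automatically to every pair inside a clique (immediate from the definition of clique) and the positivity of $\tr(AB)$ for positive $A,B$. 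The importance of the statement is not its difficulty but its consequence, namely the inclusion $\mathcal{E}_Q(G)\subseteq\mathcal{E}_E(G)$ noted just before the theorem.
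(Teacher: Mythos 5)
Your proof is correct and follows essentially the same route as the paper: pairwise exclusivity gives pairwise orthogonal projectors, so $\sum_k P_k \leq I$, and tracing against $\rho$ with $\tr(\rho)\leq 1$ yields the bound. The only difference is that you spell out the intermediate facts (the sum of pairwise orthogonal projectors is itself a projector, and $\tr(AB)\geq 0$ for positive $A,B$) which the paper leaves implicit.
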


\begin{dem}
 In quantum theory, exclusive events are associated to orthogonal projectors. Hence, if $\{e_i\}$ is a set of mutually
 exclusive events, a quantum realization will provide a set $\{P_i\}$ of mutually orthogonal projectors. As a consequence
 we have 
 $$\sum_iP_i \leq I$$
 and hence
  $$\sum_ip_i=\sum_iTr\left(P_i\rho\right) \leq Tr\left(\rho\right) \leq 1.$$
\end{dem}

\section{\textsf{Non-contextuality inequalities in the exclusivity-graph approach}}
\label{sectionexclusivityinequalities}

Once more, since the set $\mathcal{E}_C(G)$ is a polytope, it admits an H-description: a finite set
of linear inequalities which provide necessary and sufficient conditions for membership in this set.

\begin{defi}
\label{defiexclusivityinequality}
 A \emph{noncontextuality inequality} is a linear inequality 
 \be \sum \gamma_i p_i \leq b,\ee
 where all $\gamma_i$ and $b$ are real numbers, which is 
 satisfied by all elements of  the classical polytope 
 $\mathcal{E}_C(G)$ and violated by some contextual distribution.
 A \emph{tight noncontextuality inequality} is a linear inequality defining a non-trivial facet of the classical polytope 
 $\mathcal{E}_C(G)$.
\end{defi}

To obtain necessary and sufficient conditions for membership in $\mathcal{E}_C(G)$, we have to find all 
tight noncontextuality inequalities for $G$. This is a difficult problem, in general, and sometimes it is useful to
concentrate in one particular inequality and find out what information it can give.

Given a  graph $G=(V,E)$, consider, for example, the sum of probabilities
\begin{equation}
\beta  = \sum_{i \in V} p_i.
\end{equation}
We can use this sum to provide  necessary conditions to membership in $\mathcal{E}_C(G)$,
$\mathcal{E}_Q(G)$ and $\mathcal{E}_E(G)$. To derive these
conditions we need to identify what are the maximum values of $\beta$ for each of classical, quantum and E-principle 
realizations, which will be denoted respectively by $\beta_C$, $\beta_Q$ and $\beta_E$.
Naturally, by theorem \ref{teoquantume} and the fact that $\mathcal{E}_C(G)\subset \mathcal{E}_Q(G)$, we have
$$\beta_C \leq \beta_Q \leq \beta_E.$$

The inequality
\begin{equation}
 \sum_{i \in V} p_i \leq \beta_C
\end{equation}
is  a \emph{noncontextuality inequality} as long as $\beta_C < \beta_E$ and
\begin{equation}
 \sum_{i \in V} p_i \leq \beta_Q
\end{equation}
is a necessary condition for membership in $\mathcal{E}_Q(G)$.

Also in the exclusivity-graph approach, the graph functions $\alpha(G)$ and $\vartheta(G)$ can be used to
calculate $\beta_C$ and $\beta_Q$. The bound $\beta_E$ can be calculated with the help of another graph function, known
as the \emph{fractional packing number} of $G$.

\begin{defi}
 The \emph{fractional packing number} $\alpha^*(G)$ of a graph $G$ is defined by
 $$\alpha^*(G)=
 \max\left\{\sum_i p_i \left|0 \leq  p_i \leq 1 \ \mbox{and} \ \sum_{i \in C} p_i \leq 1, C \ \mbox{any clique of} 
 \ G\right.\right\}.$$
\end{defi}

\begin{teo}[Cabello, Severini, and Winter, 2010]
Given a graph $G$,
$$\beta_C=\alpha(G), \ \beta_Q= \vartheta(G),  \ \beta_E=\alpha^*(G)$$
where $\alpha(G)$ is the independence number of $G$, $\vartheta(G)$ is  the Lov\'asz number of $G$ and 
$\alpha^*(G)$ is the fractional-packing number of $G$. 
\label{teoqboundexclusivity}
\end{teo}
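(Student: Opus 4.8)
The plan is to establish each of the three equalities by proving a matching pair of bounds: an upper bound $\beta_\bullet \le (\text{invariant})$ valid for every realization of the relevant type, together with an explicit realization attaining it. For the classical case $\beta_C=\alpha(G)$, I would first observe that in a classical realization the disjointness requirement $A_i\cap A_j=\varnothing$ for $(i,j)\in E(G)$ forces, for each $\omega\in\Omega$, the set $\{i:\omega\in A_i\}$ to be an independent set of $G$; hence the pointwise bound $\sum_{i\in V}\mathbf{1}_{A_i}(\omega)\le \alpha(G)$ holds, and integrating against $\mu$ gives $\beta=\sum_i\mu(A_i)\le\alpha(G)$. For the reverse inequality I would exhibit a one-point sample space with $A_i=\Omega$ precisely for $i$ in a maximum independent set $S$ and $A_i=\varnothing$ otherwise; since $S$ spans no edge this respects disjointness and yields $\sum_i p_i=|S|=\alpha(G)$.

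For $\beta_Q=\vartheta(G)$, the upper bound $\beta_Q\le\vartheta(G)$ is exactly Theorem \ref{teoqbound} applied with all coefficients equal to one. For tightness I would invoke the fact, recorded just after the definition of the Lov\'asz number, that an optimal orthonormal representation $\{\ket{u_1},\ldots,\ket{u_n}\}$ of $\overline{G}$ together with an optimal handle $\ket{\psi}$ always exists. Because an orthonormal representation of $\overline{G}$ has $\braket{u_i}{u_j}=0$ whenever $(i,j)\in E(G)$, the rank-one projectors $P_i=\ket{u_i}\bra{u_i}$ are pairwise orthogonal along the edges of $G$, so $(\ket{\psi}\bra{\psi},\{P_i\})$ is a legitimate quantum realization. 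It produces $p_i=|\braket{u_i}{\psi}|^2$, whence $\sum_i p_i=\vartheta(G)$.

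For $\beta_E=\alpha^*(G)$, the upper bound is immediate: the E principle asserts precisely that $\sum_{i\in C}p_i\le 1$ for every clique $C$ of $G$, and these are exactly the constraints defining $\alpha^*(G)$, so every E-principle distribution is feasible for that linear program and $\beta_E\le\alpha^*(G)$. The substance lies in the matching realization. Given a point $p^\star$ attaining $\alpha^*(G)$, I would construct an explicit generalized probabilistic model realizing it: take the state space to be a cone over the solid $QSTAB(G)$, use the coordinate functionals $p\mapsto p_i$ as the transformations $T_i$, and, for each clique $C$, form a measurement whose outcomes are the $i\in C$ together with a null outcome carrying the effect $p\mapsto 1-\sum_{i\in C}p_i$. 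The clique constraint makes this last effect nonnegative, so the effects sum to the unit and the model is valid; it satisfies the E principle by construction, and the state $p^\star$ then gives $\sum_i p_i=\alpha^*(G)$.

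I expect the third part to be the main obstacle: the upper bounds and the classical and quantum realizations are essentially forced by the definitions, whereas exhibiting a model that attains the fractional packing number requires checking that the proposed effects and measurements genuinely define allowed states and transformations obeying the axioms and constraints of Chapter \ref{chaptergpt}, rather than merely a feasible point of a polytope. A secondary point to verify is that the squared-modulus form $p_i=|\braket{u_i}{\psi}|^2$ used in the quantum realization is consistent with the normalization convention adopted for $\vartheta(G)$, so that $\sum_i p_i$ indeed equals $\vartheta(G)$ and not some rescaling of it.
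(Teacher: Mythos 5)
Your proposal is correct, and it follows the same two-sided strategy (an upper bound plus an explicit realization attaining it) as the paper; in particular your quantum part coincides with the paper's proof essentially verbatim, since the paper also passes to a pure state, normalizes the projected vectors $P_i\ket{\psi}$ into an orthonormal representation of $\overline{G}$ for the upper bound, and conversely turns an optimal representation $\{\ket{u_i}\}$ with handle $\ket{\psi}$ into the projectors $P_i=\ket{u_i}\bra{u_i}$ for tightness. Two of your steps diverge in instructive ways. For $\beta_C$, the paper maximizes the linear functional over the polytope $\mathcal{E}_C(G)$ and argues at a vertex, where each $p_i\in\{0,1\}$ and the support of the distribution is an independent set; your pointwise argument (for each $\omega$ the set $\{i:\omega\in A_i\}$ is independent, then integrate against $\mu$) reaches the same bound directly from the measure-theoretic definition, without invoking the vertex structure of the polytope. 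For $\beta_E$, the paper simply declares the equality to follow ``directly from the definition of $\alpha^*$,'' identifying E-principle distributions with points of $QSTAB(G)$; you correctly observe that the paper's own definition of an E-principle realization asks for a state and transformations $T_i\in\mathcal{T}$ in some probabilistic model, so attaining $\alpha^*(G)$ strictly requires exhibiting such a model, and your cone-over-$QSTAB(G)$ construction with clique measurements completed by a null outcome fills exactly that gap. So where you anticipated the main obstacle, the paper sees none --- its reading of the definition makes the third equality tautological --- and your extra work is a legitimate tightening rather than a detour. Your final caveat is also warranted: the paper's displayed definition of the Lov\'asz number reads $\max\sum_i\braket{u_i}{\psi}$, which is a typo for $\max\sum_i\left|\braket{u_i}{\psi}\right|^2$, and it is the latter, squared-modulus form that both the paper's proof and yours actually use.
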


This result follows directly from the observation that $\mathcal{E}_C(G)=STAB(G), \ \mathcal{E}_Q(G)=TH(G)$ and 
$\mathcal{E}_E(G)=QSTAB(G)$ and the well known fact from computer science literature that 
$\alpha(G), \ \vartheta(G),  \ \alpha^*(G)$ are the maximum values of $\sum_i p_i$
over $STAB(G), \ TH(G),$ and $ \ QSTAB(G)$ respectively \cite{Knuth93,Rosenfeld67}.
Nonetheless, we provide a proof here because it may help
us to understand the physical significance of these graph functions.

\begin{dem}
 The classical bound is achieved in a vertex of the noncontextual polytope. For such a distribution, each $p_i$ is equal
 to zero or one. If $i$ and $j$ are connected by an edge in $G$ they represent different outcomes of the same measurement and 
 hence $p_i$ and $p_j$ can not be both equal to one. Hence the set of indices $i$ such that $p_i=1$ is an independent set and 
 can have at most $\alpha(G)$ elements. This implies that $\beta_C \leq \alpha(G)$ and equality is achieved if we choose
 any independent set $I \subset V(G)$ with $\alpha(G)$ elements and define $p_i=1$ if and only if $i \in I$.
 
 The quantum bound is achieved when we use a pure state $\ket{\psi}$. Let $P_i$ be the projector associated to vertex $i$
 and
 $$\ket{v_i}=\frac{P_i\ket{\psi}}{\left|P_i\ket{\psi}\right|}.$$
 If $i$ and $j$ are connected by an edge in $G$, the corresponding projectors are orthogonal and the vectors $\ket{v_i}$ and
 $\ket{v_j}$ are also orthogonal. Hence, the set of vectors $\ket{v_i}$ and $\ket{\psi}$ provide an orthogonal representation
 for $\overline{G}$ and hence
 $$\sum_ip_i=\sum_i \sand{\psi}{P_i}{\psi}=\sum_i |\braket{\psi}{v_i}|^2 \leq \vartheta(G).$$
 On the other hand, given a orthogonal representation $\left\{\ket{v_i}\right\}$ for $\overline{G}$
 and a state $\ket{\psi}$, let
 $P_i=\ket{v_i}\bra{v_i}$. The projectors $P_i$ and $P_j$ are orthogonal if $i$ and $j$ are connected in $G$ and
 hence $P_i$ and $\ket{\psi}$ provide a quantum realization achieving the upper bound $\vartheta(G)$.

 The equality $\beta_E=\alpha^*(G)$ follows directly from the definition of
 $\alpha^*$: the  restriction $0 \leq  p_i \leq 1$ is satisfied if and only if the $p_i$ represent probabilities 
 and  the condition that $\sum_{i \in C} p_i \leq 1$ for  any clique $C$ of  $G$ is exactly the demand that the E principle
 be satisfied by  
  the distribution. 
 
\end{dem}

We can also calculate the maximum of general linear functions 
\begin{equation}
S_w=\sum_i w_i p_i, \ \ w_i \geq 0
\label{wsum}
\end{equation}
using the weighted versions of the $\alpha$, $\vartheta$ and $\alpha^*$ \cite{Knuth93, Rosenfeld67},
as shown 
by Cabello, Severini, and Winter in reference \cite{CSW14}.


\begin{ex}[A new version of the $n$-cycle inequalities]
\label{exnewncycle}
 The simplest exclusivity  graph for which $\beta_C < \beta_Q$ is the pentagon \cite{CDLP13}.
 It can be proven by inspection that $\beta_C=2$. The quantum bound  is $\beta_Q=\sqrt{5}$, as shown by 
 Lov\'asz original calculation of $\vartheta(C_5)$ \cite{Lovasz79}.
  The maximum value obtained  with E-distributions  is $\frac{5}{2}$, which can be reached when all events 
  have probability equal
  to $\frac{1}{2}$.
 
When $G$ is any $n$-cycle with $n$  odd, we can also prove by inspection that the classical bound is $\beta_C=\frac{n-1}{2}$.
The quantum bound can also be explicitly calculated, and we have that 
 $\beta_Q=\frac{n\cos\left(\frac{\pi}{n}\right)}{1+\cos\left(\frac{\pi}{n}\right)}$, which is equal to $\sqrt{5}$ for $n=5$. 
 The maximum obtained with E-distributions   is $\frac{n}{2}$, which can be reached when all events 
  have probability equal
  to $\frac{1}{2}$.
 
 If $n$ is even, $C_n$ is a bipartite graph, and the vertices in one bipartition define a maximal  independent set.
 The parts have the same size, and hence
 the classical bound is $\frac{n}{2}$. The distribution that assigns probability $\frac{1}{2}$ to all 
 vertices realizes the bound $\beta_E$, which is then equal to $\beta_C$.
 The quantum bound $\beta_Q$ is sandwiched between $\beta_C$ and $\beta_E$ and hence we conclude that
 $\beta_Q$ is also equal to $\frac{n}{2}$.
\end{ex}

\section{\textsf{The quest for the largest contextuality in nature}}
\label{sectionlargestcontextuality}

The connection of the classical and quantum bounds for noncontextuality inequalities and graph theory
allows one to study the violation of such inequalities focusing only on the graph itself.
To study how quantum representations may differ from classical ones we seek for graphs with 
 ``large''  violations. In this section we show some families of graphs with this behavior and 
present the known results about the growth of both $\alpha(G)$ and 
$\theta(G)$ with the number of vertices of $G$.





The measure of violation we propose is the ratio $\frac{\vartheta(G)}{\alpha(G)}$ as a function of the number of
vertices in the graph $G$, 
which represent the number of possible outcomes (elements of $\mathcal{T}$ in the experiment).  

\subsection{\textsf{The quantum gambler}}
\label{subsectiongambler}

A famous bookmaker accepts all kinds of bets. A gambler brings a preparation device
and a set of measurement devices. The preparation device works on demand, always 
preparing the same  known state. The compatibility structure of the measurement devices is also known, 
and exclusiveness can be directly verified.

A set of events with $n$ vertex-transitive exclusivity graph $G$ is picked. The state is such that all events in this set 
have equal probability $p$. The gambler chooses one of the events 
and bets $c$ units of money that this event will happen. If this is the case, the bookmaker
agrees to pay her
\be\frac{c}{p +\epsilon}\ee
units of money. The value of $\epsilon$ is chosen in such a way that the bookmaker guarantees his profit after many rounds
of the game.

If the bookmaker believes the system to be classical,  the prize will be calculated using 
$p=\frac{\alpha}{n}$. If the gambler is able to arrange the same scenario in a quantum system, $p=\frac{\vartheta(G)}{n}$.
This means that a quantum gambler, playing against a classical bookmaker will increase her profit after many rounds by 
a factor of $\frac{\vartheta(G)}{\alpha(G)}$. Hence the gambler will seek for the scenario where this ratio is as large as 
possible.

\subsection{\textsf{The growth of the ratio $\frac{\vartheta}{\alpha}$}}

An important family of noncontextuality inequalities is the $n$-cycle inequalities, presented in example \ref{exncycle}.
In this case, the compatibility graph  is a cycle with $n$ vertices. If $n$ is odd, the exclusivity graph $G$ is 
 the prism graph of order $n$, $Y_n$,  and if $n$ is even, the exclusivity graph is 
 the M\"obius ladder of order $2n$, $M_{2n}$. These graphs are shown in figure \ref{figncycle}.
If $n$ is odd, 
$$\frac{\vartheta(Y_n)}{\alpha(Y_n)}=\frac{2n\cos\left(\frac{2\pi}{n}\right)}{\left(1+\cos\left(\frac{2\pi}{n}\right)\right)
\left(n-2\right)},$$
 and for $n$ even 
 $$\frac{\vartheta(M_{2n})}{\alpha(M_{2n})}=\frac{2n\left(1 + \cos\left(\frac{2\pi}{n}\right)\right)}{n-2}.$$
 The quantum maximum can be obtained in a system of dimension three for $n$ odd, and four for 
 $n$ even \cite{AQBTC13}. 
 In this case, the quantum maximum approaches the 
classical maximum as the number of vertices $n$ grows. 

Something similar happens for the inequalities shown in example
\ref{exnewncycle}, when the $n$-cycle is used as exclusivity graph, with $n$ odd. In this case
$$\frac{\vartheta(C_n)}{\alpha(C_n)}=\frac{2n\cos\left(\frac{2\pi}{n}\right)}{\left(1+\cos\left(\frac{2\pi}{n}\right)\right)
\left(n-2\right)},$$
and the quantum maximum also approaches the classical bound.

In both cases, the differences between  classical and quantum distributions become
smaller when $n$ grows. We want to find families of graphs with the opposite behavior.
We seek for situations in which the ratio $\frac{\vartheta(G)}{\alpha(G)}$ grows as fast as possible.


First we notice that if we fix $\alpha(G)<k$, there is a limit for the ratio  $\frac{\vartheta(G)}{\alpha(G)}$ .

\begin{teo} For every $k \in \mathbb{N}$ there exists an absolute constant $M_k$ such that 
for any graph $G$ on $v$ vertices with $\alpha(G)< k$, $\vartheta(G) \leq M_kv^{1-2/k}.$

\end{teo}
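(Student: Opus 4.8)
The plan is to push the hypothesis through to the complement and then estimate the Lovász number via its orthonormal-representation (semidefinite) formulation. First I would record the elementary reformulation $\alpha(G)<k \iff \omega(\overline{G})<k$: an independent set of $G$ is a clique of $\overline{G}$, so the assumption says precisely that $\overline{G}$ is $K_k$-free. The target exponent $1-2/k$ is exactly of Tur\'an / K\H{o}v\'ari--S\'os--Tur\'an type, which is the first signal that the $K_k$-free structure of $\overline{G}$ is what must ultimately be fed into an extremal count.

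Next I would use the Gram/semidefinite description of $\vartheta$ equivalent to the definition given in the text,
\[
\vartheta(G)=\max\Big\{\,\big\|\sum_i x_i\big\|^2 \ :\ \sum_i\|x_i\|^2=1,\ \langle x_i,x_j\rangle=0\ \ \forall\,\{i,j\}\in E(G)\Big\}.
\]
Writing $x_i=\sqrt{p_i}\,y_i$ with $\|y_i\|=1$ and $\sum_i p_i=1$, and setting $A=\big(\langle y_i,y_j\rangle\big)_{i,j}$, the constraints say that $A\succeq0$, $A_{ii}=1$, and $A_{ij}=0$ whenever $\{i,j\}\in E(G)$; equivalently, the off-diagonal support of $A$ lies in $E(\overline{G})$. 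Then
\[
\vartheta(G)=\sum_{i,j}\sqrt{p_i p_j}\,\langle y_i,y_j\rangle=q^{\top}Aq,\qquad q=\big(\sqrt{p_i}\big)_i,\ \ \|q\|=1 ,
\]
so that $\vartheta(G)$ is controlled by the largest value of a quadratic form $q^{\top}Aq$ over a \emph{nonnegative} unit vector $q$, where $A$ is a correlation matrix supported on the $K_k$-free graph $\overline{G}$. The role of $K_k$-freeness is transparent here: in any set of $k$ indices at least one pair is a $G$-edge, so the corresponding entry of $A$ vanishes, i.e. every $k\times k$ principal submatrix of $A$ carries a zero off its diagonal.

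The core estimate, and the step I expect to be the main obstacle, is to bound this optimum by $M_k\,v^{1-2/k}$. My plan is a moment (trace) argument: since $A\succeq0$ one has $\big(q^{\top}Aq\big)^{m}\le\lambda_{\max}(A)^{m}\le\mathrm{tr}\big(A^{m}\big)$, and $\mathrm{tr}(A^{m})$ expands as a weighted sum over closed walks of length $m$ in $\overline{G}$ whose entry-weights are all at most $1$ in absolute value. The number of such walks is governed by the density and local structure of the $K_k$-free graph $\overline{G}$, for which the Tur\'an bound $|E(\overline{G})|\le\big(1-\tfrac{1}{k-1}\big)\tfrac{v^{2}}{2}$ together with K\H{o}v\'ari--S\'os--Tur\'an bounds on copies of small complete (bipartite) subgraphs supply the needed control; optimizing the power $m$ against $v$ is what should produce the exponent $1-2/k$ and the absolute constant $M_k$. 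The delicate point is that the crude relaxation $q^{\top}Aq\le\lambda_{\max}(A)$ discards the nonnegativity $q\ge0$ and may be too lossy, so the finish must instead retain the weights $p_i$ and run the counting against the weighted optimum of the genuine $\vartheta$ program. Correctly marrying the positive-semidefiniteness of $A$ with the extremal count for $K_k$-free graphs is the heart of the argument; I would close by remarking that the bound is essentially sharp, with near-extremal instances furnished by complements of pseudorandom $K_k$-free graphs, which both pins down the dependence of $M_k$ on $k$ and shows the exponent cannot be improved.
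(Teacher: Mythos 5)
A point of reference first: the thesis itself contains no proof of this statement; it is quoted as Theorem 5.1 of Alon and Kahale \cite{AK98} (with Konyagin's result covering $k=3$), so your attempt has to stand on its own. On its own terms it is not yet a proof. Your preparatory steps are correct: $\alpha(G)<k$ is equivalent to $\overline{G}$ being $K_k$-free, and the semidefinite characterization $\vartheta(G)=\max\, q^{\top}Aq$, over unit vectors $q\geq 0$ and correlation matrices $A$ (positive semidefinite, unit diagonal) whose off-diagonal support lies in $E(\overline{G})$, is a legitimate restatement of the Lov\'asz number. But everything after that is a plan rather than an argument, and you say so yourself: ``correctly marrying the positive-semidefiniteness of $A$ with the extremal count\dots is the heart of the argument.'' That heart is never supplied, so the proof has a gap exactly where the theorem lives.

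Moreover, the plan you describe cannot be completed as stated. Your chain is $q^{\top}Aq\leq \lambda_{\max}(A)\leq \left(\mathrm{tr}\, A^{m}\right)^{1/m}$, followed by bounding $\mathrm{tr}\left(A^{m}\right)$ by counting closed walks in $\overline{G}$ with all entry-weights at most $1$ in absolute value. For even $m$ this chain is valid for \emph{every} symmetric matrix with unit diagonal, entries in $[-1,1]$ and off-diagonal support in $E(\overline{G})$; positive semidefiniteness is never genuinely used. On that larger class the bound $M_k v^{1-2/k}$ is simply false: take $\overline{G}$ to be the Tur\'an graph $T_{k-1}(v)$ and $A_0=I+(1-\epsilon)M$, where $M$ is its adjacency matrix. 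Then $A_0$ satisfies every constraint your counting sees, while $\left(\mathrm{tr}\, A_0^{m}\right)^{1/m}\geq 1+(1-\epsilon)\lambda_{\max}(M)=\Theta(v)$. Hence any walk-count bound $W_m$ derived from Tur\'an-type density information alone must satisfy $W_m^{1/m}=\Omega(v)$, and no choice of $m$ can produce the exponent $1-2/k$; the method saturates at a linear bound. Retaining the nonnegative weights $q$ does not rescue it, since the Perron eigenvector of $M$ is itself nonnegative, so $A_0$ also defeats the weighted version. (The appeal to K\H{o}v\'ari--S\'os--Tur\'an is likewise misplaced: $K_k$-free graphs such as $T_{k-1}(v)$ contain huge complete bipartite subgraphs, so forbidden-$K_{s,t}$ counts give no leverage.) What excludes $A_0$ is positive semidefiniteness alone, so any correct proof must use it geometrically rather than entrywise; this is precisely what Konyagin and Alon--Kahale do, working directly with the orthonormal representation, e.g.\ via Bessel-type inequalities $\sum_{i\in C}\left\langle \psi | u_i\right\rangle^{2}\leq 1$ over cliques $C$ of $G$ (whose vectors are pairwise orthogonal) and an induction on $k$ --- ingredients invisible to a walk count in $\overline{G}$.
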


The result above is Theorem 5.1 of reference \cite{AK98}. It generalizes the result of
\cite{Konyagin83} for $k=3$, for which $M_3=2^{\frac{2}{3}}$. 
Although there is  no explicit constructions for general $k$,  in \cite{Alon94} the author shows a family of graphs 
with $\alpha=2$
 approaching $\frac{\vartheta }{ \alpha} =v^{1/3}$. 
 The graphs depend on a parameter $r$ that can not be a multiple of $3$. 
The number of vertices is $2^{3k}$. For $r=2$ it is a graph with $64$ vertices and its complement is a graph formed by $16$ 
unconnected squares.
 In this case $\vartheta=\alpha$ and it does not exhibit quantum violation. We
 have computed the adjacency matrix for the complement of the graph we want for $r=4$.
 It has over 2 million edges. We don't know if for $r=4,5$ the corresponding inequalities have quantum violation.
For $r > 6$ we have $\vartheta>\alpha$. These graphs are Cayley graphs and, as a consequence, regular  and vertex-transitive.

If we do not fix the noncontextual bound we can obtain larger violations with simpler graphs, 
for which the number of vertices does not grow so fast.

 \begin{teo}For every $\epsilon > 0$ there is an explicit family of graphs for which $\vartheta \geq \left(\frac{1}{2} - \epsilon\right)v$ and $\alpha <v^{\delta (\epsilon)}$, $\delta (\epsilon)< 1$.
 \end{teo}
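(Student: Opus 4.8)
The plan is to produce the family by an explicit algebraic construction and then to estimate $\vartheta$ and $\alpha$ by completely separate tools, one for each inequality. For the construction I would take the vertices to be the elements of a finite vector space (or finite abelian group) and join two vertices by an edge according to a single translation-invariant relation on their difference, so that the resulting graph $G_\epsilon$ is a Cayley graph and in particular vertex-transitive and regular. The parameter governing the relation is tuned to $\epsilon$. Vertex-transitivity is the structural feature I most want, both because it makes the adjacency matrix diagonalizable by the characters of the group (so every eigenvalue is an explicit character sum) and because it yields the identity $\vartheta(G)\,\vartheta\left(\overline{G}\right)=v$ for such graphs.

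For the lower bound $\vartheta\left(G_\epsilon\right)\geq\left(\frac{1}{2}-\epsilon\right)v$ I would exhibit a concrete feasible point of the optimization defining $\vartheta$: an explicit orthonormal representation $\{\ket{u_i}\}$ of $\overline{G_\epsilon}$ built from the characters of the group together with a handle $\ket{\psi}$, and then check that $\sum_i|\braket{u_i}{\psi}|^2\geq\left(\frac{1}{2}-\epsilon\right)v$. Equivalently, using the vertex-transitive identity, it suffices to bound $\vartheta\left(\overline{G_\epsilon}\right)$ close to $2$ from above, which by the spectral (Hoffman-type) expression for $\vartheta$ of an edge-transitive graph reduces to making the extreme eigenvalue $\lambda_{\min}$ of $G_\epsilon$ as large as possible in absolute value relative to the degree $d$; the relation is chosen precisely so that $-\lambda_{\min}/(d-\lambda_{\min})$ is forced near $\tfrac{1}{2}$.

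For the upper bound $\alpha\left(G_\epsilon\right)<v^{\delta(\epsilon)}$ with $\delta(\epsilon)<1$ I would argue by the polynomial method: an independent set is a family of group elements no two of whose differences lie in the connection set, and I would associate to each such element a low-degree polynomial (or characteristic vector) and show the resulting vectors are linearly independent, so that their number is bounded by the dimension of the ambient space of polynomials. A Frankl--Wilson / Delsarte-type rank count then gives $\alpha\left(G_\epsilon\right)\leq v^{\delta}$, with the exponent $\delta=\delta(\epsilon)$ strictly below $1$ but deteriorating toward $1$ as $\epsilon\to 0$.

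The hard part will be the simultaneous calibration of these two estimates, since they pull against each other: enlarging $|\lambda_{\min}|$ to push $\vartheta/v$ toward $\tfrac{1}{2}$ tends to enlarge the connection set and hence the admissible independent families, which in turn weakens the rank bound on $\alpha$. The technical heart of the argument is therefore to choose the defining relation so that the spectral gap stays wide enough for the $\vartheta$ bound while the associated polynomial space stays small enough for the Frankl--Wilson count to remain sublinear, and to verify that the spectral formula for $\vartheta$ is actually attained rather than merely an inequality. Checking this trade-off, and confirming that $\delta(\epsilon)<1$ survives in the limit, is where essentially all the work lies.
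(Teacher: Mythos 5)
Your proposal is not a proof of this theorem; it is a programme whose decisive steps are all left open. The statement asserts the existence of an \emph{explicit} family, so exhibiting the construction and verifying both estimates for it \emph{is} the mathematical content, and you never do either: the connection set is only said to be ``tuned to $\epsilon$'', and you say yourself that choosing it so that the spectral estimate and the rank bound hold simultaneously ``is where essentially all the work lies''. By contrast, the thesis proves the statement by naming the family (it is Theorem~6.1 of the cited reference \cite{AK98}): $G(q,s)$ has as vertices the $q$-subsets of $\{1,\dots,2q\}$, so $v=\binom{2q}{q}$, with adjacency iff the intersection has exactly $s$ elements, the ratio $s/q$ being tuned to $\epsilon$; the bound $\vartheta\ge\left(\tfrac12-\epsilon\right)v$ comes from an explicit orthonormal representation in dimension $2q$ built from shifted characteristic vectors, and $\alpha< v^{\delta(\epsilon)}$ from a Frankl--Wilson-type restricted-intersection theorem. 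These are Johnson-scheme graphs: vertex-transitive, but not Cayley graphs of an abelian group, which is not an accident.

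Beyond incompleteness, your specific template points in a direction that nearly self-destructs. For a Cayley graph on an abelian group, say on $\mathbb{F}_2^n$, the $\vartheta$ SDP symmetrizes over translations, giving $\vartheta(G)=v\cdot\max\hat f(0)$ over functions $f$ with nonnegative Fourier coefficients, $f(0)=1$, and $f$ vanishing on the connection set $S$. If $\hat f(0)\ge\tfrac12-\epsilon$, then for \emph{every} $z\in S$ the Fourier mass of $f$ on $\{y\ne 0:\ y\cdot z=0\}$ is at most $\epsilon$; at $\epsilon=0$ this forces $S$ into an affine hyperplane $\{z:\ y_0\cdot z=1\}$, so $G$ is bipartite and $\alpha\ge v/2$. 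In other words, pushing $\vartheta/v$ toward $\tfrac12$ by translation-covariant (character-built) representations drives the graph toward bipartiteness, which is exactly the regime of enormous independent sets: the two halves of your argument do not merely ``pull against each other'', the first one nearly destroys the second. Escaping this inside the Hamming scheme requires an extremely specific choice (essentially a single forbidden Hamming weight close to $n$, together with the divisibility hypotheses that the Frankl--Wilson bound needs), and the Johnson-scheme choice made in the paper works precisely because the forbidden intersection $s$ sits far from the typical intersection $q/2$, so the degree-one witness has mean close to $\tfrac12$ with no bipartite-like structure. Identifying such a choice is the whole theorem; nothing in your outline supplies it.
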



This is Theorem 6.1 in \cite{AK98}. For a pair of integers $q>s>0$, $G(q,s)$ will be the graph on
 $v=\left(\begin{array}{c} 2q\\ q\end{array}\right)$ vertices, each vertex corresponding to a $q$-subset of $\{1,2,\ldots , 2q\}$. 
Two vertices are adjacent iff their intersection has exactly $s$ elements.
For small values of $q$ and $s$ we have:

\begin{center}
\begin{tabular}{cccc}
$q$&$s$&$\alpha$&$\vartheta$\\
\hline
$2$&$1$&$2$&$2$\\
$3$&$1$&$4$&$5$\\
$3$&$2$&$4$&$5$\\
$4$&$1$&$17$&$23$\\
$4$&$2$&$10$&$10$\\
$4$&$3$&$14$&$14$\\
$5$&$1$&$\geq 55$&$94,5$\\
$5$&$2$&$\geq 27$&$42$\\
$5$&$3$&$\geq 12$&$18,67$\\
$5$&$4$&$\geq 28$&$42$
\end{tabular}
\end{center}


For this family, the authors provide an orthonormal representation that achieves the lower bound on $\vartheta$ in dimension $2q$. This orthonormal representation provides a state and  measurements that we can use to achieve this amount of violation.


Although these are the best explicit constructions, it is already known that they do not reach the maximum violation $\frac{\vartheta}{\alpha}$ as a 
function of the number of vertices in the graph \cite{Feige95}.

\begin{teo} For every $\epsilon > 0$ there is a graph $G$ on $v$ vertices such that $\frac{\vartheta(G)}{\alpha(G)}> v^{1-\epsilon}$.
\end{teo}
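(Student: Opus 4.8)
The plan is to produce the desired graph probabilistically, starting from a single random graph and then amplifying the gap between $\vartheta$ and $\alpha$ by means of graph products. The base object is the Erd\H{o}s--R\'enyi random graph $H = G(n,1/2)$. Two classical facts about it hold with high probability: its independence number is only logarithmic, $\alpha(H) = \Theta(\log n)$, while its Lov\'asz number is polynomially large, $\vartheta(H) = \Theta(\sqrt{n})$. By itself $H$ therefore already exhibits a ratio $\vartheta(H)/\alpha(H) = \Theta(\sqrt n/\log n) \approx n^{1/2}$, but the exponent $1/2$ falls short of the target $1-\epsilon$, so a boosting step is needed.

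First I would take the $k$-fold disjunctive (co-normal) product and then pass to a random induced subgraph. Concretely, sample $N$ vertices of $H^{*k}$ independently and uniformly at random — each a random $k$-tuple of vertices of $H$ — and let $G$ be the induced subgraph on them. Two properties must be established with high probability. On the combinatorial side, since two independent uniform tuples are non-adjacent in the disjunctive product exactly when they are non-adjacent in every one of the $k$ coordinates, the probability of non-adjacency is $2^{-k}$; a first-moment (union-bound) estimate over all candidate independent sets then gives $\alpha(G) = O((\log N)/k)$. On the geometric side, tensoring an optimal orthonormal representation of $H$ with itself $k$ times and restricting to the sampled vertices yields an orthonormal representation of $\overline{G}$; evaluating it against the tensor-power handle and using concentration shows $\vartheta(G) \gtrsim N\,(\vartheta(H)/n)^{k} \approx N\, n^{-k/2}$.

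Then I would optimize the two free parameters. Writing $v = N$ for the number of vertices, the ratio obeys $\vartheta(G)/\alpha(G) \gtrsim k\,N\,n^{-k/2}/\log N$. Choosing $N = n^{k/\epsilon}$ keeps $\alpha(G)$ of order $(\log n)/\epsilon$ while forcing $\vartheta(G) \gtrsim n^{(k/2)(2-\epsilon)/\epsilon}$, whose exponent strictly exceeds that of $v^{1-\epsilon} = n^{k(1-\epsilon)/\epsilon}$ by a polynomial margin $n^{k/2}$. Since $\alpha(G)$ grows only polylogarithmically in $v$, taking $k$ large enough makes this polynomial margin dominate the sub-polynomial factors and yields $\vartheta(G)/\alpha(G) > v^{1-\epsilon}$, as required.

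The hard part will be the lower bound on $\vartheta(G)$: the Lov\'asz number is not monotone in the favourable direction under passage to induced subgraphs, so one cannot simply inherit a large value from $H^{*k}$. The argument must instead exhibit an explicit feasible orthonormal representation on the randomly chosen vertices and prove, via a concentration inequality, that its objective value stays close to its expectation $N(\vartheta(H)/n)^k$ with high probability; simultaneously one must ensure the independence-number tail bound holds on the same sample. Coordinating these two high-probability events — keeping $\alpha$ small and $\vartheta$ large at once on one and the same random graph — is the genuine obstacle, and it is precisely what the randomized-graph-product technique of reference \cite{Feige95} is designed to handle.
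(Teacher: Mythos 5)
The paper offers no proof of this statement at all: it simply cites Feige \cite{Feige95}, whose argument is indeed the randomized-graph-product technique you invoke. Your skeleton is therefore the right family of ideas, and the $\vartheta$ side of your sketch (tensor the optimal orthonormal representation and handle of the base graph $H$, restrict to the sample, and apply a Chernoff bound to the sum of i.i.d.\ terms in $[0,1]$, giving $\vartheta(G)\gtrsim N(\vartheta(H)/n)^k$) is essentially sound. The two quantitative pillars, however, both fail. The first gap is the independence-number estimate $\alpha(G)=O((\log N)/k)$. The pairwise non-adjacency events of random tuples share the coordinate randomness and are positively correlated, so you may not raise $2^{-k}$ to the power of the number of pairs. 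The correct first moment (the one Berman--Schnitger and Feige actually use) exploits the fact that every independent set of the co-normal power is contained in a product $I_1\times\cdots\times I_k$ of maximal independent sets of $H$, which gives $\Pr[t \mbox{ tuples independent}]\leq M^k(\alpha(H)/n)^{kt}$, with $M$ the number of maximal independent sets of $H$ (about $n^{\Theta(\log n)}$ for $G(n,1/2)$). For any union bound to close one needs $N<(n/\alpha(H))^k$. Your choice $N=n^{k/\epsilon}$ is far outside this range: a single product $I_1\times\cdots\times I_k$ already contains about $N(\alpha(H)/n)^k=n^{k(1/\epsilon-1)}(2\log_2 n)^k$ sampled vertices in expectation, all pairwise non-adjacent, so $\alpha(G)$ is polynomially large in $v$ and your graph in fact satisfies $\vartheta(G)/\alpha(G)\lesssim v^{\epsilon}$ --- the opposite of the target.

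The second, deeper gap is the choice of base graph. Even with the corrected union bound and an admissible $N$, optimizing the parameters yields an exponent of the form $1-\log\bigl(n/\vartheta(H)\bigr)/\log\bigl(n/\alpha(H)\bigr)+o(1)$, so the amplification only reproduces the ``log-scale ratio'' of the base; it does not improve it. For random graphs of \emph{any} density one has $\vartheta\approx\sqrt{n/p}$ and $\alpha\approx(2/p)\ln(np)$, hence $\log(n/\vartheta)\approx\frac{1}{2}\log(n/\alpha)$, and the method is capped at $v^{1/2-o(1)}$ no matter how you tune $k$ and $N$: starting from $G(n,1/2)$ you can never reach $v^{1-\epsilon}$ for $\epsilon<1/2$. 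What is missing is a base graph with $\vartheta(H)=\Omega(n)$ (linear in the number of vertices) and $\alpha(H)\leq n^{\delta}$ with $\delta<1$ --- precisely the Frankl--Wilson-type graphs $G(q,s)$ of the theorem quoted immediately before this one, with $\vartheta\geq(\tfrac{1}{2}-\epsilon)v$ and $\alpha<v^{\delta(\epsilon)}$. Feeding such a base into the (corrected) randomized-product machinery gives exponent $1-O(1)/\bigl((1-\delta)\log n\bigr)\rightarrow 1$, and choosing $n$ and then $k$ large enough yields $\vartheta(G)/\alpha(G)>v^{1-\epsilon}$. So: right technique, and you correctly identified where the difficulty lies, but both the $\alpha$ union bound and the base graph must be replaced for the argument to work.
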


\begin{teo} There exists an infinite family of graphs on $v$ vertices for which $\frac{\vartheta(G)}{\alpha(G)}> \frac{v}{2^{c\sqrt{\log(v)}}}.$
\end{teo}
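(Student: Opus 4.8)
The plan is to prove this by a \emph{probabilistic, non-constructive} argument. Since the statement only asserts existence of an infinite family, and since the explicit constructions of the preceding theorems cannot push the exponent of $v$ all the way to $1$, randomness is essentially forced. The natural seed is the Erd\H{o}s--R\'enyi random graph $H=G(n,1/2)$, which with high probability satisfies $\alpha(H)=(2+o(1))\log_2 n$ and $\vartheta(H)=\Theta(\sqrt{n})$. A single such graph already separates the Lov\'asz number from the independence number by a factor $\Theta(\sqrt{n}/\log n)\approx v^{1/2}$, which only matches the earlier bound $v^{1-\epsilon}$ for $\epsilon$ near $1/2$. The entire difficulty is to \emph{amplify} this modest gap until the ratio approaches $v$ up to the subpolynomial loss $2^{c\sqrt{\log v}}$.

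The amplification tool I would use is the \emph{randomized graph product}. First I would invoke Lov\'asz's multiplicativity of the theta function under the strong product $\boxtimes$, namely $\vartheta(G\boxtimes H)=\vartheta(G)\,\vartheta(H)$, so that the $k$-fold strong power obeys $\vartheta(H^{\boxtimes k})=\vartheta(H)^k$. Rather than taking the full power (which has $n^k$ vertices and merely preserves the base exponent), I would let $G$ be the subgraph of $H^{\boxtimes k}$ induced by $v$ tuples drawn independently and uniformly from $V(H)^k$. The independence number is then controlled by a first-moment estimate: two random tuples are adjacent in $H^{\boxtimes k}$ with probability $\approx 2^{-k}$, so the expected number of independent $t$-subsets is at most $\binom{v}{t}(1-2^{-k})^{\binom{t}{2}}$, which drops below $1$ once $t=\Omega(2^{k}\log v)$; hence $\alpha(G)=O(2^{k}\log v)$ with high probability. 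For the Lov\'asz number one works on the complement, using $\vartheta(G)\,\vartheta(\overline{G})\ge v$ together with the sandwich inequality $\vartheta(\overline{G})\le\chi(G)$, so that a good colouring of the sparse graph $G$ yields a lower bound on $\vartheta(G)$.

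The parameters would then be balanced so that both invariants live at the scale $2^{O(\sqrt{\log v})}$. With $k=\Theta(\sqrt{\log_2 v})$ the independence bound becomes $\alpha(G)=2^{O(\sqrt{\log v})}$, and choosing the base size so that $\log_2 n=\Theta(\sqrt{\log_2 v})$ keeps the number of product factors and the base graph in the matching regime $\log n\approx k\approx\sqrt{\log v}$. This joint scaling is precisely what produces the exponent $\sqrt{\log v}$ appearing in the statement; letting $v\to\infty$ then yields an infinite family with $\vartheta(G)/\alpha(G)>v/2^{c\sqrt{\log v}}$.

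The hard part will be making the two estimates compatible at the \emph{same} sample size. The independence bound is the easy direction; the delicate step is certifying that, after random sampling, $\vartheta(G)$ stays within a factor $2^{O(\sqrt{\log v})}$ of $v$. A naive certificate---restricting the tensor orthonormal representation of $\overline{H^{\boxtimes k}}$ to the sampled vertices---only gives $\vartheta(G)$ of order $v\,n^{-k/2}$, which loses on $\vartheta$ exactly what it gains on $\alpha$ and recovers merely the $v^{1/2}$ gap. Beating this requires the sharper concentration analysis of the randomized product, coupling the sample size jointly to $k$ and $n$ and extracting the lower bound on $\vartheta(G)$ through the complement rather than through a single feasible representation. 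This probabilistic bookkeeping is the technical heart of the theorem and is carried out in \cite{Feige95}, whose randomized-graph-product method the proof follows.
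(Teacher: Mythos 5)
The first thing you should know is that the paper contains no proof of this statement at all: the theorem is quoted from Feige's work on randomized graph products \cite{Feige95}, and the text immediately following it concedes that the known proofs ``are based on the probabilistic method and there is no explicit construction approaching these lower bounds'' \cite{AS04}. Your proposal names exactly that method and that source, and your parameter heuristic ($k \approx \log_2 n \approx \sqrt{\log_2 v}$, so that a loss of the form $2^{O(k)}\cdot\mathrm{poly}(n)$ becomes $2^{O(\sqrt{\log v})}$) is the correct explanation of where the exponent $\sqrt{\log v}$ comes from. So there is no divergence of route to report: like the paper, you ultimately rest on \cite{Feige95}.

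If, however, the sketch is meant to stand on its own, two of its stated steps fail, and they are precisely the two estimates that carry the theorem. First, your route to the lower bound on $\vartheta(G)$, namely $\vartheta(G) \geq v/\vartheta(\overline{G}) \geq v/\chi(G)$, is self-defeating: on any graph for which your independence bound holds one has $\chi(G) \geq v/\alpha(G) \geq v/2^{O(\sqrt{\log v})}$, so this inequality can never certify more than $\vartheta(G) \geq 2^{O(\sqrt{\log v})}$, hopelessly far from the required $\vartheta(G) \geq v/2^{O(\sqrt{\log v})}$. Thus nothing in the proposal actually beats the $\sqrt{v}$ barrier that you yourself identify for the naive restricted-representation certificate; the step that does beat it is exactly the one you defer to \cite{Feige95}. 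Second, the first-moment bound $\binom{v}{t}(1-2^{-k})^{\binom{t}{2}}$ treats the $\binom{t}{2}$ adjacency events as independent, which is false in your regime: with $v \approx n^k$ and $n = 2^{\Theta(\sqrt{\log v})}$ there are vastly more sampled tuples than vertices of $H$, so coordinate collisions are the norm, and indeed at $v = n^k$ the claimed conclusion $\alpha(G)=O(2^k\log v)$ contradicts $\alpha(H^{\boxtimes k}) \geq \alpha(H)^k = (2\log_2 n)^k$. So controlling $\alpha$ of the sample is not ``the easy direction'' in the regime you need. The honest summary is that you have correctly identified the method, the source, and the parameter balancing, but both quantitative claims driving the result are deferred to Feige's lemmas --- which is also all the paper itself does.
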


Although the results above prove the existence of families with larger ratio then the ones considered above, 
its proofs are based on the probabilistic method and there is no explicit construction approaching these lower bounds
\cite{AS04}. 
It is also not known if these bounds are tight.

It is interesting to notice that the large growth of the ratio $\frac{\vartheta}{\alpha}$ was bad news for research in
graph theory. While $\vartheta(G)$ is easy to compute, other quantities such as the independence number and 
the Shannon capacity of the graph are hard to calculate in general and both are upper bounded by $\vartheta(G)$
\cite{Lovasz79}.
A large growth of  $\frac{\vartheta}{\alpha}$ shows that the bound for $\alpha$ 
is far from being tight, and hence this number can not be 
used in general as a good approximation to the independence number.

As the study of these families may help us to understand how quantum distributions can go beyond the noncontextual ones, 
we believe that there may be some practical applications to high violations of noncontextuality inequalities. 
As an example, we conjecture that there may be a connection between these large violations and the certification of 
randomness in the data obtained in the experiments \cite{PAMGMMOHLMM10,UZZWYDDK13}.

\section{\textsf{Final Remarks}}
\label{sectionfinalncinequalities}

In this chapter we have discussed a  way of proving the impossibility of noncontextual hidden-variable models.
The  set of noncontextual distributions is a polytope and hence can be described by a finite set of 
linear inequalities, violated by some quantum distributions, which proves that the quantum statistics can not be reproduced by 
these models in all situations.

The first approach to noncontextuality we have discussed is through the compatibility graph, which coincides 
with the usual approach to quantum contextuality (as can be seen in appendix \ref{chaptercontextuality}).
In this case, an experimentalist is given
a set of possible measurements to perform in a physical system, and the compatibility structure of this set is encoded in 
the \emph{compatibility graph} of the scenario. The probability distributions for each context are collected to form
an empirical model with the no-disturbance property. The set of noncontextual distributions is a polytope
and the quantum set is in general larger, as proven by the fact that some quantum distributions do not satisfy all 
noncontextuality inequalities
in the H-description of the noncontextual set. 

The mathematical formalism of this scenario can be translated into a 
sheaf-theoretic language, which provides a characterization of the phenomena of 
contextuality in terms of \emph{obstructions to the existence of global sections in a presheaf}, which opens the door
to the use of the methods of sheaf theory to the study of contextuality.

When all coefficients of the inequality are equal to one, 
the local and quantum bounds for a noncontextuality inequality can be found with the help of another graph, the exclusivity
graph of the inequality. The classical bound is equal to the independence number of the exclusivity graph and the quantum bound 
is upper bounded by the Lov\'asz number of this graph. Many important inequalities can be written in this form, including the $n$-cycle inequalities of 
example \ref{exncycle}. The weighted versions of these graph functions can be used to
calculate the classical and quantum bound when the coefficients are not all equal to one, but we will not consider this case here.
 We refer to \cite{CSW14, Knuth93} for more details.
 
 Another perspective to contextuality is given by the exclusivity graph approach. We start with  the exclusivity 
graph $G$, where  each  vertex $i$  represents an event, a transformation $P_i \in \mathcal{T}$ in a
probabilistic model. If
$ (i,j) \in E(G)$  the events $i$ and $j$ are exclusive, that is, there is a measurement among whose outcomes are 
$P_i$ and $P_j$. 
The main difference between this approach and the compatibility graph approach is that in this case we make no restriction in 
the compatibility scenario leading to the exclusivity structure of the events.

In this new perspective, the noncontextuality assumption means that the value associated to a projector $P$
by the hidden-variable model is independent of the other projectors used to define the measurement. As we have seen, 
the same transformation corresponds to an outcome of several different measurements. 
Then, whenever $P$ corresponds to an outcome of different measurements $\mathrm{ M}_1, \mathrm{ M}_2, \ldots, 
\mathrm{ M}_n$, a noncontextual hidden-variable model assigns the  outcome corresponding to $P$ to some
$\mathrm{ M}_i$ if and only if it does for all other $\mathrm{M}_j$.

The set of noncontextual distributions is once more a polytope, contained in the set of quantum distributions which is 
generally larger. It can be described by a finite set of noncontextuality inequalities, violated by quantum distributions in 
many situations.

When all coefficients of the inequality are equal to one, 
the local, quantum and generalized bounds for the noncontextuality inequality can be found using only the exclusivity
graph of the inequality. The classical bound is equal to the independence number of the exclusivity graph and the quantum bound 
is \emph{equal} to the Lov\'asz number of this graph. In this case we have an equality between the
quantum bound and the Lov\'asz number because we do not have
 extra restrictions imposed by
a specific compatibility structure.

The most general distributions we consider have to satisfy the 
Exclusivity principle, and for this kind of distribution the bound is equal to the fractional packing number of
the exclusivity graph. This principle will be used later on in chapter \ref{chapterlovasz} in our attempt to understand why 
quantum theory is not \emph{more noncontextual} then it is. 

Many important inequalities can be written in this form, including the $n$-cycle inequalities of 
example \ref{exnewncycle}.  Once more, the weighted versions of these graph functions can be used to
calculate the bounds when the coefficients are not all equal to one, but we will also not consider this case here.
 We refer to \cite{CSW14, Knuth93} for more details.
 
 We believe that besides the importance for the foundations of quantum theory,
 large violations of noncontextuality inequalities may have practical applications such as
 amplification of randomness. We have presented the known results about the growth of the ratio
 $\frac{\alpha(G)}{\vartheta(G)}$, seeking for the families of graphs for which this ratio grows as fast as 
 possible. Unfortunately, many of the known results are based on the probabilistic method and there is no explicit 
 construction of the graphs or the explicit construction is so complicated that it makes any experimental implementation
 impossible.

\chapter{\textsf{What  explains the  Lov\'asz bound?}}
\label{chapterlovasz}

\epigraph{\begin{center}
\textit{\small{If the truth be told, few physicists have ever really felt comfortable with quantum theory.}}\end{center}}{Philip Ball, \cite{Ball13}}

The mathematical formulation of quantum theory is almost one century old and during this time 
a number of brilliant scientists around
the world have built a quite good knowledge about it, 
both on the theoretical aspects and experimental control of quantum systems.
``Physicists are capable of making stunningly accurate calculations about
molecular structure, high-energy particle collisions, semiconductor behavior, spectral emissions and much more'' \cite{Ball13}.
They learned how to manipulate quantum systems for information processing.
They know a lot about the structure of matter and how to use it for our purposes. This certainly has a great
impact on the development of current technology.

From the practical point of view we may say that physicist have a good relationship with quantum theory.
But, just as Einstein, Podolsky and Rosen in 1935, you can get in serious trouble when you try to understand the 
meaning of the mathematical objects, specially if you try to apply the reasoning of classical physics we are used to.

This situation led many people to adopt the way of thinking known as \emph{Copenhagen interpretation}.
According to this line of thought,  the weirdness of quantum theory reflects fundamental limits on what can be known 
about nature
and we just have to  accept  it. Quantum theory should
not be understood but seen just as a tool to get practical results. As famously phrased by  
 David Mermin, physicist should ``shut up and calculate''\cite{Mermim89}.

Not everyone is happy with this interpretation, including Mermim himself \cite{Mermim14}. Physics is not just about getting practical results, it is 
also about \emph{understanding} how nature behaves.
Since the EPR vs Bohr debate, many have tried to understand (or question, like EPR) 
the abstract formulation of quantum theory from more
compelling physical arguments. This is one of the most seductive scientific
challenges in recent times: deriving quantum theory
 from simple physical principles.

 The starting point is assuming general probabilistic theories allowing for probability distributions
that are more general than those that arise in quantum theory, and
the goal is to find principles that pick out quantum theory from this
landscape of possible theories. There are diverse ideas on how to do this, and  at least three different approaches to the
problem stand out.

The first one consists of reconstructing quantum theory as a purely
operational probabilistic theory that follows from some
sets of axioms. The idea is to demolish the abstract entities and start again. Imposing a small number of 
reasonable physical principles, they manage to prove that the only consistent probabilistic theory 
is quantum. 
Although really successful, this approach does not resolves the issue completely, specially because some 
of the principles imposed do not sound so natural. This ``unsatisfaction'' is very well phased by Chris Fuchs \cite{Fuchs11}:

\begin{quote}There is no doubt that this is invaluable work, particularly for our understanding of the intricate 
connections between so many quantum information protocols. But to me, it seems to miss the mark for an ultimate 
understanding of quantum theory; I am left hungry. I still want to know what strange property of matter forces this 
formalism upon our information accounting. 
I would like to see an axiomatic system that goes for the weirdest part of quantum theory.\end{quote}

The second approach to the problem goes in this direction. Instead of trying to reconstruct quantum theory, the idea is to 
understand what physical principles explain one of the weirdest part of quantum theory: nonlocality.
 Many different principles have been proposed, which we left for Appendix  \ref{chaptertsirelson}.

The third approach consists of identifying principles that explain the set of quantum contextual correlations  without restrictions imposed by
a specific experimental scenario. 
The belief that identifying the physical principle responsible for quantum contextuality can be more successful than previous
approaches is based on two observations. On one hand, when focusing on quantum contextuality we are just considering a 
natural extension of quantum nonlocality which is free of certain restrictions (composite systems, space-like separated tests 
with multiple observers, entangled states) which play no role in the rules of quantum theory, although they are crucial for many important applications, specially in communication 
protocols (see, for example, references  \cite{wikiquantumcryptography, HHHH09,BBCJPW93} and other references therein), 
and  played an important role in the 
historical debate on whether or not quantum theory is a complete theory.

On the other hand, it is based on the observation that, while calculating the maximum value of quantum correlations for 
nonlocality scenarios is a mathematically complex problem (see \cite{PV10} to see how complex is to get the quantum maximum 
for  a simple inequality like $I_{3,3,2,2}$), calculating the maximum contextual value of quantum correlations for an 
{\em arbitrary} scenario characterized by its exclusivity graph  is simple: as we proved in section 
\ref{sectionexclusivityinequalities}, the maximum quantum 
contextuality is given by the Lov\'asz number  of its exclusivity graph, which is the solution  of a semidefinite program
\cite{Lovasz95}.
Indeed, from the graph 
approach perspective, the difficulties in characterizing quantum nonlocal correlations are due to the 
mathematical difficulties associated to the extra constraints resulting form enforcing a particular labeling of the 
events of a exclusivity structure in terms of parties, local settings, and outcomes \cite{SBBC13}, rather than a 
fundamental difficulty related to the principles of quantum theory.

Within this line of research, the most promising candidate for being {\em the} fundamental principle of quantum contextuality 
is the Exclusivity principle,   which can be stated as follows (see principle \ref{defiexclusivityprinciple}): 

\begin{center}The sum of the probabilities of a set of pairwise 
exclusive events cannot exceed~1.
\end{center}

The Exclusivity principle was suggested by the works of Specker \cite{Specker60} and Wright 
\cite{Wright78} and used in \cite{CSW10} as an upper bound for quantum contextuality. 
 However, its fundamental importance for QM 
was conjectured long before \cite{Specker09}. It was promoted to a possible fundamental principle 
by the observation that it explains the maximum quantum violation of the simplest noncontextuality inequality, as we will
see in section \ref{sectionpentagon}. It also explain the quantum maximum for many other 
inequalities and rules out nonlocal boxes  in some important Bell scenarios (see section \ref{sectionlocalorthogonality}). 
The Exclusivity principle, when applied only to Bell scenarios is called \emph{local 
orthogonality} \cite{FSABCLA12}. However, {\em with this extra restriction}, the Exclusivity principle cannot single out some 
quantum nonlocal correlations \cite{FSABCLA12}.

By itself, the Exclusivity principle 
singles out the maximum quantum value for some Bell and 
noncontextuality
inequalities \cite{Cabello13}. According to the results of section \ref{sectionexclusivityprinciple} this happens
whenever $\vartheta(G)=\alpha^*(G)$. We can get better bounds if we apply the E principle to more sophisticated scenarios.
When applied to the OR product of two copies of the exclusivity graph, which physically may be seen as two independent realizations of the same experiment,
the Exclusivity principle singles out
the maximum quantum value for experiments whose exclusivity graphs are vertex-transitive and self-complementary 
\cite{Cabello13},
which include the simplest noncontextuality inequality, namely the  KCBS inequality presented in example \ref{exnewncycle}.
Moreover, either applied to two copies of the exclusivity graph of the CHSH
inequality 
or of a simpler inequality, the Exclusivity principle excludes the so called PR 
boxes  and provides an upper bound to the maximum violation of the CHSH inequality which is 
close to the Tsirelson bound \cite{FSABCLA12, Cabello13} (see appendix \ref{chaptertsirelson}). 
In addition, when applied to the OR product of an infinite number of copies, 
there is strong evidence that the Exclusivity principle singles out the maximum quantum violation of the 
noncontextuality inequalities whose 
exclusivity graph is
the complement of odd cycles on $n \ge 7$ vertices \cite{CDLP13}. 
Indeed, it might be also the case that, when applied to an infinite number of copies,
the Exclusivity principle singles out the Tsirelson bound of the CHSH inequality \cite{FSABCLA12,Cabello13}.

Another evidence of the strength of the Exclusivity principle was recently found by Yan \cite{Yan13}. 
By exploiting Lemma~1 in \cite{Lovasz79},
Yan has proven that, {\em if all correlations predicted by quantum theory for an experiment with exclusivity graph $G$ are 
reachable in nature,} then the Exclusivity principle singles out the {\em maximum} value of the correlations produced by 
an experiment whose exclusivity graph is the complement of $G$, denoted as $\overline{G}$.

We recently proved three stronger consequences of the E principle \cite{ATC14}. 
The Exclusivity principle singles out the \emph{entire set of quantum correlations} 
associated to any exclusivity graph assuming the set of quantum correlations for the complementary graph. 
Moreover,  for self-complementary graphs, the Exclusivity principle, {\em by itself} (i.e., without further assumptions), 
excludes any set of correlations strictly larger than the quantum set. Finally,  for vertex-transitive graphs, 
the Exclusivity principle singles out the maximum value for the quantum correlations assuming only the quantum maximum for 
the complementary graph. 
These results show that the Exclusivity principle goes beyond any other proposed principle towards the objective of 
singling out quantum correlations.

In this chapter we will prove all these results in detail. In section \ref{sectionexclusivityprinciple} we review the 
 noncontextuality inequalities under consideration, the definition of the exclusivity principle and other important concepts.
 In section \ref{sectionpentagon} we explain how the principle applied to two copies of the pentagon singles out
 the quantum maximum for this graph. In section \ref{sectioncomplement} we show how the principle can be used to
 connect the set of quantum correlations for $G$ and $\overline{G}$, and how this connection is sufficient for 
 ruling out any distribution outside the quantum set in many important cases. In \ref{sectionoperations} we show
 that something similar can be done with graph operations other then complementation and as a consequence
 we prove that the exclusivity principle explains the quantum maximum for all vertex transitive graphs with
 $10$ vertices, except two. 
 We end with our final remarks
 in \ref{sectionfinallovasz}.
Consequences of the E principle under Bell-scenario restrictions are outside the scope of the present thesis (and chapter), but a small introduction can be found in section 
\ref{sectionlocalorthogonality}.

\section{\textsf{The Exclusivity Principle}}
\label{sectionexclusivityprinciple}

First, let us briefly review some of the definitions and concepts introduced in section \ref{sectionexclusivityinequalities}.
We start with an exclusivity  graph $G=(V,E)$.
Each vertex $i$ of $G$ corresponds to 
a transformation $T_i \in \mathcal{T}$ in a physical system and two vertices are connected by an edge if they are exclusive, 
that is, if they can be two different outcomes of 
the same measurement. For a given state of the system, there is a probability $p_i$ associated to each 
vertex $i \in V$. We collect all these probabilities in a vector $p \in \mathbb{R}^{|V|}$. The set of possible vectors depends 
on 
the physical theory used to describe the system and we will see how the Exclusivity principle (principle \ref{defiexclusivityprinciple}) constrains this set.

Could this principle be the reason for quantum theory not be more noncontextual?
Can it explain the quantum maximum for noncontextuality inequalities? It is not clear what happens in general, but for a special class of inequalities (or graphs)
many results supporting a positive answer have been found.
We will apply the E principle for sums of the type

\begin{equation}\label{eqNCIneq}
S_G = \sum _{i\in V}  p_i,
\end{equation}
that is, we set $\gamma_i=1$ for all $i$ in definition \ref{defiexclusivityinequality}.
For non-contextual distributions we know that 
 \begin{equation}\label{eqNC}
S_G  \stackrel{\mbox{\tiny{NC}}}{\leq} \alpha\left(G\right),
\end{equation}
while for quantum distributions we have
\begin{equation}\label{QIneq}
S_G \stackrel{\mbox{\tiny{Q}}}{\leq} \vartheta\left(G\right),
\end{equation}
where $\vartheta(G)$ is the Lov\'asz number of $G$.

The first question is if the Exclusivity principle is capable of explaining the quantum bound $\vartheta(G)$.
For many different cases, a lot of them with special importance for the study of contextuality, this is indeed the case.
A much more ambitious question is 
if this principle is enough to single out the set of quantum distributions and not just the quantum maximum. Again, we
are able to exhibit  a important family of graphs for which this is true.

\section{\textsf{The Pentagon}}
\label{sectionpentagon}

The Exclusivity principle singles out the quantum maximum for the  simplest noncontextuality inequality.

\begin{teo}[Cabello, 2013]
For $G=C_5$, the maximum value for $S_G$ allowed by theories satisfying the Exclusivity principle is $\sqrt{5}$, which is also 
the maximum for quantum distributions. 
\end{teo}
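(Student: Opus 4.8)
The plan is to establish the two halves of the equality $\max S_G=\sqrt5$ separately, and to spend essentially all the effort on the upper bound. Attainability is immediate from results already in hand: by Theorem~\ref{teoqboundexclusivity} the quantum maximum of $S_G$ over $\mathcal{E}_Q(C_5)$ equals $\vartheta(C_5)=\sqrt5$ (Lov\'asz's value, recorded in Example~\ref{exnewncycle}), and by Theorem~\ref{teoquantume} every quantum distribution already obeys the E principle, so $\sqrt5$ is realized by an E-principle distribution and the E-principle maximum is $\ge\sqrt5$. The content of the theorem is therefore the reverse inequality $S_G\stackrel{\text{\tiny E}}{\le}\sqrt5$.

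The first thing to stress is that a single-copy application of the principle is \emph{not} enough: the only maximal cliques of the pentagon are its edges, giving the constraints $p_i+p_{i+1}\le1$ whose optimum is $\alpha^*(C_5)=\tfrac52>\sqrt5$, exactly as noted in Example~\ref{exnewncycle}. The key device is the activation effect, applied to \emph{two independent realizations} of the pentagon experiment. By Assumption~\ref{asindependentstates} the product state is allowed, and each compound event $(i,j)$ (outcome $i$ in the first copy, $j$ in the second) is a transformation in $\mathcal{T}$ by Corollary~\ref{corcomposition}, carrying probability $P_{(i,j)}=p_ip_j$. Its exclusivity graph is the OR product $C_5\ast C_5$ on vertex set $\mathbb{Z}_5\times\mathbb{Z}_5$, in which $(i,j)\sim(i',j')$ whenever $i\sim i'$ \emph{or} $j\sim j'$ in $C_5$, since two product events are exclusive as soon as one of their factors is. Because the theory obeys the E principle for \emph{all} its experiments — in particular the composite one — the product distribution must satisfy it on $C_5\ast C_5$.

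The heart of the argument is then a purely combinatorial claim: the $25$ vertices of $C_5\ast C_5$ partition into five cliques of size five. I would exhibit the partition explicitly as $K_c=\{(i,\,2i+c \bmod 5):i\in\mathbb{Z}_5\}$, $c\in\mathbb{Z}_5$, which visibly covers $\mathbb{Z}_5\times\mathbb{Z}_5$ disjointly. To see each $K_c$ is a clique I would check that any two distinct members are adjacent: if $i'-i\equiv\pm1$ the first coordinates are adjacent, while if $i'-i\equiv\pm2$ then $2(i'-i)\equiv\mp1$, so the second coordinates are adjacent — precisely the self-complementarity of $C_5$ made concrete (these $K_c$ are the five-element independent sets of $C_5\boxtimes C_5=\overline{C_5\ast C_5}$ underlying the Shannon capacity $\sqrt5$). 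Each $K_c$ being a set of pairwise exclusive events, the E principle gives $\sum_{(i,j)\in K_c}p_ip_j\le1$, and summing over the five classes yields
\begin{equation}
S_G^2=\Big(\sum_{i}p_i\Big)^2=\sum_{i,j}p_ip_j=\sum_{c\in\mathbb{Z}_5}\ \sum_{(i,j)\in K_c}p_ip_j\le 5,
\end{equation}
hence $S_G\le\sqrt5$, which closes the bound.

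I expect the main obstacle to be conceptual rather than computational: recognizing that the principle must be \emph{activated} by tensoring two copies and that the correct joint structure is the OR product, so that the effective clique value collapses from the trivial $\alpha^*(C_5)=\tfrac52$ to the Shannon-type value $5$. Once the product graph is in place the only genuine work is producing the balanced clique partition $K_c$, which I would read off from the standard independent sets of $C_5\boxtimes C_5$. A secondary point to state carefully is the meaning of ``a theory satisfying the E principle'': it must hold for every experiment of the theory, including the independent composite, and this is exactly what licenses applying the principle to $P_{(i,j)}$ in the first place.
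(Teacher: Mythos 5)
Your proof is correct, and it takes a genuinely different route from the paper's. The paper also tensors two independent copies, but it uses self-complementarity only once: it labels the second copy by the complement $\overline{C_5}\cong C_5$, so that the five diagonal events $f_i=e_i\wedge e'_i$ form a single clique, applies the E principle to that one clique to get $\sum_i p(e_i)p(e'_i)\le 1$, and then invokes the vertex-transitivity symmetrization lemma (Lemma \ref{lemmatransitive}) to assume the maximizer is the constant distribution $p(e_i)=p(e'_i)=P$, whence $5P^2\le 1$. Your argument replaces the symmetrization step by a combinatorial one: your five cliques $K_c=\{(i,2i+c)\}$ are precisely the translates of that same diagonal (note $i\mapsto 2i$ is exactly the isomorphism $C_5\to\overline{C_5}$ the paper exploits), and since they partition the $25$ vertices of the OR product, summing the five clique constraints gives $S_G^2=\sum_{i,j}p_ip_j\le 5$ for \emph{every} E-principle distribution, with no appeal to symmetry or to where the maximum sits. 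What each buys: your clique-partition version is more self-contained and yields the pointwise inequality $S_G^2\le 5$ directly, which is arguably cleaner for this single graph; the paper's symmetrization route is the one that scales to its later results, since the same lemma drives Proposition \ref{teotransitive} and Lemma \ref{lemmamaxima} for general vertex-transitive and self-complementary graphs, where an explicit balanced clique partition of the OR product is not available. You are also right, and the paper leaves implicit, that the lower bound $\ge\sqrt5$ comes from Theorem \ref{teoqboundexclusivity} together with Theorem \ref{teoquantume}, and that a single-copy application of the principle only gives $\alpha^*(C_5)=\tfrac52$, so activation is indispensable.
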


\begin{dem}
Let $\{e_i\}$ and $\{e'_i\}$ be two sets of $5$ events with exclusivity graph $G$  as shown in figure 
\ref{figpentagoncomplement}, such that $e_i$ and $e'_i$ are independent. 
 \begin{figure}[!h]
  \centering
  \includegraphics[scale=0.4]{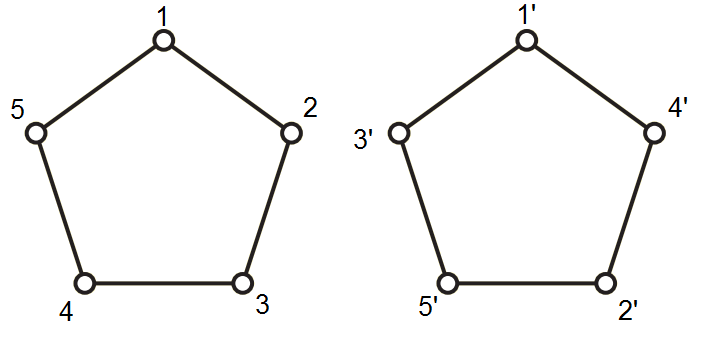}\\
  \caption{Exclusivity graphs of the sets of events $e_i$ and $e'_i$.}\label{figpentagoncomplement}
\end{figure}

Define the event $f_i=e_i \wedge e'_i$ which is true if and only if both $e_i$ and $e'_i$ are true.
Note that the exclusivity graph of the events $\{f_i\}$ is the complete graph on $5$ vertices because $\{f_i\}$ is a set of
pairwise mutually exclusive events.

Since $e_i$ and $e'_i$ are independent $p(f_i)= p(e_i) p(e'_i)$.
Using the Exclusivity principle we have
$$\sum_i p(f_i) = \sum_i p(e_i)p(e'_i) \leq 1.$$
Using the symmetry of the pentagon, we can assume (see lemma \ref{lemmatransitive} below) that the maximum is reached when all the probabilities are the same, 
that is
$$p(e_i)=p(e'_i)=P,  \  \ \forall \  \ i\in V$$
Hence we have
$$\sum_i P^2 = 5P^2 \leq 1$$
which implies that
$$P\leq \frac{1}{\sqrt{5}}.$$
Now, if we substitute this value into equation \eqref{eqNCIneq} for $S_G$ we have
$$S_G=\sum_i P_i \leq \sqrt{5}.$$
\end{dem}

\section{\textsf{The exclusivity principle forbids sets of correlations larger than the quantum set}}
\label{sectioncomplement}


The idea used in the previous section to derive the quantum bound for the pentagon using the 
Exclusivity principle can be applied 
 to show the there is a connection between the set of quantum distributions for $G$
and $\overline{G}$. 
Yan first used it in reference \cite{Yan13}, where he proves the following:

\begin{teo}[Yan, 2013]
\label{teoyan}
 Given the set of quantum distributions for $\overline{G}$, the $E$ principle singles out the quantum maximum for 
 $G$.
\end{teo}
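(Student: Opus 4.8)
The plan is to mirror the pentagon argument of Section~\ref{sectionpentagon}, replacing the second self-complementary copy by a genuine quantum realization of $\overline{G}$ and invoking the dual (minimax) description of the Lov\'asz number. Since quantum distributions for $G$ already reach $S_G=\vartheta(G)$ and satisfy the E principle by Theorem~\ref{teoquantume}, only the upper bound $S_G\le\vartheta(G)$ requires proof. So I would fix an arbitrary distribution $\{p_i\}_{i\in V}$ for $G$ that is admissible in a theory respecting the E principle, and aim to show $\sum_i p_i\le\vartheta(G)$, using the hypothesis to bring in a quantum system realizing $\overline{G}$ alongside it.

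First I would produce a convenient quantum distribution for $\overline{G}$. Exploiting Lemma~1 of \cite{Lovasz79} --- the minimax characterization $\vartheta(G)=\min\max_i 1/\braket{d}{w_i}^2$, where the minimum runs over all orthonormal representations $\{\ket{w_i}\}$ of $G$ and all unit handles $\ket{d}$ --- there exist an orthonormal representation $\{\ket{w_i}\}$ of $G$ and a unit vector $\ket{d}$ with $\braket{d}{w_i}^2\ge 1/\vartheta(G)$ for every $i$. Because $\{\ket{w_i}\}$ is an orthonormal representation of $G$, the rank-one projectors $Q_i=\ket{w_i}\bra{w_i}$ are mutually orthogonal precisely on the edges of $\overline{G}$, so $q_i:=\braket{d}{w_i}^2=\tr(Q_i\ket{d}\bra{d})$ is a bona fide quantum distribution for $\overline{G}$, carrying the crucial uniform lower bound $q_i\ge 1/\vartheta(G)$ for all $i$.

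Next I would perform the activation step. Realize the events $e_i$ (with probabilities $p_i$, exclusivity graph $G$) and the events $e'_i$ (with probabilities $q_i$, exclusivity graph $\overline{G}$) on two independent subsystems, the latter being quantum as licensed by the hypothesis, and set $f_i=e_i\wedge e'_i$, so that independence gives $p(f_i)=p_i q_i$. For any $i\neq j$ the pair $(i,j)$ is an edge of exactly one of $G$ or $\overline{G}$; hence either $e_i,e_j$ or $e'_i,e'_j$ are exclusive, and in either case $f_i,f_j$ are exclusive. Thus $\{f_i\}$ has exclusivity graph $K_{|V|}$, and Principle~\ref{defiexclusivityprinciple} yields $\sum_i p_i q_i=\sum_i p(f_i)\le 1$. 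Combining with $q_i\ge 1/\vartheta(G)$ gives $\tfrac{1}{\vartheta(G)}\sum_i p_i\le\sum_i p_i q_i\le 1$, i.e. $\sum_i p_i\le\vartheta(G)$, which together with the trivial lower bound completes the argument.

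The main obstacle is not the combinatorial bookkeeping but securing the uniform bound $q_i\ge 1/\vartheta(G)$ from an orthonormal representation of $G$; this is exactly the content of the minimax (dual) formula for $\vartheta$, and recognizing it as the right tool --- rather than the max-formula used for the quantum upper bound in Theorem~\ref{teoqbound} --- is the conceptual crux. I would also take care to justify that the combined scenario genuinely falls under the E principle, namely that independent realizations may be composed and that the resulting conjunction events $f_i$ indeed inherit the stated exclusivity relations, since this is precisely the activation phenomenon on which the whole approach rests.
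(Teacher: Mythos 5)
Your proposal is correct and follows essentially the same route as the paper's proof: the same activation construction (conjunction of independent events $f_i=e_i\wedge e'_i$ with complete exclusivity graph, then the E principle) combined with Lov\'asz's minimax characterization of $\vartheta(G)$ from Lemma~1 of \cite{Lovasz79}. The only difference is organizational --- you instantiate the optimal orthonormal representation first to get the uniform bound $q_i\ge 1/\vartheta(G)$ and apply the E principle once, whereas the paper applies it to an arbitrary quantum distribution for $\overline{G}$ and minimizes at the end; this is a cosmetic reordering, not a different argument.
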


\begin{dem}
 Let $\{e_i\}$ be a set of $n$ events with exclusivity graph $G$ and $\{e'_i\}$ be a set of $n$ events with exclusivity 
graph $\overline{G}$, such that $e_i$ and $e'_i$ are independent.
Define the event $f_i= e_i \wedge e'_i$ which is true if and only if both $e_i$ and $e'_i$ are true.
Note that the exclusivity graph of the events $\{f_i\}$ is the complete graph on $n$ vertices because $\{f_i\}$ is a set of
pairwise mutually exclusive events. The Exclusivity principle implies that
 $$\sum_i p(f_i) = \sum_i p(e_i)p(e'_i) \leq 1.$$

Suppose that the distribution $p(e'_i)$ is given by
\be p(e'_i)=|\braket{\psi}{v_i}|^2. \label{eqquantumgbar} \ee
Then 
$$1 \geq\sum_i p(f_i) = \sum_i p(e_i)p(e'_i)= \sum_i p(e_i)\left|\braket{\psi}{v_i}\right|^2,$$ 
and hence
$$\sum_i p(e_i)\min_i\left[\left|\braket{\psi}{v_i}\right|^2\right] \leq \sum_i p(e_i)\left|\braket{\psi}{v_i}\right|^2 \leq 1$$
which implies that 
$$\sum_i p(e_i)\leq \max_i\frac{1}{\left|\braket{\psi}{v_i}\right|^2}.$$
This inequality should hold for any normalized $\ket{\psi}$ and any orthogonal representation  $\{\ket{v_i}\}$, and hence
$$\sum_i p(e_i) \leq \min_{\ket{\psi}, \ket{v_i}}\max_i\frac{1}{\left|\braket{\psi}{v_i}\right|^2}.$$
The right-hand side is just the Lov\'asz number of $G$ (see \cite{Lovasz79, Knuth93}). Hence, we conclude
that if all quantum distributions given by equation \eqref{eqquantumgbar} can be reached
and if the Exclusivity principle holds, the maximum value of
$S_G$ can not exceed the quantum bound.
\end{dem}

Let us  show that making the same assumptions of the previous theorem, it is possible
not only to single out the quantum maximum but also the entire set of quantum correlations.

\begin{prop}[Amaral, Terra Cunha, Cabello, 2014]
\label{teoexclusivityquantumset}
Given the quantum set $ \mathcal{E}_Q(\overline{G})$, the Exclusivity principle singles out the quantum set $\mathcal{E}_Q(G)$.
\end{prop}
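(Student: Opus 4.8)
The plan is to upgrade the single maximum-value statement of Theorem~\ref{teoyan} to a statement about the entire convex body, by exploiting the fact that $\mathcal{E}_Q(G)=TH(G)$ and $\mathcal{E}_Q(\overline{G})=TH(\overline{G})$ are \emph{antiblockers} of one another (the Gr\"otschel--Lov\'asz--Schrijver antiblocking duality for the theta body; see \cite{Knuth93,Lovasz79}). Concretely, the convex-geometric identity I would invoke is
\begin{equation}
\mathcal{E}_Q(G)=\left\{ p \in \mathbb{R}^{|V|}_+ : \textstyle\sum_i p_i q_i \leq 1 \ \text{ for all } q \in \mathcal{E}_Q(\overline{G})\right\},
\end{equation}
together with its symmetric counterpart. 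In words: a nonnegative vector $p$ lies in the quantum set for $G$ precisely when it pairs to at most $1$ with every quantum distribution for $\overline{G}$. Everything will then reduce to combining this identity with the same event-composition trick used in the pentagon argument and in Yan's Theorem~\ref{teoyan}.

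First I would dispose of the easy inclusion. Every $p\in\mathcal{E}_Q(G)$ is a genuine quantum distribution, so by Theorem~\ref{teoquantume} it satisfies the Exclusivity principle; hence no quantum correlation for $G$ is ever excluded, and the set of distributions singled out by the principle contains $\mathcal{E}_Q(G)$. The substance is the reverse inclusion, namely that every distribution $p$ \emph{outside} $\mathcal{E}_Q(G)$ is forbidden once the whole of $\mathcal{E}_Q(\overline{G})$ is assumed realizable in nature.

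So suppose $p\notin\mathcal{E}_Q(G)$. By the antiblocking identity there exists $q\in\mathcal{E}_Q(\overline{G})$ with $\sum_i p_i q_i>1$. I would then reproduce the composition of Theorem~\ref{teoyan}: let $\{e_i\}$ be events with exclusivity graph $G$ realizing $p$ in some theory obeying the Exclusivity principle, and let $\{e'_i\}$ be \emph{independent} events with exclusivity graph $\overline{G}$ realizing the quantum distribution $q$ (available by hypothesis). Setting $f_i=e_i\wedge e'_i$, for each pair $i\neq j$ the pair $(i,j)$ is an edge of exactly one of $G$ or $\overline{G}$, so either $e_i,e_j$ or $e'_i,e'_j$ are exclusive; in either case $f_i$ and $f_j$ are exclusive, i.e.\ the $\{f_i\}$ have complete exclusivity graph. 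Independence gives $p(f_i)=p_i q_i$, so the Exclusivity principle forces $\sum_i p_i q_i\leq 1$, contradicting the choice of $q$. Hence $p$ cannot occur, and the forbidden region is exactly the complement of $\mathcal{E}_Q(G)$ within the probability vectors for $G$, which is the claim.

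The routine verifications---that the $\{f_i\}$ are pairwise exclusive, that independent composition yields $p(f_i)=p_iq_i$, and that the joint preparation is a legitimate object of a single theory satisfying the principle---are identical to those already carried out for the pentagon and in Theorem~\ref{teoyan}, so I would merely cite them. The genuine content, and the step I expect to be the crux, is the antiblocking duality $\mathcal{E}_Q(\overline{G})=\mathrm{abl}\big(\mathcal{E}_Q(G)\big)$: it is precisely this that converts a scalar bound (the Lov\'asz number, which only controls the linear functional $\sum_i p_i$) into a separating-hyperplane statement covering \emph{every} point outside $TH(G)$. I would therefore either cite Knuth's sandwich theorem for this identity \cite{Knuth93}, or, if a self-contained derivation is preferred, produce the separating vector $q$ explicitly from an optimal orthonormal representation of $\overline{G}$ for the weighting $p$---which is essentially how the duality itself is established.
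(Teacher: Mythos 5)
Your proposal is correct and takes essentially the same route as the paper: the paper's proof likewise combines the independent-events composition $g_i=e_i\wedge f_i$ (so that the Exclusivity principle yields $\sum_i P_i \bar{P}_i \leq 1$) with the Gr\"otschel--Lov\'asz--Schrijver antiblocking characterization $TH(G)=\left\{P \geq 0 : \vartheta\left(\overline{G},P\right)\leq 1\right\}$, which is exactly the identity you call the crux (the paper quotes it as Theorem~\ref{teoTH} from \cite{GLS86}). The only difference is presentational: you argue the hard inclusion by contraposition via a separating vector $q$, while the paper reads both directions off the if-and-only-if in equation~\eqref{eqTHab}.
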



\begin{dem} 
Let $\{e_i\}$ be a set of $n$ events with exclusivity graph $G$ and $\{f_i\}$ be a set of $n$ events with exclusivity 
graph $\overline{G}$, such that $e_i$ and $f_i$ are independent.
Define the event $g_i$ which is true if and only if both $e_i$ and $f_i$ are true, $g_i = e_i\wedge f_i$.
Note that the exclusivity graph of the events $\{g_i\}$ is the complete graph on $n$ vertices because $\{g_i\}$ is a set of
pairwise mutually exclusive events.

Since $e_i$ and $f_i$ are independent $p(g_i)= P_i \bar{P}_i$, where $P_i=p\left(e_i\right)$ and $\bar{P}_i=p\left(f_i\right)$.
Using the Exclusivity principle we have
\begin{equation}
 \label{eqeprinciple1}
 \sum_i P_i \bar{P}_i \stackrel{\mbox{\tiny{E}}}{\leq} 1.
\end{equation}
Now we use corollary 3.4 and  theorem 3.5 in  reference \cite{GLS86}:

\begin{teo}
\label{teoTH}
The set  $TH(G)$ can be written in the following ways:
\begin{equation}
  TH(G)=\left\{P \in \mathbb{R}^n; P_i \geq 0, \vartheta(\overline{G},P) \leq 1\right\},
	 \label{eqTHab}
\end{equation}
where
\begin{equation}
\vartheta(\overline{G},P)= \max\left\{\sum_i P_i \bar{P}_i; \bar{P} \in  TH(\overline{G})\right\},
\end{equation}
and
\be TH(G)=\left\{P \in \mathbb{R}^n; \  P_i=\left|\braket{\psi}{v_i}\right|^2,  \small{ \braket{\psi}{\psi}=1,
 \{\ket{v_i}\} \ \mbox{orthonormal representation for} \ \overline{G}}\}\right\}. \label{eqTHq}\ee
\end{teo}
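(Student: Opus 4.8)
The plan is to treat the orthonormal-representation formula \eqref{eqTHq} as the working \emph{definition} of $TH(G)$, for every graph simultaneously, and to show that it coincides with the antiblocker description \eqref{eqTHab}; doing this for all graphs at once dissolves the apparent circularity, since the set $TH(\overline{G})$ occurring inside \eqref{eqTHab} is then just the instance of \eqref{eqTHq} for the complement. Write $A(G)$ for the right-hand side of \eqref{eqTHq} and $B(G)=\{P\in\mathbb{R}^n;\ P_i\geq 0,\ \vartheta(\overline{G},P)\leq 1\}$ for the right-hand side of \eqref{eqTHab}, so that the goal is $A(G)=B(G)$. First I would record the structural properties used throughout: $A(G)$ is closed, convex, and down-closed inside the nonnegative orthant (a \emph{convex corner}). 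Convexity and down-closure follow from the same direct-sum / ancilla-qubit construction used earlier to establish convexity of the quantum set: given two realizations one embeds them in $\mathcal{H}\oplus\mathcal{H}'$, and rescaling $\ket{\psi}$ also lets one shrink any coordinate. Closedness needs the dimension of the representing space to be capped, which I would obtain from the separate fact that $\vartheta$ is attained in dimension at most $n$. The set $B(G)$ is manifestly a closed convex corner, being the orthant intersected with the half-spaces $\sum_i P_i\bar{P}_i\leq 1$, one for each $\bar{P}\in TH(\overline{G})$.

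The easy inclusion $A(G)\subseteq B(G)$ is driven by a single tensor-product observation. If $\{\ket{v_i}\}$ is an orthonormal representation of $\overline{G}$ and $\{\ket{u_i}\}$ is an orthonormal representation of $G$, then for $i\neq j$ exactly one of $(i,j)\in E(G)$, $(i,j)\in E(\overline{G})$ holds, so one of the two factors in $\braket{v_i\otimes u_i}{v_j\otimes u_j}=\braket{v_i}{v_j}\braket{u_i}{u_j}$ vanishes; hence $\{\ket{v_i}\otimes\ket{u_i}\}$ is an orthonormal system. Taking $P\in A(G)$ realized by $\ket{\psi},\{\ket{v_i}\}$ and $\bar{P}\in A(\overline{G})$ realized by $\ket{\chi},\{\ket{u_i}\}$ (note $\{\ket{u_i}\}$ is an orthonormal representation of $\overline{\overline{G}}=G$), Bessel's inequality gives $\sum_i P_i\bar{P}_i=\sum_i|\braket{\psi\otimes\chi}{v_i\otimes u_i}|^2\leq \norm{\ket{\psi}\otimes\ket{\chi}}^2=1$. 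Maximising over $\bar{P}\in A(\overline{G})=TH(\overline{G})$ yields $\vartheta(\overline{G},P)\leq 1$, i.e. $P\in B(G)$. This is the same computation underlying theorem \ref{teoquantume} and theorem \ref{teoyan}.

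The reverse inclusion $B(G)\subseteq A(G)$ is the crux, and is where semidefinite duality enters. I would argue by contradiction and separation: if $P^{*}\in B(G)\setminus A(G)$, then since $A(G)$ is a closed, down-closed convex corner, the supporting-hyperplane theorem supplies a weight $w\geq 0$ (nonnegativity can be forced because $A(G)$ is down-closed) with $\sum_i w_i P^{*}_i>\max_{P\in A(G)}\sum_i w_iP_i=\vartheta(G,w)$, the weighted Lov\'asz number. The task is then to manufacture from the optimizer a dual certificate $\bar{P}\in TH(\overline{G})$ with $\sum_i P^{*}_i\bar{P}_i>1$, contradicting $P^{*}\in B(G)$. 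Concretely, I would write $\vartheta(G,w)$ as a semidefinite program, check Slater's condition so that strong duality holds, and read off from the optimal dual solution an orthonormal representation $\{\ket{u_i}\}$ of $G$ together with a handle $\ket{\chi}$ for which the valid inequality $\sum_i P_i|\braket{\chi}{u_i}|^2\leq 1$ supports $A(G)$ and is violated by a suitably scaled $P^{*}$; the vector $\bar{P}_i=|\braket{\chi}{u_i}|^2$ then lies in $A(\overline{G})=TH(\overline{G})$ by \eqref{eqTHq}. Equivalently, this step establishes that $A(G)$ and $A(\overline{G})$ are mutual antiblockers, which, combined with the involutivity $\mathrm{abl}(\mathrm{abl}(C))=C$ of the antiblocker operation on convex corners, closes the argument and simultaneously delivers both displayed descriptions of $TH(G)$.

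I expect the genuine obstacle to be precisely this last inclusion: proving that \emph{every} supporting inequality of the representation body $A(G)$ is an orthonormal-representation constraint coming from $\overline{G}$, equivalently that the body of \emph{realizable} vectors and the body \emph{cut out by} orthonormal-representation constraints agree. Everything else—the convex-corner structure, the tensor lemma, and the easy inclusion—is elementary, whereas this coincidence is exactly the content of Corollary 3.4 and Theorem 3.5 of \cite{GLS86}; it rests on SDP strong duality together with the attainment of $\vartheta$ in bounded dimension, so I would secure those two analytic inputs first as the milestones of the proof.
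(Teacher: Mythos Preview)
The paper does not prove this theorem at all: it quotes it verbatim as ``corollary 3.4 and theorem 3.5 in reference \cite{GLS86}'' and uses it as a black box inside the proof of Proposition~\ref{teoexclusivityquantumset}. So there is no paper proof to compare against, only a citation.

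Your sketch is a correct outline of the standard Gr\"otschel--Lov\'asz--Schrijver argument, and indeed you explicitly identify it as such in your final paragraph. The easy inclusion via the tensor product and Bessel's inequality is exactly right (and, as you note, mirrors the mechanism behind theorems \ref{teoquantume} and \ref{teoyan} in the paper). For the hard inclusion your plan---separate by a nonnegative weight using down-closure, then extract from the SDP dual an orthonormal representation of $G$ furnishing a point of $A(\overline{G})$ that witnesses the violation---is the canonical route. The two analytic milestones you flag (strong duality for the weighted $\vartheta$ SDP via Slater, and bounded dimension for attainment hence closedness of $A(G)$) are precisely the places where care is needed. What your approach buys over the paper's is self-containment; what the paper's buys is brevity, since for its purposes only the \emph{statement} is needed to conclude that the Exclusivity-principle constraint \eqref{eqeprinciple1} pins $P$ to $TH(G)$.
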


Equation \eqref{eqTHab} implies that, for  a given $P$, equation \eqref{eqeprinciple1} will be satisfied for all $P'$ if and only if $P$ belongs to $TH(G)$.  
Equation \eqref{eqTHq} shows that $TH(G)=\mathcal{E}_Q(G)$. Then we conclude that if the set of allowed distributions for $\overline{G}$ is 
$TH(\overline{G})= \mathcal{E}_Q(\overline{G})$, theorem \ref{teoTH} implies that the distributions in
$G$ allowed by the Exclusivity principle belong to $\mathcal{E}_ Q(G)$.
\end{dem}

Physically, the proof above can be interpreted as follows: assuming that nature allows all quantum distributions for $\overline{G}$, the Exclusivity principle {\em singles out the quantum distributions for} $G$.

Proposition \ref{teoexclusivityquantumset} does not imply that the Exclusivity principle, by itself, singles out the quantum correlations for $G$, since we have assumed quantum theory for $\overline{G}$.
Nonetheless, it is remarkable that the Exclusivity principle connects the correlations of two, a priori, completely different experiments on two completely
different quantum systems. For example, if $G$ is the $n$-cycle $C_n$ with $n$ odd, the tests of the maximum quantum violation of the corresponding
noncontextuality inequalities require systems of dimension $3$ \cite{CSW10,CDLP13,LSW11,AQBTC13}. However, the tests of the maximum quantum violation of the noncontextuality inequalities with exclusivity graph $\overline{C_n}$ require systems of dimension that grows with $n$ \cite{CDLP13}. Similarly, while two qubits are enough for a test of the maximum quantum violation of the CHSH inequality 
(see appendix \ref{chapternonlocality}), the complementary test is a noncontextuality inequality (not a Bell inequality) that requires
a system of, at least, dimension $5$ \cite{Cabello13b}.

An important consequence of proposition \ref{teoexclusivityquantumset} is that the larger the quantum set of $G$, 
the smaller the quantum set for $\overline{G}$, since each probability allowed for $G$ becomes
a restriction on the possible probabilities for $\overline{G}$. Such duality gets stronger when $G$
is a self-complementary graph.

\begin{prop}[Amaral, Terra Cunha, Cabello, 2014]
\label{propselfcomp}
If $G$ is a self-complementary graph, the Exclusivity principle, by itself, excludes any set of probability distributions strictly larger than the quantum set.
\end{prop}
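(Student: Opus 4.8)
The plan is to exploit the antiblocker duality between $\mathcal{E}_Q(G)=TH(G)$ and $\mathcal{E}_Q(\overline{G})=TH(\overline{G})$ provided by Theorem \ref{teoTH}, together with the self-complementarity $G\cong\overline{G}$. Recall from equation \eqref{eqTHab} that $TH(G)$ consists exactly of those nonnegative vectors $P$ satisfying $\sum_i P_i\bar{P}_i\le 1$ for every $\bar{P}\in TH(\overline{G})$; equivalently, if $P\ge 0$ and $P\notin TH(G)$, there must exist some $\bar{P}\in TH(\overline{G})$ with $\sum_i P_i\bar{P}_i>1$. This is precisely the type of E-principle inequality produced by the conjunction construction used in Proposition \ref{teoexclusivityquantumset}: realizing $G$ with events $\{e_i\}$ and $\overline{G}$ with independent events $\{f_i\}$, the conjunctions $g_i=e_i\wedge f_i$ are pairwise exclusive, so the E principle forces $\sum_i p(e_i)p(f_i)\le 1$.

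Next I would fix an isomorphism $\sigma:G\to\overline{G}$, which exists because $G$ is self-complementary. Since the abstract exclusivity graph alone determines the set of distributions a theory assigns to it, the set allowed for $\overline{G}$ is obtained from the set allowed for $G$ by the coordinate permutation induced by $\sigma$; in particular $TH(\overline{G})=\sigma_*TH(G)$, where $\sigma_*$ relabels coordinates according to $\sigma$. This is the step that turns the $G$-versus-$\overline{G}$ duality into a self-referential constraint on a single set of distributions.

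Then comes the main argument, by contraposition. Let $\mathcal{E}$ be any candidate set of distributions for $G$ that is strictly larger than the quantum set, i.e.\ $\mathcal{E}\supsetneq\mathcal{E}_Q(G)=TH(G)$, and suppose the corresponding theory satisfies the E principle. By self-complementarity the theory assigns to $\overline{G}$ the set $\sigma_*\mathcal{E}\supseteq\sigma_*TH(G)=TH(\overline{G})$. Pick $P\in\mathcal{E}\setminus TH(G)$; by the antiblocker characterization recalled above there is $\bar{P}\in TH(\overline{G})$ with $\sum_i P_i\bar{P}_i>1$. But $\bar{P}\in TH(\overline{G})\subseteq\sigma_*\mathcal{E}$ is an allowed distribution for $\overline{G}$, so the E-principle inequality $\sum_i P_i\bar{P}_i\le 1$ obtained from the conjunction construction is violated. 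Hence no E-principle-consistent theory can contain a point outside $TH(G)$, so $\mathcal{E}_Q(G)=TH(G)$ is the largest E-principle-consistent set, which is the assertion.

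The delicate point, and the one I would argue most carefully, is the identification carried out in the second step: that the theory's set of distributions for $\overline{G}$ necessarily contains the quantum set $TH(\overline{G})$. This rests on (a) the convention that isomorphic exclusivity graphs carry identical distribution sets, so that assuming a superset of the quantum set for $G$ automatically forces a superset of the quantum set for $\overline{G}$; and (b) guaranteeing that the two realizations entering the conjunction are genuinely independent, so that $p(g_i)=p(e_i)p(f_i)$ and the complete-graph exclusivity of $\{g_i\}$ yields exactly the bilinear constraint $\sum_i P_i\bar{P}_i\le 1$. Once these are secured, the antiblocker duality of Theorem \ref{teoTH} does the rest, while Theorem \ref{teoquantume} guarantees consistency, since quantum distributions do satisfy the E principle and hence $TH(G)$ itself remains admissible and the bound is attained.
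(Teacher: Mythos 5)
Your proof is correct and follows essentially the same route as the paper's: you invoke the antiblocker characterization of $TH(G)$ from Theorem \ref{teoTH} to produce a quantum point $\bar{P} \in \mathcal{E}_Q(\overline{G})$ with $\sum_i P_i \bar{P}_i > 1$ for any $P$ outside the quantum set, and then use the self-complementarity isomorphism to place $\bar{P}$ (after permuting coordinates) inside the candidate set itself, so that the conjunction construction yields a violation of the E principle. Your treatment is somewhat more explicit than the paper's about the relabeling map $\sigma_*$ and the independence of the two realizations, but these are exactly the steps the paper's argument relies on implicitly.
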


\begin{dem}

 Let $X$ be a set of distributions containing $ \mathcal{E}_Q(G)$ and let $P \in X \setminus \mathcal{E}_Q(G)$. By Theorem~1, 
there is at least one
$\overline{P} \in  \mathcal{E}_Q\left(\overline{G}\right)$ such that
\begin{equation}
\sum_{i \in V(G)}P_i\overline{P}_i > 1,
\label{equationself}
\end{equation}
which is in contradiction with the Exclusivity principle.
Since $G$ is self-complementary, after a permutation on the entries given by the isomorphism between $G$ and $\overline{G}$,
$\overline{P}$ becomes an element of $ \mathcal{E}_Q(G)$ and hence $P$ and $\bar{P}$ belong to $X$.  
Expression \eqref{equationself}
implies that this set is not allowed by the Exclusivity principle.  
\end{dem}

The fact that the Exclusivity principle is  sufficient for pinning down the quantum correlations as the maximal 
set of correlations for any self-complementary graph, given that the entire quantum set is possible, means that the Exclusivity principle is able to single out the 
quantum correlations for a large number of  nonequivalent noncontextuality inequalities, including the KCBS one. 
In contrast, 
neither information causality, nor macroscopic locality, nor local orthogonality
have been able to single out the set of quantum correlations in any Bell inequality.

The hypothesis in theorem \ref{teoyan} can be weakened for vertex transitive graphs. Instead of assuming
the entire set of quantum correlations for $\overline{G}$, the same result can be proven, given only the 
quantum maximum for $\overline{G}$.
The exclusivity graphs of many interesting inequalities including CHSH \cite{CHSH69}, KCBS \cite{KCBS08}, the $n-$cycle inequalities \cite{CSW10,CDLP13,LSW11,AQBTC13},
and the antihole inequalities \cite{CDLP13} are vertex transitive. A graph is vertex transitive if for any pair $u,v \in V(G)$ there is $\phi \in \mbox{Aut}(G)$ such that $v=\phi(u)$, where $\mbox{Aut}(G)$ is the group of
automorphisms of $G$ (\textit{i.e.}, the permutations $\psi$ of the set of vertices such that $u, v \in V(G)$ are adjacent if and only if $\psi(u), \psi(v)$ are adjacent).

\begin{prop}[Amaral, Terra Cunha, Cabello, 2014]
\label{teotransitive}
If $G$ is a vertex-transitive graph on $n$ vertices, 
given the quantum maximum for $\overline{G}$, the Exclusivity principle singles 
out the quantum maximum for $G$.
\end{prop}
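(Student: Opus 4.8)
The plan is to mimic the two-copy construction used for the pentagon (and in Yan's Theorem \ref{teoyan}), but to exploit vertex-transitivity twice: once to reduce the optimal distribution on $\overline{G}$ to the uniform one, and once through the Lov\'asz identity $\vartheta(G)\vartheta(\overline{G})=n$, valid precisely for vertex-transitive graphs. First I would take a set of events $\{e_i\}$ with exclusivity graph $G$ and an independent set of events $\{e'_i\}$ whose exclusivity graph is $\overline{G}$, and form the conjunctions $g_i = e_i \wedge e'_i$. As in the previous proofs, the $\{g_i\}$ are pairwise exclusive, so their exclusivity graph is the complete graph on $n$ vertices; since $p(g_i)=p(e_i)p(e'_i)$ by independence, the Exclusivity principle yields
\begin{equation}
\sum_i p(e_i)\,p(e'_i) \leq 1.
\end{equation}

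Next I would use the hypothesis that the quantum maximum $\vartheta(\overline{G})$ for $\overline{G}$ is attainable. Because $G$ is vertex-transitive, so is $\overline{G}$, and $\mathrm{Aut}(G)=\mathrm{Aut}(\overline{G})$ acts transitively on the vertices. Averaging any distribution on $\overline{G}$ that attains $\vartheta(\overline{G})$ over this automorphism group produces, by the convexity of $\mathcal{E}_Q(\overline{G})$, another attainable distribution that is invariant under the group, hence uniform, while leaving the value of $\sum_i p(e'_i)$ unchanged. This is the symmetrization recorded in Lemma \ref{lemmatransitive}. I may therefore assume
\begin{equation}
p(e'_i) = \frac{\vartheta(\overline{G})}{n}, \qquad i=1,\ldots,n.
\end{equation}
Substituting this into the Exclusivity inequality gives $\frac{\vartheta(\overline{G})}{n}\sum_i p(e_i)\leq 1$, that is,
\begin{equation}
S_G = \sum_i p(e_i) \leq \frac{n}{\vartheta(\overline{G})}.
\end{equation}

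The final step is to invoke the Lov\'asz sandwich identity for vertex-transitive graphs, $\vartheta(G)\,\vartheta(\overline{G})=n$, which converts the bound into $S_G \leq \vartheta(G)$. Since Theorem \ref{teoqboundexclusivity} already guarantees that $S_G=\vartheta(G)$ is reached by a quantum realization, the Exclusivity principle singles out exactly the quantum maximum $\vartheta(G)$, as claimed.

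I expect the main obstacle to be the rigorous justification of the symmetrization step: one must verify that the group average of a maximizing distribution for $\overline{G}$ remains an \emph{attainable} distribution (not merely an element of some larger convex relaxation) and that it genuinely becomes constant on the vertices. This is where vertex-transitivity is indispensable and where Lemma \ref{lemmatransitive} does the real work. A secondary point worth double-checking is that the product identity $\vartheta(G)\vartheta(\overline{G})=n$ requires vertex-transitivity — it fails for general graphs — so the argument truly uses the full strength of the hypothesis rather than mere regularity of $G$.
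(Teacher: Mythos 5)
Your proof is correct and takes essentially the same route as the paper's: the same conjunction-of-independent-events construction giving $\sum_i p(e_i)p(e'_i)\leq 1$, the same automorphism-group symmetrization of the maximizing distribution on $\overline{G}$ (the paper's Lemma \ref{lemmatransitive}), and the same graph-theoretic product relation to turn the bound $n/\vartheta\left(\overline{G}\right)$ into $\vartheta(G)$, with attainability supplied by Theorem \ref{teoqboundexclusivity}. The only cosmetic difference is that you invoke Lov\'asz's equality $\vartheta(G)\,\vartheta\left(\overline{G}\right)=n$ for vertex-transitive graphs, whereas the paper only needs the inequality $\vartheta(G)\,\vartheta\left(\overline{G}\right)\geq n$ (Lemma~23 of \cite{Knuth93}); either suffices.
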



A sequence of three lemmas proves the result. First we prove that the quantum maximum for $S$ is assumed at a symmetric configuration. Then we prove that the product of the quantum maxima for $G$ and 
$\overline{G}$ is bounded from above by the number of vertices of $G$, and the same from below.

\begin{lemma}
\label{lemmatransitive}
If $G$ is a vertex-transitive graph, then the quantum maximum for $S = \sum_i P_i$ is attained at the 
constant distribution $P_i = p_{\textrm{max}}$.
\end{lemma}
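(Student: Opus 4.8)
The plan is to exploit the convexity of the quantum set together with the symmetry built into vertex-transitivity via a standard \emph{symmetrization} (group-averaging) argument. Recall that the objective $S = \sum_i P_i$ is a linear functional that is moreover invariant under every permutation of the coordinates, and that $\mathcal{E}_Q(G) = TH(G)$ is a convex set (this is the content of the convexity theorem for quantum distributions proven earlier). The strategy is to take any maximizer, average it over the automorphism group of $G$, and show that the resulting averaged distribution both remains a maximizer and is forced to be constant.

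First I would make precise how $\mbox{Aut}(G)$ acts on probability vectors. Each $\phi \in \mbox{Aut}(G)$ permutes the vertices and hence acts on $\mathbb{R}^{|V|}$ by $\left(\phi\cdot P\right)_i = P_{\phi^{-1}(i)}$. The crucial point is that this action preserves $\mathcal{E}_Q(G)$: given a quantum realization of $P$ by a state $\ket{\psi}$ and projectors $\{P_i\}$ with $P_i P_j = 0$ whenever $(i,j) \in E(G)$, the relabelled projectors $P'_i := P_{\phi^{-1}(i)}$ satisfy $p_i = \sand{\psi}{P'_i}{\psi} = P_{\phi^{-1}(i)}$, and since $\phi$ is an automorphism it preserves adjacency, so $(i,j)\in E(G)$ implies $(\phi^{-1}(i),\phi^{-1}(j))\in E(G)$ and hence $P'_i P'_j = 0$. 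Thus $\{P'_i\}$ is a valid quantum realization of $\phi\cdot P$, proving $\phi\cdot P \in \mathcal{E}_Q(G)$.

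With invariance in hand, let $P^{*}$ be a maximizer of $S$ over $\mathcal{E}_Q(G)$ (one exists because $\mathcal{E}_Q(G)$ is compact and $S$ continuous). Define the averaged vector
$$
\bar{P} = \frac{1}{|\mbox{Aut}(G)|}\sum_{\phi \in \mbox{Aut}(G)} \phi\cdot P^{*}.
$$
By convexity of $\mathcal{E}_Q(G)$ and the invariance just established, $\bar{P} \in \mathcal{E}_Q(G)$. Since $S$ is linear and invariant under coordinate permutations, $S(\phi\cdot P^{*}) = S(P^{*})$ for every $\phi$, so $S(\bar{P}) = S(P^{*})$ and $\bar{P}$ is again a maximizer. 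Finally, $\bar{P}$ is by construction constant on the orbits of $\mbox{Aut}(G)$ acting on $V(G)$; because $G$ is vertex-transitive there is a single orbit, so $\bar{P}_i$ takes the same value $p_{\textrm{max}}$ for all $i$, which is exactly the desired constant distribution.

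The only genuinely delicate step is the invariance of $\mathcal{E}_Q(G)$ under the permutation action, and it reduces entirely to the observation that automorphisms preserve the edge set, hence preserve the orthogonality constraints among the projectors; everything else is routine convexity and averaging. I do not expect any analytic obstruction beyond this combinatorial bookkeeping.
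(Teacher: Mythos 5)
Your proof is correct and follows essentially the same route as the paper's: average a maximizer over $\mbox{Aut}(G)$, use convexity and linearity to see the average still attains the maximum, and invoke vertex-transitivity (a single orbit) to conclude the averaged distribution is constant. The only difference is that you explicitly justify the step the paper leaves implicit --- that relabelling projectors by an automorphism preserves the orthogonality constraints and hence keeps you inside $\mathcal{E}_Q(G)$ --- which is a welcome addition, not a deviation.
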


\begin{dem}
Let $P= \left(p(e_1),p(e_2),\ldots, p(e_n)\right)$ be a distribution reaching the maximum.
Given an automorphism of $G$, $\phi \in \mbox{Aut}(G)$, consider the distribution $P_{\phi}$ defined as $p_{\phi}(e_i)= p(\phi(e_i))$.
This is  a valid quantum distribution, also reaching the maximum for $S$.
Define the distribution
\begin{equation}
 Q=\frac{1}{A} \sum_{\phi \in \mbox{Aut}(G)} P_{\phi},
\end{equation}
where $A = \# \mbox{Aut}(G)$. This distribution also reaches the maximum for $S$.
Since $G$ is vertex transitive, given any two vertices of $G$, $e_i$ and $e_j$, there is an automorphism $\psi$ such that $\psi(e_i)= e_j$. Then,
\begin{eqnarray}
q(e_j)&=& q(\psi(e_i)) \nonumber \\
 &=& \frac1A \sum_{\phi \in \mbox{Aut}(G)} p_{\phi}(\psi(e_i)) \nonumber \\
 &=& \frac1A \sum_{\phi \in \mbox{Aut}(G)} p\left( \phi \circ \psi(e_i)\right) \nonumber \\
 &=& \frac1A \sum_{\phi' \in \mbox{Aut}(G)} p_{\phi'}(e_i) \nonumber \\
 &=& q(e_i).
\end{eqnarray}
\end{dem}

\begin{lemma}
\label{lemmamaxima}
If $G$ is a  vertex-transitive graph on $n$ vertices, then the Exclusivity principle implies that the quantum maxima for $S(G)$ and for $S(\overline{G})$ obey 
\begin{equation}\label{eq:QuantumMaxima}
M_Q\!\left({G}\right) M_Q\!\left(\overline{G}\right) \stackrel{\mbox{\tiny{E}}}{\leq} n.
\end{equation}

\end{lemma}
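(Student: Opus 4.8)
The plan is to combine Lemma~\ref{lemmatransitive} with the same product-of-events construction used in the proof of Theorem~\ref{teoyan}, applied this time to two \emph{symmetric} realizations. The crucial preliminary observation is that the complement of a vertex-transitive graph is again vertex-transitive: any $\phi \in \mbox{Aut}(G)$ is also an automorphism of $\overline{G}$, so $\mbox{Aut}(G) \subseteq \mbox{Aut}(\overline{G})$ and transitivity is inherited. Consequently Lemma~\ref{lemmatransitive} applies to \emph{both} $G$ and $\overline{G}$, and the respective quantum maxima are attained at constant distributions. Writing $M_Q(G)$ and $M_Q(\overline{G})$ for these maxima, this gives optimal quantum realizations $\{e_i\}$ (exclusivity graph $G$) with $p(e_i)=p_{\textrm{max}}=M_Q(G)/n$ for all $i$, and $\{f_i\}$ (exclusivity graph $\overline{G}$) with $p(f_i)=\bar{p}_{\textrm{max}}=M_Q(\overline{G})/n$ for all $i$.

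First I would take these two realizations to be independent and form the conjunction events $g_i=e_i\wedge f_i$, exactly as in the proof of Proposition~\ref{teoexclusivityquantumset}. The key structural step is to check that the exclusivity graph of $\{g_i\}$ is the complete graph $K_n$: for any pair $i\neq j$, the edge $(i,j)$ belongs to exactly one of $G$ or $\overline{G}$, so either $e_i,e_j$ are exclusive or $f_i,f_j$ are exclusive, and in either case $g_i$ and $g_j$ are mutually exclusive. Hence $\{g_i\}$ is a set of $n$ pairwise exclusive events.

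Next I would apply the Exclusivity principle to this clique. Using independence, $p(g_i)=p(e_i)\,p(f_i)=p_{\textrm{max}}\,\bar{p}_{\textrm{max}}$, so
\begin{equation}
\sum_{i} p(g_i) = n\,p_{\textrm{max}}\,\bar{p}_{\textrm{max}}
= n\cdot\frac{M_Q(G)}{n}\cdot\frac{M_Q(\overline{G})}{n}
\stackrel{\mbox{\tiny{E}}}{\leq} 1,
\end{equation}
which rearranges immediately to $M_Q(G)\,M_Q(\overline{G}) \stackrel{\mbox{\tiny{E}}}{\leq} n$, the desired bound.

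I do not expect a genuine obstacle here, since every ingredient is already available: the attainment of the maxima at constant distributions is Lemma~\ref{lemmatransitive}, the existence of optimal orthonormal representations is guaranteed by the definition of $\vartheta$, and independent composite realizations are the standard device of this chapter. The only point requiring care is the verification that $\{g_i\}$ forms a full clique, which rests on the partition of pairs $(i,j)$ between $G$ and $\overline{G}$; this is exactly where the complementary relation between the two graphs is used, and it is what makes the product of the two maxima—rather than either one alone—the quantity controlled by the principle.
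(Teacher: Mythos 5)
Your proof is correct and takes essentially the same route as the paper: the paper's proof likewise applies Lemma~\ref{lemmatransitive} to both $G$ and $\overline{G}$ and then evaluates inequality~\eqref{eqeprinciple1} (the E-principle bound on conjunctions of independent realizations, established in the proof of Proposition~\ref{teoexclusivityquantumset}) at the two constant extremal distributions, giving $n\,p_{\textrm{max}}\,\bar{p}_{\textrm{max}} \leq 1$. Your explicit checks that $\overline{G}$ inherits vertex-transitivity and that the conjunction events $\{g_i\}$ form a clique are exactly the details the paper leaves implicit by citing that earlier inequality.
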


\begin{dem}
Lemma \ref{lemmatransitive} applies for both, $G$ and $\overline{G}$, giving $n p_{\textrm{max}} = M_Q\!\left(G\right)$ and $n \bar{p}_{\textrm{max}} = M_Q\!\left(\overline{G}\right)$.
Inequality \eqref{eqeprinciple1} for these extremal distributions reads
\begin{equation}
n\,p_{\textrm{max}}\,\bar{p}_{\textrm{max}} \stackrel{\mbox{\tiny{E}}}{\leq} 1,
\end{equation}
which proves the result.
\end{dem}

\begin{lemma}
If $G$ is a  vertex-transitive graph on $n$ vertices, then 
\be M_Q\!\left({G}\right) M_Q\!\left(\overline{G}\right) {\geq}\; n.\ee
\end{lemma}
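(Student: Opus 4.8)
The plan is to reduce the statement to a purely graph-theoretic fact about the Lov\'asz number and then establish that fact by a tensor-product argument combined with Bessel's inequality. By Theorem \ref{teoqboundexclusivity} we have $M_Q(G)=\vartheta(G)$ and $M_Q(\overline{G})=\vartheta(\overline{G})$, so the claim is equivalent to $\vartheta(G)\,\vartheta(\overline{G})\geq n$. This is the complementary bound to Lemma \ref{lemmamaxima}, which supplies $\vartheta(G)\vartheta(\overline{G})\leq n$ via the Exclusivity principle; together they yield the equality $\vartheta(G)\vartheta(\overline{G})=n$ of \cite{Lovasz79} that underlies Proposition \ref{teotransitive}. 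I remark that, unlike Lemma \ref{lemmamaxima}, this direction does not actually use vertex-transitivity: the inequality $\vartheta(G)\vartheta(\overline{G})\geq n$ holds for every graph, and I would prove it in this generality, the vertex-transitive hypothesis merely fixing the context in which equality is then obtained.

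The key tool is the dual (minimax) characterization of the Lov\'asz number already exploited in the proof of Theorem \ref{teoyan}, namely $\vartheta(G)=\min \max_i 1/\left|\braket{\psi}{u_i}\right|^2$, where the minimum runs over all unit vectors $\psi$ and all orthonormal representations $\{u_i\}$ of $G$ (see \cite{Lovasz79, Knuth93}). First I would fix an optimal pair $(\psi,\{u_i\})$ for $G$ and an optimal pair $(\phi,\{w_i\})$ for $\overline{G}$; such optimizers exist by compactness. Optimality in the minimax formula then yields the pointwise lower bounds $\left|\braket{\psi}{u_i}\right|^2\geq 1/\vartheta(G)$ and $\left|\braket{\phi}{w_i}\right|^2\geq 1/\vartheta(\overline{G})$ for every $i$.

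Next I would form the vectors $u_i\otimes w_i$ in the tensor-product space and verify that they constitute an orthonormal system: for $i\neq j$, if $i$ and $j$ are adjacent in $G$ then $\{w_i\}$ being an orthonormal representation of $\overline{G}$ forces $w_i\perp w_j$, while if they are non-adjacent in $G$ then $\{u_i\}$ being an orthonormal representation of $G$ forces $u_i\perp u_j$; in either case $\braket{u_i\otimes w_i}{u_j\otimes w_j}=0$. Applying Bessel's inequality to the unit vector $\psi\otimes\phi$ against this system of $n$ orthonormal vectors gives
\begin{equation}
1=\norm{\psi\otimes\phi}^2\geq \sum_{i}\left|\braket{\psi}{u_i}\right|^2\left|\braket{\phi}{w_i}\right|^2\geq \sum_i\frac{1}{\vartheta(G)\vartheta(\overline{G})}=\frac{n}{\vartheta(G)\vartheta(\overline{G})},
\end{equation}
which rearranges to $\vartheta(G)\vartheta(\overline{G})\geq n$, that is, $M_Q(G)\,M_Q(\overline{G})\geq n$.

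The main subtlety is the orthogonality bookkeeping and, above all, the choice of characterization of $\vartheta$. One must invoke the minimax formula, working with orthonormal representations of $G$ and of $\overline{G}$ \emph{themselves} and bounding the handle overlaps \emph{below} by $1/\vartheta$; it is precisely these lower bounds that turn Bessel's inequality into a lower bound on the product. Had I instead used the maximization formula adopted as the definition of $\vartheta$ in this chapter (orthonormal representations of the complementary graphs, with overlaps bounded above), the identical tensor construction would only reproduce the upper bound of Lemma \ref{lemmamaxima}. Getting this duality the right way round is the crux of the argument.
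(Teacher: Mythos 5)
Your proof is correct, but it takes a genuinely different route from the thesis. The thesis disposes of this lemma in two lines: after the same reduction $M_Q(G)=\vartheta(G)$, $M_Q(\overline{G})=\vartheta(\overline{G})$ via theorem \ref{teoqboundexclusivity}, it simply cites the known fact $\vartheta(G)\,\vartheta(\overline{G})\geq n$ for vertex-transitive graphs (Lemma~23 of \cite{Knuth93}) and stops. You instead prove that fact from scratch, and in greater generality: your tensor-product-plus-Bessel argument establishes $\vartheta(G)\,\vartheta(\overline{G})\geq n$ for \emph{every} graph on $n$ vertices, making explicit that vertex-transitivity plays no role in this direction (it is only needed for the reverse inequality of lemma \ref{lemmamaxima}, where the E-principle and the symmetrization of lemma \ref{lemmatransitive} enter). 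Your orthogonality bookkeeping is right -- for $i\neq j$ either adjacency in $G$ forces $w_i\perp w_j$ or non-adjacency forces $u_i\perp u_j$, so the $u_i\otimes w_i$ are orthonormal -- and your observation about which characterization of $\vartheta$ must be used is the genuine crux: the minimax (handle) formulation with its pointwise lower bounds $\left|\braket{\psi}{u_i}\right|^2\geq 1/\vartheta(G)$ is what turns Bessel's inequality into a lower bound on the product, whereas the maximization formula adopted as the definition in the chapter would only recover the upper bound. This duality is exactly what the thesis leaves implicit by citing \cite{Knuth93} (and it is the same equivalence it already leans on, without proof, in theorem \ref{teoyan}). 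What each approach buys: the thesis's citation is shorter and consistent with its practice of deferring standard graph-theoretic facts to \cite{Lovasz79,Knuth93}; yours makes the chapter self-contained, isolates the exact hypothesis needed, and in combination with lemma \ref{lemmamaxima} exhibits transparently where the equality $\vartheta(G)\,\vartheta(\overline{G})=n$ of proposition \ref{teotransitive} comes from.
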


\begin{dem}
When we recall that the graph approach identify the quantum maximum with the Lov\'asz number, as proven in 
theorem \ref{teoqboundexclusivity}, we have that
\begin{subequations}
\begin{eqnarray}
\vartheta({G})&=&M_Q\left({G}\right) \nonumber,\\
\vartheta(\overline{G})&=&M_Q\left(\overline{G}\right),
\end{eqnarray}
\end{subequations}
and since
 for vertex-transitive graphs $\vartheta(G)\; \vartheta(\overline{G}) \geq n$ (Lemma~23 in reference \cite{Knuth93}), 
 the lemma follows.
%
\end{dem}

Proposition \ref{teotransitive} opens the door to experimentally discard higher-than-quantum correlations. 
Specifically, lemma \ref{lemmamaxima} implies that we can test if the maximum value
of correlations with exclusivity graph $G$ goes beyond its quantum maximum without violating the Exclusivity principle by performing an independent
experiment testing correlations with exclusivity graph $\overline{G}$ and experimentally reaching its quantum maximum \cite{Cabello13b}.
A violation of the quantum bound for $\overline{G}$ in any laboratory would imply the impossibility of reaching the quantum maximum for $G$ in any other laboratory.

\section{\textsf{Other graph operations}}
\label{sectionoperations}

We have seen in the previous section that using the operation of complementation and the Exclusivity principle, we are able to
explain the quantum bound and the quantum set of distributions for many different noncontextuality inequalities.
In a joint work with Ad\'an Cabello, we study if something similar is possible using
other graph operations.


\subsection{\textsf{Direct cosum of $G'$ and $G''$}}

\begin{defi}
Given two graphs $G'$ and $G''$ we define the \emph{direct cosum}  $G$ of $G'$ and $G''$ as the graph with 
$V(G)=V(G') \sqcup V(G'')$  and such that
$(u,v) \in E(G)$ iff $(u,v) \in E(G')$, or $(u,v) \in E(G'')$, or $u \in V(G')$ and $v \in V(G")$.
\end{defi}

 This operation applied to two copies of $C_5$ is illustrated\footnote{For $C_5$,
 this operation is equivalent to applying the duplication defined is 
subsection \ref{subsectiontwinning} and complementation, but this is not true in general. For  general graphs 
$G'$ and $G''$, $G=\overline{\overline{G'}+ \overline{G''}}$,
where the direct sum of graphs is defined by the disjoint union of vertices and edges.} in figure \ref{figcosumpentagon}.

The result below is a well-known fact and can be found on reference \cite{Knuth93}, but we repeat it here to reinforce  the 
 connections with quantum theory.

\begin{figure}[!h]
\centering
 \includegraphics[scale=0.6]{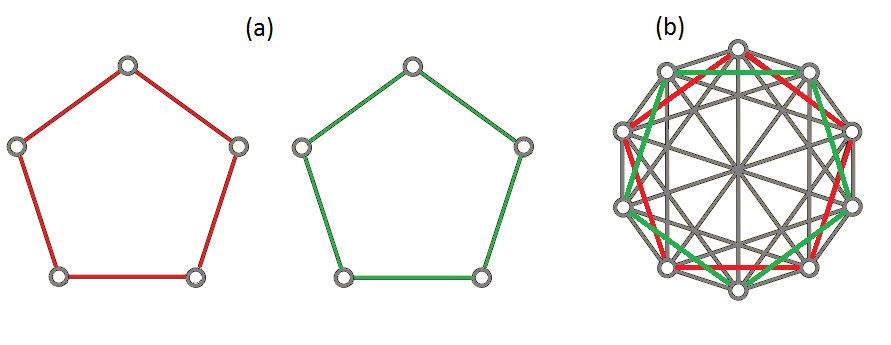}
\caption{Two copies of the pentagon  (a) and their direct cosum (b), the circulant graph $C_{10}(1,2,3,5)$. In (b), one of the copies is colored in red, the other copy in green and the edges connecting the vertices of one copy to the 
other are gray.
\label{figcosumpentagon}}
\end{figure} 

\begin{lemma}
\label{lemmadirectsum}
 $\vartheta\left(G\right)=\max\left\{\vartheta\left(G'\right), \vartheta\left(G''\right) \right\}$.
\end{lemma}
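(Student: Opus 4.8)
The plan is to compute $\vartheta(G)$ directly from its definition as a maximum over orthonormal representations of the complement, exploiting the fact that the direct cosum is exactly the graph \emph{join}: every vertex of $G'$ is adjacent in $G$ to every vertex of $G''$, so (as the footnote records) $\overline{G}=\overline{G'}\sqcup\overline{G''}$ is a disjoint union. The key structural observation I would establish first is that this join structure forces a block decomposition of any orthonormal representation. Indeed, for any orthonormal representation $\{\ket{v_i}\}$ of $\overline{G}$, adjacency in $G$ forces orthogonality of the assigned vectors; since each $i\in V(G')$ is adjacent to each $j\in V(G'')$, the subspaces $W'=\mathrm{span}\{\ket{v_i}:i\in V(G')\}$ and $W''=\mathrm{span}\{\ket{v_j}:j\in V(G'')\}$ are mutually orthogonal. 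Moreover, restricted to $V(G')$ the family $\{\ket{v_i}\}$ is a genuine orthonormal representation of $\overline{G'}$ (the orthogonality constraints match, because adjacency in $G$ between two $G'$-vertices coincides with adjacency in $G'$), and likewise for $G''$.

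For the upper bound I would fix an optimal representation $\{\ket{v_i}\}$ of $\overline{G}$ and handle $\ket{\psi}$, and write $\ket{\psi}=\ket{\psi'}+\ket{\psi''}+\ket{\psi^\perp}$ with $\ket{\psi'}\in W'$, $\ket{\psi''}\in W''$ and $\ket{\psi^\perp}$ orthogonal to both. Because $\ket{v_i}\in W'$ for $i\in V(G')$, only $\ket{\psi'}$ contributes to those terms, and symmetrically for $V(G'')$, so $S_G=\sum_{i\in V(G')}\left|\braket{\psi'}{v_i}\right|^2+\sum_{j\in V(G'')}\left|\braket{\psi''}{v_j}\right|^2$. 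Normalizing, the first sum equals $\norm{\psi'}^2$ times a value attained by a legitimate representation of $\overline{G'}$ with a unit handle, hence is at most $\norm{\psi'}^2\,\vartheta(G')$, and similarly the second is at most $\norm{\psi''}^2\,\vartheta(G'')$. Since $\norm{\psi'}^2+\norm{\psi''}^2\le 1$ with nonnegative weights, this convex combination is bounded by $\max\{\vartheta(G'),\vartheta(G'')\}$, giving $\vartheta(G)\le\max\{\vartheta(G'),\vartheta(G'')\}$.

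For the matching lower bound, assume without loss of generality $\vartheta(G')\ge\vartheta(G'')$ and take an optimal representation of $\overline{G'}$ together with its optimal handle, placed in a subspace $W'$. I would then pad it with \emph{any} orthonormal representation of $\overline{G''}$ (for instance assigning distinct orthonormal basis vectors to the vertices of $G''$) living in a fresh subspace $W''$ orthogonal to $W'$; by construction the combined family is an orthonormal representation of $\overline{G}$. Keeping the original handle $\ket{\psi}\in W'$, all $G''$-terms vanish and $S_G=\vartheta(G')=\max\{\vartheta(G'),\vartheta(G'')\}$, so $\vartheta(G)\ge\max\{\vartheta(G'),\vartheta(G'')\}$ and equality follows. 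The main obstacle is not any hard estimate but the careful bookkeeping in the first step: one must verify that the join forces exact orthogonality between the two blocks and, crucially, that the \emph{restricted} vector families still satisfy precisely the defining orthogonality constraints of $\overline{G'}$ and $\overline{G''}$; once this is in place, both inequalities are short. Equivalently, one could phrase the whole argument in terms of $TH(G)=\mathcal{E}_Q(G)$ and the splitting of $TH$ under disjoint union of the complements, but the orthonormal-representation route is the most self-contained.
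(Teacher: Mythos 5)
Your proof is correct and follows essentially the same route as the paper's: the upper bound comes from splitting the handle vector across the two mutually orthogonal subspaces spanned by the $G'$- and $G''$-vectors and bounding the resulting convex combination, and the lower bound from padding an optimal representation of $\overline{G'}$ with an arbitrary representation of $\overline{G''}$ in an orthogonal subspace. If anything, your write-up is slightly more careful than the paper's, since you explicitly account for the component $\ket{\psi^\perp}$ orthogonal to both blocks, which the paper's decomposition $\ket{\psi}=a\ket{\psi'}+b\ket{\psi''}$ with $|a|^2+|b|^2=1$ silently discards.
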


\begin{dem}

  Let $\{\ket{v_i}$\} be an  orthonormal representation for $G$ and $\ket{\psi}$ be a unit vector in the same vector space.
 Every vertex of $G'$ is exclusive to all vertices of $G''$, which
 means that the vectors of the orthonormal representation for $G$ generate a subspace $V'$
orthogonal to the
subspace $V''$ generated by the vectors of the orthonormal representation for $G''$. Because of this, we can decompose $\ket{\psi}$ as a sum of
 two orthogonal vectors:

$$\ket{\psi}= a\ket{\psi'} + b\ket{\psi''}, \  \  \ \ket{\psi'} \in V', \ \ \ \ket{\psi''} \in V'', \ \ \ |a|^2+|b|^2=1.$$

With these definitions we have
$$\sum_{i\in G}|\braket{\psi}{v_i}|^2=|a|^2\left(\sum_{i\in G'}|\braket{\psi'}{v_i}|^2\right) + |b|^2\left(\sum_{i\in G''}|\braket{\psi''}{v_i}|^2\right)$$
 and then
$$\vartheta\left(G\right) \leq \max\left\{\vartheta\left(G'\right), \vartheta\left(G''\right) \right\}.$$

Suppose $\max\left\{\vartheta\left(G'\right), \vartheta\left(G''\right) \right\}=\vartheta\left(G'\right)$. 
Let $\left\{\ket{v'_i}\right\}$ be a Lov\'asz optimal representation for $G'$ and $\ket{\psi}$ the unit vector 
achieving $\vartheta(G')$. Let $\left\{\ket{v{_i}''}\right\}$ be any
Lov\'az representation for $G''$. The set of vectors $\left\{\ket{v'_i}\oplus 0, 0 \oplus \ket{v_{i}''}\right\}$ is an optimal
Lov\'asz representation for $G$ and the unit vector
$\ket{\psi} \oplus 0$ achieves the upper bound.

\end{dem}

\begin{cor}
\label{cordirectsum}
 If the $E$ principle rules out violations above quantum maximum for $G$, it also rules out violations above 
 the quantum maximum for its direct cosum with
any other graph $H$ such that $\vartheta(H) \leq \vartheta(G)$. In particular, it rules out violations above the quantum maximum for the direct cosum of $G$ with itself.
\end{cor}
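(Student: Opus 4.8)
The plan is to reduce the statement to the complementation argument of Theorem~\ref{teoyan}, applied to the cosum $K := G \oplus_{\mathrm{co}} H$, while genuinely feeding in the hypothesis for $G$ rather than re-deriving it from scratch. First I would record the only place where the numerical value enters: by Lemma~\ref{lemmadirectsum} together with the assumption $\vartheta(H) \leq \vartheta(G)$, one has $\vartheta(K) = \max\{\vartheta(G), \vartheta(H)\} = \vartheta(G)$. Since the quantum maximum of $S_K$ equals $\vartheta(K)$ (Theorem~\ref{teoqboundexclusivity}), the target is to show that every distribution on $K$ compatible with principle~\ref{defiexclusivityprinciple} satisfies $S_K \leq \vartheta(G)$.

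The mechanism is the conjunction trick used throughout this chapter. I would take events $\{e_v\}_{v \in V(K)}$ with exclusivity graph $K$ and an independent auxiliary family $\{e'_v\}_{v \in V(K)}$ whose exclusivity graph is the complement $\overline{K}$. Because the direct cosum satisfies $\overline{K} = \overline{G} \sqcup \overline{H}$ (the disjoint union noted after the definition), the auxiliary family splits into a block indexed by $V(G)$ realizing $\overline{G}$ and a block indexed by $V(H)$ realizing $\overline{H}$, with no exclusivity relations between the two blocks. The conjunctions $f_v = e_v \wedge e'_v$ are pairwise exclusive, so their graph is complete and principle~\ref{defiexclusivityprinciple} gives $\sum_v p(e_v)\, p(e'_v) \leq 1$, exactly as in the proof of Theorem~\ref{teoyan}.

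The heart of the argument is the choice of the auxiliary distribution $p(e'_v) =: \bar{p}_v$, and this is where both the hypothesis on $G$ and the condition $\vartheta(H) \leq \vartheta(G)$ are used. Writing $\bar{p}_v = |\langle \psi \mid u_v\rangle|^2$ for a common unit vector $\psi$ and orthonormal representations $\{u_v\}$, I would choose the $G$-block to be the representation of $G$ that is optimal in the minimax form of the Lovász number, so that $|\langle\psi\mid u_g\rangle|^2 \geq 1/\vartheta(G)$ for every $g \in V(G)$; the hypothesis that the Exclusivity principle controls $G$ is precisely what guarantees that this complementary realization of $\overline{G}$ is available. For the $H$-block I would take a minimax-optimal orthonormal representation of $H$ and rotate it by a unitary carrying its optimal state onto the same $\psi$, which is legitimate because $\overline{K}$ forces no orthogonality between the two blocks; this yields $|\langle\psi\mid u_h\rangle|^2 \geq 1/\vartheta(H) \geq 1/\vartheta(G)$ for every $h \in V(H)$. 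Hence $\bar{p}_v \geq 1/\vartheta(G)$ for all $v$, and the clique inequality becomes $\tfrac{1}{\vartheta(G)} S_K \leq \sum_v p(e_v)\,\bar{p}_v \leq 1$, i.e.\ $S_K \leq \vartheta(G) = \vartheta(K)$.

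The step I expect to be delicate is precisely the simultaneous placement of the two blocks against a \emph{single} state $\psi$. A naive realization of $\overline{G} \sqcup \overline{H}$ on orthogonal summands forces $\psi$ to split its weight between the blocks and reintroduces an additive penalty, which would only yield the useless bound $\vartheta(G) + \vartheta(H)$; the gain comes from letting the two orthonormal representations overlap in a common space, which is allowed exactly because there are no cross-exclusivities in $\overline{K}$. Making this rotation argument watertight — and checking that the resulting auxiliary family is a bona fide independent system in the sense required by the conjunction construction — is the main obstacle. I would also note that in the emphasized case $H = G$ the $H$-block is handled by the very same hypothesis as the $G$-block, so there the only remaining point is the common-state placement.
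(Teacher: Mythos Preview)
The paper states this corollary immediately after Lemma~\ref{lemmadirectsum} and gives no separate proof, treating it as a direct consequence of $\vartheta(K)=\max\{\vartheta(G),\vartheta(H)\}=\vartheta(G)$. Your complementation argument supplies a correct and explicit justification, and the step you flag as delicate---that $\overline{K}=\overline{G}\sqcup\overline{H}$ carries no cross-orthogonality constraints, so the two minimax representations can be rotated to share a single handle $\psi$---is both the crux and correctly handled.

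Where you go wrong is in accounting for the hypothesis. You assert that ``the hypothesis that the Exclusivity principle controls $G$ is precisely what guarantees that this complementary realization of $\overline{G}$ is available'', but the minimax-optimal orthonormal representation of $G$ (together with the state achieving it) is a purely mathematical object---one of the standard characterizations of $\vartheta(G)$---and is always available as a quantum realization. Your argument never actually uses the corollary's hypothesis; what it uses is the realizability of a specific quantum distribution on $\overline{K}$, i.e.\ exactly the assumption of Theorem~\ref{teoyan} applied to $K$ rather than to $G$. The condition $\vartheta(H)\le\vartheta(G)$ is used, correctly, to push $1/\vartheta(H)\ge 1/\vartheta(G)$ on the $H$-block, but the E-principle hypothesis on $G$ sits idle.

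This is not fatal in the thesis's context, where the complementation results of Section~\ref{sectioncomplement} already assume quantum realizability on the complementary side, and where the only application (Proposition~\ref{teo10vertices}) has $H=G=C_5$. But you should be aware that what you have written is really ``Theorem~\ref{teoyan} for $K$'' rather than a reduction to the stated hypothesis on $G$. A proof genuinely driven by that hypothesis alone (in the strong sense of Section~\ref{sectionpentagon}, without auxiliary quantum input) would need to exploit the cross-exclusivity in $K$ more sharply than the naive block bound $S_K\le\vartheta(G)+\vartheta(H)$; the paper does not indicate how, and it is not obvious.
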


\subsection{\textsf{Twinning, partial twinning and duplication}}
\label{subsectiontwinning}

We can also consider  graphs obtained from two copies of $G$ by adding some of the edges between the vertices of each 
copy of $G$ but not all of them.
One of
this graphs is the graph $T(G)$ obtained if we consider two copies of $G$ with the same
labeling and join the vertices of one of the copies with the exclusive vertices of the other copy. 
Figure \ref{figtwinningpentagon} shows this operation applied to the pentagon.
We call this operation \emph{twinning}, since the graph associated to $T(G)$ is the one obtained by 
 twinning all the vertices of $G$.

\begin{teo}
\label{thetatwinning}
$\vartheta[T(G)] = 2 \vartheta(G).$

\end{teo}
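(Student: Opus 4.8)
The plan is to establish the two inequalities $\vartheta[T(G)] \le 2\vartheta(G)$ and $\vartheta[T(G)] \ge 2\vartheta(G)$ separately, after first pinning down the exact adjacency structure of $T(G)$. Writing the two copies of $G$ as $\{u_1,\ldots,u_n\}$ and $\{w_1,\ldots,w_n\}$, the twinning construction makes $u_i$ adjacent to $u_j$, makes $w_i$ adjacent to $w_j$, and makes $u_i$ adjacent to $w_j$ precisely when $(i,j)\in E(G)$, while a vertex $u_i$ and its twin $w_i$ are never joined. Equivalently, $T(G)$ is the blow-up of $G$ obtained by replacing each vertex by two non-adjacent (``false'') twins, and the single feature that will drive the whole argument is exactly this non-adjacency of $u_i$ and $w_i$.

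For the upper bound I would take an arbitrary orthonormal representation of $\overline{T(G)}$, assigning unit vectors $\ket{a_i}$ to $u_i$ and $\ket{b_i}$ to $w_i$, together with a handle $\ket{\psi}$. By the adjacency rule above, the orthogonality constraints force $\braket{a_i}{a_j}=0$ whenever $(i,j)\in E(G)$, so $\{\ket{a_i}\}$ is \emph{by itself} an orthonormal representation of $\overline{G}$; the same holds for $\{\ket{b_i}\}$. Applying the variational formula for $\vartheta(G)$ used in the proofs of Theorems~\ref{teoqbound} and~\ref{teoqboundexclusivity} to each family separately gives $\sum_i |\braket{\psi}{a_i}|^2 \le \vartheta(G)$ and $\sum_i |\braket{\psi}{b_i}|^2 \le \vartheta(G)$, and adding these yields $S_{T(G)} \le 2\vartheta(G)$; maximizing over representations and handles gives $\vartheta[T(G)] \le 2\vartheta(G)$. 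This direction never uses the cross edges and is essentially immediate once the restriction-to-a-single-copy observation is made.

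For the lower bound I would exhibit one representation attaining $2\vartheta(G)$. Let $\{\ket{v_i}\}$ be an optimal orthonormal representation of $\overline{G}$ with optimal handle $\ket{\psi}$, so that $\sum_i |\braket{\psi}{v_i}|^2 = \vartheta(G)$. The key move is to assign the \emph{same} vector $\ket{v_i}$ to both $u_i$ and $w_i$. This is a legitimate orthonormal representation of $\overline{T(G)}$ precisely because $u_i$ and $w_i$ are non-adjacent, so no constraint demands $\ket{v_i}\perp\ket{v_i}$; all remaining constraints ($u_i\perp u_j$, $w_i\perp w_j$, and $u_i\perp w_j$ for $(i,j)\in E(G)$) reduce to $\braket{v_i}{v_j}=0$ for $(i,j)\in E(G)$, which holds by construction. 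Evaluating the objective with handle $\ket{\psi}$ then gives $2\sum_i |\braket{\psi}{v_i}|^2 = 2\vartheta(G)$, whence $\vartheta[T(G)] \ge 2\vartheta(G)$. Combining the two bounds proves $\vartheta[T(G)] = 2\vartheta(G)$.

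The step I expect to be most delicate is not any calculation but getting the adjacency structure of $T(G)$ exactly right — in particular confirming that the twins $u_i, w_i$ are left non-adjacent — since this is the only place where twinning differs from the direct cosum of Lemma~\ref{lemmadirectsum}, and it is precisely this non-adjacency that turns the ``$\max$'' of the cosum into the ``sum'' here, by permitting the reuse of a single optimal representation across both copies. As a consistency check one can also recognize $T(G)$ as the lexicographic product $G[\overline{K_2}]$ and invoke multiplicativity of $\vartheta$ together with $\vartheta(\overline{K_2})=2$, but the direct two-sided argument above is self-contained and stays within the tools already developed in this chapter.
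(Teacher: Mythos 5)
Your proof is correct and follows essentially the same route as the paper's: the upper bound by restricting any orthonormal representation of $\overline{T(G)}$ to each induced copy of $G$ and adding the two bounds, and the lower bound by assigning one optimal representation's vector $\ket{v_i}$ to both twins, which is legitimate precisely because twins are non-adjacent. Your version simply spells out the adjacency bookkeeping (and the lexicographic-product consistency check) that the paper leaves implicit.
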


\begin{dem}

  The upper
bound  $\vartheta[T(G)]\leq 2\vartheta(G)$ comes from the fact that the each copy of $G$ is an induced subgraph
of $T(G)$ and this implies that every orthonormal representation for the
 twinning  includes an orthonormal representation for each copy of $G$. Equality is reached   since given an optimal orthonormal representation 
  $|\psi\rangle$,
$\{|v_i\rangle\}_{i=1}^n$ for $G$, the vectors $|\psi\rangle$, $\{|v_i\rangle\}_{i=1}^{2n}$ with $|v_i\rangle=|v_{i+n}\rangle$ 
form an optimal orthonormal representation  for $T(G)$.

\end{dem}

The same holds true for any graph obtained from $T(G)$ by removing edges between the two copies of $G$.  
We call these graphs \emph{partial twinnings} of $G$. This follows from  the lemma below.

\begin{lemma}[The second sandwich lemma]
\label{lemmasecondsd}
 If $G_1 = (V,E_1)$ and $G_2=(V,E_2)$, with $E_2 \subset E_1$ and $\vartheta\de{G_1} = \vartheta\de{G_2} = \vartheta$, 
 then, for any $G'=(V,E)$ such that $E_2 \subset E \subset E_1$, $\vartheta\left(G'\right) = \vartheta$.
\end{lemma}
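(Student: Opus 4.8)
The plan is to derive the lemma from a single structural fact, namely the monotonicity of the Lov\'asz number under the addition of edges. First I would unwind the definition. Recall that $\vartheta(G)$ is computed as a maximum of an objective that depends only on the chosen unit vector $\ket{\psi}$ and on the vectors $\{\ket{u_i}\}$ of an orthonormal representation of $\overline{G}$; the graph itself enters \emph{only} through the admissibility constraints on these vectors. By the definition of orthonormal representation applied to $\overline{G}$, a set of unit vectors is admissible precisely when $\ket{u_i}$ and $\ket{u_j}$ are orthogonal whenever $i$ and $j$ are adjacent in $G$, i.e.\ whenever $(i,j)\in E(G)$. Thus each edge of $G$ contributes exactly one orthogonality constraint, and the feasible set of the maximization is governed entirely by the edge set $E(G)$, while the objective is the same function of $(\ket{\psi},\{\ket{u_i}\})$ regardless of which graph we consider.

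The key observation is then the following monotonicity. Suppose $E \subseteq E'$ on the common vertex set $V$, and write $H=(V,E)$, $H'=(V,E')$. Every orthonormal representation admissible for $H'$ is automatically admissible for $H$, since the constraints imposed by $H'$ include all those imposed by $H$ (and possibly more). Hence the maximization defining $\vartheta(H)$ ranges over a feasible set containing that of $\vartheta(H')$, and because the objective function is identical in both problems we conclude $\vartheta(H) \geq \vartheta(H')$. In words: adding edges can only decrease (or leave unchanged) the Lov\'asz number.

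With this in hand the statement is immediate by squeezing. Since $E_2 \subseteq E \subseteq E_1$, applying the monotonicity twice gives $\vartheta(G_1) \leq \vartheta(G') \leq \vartheta(G_2)$. The hypothesis $\vartheta(G_1)=\vartheta(G_2)=\vartheta$ then forces $\vartheta(G')=\vartheta$, as claimed. The only delicate point, and the one I would state carefully, is getting the direction of the complement right in the definition of orthonormal representation, so that enlarging $E(G)$ indeed corresponds to \emph{strengthening} the orthogonality constraints rather than relaxing them; once this is tracked correctly the argument uses nothing beyond the variational definition of $\vartheta$ and is entirely routine.
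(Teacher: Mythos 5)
Your proof is correct and is essentially the paper's own argument: the paper transfers an optimal orthonormal representation from $G_1$ to $G'$ and then from $G'$ to $G_2$, which is precisely your monotonicity of $\vartheta$ under edge addition applied twice, followed by the same squeeze. You also track the complement convention correctly (edges of $G$ are the orthogonality constraints), so the direction of the inequality $\vartheta(G_1)\leq\vartheta(G')\leq\vartheta(G_2)$ is right.
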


\begin{dem}
 Let  $|\psi^1\rangle$, $\{|v_i^1\rangle\}$ be an optimal orthogonal representation for $G_1$. It is also an 
 orthogonal representation for $G'$, wich implies that $\vartheta(G') \geq \vartheta$.  Let  $|\psi\rangle$, $\{|v_i\rangle\}$ be an optimal
 orthogonal representation for $G'$. It is also an orthonormal representation for $G_2$, which implies that
 $\vartheta \geq \vartheta(G')$.
  
\end{dem}

\begin{cor}
\label{corpartialtwinning}
 If $G'$ is a partial twinning of $G$ then $\vartheta(G')=2\vartheta(G)$.
\end{cor}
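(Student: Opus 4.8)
The plan is to realize the corollary as an instance of the second sandwich lemma (Lemma \ref{lemmasecondsd}), by identifying the two extremal graphs between which every partial twinning is sandwiched. Let $G$ have $n$ vertices and write $V=V(G)\sqcup V(G)$ for the $2n$ vertices of a twinning, the two copies carrying identical labels. I would take $G_1=T(G)$, the full twinning, and $G_2=G\sqcup G$, the disjoint union of the two copies, i.e.\ the partial twinning in which no edge between the copies is kept; both graphs have vertex set $V$. By construction a partial twinning $G'$ retains both copies of $G$ untouched and keeps only some of the cross-edges of $T(G)$, so its edge set obeys $E(G_2)\subseteq E(G')\subseteq E(G_1)$. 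Thus, once I show $\vartheta(G_1)=\vartheta(G_2)=2\vartheta(G)$, Lemma \ref{lemmasecondsd} delivers $\vartheta(G')=2\vartheta(G)$ at once.

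The first equality is already in hand: $\vartheta(G_1)=\vartheta[T(G)]=2\vartheta(G)$ is exactly Theorem \ref{thetatwinning}. So the whole task reduces to the disjoint union, where I must prove $\vartheta(G\sqcup G)=2\vartheta(G)$. For the lower bound I would reuse an optimal orthonormal representation $\{\ket{v_i}\}$ for $G$ together with its optimal unit vector $\ket{\psi}$, assigning the very same vector $\ket{v_i}$ to the $i$-th vertex of each copy. Because $G\sqcup G$ has no edge joining the two copies, the only orthogonality constraints are the within-copy ones, which hold by assumption; evaluating $\sum_i|\braket{\psi}{v_i}|^2$ over all $2n$ vertices then simply doubles the value and yields $2\vartheta(G)$, so $\vartheta(G\sqcup G)\ge 2\vartheta(G)$.

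For the matching upper bound I would run the same induced-subgraph argument used for the upper bound in Theorem \ref{thetatwinning}. Each copy of $G$ is an induced subgraph of $G\sqcup G$, so any orthonormal representation $\{\ket{v_i}\}$ of $G\sqcup G$, restricted to the vertices of one copy, is an orthonormal representation of $G$; hence for every unit vector $\ket{\psi}$ the partial sum over that copy is at most $\vartheta(G)$. Adding the contributions of the two copies gives $\sum_i|\braket{\psi}{v_i}|^2\le 2\vartheta(G)$, so $\vartheta(G\sqcup G)\le 2\vartheta(G)$. (This is just the additivity of the Lov\'asz number under disjoint union, which could instead be quoted from \cite{Lovasz79, Knuth93}.) Combining the two bounds gives $\vartheta(G_2)=2\vartheta(G)$, and the sandwich lemma finishes the proof.

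I do not expect a genuine obstacle here: the argument is a direct transcription of the two halves of the proof of Theorem \ref{thetatwinning}, the only subtle point being the direction of monotonicity of $\vartheta$ under edge insertion, which is precisely what the second sandwich lemma packages. In fact the same two estimates prove the corollary even without invoking Lemma \ref{lemmasecondsd}: the lower bound holds because the optimal representation of $T(G)$ remains a valid representation of any $G'$ with $E(G')\subseteq E(T(G))$, and the upper bound holds because each copy of $G$ is still an induced subgraph of $G'$; but routing the conclusion through the sandwich lemma is the cleanest presentation.
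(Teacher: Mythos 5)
Your proof is correct and follows essentially the same route as the paper: the paper also applies the second sandwich lemma with $G_1=T(G)$ and $G_2$ the disjoint union of two copies of $G$, relying on Theorem \ref{thetatwinning} for the former and on additivity of $\vartheta$ under disjoint union for the latter. The only difference is that you spell out the proof of $\vartheta(G\sqcup G)=2\vartheta(G)$, which the paper leaves implicit.
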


\begin{dem}
 We apply the second sandwich lemma \ref{lemmasecondsd} with $G_1=T(G)$ and  $G_2$ the graph obtained by disjoint
 union of two copies of $G$.
\end{dem}

Figure \ref{figtwinningpentagon} (a) shows the twinning of $C_5$. Partial twinnings
of $C_5$ can be obtained by removing any of the ten edges present in  figure \ref{figtwinningpentagon} (a) and absent in
figure \ref{figtwinningpentagon} (c). 
Figure \ref{figtwinningpentagon} (b) is 
just a particular case of this.

From theorem \ref{thetatwinning}  and corollary \ref{corpartialtwinning}, we have:

\begin{cor}
\label{cortwinning}
 If the Exclusivity principle singles out the quantum maximum for a graph $G$,
 it also singles out the quantum maximum for its twinning and all its partial twinnings.
\end{cor}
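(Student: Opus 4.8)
The plan is to reduce everything to the values of the Lov\'asz number already computed, namely $\vartheta[T(G)] = 2\vartheta(G)$ from theorem \ref{thetatwinning} and $\vartheta(G') = 2\vartheta(G)$ for every partial twinning $G'$ from corollary \ref{corpartialtwinning}, and then to exploit the structural fact that both $T(G)$ and each of its partial twinnings contain two vertex-disjoint \emph{induced} copies of $G$. The whole argument is a splitting-plus-lifting argument: bound the two copies separately using the hypothesis, add, and check the bound is saturated by quantum theory.

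First I would fix a distribution $p$ on $T(G)$ satisfying the Exclusivity principle and split the target sum as $S_{T(G)} = \sum_i p_i^{(1)} + \sum_i p_i^{(2)}$, where the superscripts label the two copies. The decisive observation is that each copy sits inside $T(G)$ as an induced subgraph isomorphic to $G$: its internal edges are exactly the edges of $G$, and, crucially, every clique of one copy is again a clique of $T(G)$. Hence the marginal of $p$ on copy $1$ is itself a legitimate Exclusivity-principle distribution on a set of events whose exclusivity graph is $G$, and likewise for copy $2$.

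Then I would invoke the hypothesis. Since the Exclusivity principle singles out the quantum maximum $\vartheta(G)$ for $G$, the argument establishing that bound (including, if needed, the auxiliary independent copies used to derive it, which remain available because they involve fresh systems independent of the two copies at hand) applies verbatim to each marginal, giving $\sum_i p_i^{(1)} \leq \vartheta(G)$ and $\sum_i p_i^{(2)} \leq \vartheta(G)$. Adding these yields $S_{T(G)} \leq 2\vartheta(G) = \vartheta[T(G)]$. For a partial twinning $G'$ the situation is identical: removing cross edges only weakens the exclusivity constraints and leaves the two induced copies of $G$ untouched, so the same splitting gives $S_{G'} \leq 2\vartheta(G) = \vartheta(G')$ by corollary \ref{corpartialtwinning}.

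Finally, to see that these bounds are saturated, so that the principle really pins down the quantum value rather than merely an upper bound, I would recall that the quantum set realizes the Lov\'asz number (theorem \ref{teoqboundexclusivity}) and that every quantum distribution obeys the Exclusivity principle (theorem \ref{teoquantume}); thus the quantum realizations of $T(G)$ and of $G'$ attain $2\vartheta(G)$ while respecting the principle, whence the Exclusivity-principle maximum coincides with the quantum maximum in both cases. The one step deserving care --- the main obstacle --- is the justification that the single-graph hypothesis may legitimately be applied to an induced copy living inside the larger graph: I must verify that restricting an Exclusivity-principle distribution on $T(G)$ to one copy produces a genuine Exclusivity-principle distribution on $G$, so that whatever construction establishes the bound $\vartheta(G)$ can be rerun with that marginal as input.
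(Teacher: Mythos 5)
Your proof is correct and is essentially the argument the paper leaves implicit: the paper states the corollary as following directly from theorem \ref{thetatwinning} and corollary \ref{corpartialtwinning}, and your splitting of $S_{T(G)}$ over the two vertex-disjoint induced copies of $G$, bounding each marginal via the hypothesis, and saturating the resulting bound $2\vartheta(G)$ with a quantum realization is exactly the reasoning needed to make that derivation rigorous. Your flagged subtle step --- that restricting an Exclusivity-principle distribution to an induced copy yields a genuine Exclusivity-principle distribution on $G$, with any auxiliary independent systems used in the hypothesis still available --- is indeed the crux, and your justification of it (cliques of an induced subgraph are cliques of the ambient graph) is sound.
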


The extreme case of partial twinning  presented in figure \ref{figtwinningpentagon} (c) is also called the direct sum of 
$G$ with itself \cite{Knuth93}. 
We call this operation \emph{duplication}\footnote{Although the term \emph{duplication} is sometimes 
used to refer to a different graph operation than the one we 
define here, we choose this term because its physical interpretation: for exclusivity graph, 
the duplication, as defined above, represents two independent realizations of
the same experiment.} of $G$.
 We can apply this same operation on
two different graphs $G'$ and $G''$, obtaining a graph $G$ with  $v(G')+v(G'')$ vertices and such that 
$u \sim v$ in $G$ if and only if either $u \sim v$ in $G'$ or $u \sim v$ in $G''$. 
Clearly $\vartheta(G)= \vartheta(G') + \vartheta(G'')$, 
and we also have the trivial result that if the Exclusivity principle singles out the quantum
maximum for $G'$ and $G''$ it also singles out the quantum maximum for $G$.

\begin{figure}[!h]
\centering
 \includegraphics[scale=0.6]{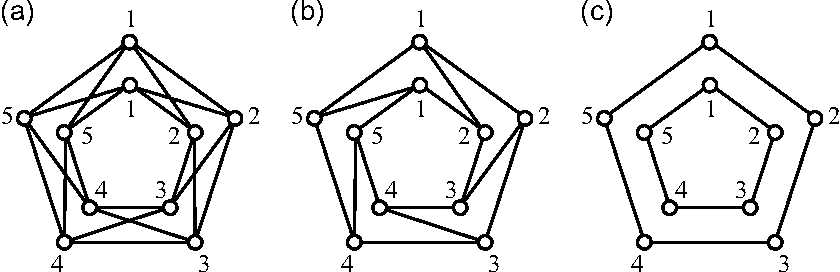}
\caption{(a) The twinning of $C_5$, the circulant graph $Ci_{10}(2,3)$. (b) A partial twinning of $C_5$, the circulant graph $Ci_{10}(2,5)$. (c) The duplication of $C_5$, the circulant graph $Ci_{10}(2)$.
\label{figtwinningpentagon}}
\end{figure}

\subsection{\textsf{Vertex-transitive graphs obtained from $C_5$}}
\label{Sec3}

Applying the operations above to $C_5$, for which the Exclusivity principle singles out the quantum maximum,  
and using the results from previous sections we can explain the quantum maximum for almost all 
vertex-transitive graphs with $10$ vertices.

Among the vertex-transitive graphs on $10$ vertices, only eight have $\vartheta(G) > \alpha(G)$, the circulant graphs 
$Ci_{10}(1,2,3,5)$, $Ci_{10}(1,4)$, $Ci_{10}(2,5)$, $Ci_{10}(2,3,5)$, 
$Ci_{10}(1,2,3)$, $Ci_{10}(1,2)$, and $Ci_{10}(1,2,5)$ and the Johnson graph $J(5,2)$ 
\cite{wikicirculantgraph, wikijohnsongraph}. 


\begin{prop}[Amaral and Cabello]
\label{teo10vertices}
The quantum maximum for the graphs $J(5,2)$, $Ci_{10}(1,2,3,5)$, $Ci_{10}(1,4)$, $Ci_{10}(2,5)$, $Ci_{10}(2,3,5)$ and 
$Ci_{10}(1,2,3)$ is the maximum value  allowed by the $E$ principle.

\end{prop}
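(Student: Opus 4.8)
The plan is to handle the Johnson graph $J(5,2)$ directly and to reduce each of the five circulant graphs to the pentagon $C_5$, for which the Exclusivity principle already singles out the quantum maximum $\sqrt5$ (the result of Section \ref{sectionpentagon}). The transport of the $E$-principle bound along the operations of this chapter is done by Corollaries \ref{cordirectsum} and \ref{cortwinning} and by Proposition \ref{teotransitive}; the bookkeeping is simplified by two standard facts about circulant graphs on $10$ vertices: $\overline{Ci_{10}(S)}=Ci_{10}(\{1,2,3,4,5\}\setminus S)$, and $Ci_{10}(S)\cong Ci_{10}(mS)$ whenever $\gcd(m,10)=1$ (the nontrivial multipliers being $m=3,7,9$, each reduced to representatives in $\{1,\dots,5\}$). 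Every $Ci_{10}$ is vertex-transitive, so the hypotheses of Proposition \ref{teotransitive} are always available.

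The graph $J(5,2)$ is the single-copy case. Its maximal cliques are the five ``stars'' of $2$-subsets sharing a fixed element, each of size $4$; summing the five constraints $\sum_{S\ni e}p_S\le 1$ and using $\sum_e\sum_{S\ni e}p_S=2\sum_S p_S$ gives $\alpha^*(J(5,2))\le \frac52$, and the uniform assignment $p_S=\frac14$ attains it, so $\alpha^*(J(5,2))=\frac52$. Since $J(5,2)$ is the strongly regular triangular graph with $k=6$ and $\lambda_{\min}=-2$, one has $\vartheta(J(5,2))=\frac{-n\lambda_{\min}}{k-\lambda_{\min}}=\frac52$ (consistently, $\vartheta(J(5,2))=10/\vartheta(\mbox{Petersen})=10/4$). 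By Theorem \ref{teoqboundexclusivity}, $\beta_E=\alpha^*$ and $\beta_Q=\vartheta$, so here $\beta_E=\beta_Q=\frac52$ and the $E$ principle singles out the quantum maximum with no operation needed.

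The remaining four graphs go through the operations. First, $Ci_{10}(1,2,3,5)$ is the direct cosum of two copies of $C_5$ (Figure \ref{figcosumpentagon}), so Corollary \ref{cordirectsum} applies immediately. Next, the multiplier $m=3$ sends $\{2,3\}$ to $\{6,9\}$, i.e. $\{1,4\}$, so $Ci_{10}(1,4)\cong Ci_{10}(2,3)$, which is exactly the twinning of $C_5$ (Figure \ref{figtwinningpentagon}); and $Ci_{10}(2,5)$ is a partial twinning of $C_5$. Both are therefore covered by Corollary \ref{cortwinning}. Finally, the two degree-six cases are obtained by complementation: $\overline{Ci_{10}(2,3,5)}=Ci_{10}(1,4)$, and $\overline{Ci_{10}(1,2,3)}=Ci_{10}(4,5)\cong Ci_{10}(2,5)$ (since $m=3$ sends $\{4,5\}$ to $\{12,15\}\equiv\{2,5\}$). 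As the $E$ principle has just been shown to single out the quantum maxima for $Ci_{10}(1,4)$ and for $Ci_{10}(2,5)$, and as $Ci_{10}(2,3,5)$ and $Ci_{10}(1,2,3)$ are vertex-transitive, Proposition \ref{teotransitive} upgrades these to the quantum maxima for $Ci_{10}(2,3,5)$ and $Ci_{10}(1,2,3)$.

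The step I expect to require the most care is the logical status of the hypothesis of Proposition \ref{teotransitive}, which is phrased as ``given the quantum maximum for $\overline{G}$''. I would emphasize that in the last step this is not an extra assumption but is itself furnished by the $E$ principle through the twinning argument of the previous step, so the final conclusion rests on the $E$ principle alone. Concretely, the symmetrization of Lemma \ref{lemmatransitive} is valid for any convex, automorphism-invariant family of distributions, hence for $E$-principle distributions and not only quantum ones; it lets me assume constant distributions on both $G$ and $\overline{G}$. The product construction of Lemma \ref{lemmamaxima} then gives $M_E(G)\,M_E(\overline{G})\le n$, where $M_E$ denotes the maximum of $S$ over $E$-principle distributions; feeding in the already-established equality $M_E(\overline{G})=M_Q(\overline{G})$ together with $M_Q(G)M_Q(\overline{G})=\vartheta(G)\vartheta(\overline{G})=n$ yields $M_E(G)\le M_Q(G)$, while the reverse inequality is Theorem \ref{teoquantume}. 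The only genuinely graph-theoretic labor that remains is verifying the four operation/isomorphism identifications above vertex by vertex from the circulant presentations.
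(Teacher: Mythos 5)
Your proof is correct and follows essentially the same route as the paper's own proof: $J(5,2)$ is handled by $\vartheta=\alpha^{*}$, $Ci_{10}(1,2,3,5)$ by the direct cosum of two pentagons, $Ci_{10}(1,4)$ and $Ci_{10}(2,5)$ by twinning and partial twinning of $C_5$, and the two degree-six circulants by complementation together with Proposition \ref{teotransitive}. The details you add beyond the paper's two-line argument --- the explicit computation $\vartheta(J(5,2))=\alpha^{*}(J(5,2))=\frac{5}{2}$, the circulant isomorphisms via multipliers coprime to $10$ (which reconcile the proposition's label $Ci_{10}(1,4)$ with the figure's label $Ci_{10}(2,3)$ for the twinning), and the clarification that the hypothesis of Proposition \ref{teotransitive} is itself furnished by the E principle via the twinning step --- are all correct and simply make explicit what the paper leaves implicit.
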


\begin{dem}
Since $\vartheta(J(5,2))=\alpha^{*}(J(5,2))$, the Exclusivity principle by itself
explains the quantum maximum for this graph. The graph $Ci_{10}(1,2,3,5)$ is the direct cosum of $C_5$ with itself, 
$Ci_{10}(1,4)$ is the twinning of $C_5$, $Ci_{10}(2,5)$ is a partial twinning of $C_5$,
$Ci_{10}(2,3,5)$ is the complement of $Ci_{10}(1,4)$, and $Ci_{10}(1,2,3)$ is the complement of $Ci_{10}(2,5)$.
Hence, the result follows from proposition \ref{teotransitive} and corollaries \ref{cordirectsum} and \ref{cortwinning}.
\end{dem}

\begin{figure}[!h]
\centering
 \includegraphics[scale=0.7]{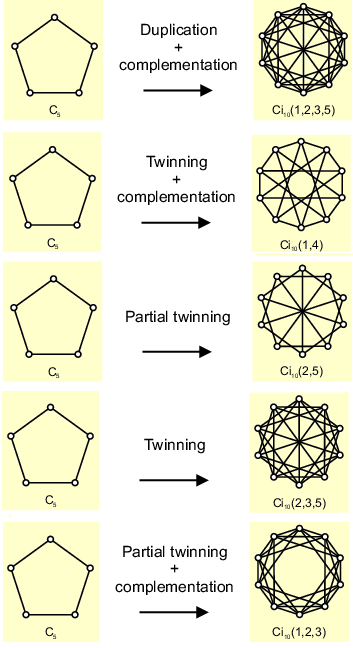}
\caption{Vertex transitive graphs  of theorem \ref{teo10vertices}.}
\label{fig10vertices}
\end{figure}


\section{\textsf{Final Remarks}}
\label{sectionfinallovasz}

In this chapter, we have shown that the Exclusivity principle is able to single out the quantum maximum and even the entire
set of quantum distributions in many different situations.  The results found so far are listed below.

\begin{enumerate}
 \item The Exclusivity principle directly explains the quantum maximum for all graphs with
 $\vartheta(G)=\alpha^*(G)$ \cite{CSW10};

\item Given the set of quantum distributions for $\overline{G}$, the Exclusivity principle explains the 
entire set of quantum correlations for $G$, as shown in proposition \ref{teoexclusivityquantumset} \cite{ATC14};

\item The Exclusivity  principle, applied to two copies of the graph, explains the entire set of 
quantum correlations for self-complementary graphs, including the pentagon, the simplest graph exhibiting 
quantum contextuality, as shown in proposition \ref{propselfcomp} \cite{Cabello13, ATC14};

\item Given the quantum maximum for $\overline{G}$, the Exclusivity principle explains the quantum maximum for any 
vertex-transitive graph $G$, as shown in proposition \ref{teotransitive}  \cite{ATC14};

\item The Exclusivity principle explains the quantum maximum for all vertex-transitive graphs with $10$ vertices, 
except $Ci_{10}(1,2)$ and $Ci_{10}(1,2,5)$,  as shown in proposition \ref{teo10vertices};

\item Either applied to two copies of the exclusivity graph of the  CHSH inequality \cite{FSABCLA12}
or of a simpler inequality \cite{Cabello13}, the E principle excludes Popescu-Rohrlich nonlocal boxes and provides 
an upper bound to the maximum violation of the CHSH inequality which is close to the Tsirelson bound (see Appendix 
\ref{chaptertsirelson};
 
 \item The Exclusivity principle rules out all extremal non-quantum distributions in the $(2,2,d)$
 Bell scenarios \cite{FSABCLA12};
 
 \item When applied to the OR product of an infinite number of copies, 
 there is strong numerical  evidence that the E principle singles out the maximum quantum violation of the
 noncontextuality inequalities whose exclusivity graph is
the complement of odd cycles on $n \ge 7$ vertices \cite{CDLP13}. Indeed, 
it might be also the case that, when applied to an infinite number of copies, 
the Exclusivity principle singles out the Tsirelson bound of the CHSH inequality \cite{FSABCLA12,Cabello13}.
 
\end{enumerate}

\begin{figure}[!h]
\centering
 \includegraphics[scale=0.5]{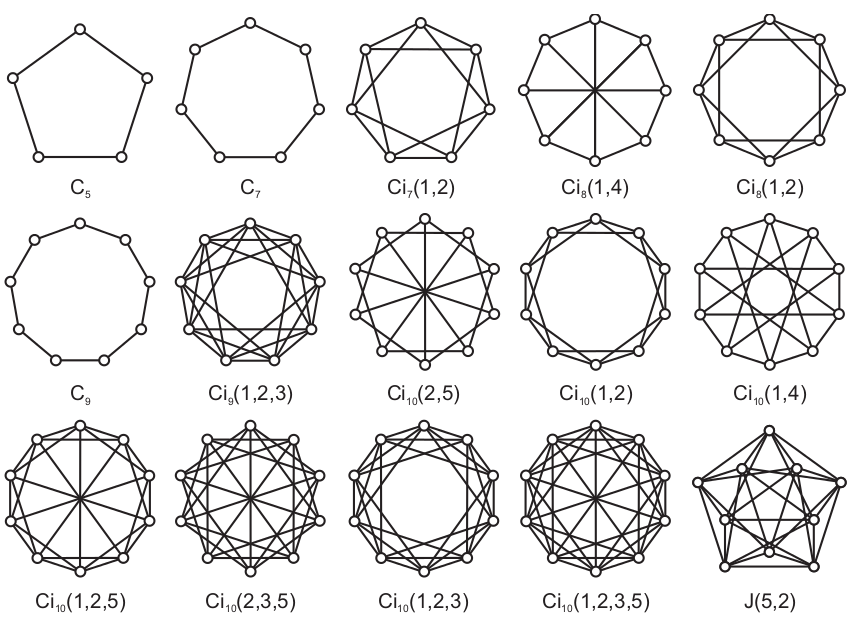}
\caption{Vertex-transitive graphs  with 10 vertices or less.}
\label{figtransitive}
\end{figure}

The simplest vertex-transitive graphs are shown in figure \ref{figtransitive}. The strengh of
the Exclusivity principle can be very well exemplified if we analyze what it predicts for those graphs.
For $G=C_5$, the Exclusivity
principle explains the entire set of quantum distributions. For $C_7$ and $C_9$, there are strong 
numerical evidences that 
it explains the quantum maximum\footnote{A. Cabello, private communication.}. If this is indeed the case, we can also explain the quantum maximum for 
$Ci_7(1,2)=\overline{C_7}$ and $Ci_9(1,2,3)=\overline{C_9}$. It might also be the case that the Exclusivity principle explains
the quantum maximum for  $Ci_8(1,4)$, the exclusivity graph of the CHSH inequality, and if this conjecture is true, it will
also explain the quantum maximum for $Ci_8(1,2)=\overline{Ci_8(1,4)}$. 


\chapter*{\textsf{To conclude, or not to conclude?}}
\addcontentsline{toc}{chapter}{To conclude, or not to conclude?}

This thesis is devoted to a mathematical presentation of some results in the quest for a principle that explains 
quantum contextuality.

The first two chapters are devoted to  setting of the ground in which we work. We define the generalized probability theories
we use to describe a physical system and discuss how contextuality arises naturally in this framework.  
We demand that the Exclusivity Principle be satisfied by all distributions. An open question, we would be happy to answer soon, 
is if there is a set of axioms we could impose on these theories that can guarantee that the E principle holds and still
be compatible with quantum theory.

The original results of the author and collaborators are the focus of chapter 
\ref{chapterlovasz}. In section \ref{sectioncomplement}, we describe the three main results of reference \cite{ATC14}.
Our first result shows that the E principle singles out the
set of the quantum correlations associated to any exclusivity graph assuming the set of quantum correlations
for the complementary graph. This result goes beyond
the one presented by Yan in \cite{Yan13}, since using the same
assumptions we have shown that the E principle singles
out the entire set of quantum correlations and not just
its maximum.

Our second result states that for self-complementary graphs, the E principle, by itself, excludes any set 
of correlations strictly larger than the
quantum set.This shows that the power of the E principle for singling out quantum correlations goes beyond
the power of any previously proposed principle. While
previous principles cannot rule out the existence of sets
of distributions strictly larger than the quantum set in any single scenario,
our results proves that this is indeed the case for many interesting 
ones, including the famous and important KCBS scenario.

Finally, we have shown that, assuming only the maximum for the complementary graph, the E principle singles out 
the quantum maximum for vertex-transitive graphs. This allows
 experimental tests discarding higher-than-quantum
distributions for this kind of dual experiment.  Interestingly, the CHSH Bell inequality is one of these cases.

Section \ref{sectionoperations} 
 is devoted to unpublished results concerning  graph operations other than complementation.
We use these  operations  to connect the quantum maximum  of different
graphs. With these connections, once we prove that the E principle singles out the quantum maximum for one graph,
we are able to conclude that it also does for many others. Using this idea with the pentagon we 
  show that the exclusivity principle explains the quantum maximum for all vertex-transitive graphs with
 $10$ vertices, except two. 
 If the E principle explains the quantum bound for one of them, the result of Yan \cite{Yan13} proves that the 
 E principle also explain the quantum bound for the other.

All these results still do not prove that the E principle
is \emph{the} principle for quantum correlations. However, what
is clear at this point is that the E principle has a surprising and unprecedented power for explaining many puzzling
predictions of quantum theory.

We have many plans for the near future. One of our priorities is to conclude our work with the graphs with $10$ vertices,
explaining the quantum bound for the remaining two, a problem that has been puzzling us for a long time.
We want to continue our search for the families with increasingly large $\frac{\vartheta}{\alpha}$ and find connections of 
this value with applications. We believe that there is a connection between this ratio and advantage of quantum strategies
over classical strategies in a game. The little story of the quantum gambler of subsection \ref{subsectiongambler} is an 
example, but we would like to find more sophisticated situations. We also believe that there may be a connection between
this ratio (or some other quantifier of contextuality) with amplification of randomness.

In summary, this thesis closes with some answers, and many questions.

\appendix
\chapter{\textsf{The impossibility of non-contextual hidden variable models}}
\label{chaptercontextuality}

In this chapter we will present a number of proofs of the impossibility of certain hidden-variable models
aiming to complete  quantum theory.  
We will show that with some very reasonable extra assumptions on these models, we get
a contradiction with the predictions of quantum theory. 

The first one to present such a proof was von Neumann, and we will discuss his
result in section \ref{sectionvonNeumann}. Several further developments were made, which culminated with the proof of 
the Bell-Kochen-Specker theorem,
which states the impossibility of \emph{noncontextual} hidden-variable models compatible with quantum theory.
We give a proof of this theorem using a lemma  by Gleason in section \ref{sectiongleason}, and the
Kochen-Specker original proof in section \ref{sectionks}. We present other simple proofs in section 
\ref{sectionothers}. A contextual hidden-variable model is given
 in section \ref{sectioncontextualmodel}. 



\vspace{-1em}

\section{\textsf{von Neumann}}
\label{sectionvonNeumann}

Von Neumann was the first to rigorously establish a mathematical formulation for quantum theory, 
published in his 1932 work \emph{Mathematische Grundlagen der Quantenmechanik}, and later 
translated to English in 1955 \cite{vonNeumann55}.
His rigorous approach permitted him also to challenge the ideas of completion of quantum theory.

He derived the quantum formula \eqref{eqmeanvalue}
$$\left\langle O\right\rangle = \tr\left(\rho O\right)$$
 for 
the expectation value of a measurement
from a few general assumptions about the expectation-value function. Then, from this formula we can prove that there is
no
dispersion-free state, and hence that hidden-variable models compatible with quantum theory are impossible. Although one of his assumptions was wrong, as we explain later, his result was
a landmark in  foundations of physics, since he opened the door for a series of papers disproving
the impossibility of this kind of completion.

\subsection{\textsf{von Neumann's assumptions}}

Given a specific type of system in a probability theory, every state defines an expectation-value function,
according to definition \ref{defiexpectation}:
$$\langle \ \rangle: \mathcal{M} \longrightarrow \mathbb{R}$$
where $\mathcal{M}$ stands for the set of measurements in the model. Instead on focusing on the possible states, von Neumann was interested in the properties of these functions, and stated a number of requirements
he believed were natural impositions on them.

 \begin{defi} An expectation value function $\langle \ \rangle: \mathcal{M}\longrightarrow \mathbb{R}$ is 
 \emph{dispersion-free}
 if 
 \be \langle \mathrm{M}^2  \rangle= \langle \mathrm{M}  \rangle^2.\ee
 for every measurement $\mathrm{M}$ allowed in the model.
 \end{defi}
 
 Dispersion-free functions are the ones that come from states in which the values of all measurements have definite
 values, that is, for every $\mathrm{M}$, one of the outcomes has probability one.

 \begin{defi} An expectation value function $\langle \ \rangle:\mathcal{M}\longrightarrow \mathbb{R}$ is called \emph{pure} if
 \be \langle \ \rangle = p\langle  \ \rangle' +(1-p)\langle \ \rangle'', \ 0< p< 1,\ee
 implies that $\langle \ \rangle=\langle \ \rangle'=\langle \ \rangle''$.
\end{defi}

Pure functions are the ones that can not be written as a convex sum of others and $\left\langle \ \right\rangle$ is pure
iff the state that defines it is a pure state of
the system. Every dispersion-free function is pure, but the converse is not aways true. For example, 
in quantum theory, pure functions are
the ones defined by one-dimensional projectors, while there is no dispersion-free function. In a hidden-variable model, the
two notions coincide.

In quantum theory, every measurement $\mathrm{M}$ is associated to an observable, a hermitian operator $O$ acting on the Hilbert space of
the system. Von Neumann's first assumption is that this correspondence is one-to-one and onto.

\begin{as}
\label{asmeasurementoperators}
 There is a bijective correspondence between measurements in a quantum system and hermitian operators acting on
 the Hilbert space of the system.
\end{as}

This is not always the case, since some systems are subjected to certain superselection rules, which forbid some
hermitian operators \cite{wikisuperselection}. Although this is not a general assumption, there are other formulations of 
von Neumann's result that
circumvent this difficulty (see \cite{CFS70} and references therein).

Suppose a given hidden-variable model is provided that completes quantum theory. The states of the system, now given by
quantum state plus hidden-variable,
define
expectation-value functions acting on the set of measurements in the system, which is, by assumption
\ref{asmeasurementoperators},
the set of hermitian operators acting on the Hilbert space $\mathcal{H}$ 
of the system  $O\left(\mathcal{H}\right)$. Then, every
state in the theory is associated with a expectation value function 
$$\langle \ \rangle: O\left(\mathcal{H}\right)\longrightarrow \mathbb{R}.$$

The next step in von Neumann's approach  was to impose a few assumptions on the functions $\langle \ \rangle$ that he believed 
 to be valid if these functions came from expectation values in a given state 
of a real physical system.

\begin{as}
\label{asvonneumann}
\begin{enumerate}
 \item If $\mathrm{M}$ is by nature non-negative, $\left\langle\mathrm{M}\right\rangle \geq 0$;
 
 \item If measurement $\mathrm{M}_1$ is associated to observable $O_1$ and
 $\mathrm{M}_2$ is associated to observable $O_2$, we can define 
 measurement $\mathrm{M}_1+\mathrm{M}_2$ 
and it is associated to observable $O_1+O_2$;
 
 \item If $\mathrm{M}_1, \mathrm{M}_2, \ldots$ are arbitrary measurements
 
 $$\left\langle a_1\mathrm{M}_1 + a_2\mathrm{M}_2 + \ldots\right\rangle =
 a_1\left\langle \mathrm{M}_1\right\rangle + a_2\left\langle\mathrm{M}_2 \right\rangle + \ldots$$
 that is, all expectation value functions are linear;\label{asvonneumannad}
 
 \item If measurement $\mathrm{M}$ is associated to observable $O$ and $f:\mathbb{R}\longrightarrow \mathbb{R}$
 is any real function\footnote{Measurement $f(\mathrm{M})$ is defined using the following rule: measure $\mathrm{M}$
 and apply $f$ to the outcome obtained. Observable $f(O)$ can be defined easily if we write $O$ is spectral decomposition.
 Let $O=\sum_i a_i\ket{v_i}\bra{v_i}$, where $\{\ket{v_i}\}$ is an orthonormal basis for the corresponding vector space.
 Then $f(O)=\sum_i f(a_i)\ket{v_i}\bra{v_i}$.}, the measurement $f\left(\mathrm{M}\right)$ is associated to observable $f(O)$.

\end{enumerate}
\end{as}

\begin{teo}
 Under assumptions \ref{asmeasurementoperators} and \ref{asvonneumann}, the expectation value functions in any theory completing quantum theory
 are given by
 \be \left\langle \mathrm{M}\right\rangle = \mbox{Tr}\left(O\rho\right),\ee
 where $O$ is the observable corresponding to measurement $\mathrm{M}$ and $\rho$ is a density operator
 that depends only on the function $\langle \  \rangle$ (and not on the particular measurement $\mathrm{M}$).
\end{teo}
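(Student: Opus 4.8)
The plan is to reconstruct von Neumann's classical argument that the linearity assumption \ref{asvonneumannad}, together with the bijective correspondence \ref{asmeasurementoperators}, forces every expectation-value function to have the canonical trace form. First I would observe that assumption \ref{asvonneumann} makes the map $O \mapsto \langle \mathrm{M}_O\rangle$ a \emph{real-linear functional} on the real vector space $O(\mathcal{H})$ of Hermitian operators: additivity comes from item \ref{asvonneumannad} applied to pairs, and homogeneity over $\mathbb{R}$ follows from the same item with scalar coefficients (the function $f(x)=ax$ in item~4 also gives $\langle a\mathrm{M}\rangle = a\langle \mathrm{M}\rangle$ consistently). The key structural fact is therefore that $\langle\,\cdot\,\rangle$ is a linear functional on a finite-dimensional real inner-product space, where the natural inner product is the Hilbert--Schmidt product $\langle A, B\rangle_{\mathrm{HS}} = \tr(AB)$ for Hermitian $A,B$.

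Next I would invoke the Riesz representation theorem for this inner product: since $O(\mathcal{H})$ is finite dimensional and $\langle\,\cdot\,\rangle$ is linear, there exists a unique Hermitian operator $\rho$ such that
\begin{equation}
\langle \mathrm{M}_O\rangle = \tr(O\rho)
\end{equation}
for every Hermitian $O$. The operator $\rho$ depends only on the functional, i.e.\ only on the state that defines $\langle\,\cdot\,\rangle$, and not on any particular measurement, which is exactly the claim. It remains to check that $\rho$ is a genuine density operator. Unit trace follows by evaluating on the identity operator: the measurement associated to $I$ has the constant outcome $1$, so $\langle I\rangle = 1 = \tr(I\rho) = \tr(\rho)$. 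Positivity of $\rho$ follows from item~1 of assumption \ref{asvonneumann}: for any $\ket{\phi}\in\mathcal{H}$, the projector $\ket{\phi}\bra{\phi}$ is a non-negative observable, so $0 \leq \langle \ket{\phi}\bra{\phi}\rangle = \tr(\ket{\phi}\bra{\phi}\rho) = \sand{\phi}{\rho}{\phi}$, which gives $\rho \geq 0$.

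\textbf{Main obstacle.}
The hard part will be justifying, cleanly, that the linearity in item~\ref{asvonneumannad} legitimately applies to \emph{non-commuting} observables $O_1, O_2$, since the physical definition of the measurement $\mathrm{M}_1 + \mathrm{M}_2$ is not operationally obvious when the summands cannot be measured jointly. This is precisely the point Bell later identified as the flaw in von Neumann's argument: the assumption is mathematically powerful but physically unjustified for incompatible measurements, because there is no reason an expectation value should be additive over observables that are not simultaneously measurable. For the purposes of \emph{deriving the formula} I would simply grant the assumption as stated (that is what the theorem hypothesizes), keeping the Riesz-representation step as the technical core; but I would flag that the strength of the conclusion --- no dispersion-free states exist, hence no hidden variables --- rests entirely on this questionable premise, which is the natural segue into the discussion of why the no-go theorem fails to rule out \emph{contextual} hidden-variable models.
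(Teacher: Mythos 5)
Your proposal cannot be checked against the paper's own argument for the simple reason that there is none: the thesis states this theorem without proof, as a quotation of von Neumann's result \cite{vonNeumann55}, and moves directly to its consequences and to Bell's criticism. Judged on its own merits, your proof is correct and is the standard reconstruction of von Neumann's argument. Assumptions \ref{asmeasurementoperators} and \ref{asvonneumann} make $O \mapsto \left\langle \mathrm{M}_O\right\rangle$ a real-linear functional on the space of Hermitian operators, which is a finite-dimensional real vector space in the only case the thesis treats (finite-dimensional $\mathcal{H}$); Riesz representation with respect to the Hilbert--Schmidt inner product then produces a unique Hermitian $\rho$ with $\left\langle \mathrm{M}_O\right\rangle = \tr\left(O\rho\right)$, depending only on the functional and not on $O$; and item 1 of assumption \ref{asvonneumann}, applied to the projectors $\ket{\phi}\bra{\phi}$, gives $\sand{\phi}{\rho}{\phi} \geq 0$, hence positivity. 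One step deserves to be flagged: the normalization $\tr(\rho)=1$ does not follow from the four listed items alone. Your evaluation $\left\langle I \right\rangle = 1$ tacitly invokes compatibility with quantum theory in the sense that the possible outcomes of a measurement are the eigenvalues of the associated observable, so that the identity has the single outcome $1$; this is the same convention the thesis uses later in its Kochen--Specker discussion, so the step is legitimate, but it should be stated as a premise rather than presented as a consequence of items 1--4. Finally, your closing caveat about additivity over non-commuting observables is precisely Bell's objection, which the paper raises immediately after the theorem; it does not affect the validity of the derivation, only the physical weight of its hypotheses.
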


This result implies that, as long as we impose all items of assumption \ref{asvonneumann} and 
\ref{asmeasurementoperators}, we can not 
circumvent the quantum rule for expectation 
values. As we already know, the pure functions of this form are the ones for which the associated density operator
is a one-dimensional projector $P$ and these functions only give dispersion-free expectation value for 
a small number of measurements, namely, the ones for which the subspace in which $P$ projects is an 
eigenspace of the associated observable. This in turn implies that there is no dispersion-free function, proving the 
impossibility of hidden-variable models compatible with quantum theory.

von Neumann's theorem had the support of many important physicists. For a long time,  
it was generally believed to demonstrate that no deterministic theory reproducing the statistical quantum
predictions  was possible. In 1966, 
J. Bell published a paper with some serious criticism to one of the requirements made for the expectation-value 
functions \cite{Bell66}.  
von Neumann required
them to be linear, which is the case for quantum theory, but there is no physical reason to impose this property for more
general theories. In fact, as von Neumann point out himself in reference \cite{vonNeumann55}, the sum of measurements 
$a_1\mathrm{M}_1 + a_2\mathrm{M}_2 + \ldots$ is completely meaningless when the measurements involved are 
not compatible, since there is no way of constructing, in general, the corresponding experimental set up to implement it.
Thus Bell argued that in the case of incompatible measurements, it is not reasonable to require that the expectation values
necessarily reflect the observables' algebraic relationships.

It is a special property of quantum theory that the sum of the corresponding observables corresponds to another 
allowed measurement (as long as assumption \ref{asmeasurementoperators} is valid), and the fact that the 
expectation value is linear  is a consequence of the mathematical 
rules of quantum theory and is not enforced by any general physical law.
In fact, it is not difficult to provide a hidden-variable model
agreeing with quantum theory for a qubit, which does not satisfy linearity of expectation values.

\begin{ex}[An example of hidden-variable model]
In reference \cite{Bell66}, Bell showed an example of a hidden-variable model for a qubit,
agreeing with quantum theory but violating von Neumann's assumption of linearity. Let  $A$ be an operator acting on $\mathbb{C}^2$.
Since the Pauli matrices $\sigma_i$ and the identity $I$ form a basis to the real vector space of $4 \times 4$ hermitian operators we 
can always write $A$ in the form
$$A=a_0I+a_1\sigma_x + a_2\sigma_y + a_3\sigma_z,$$
where $a_i \in \mathbb{R}$.

If we set $\ket{a}=(a_1,a_2,a_3)$, the eigenvalues of $A$, and hence the possible values of $v(A)$, can be written as
$$v(A)=a_0 \pm \norm{a}.$$
Let $\ket{\phi} \in \mathbb{C}^2$  and  $\ket{n}$ be the point on the Bloch sphere corresponding to  $\ket{\phi}$. Then, we have
$$\left\langle A \right\rangle=\sand{\phi}{A}{\phi}=a_0 + \braket{a}{n}.$$

Together with the quantum state $\ket{\phi}$, we will use  another vector $\ket{m}$ in the Bloch sphere to
represent the state of the system. This new vector plays the role of hidden variable in the model.
 The complete state of the system is then given by the pair $\de{\ket{\phi},\ket{m}}$, which specifies definite
 outcomes for every projective measurement according to the rule:
$$\left\{\begin{array}{cc}
v\de{A}=a_0 +\norm{a} & \mbox{if} \ \de{\ket{m}+\ket{n}}\cdot \ket{a}\geq 0,\\
v\de{A}=a_0 -\norm{a} & \mbox{if} \  \de{\ket{m}+\ket{n}}\cdot \ket{a}< 0,
\end{array}\right.$$
in which $v\de{A}$ is the value assigned to $A$ when the system is in the state $\de{\ket{\phi},\ket{m}}$.

It is not difficult to show that this model gives the quantum statistics when we average over the hidden variable
$\ket{m}$ using the uniform measure on the sphere $S^2$. Indeed,
$$\int_{S^2} v\de{A}\, d\ket{m}= \langle A \rangle, \quad \forall \, \ket{\phi}.$$
 
\end{ex}

\subsection{\textsf{Functionally closed sets and von Neumann's theorem}}

We can conclude from von Neumann's result  that it is not possible to reproduce the quantum statistics with hidden-variable models
that provide definite outcomes for all observables and at the same time give rise to linear expectation-value functions.
When dealing with hidden-variable models, the assumption that all measurements have well defined values is mandatory, and 
hence we are obligated to give up from the linearity assumption. At least from the mathematical point of view, it 
might be interesting to do the opposite  \cite{ZF98}.

Given a  quantum state $\rho$ of a system with Hilbert space $\mathcal{H}$, we will now try to
solve the following task: 

\begin{center}
Specify an extra variable and a set of observables for which it is possible to 
assign  definite values, in such a way that the quantum predictions for $\rho$ are
recovered when we average over all possible values of the extra variable.
\end{center}

von Neumann's 
result shows that this set can not be the entire set of operators acting in $\mathcal{H}$, if we assume linearity
of the expectation-value functions.

Let $D(\varrho)$ be the set of all definite-valued operators for a state $\varrho$ in some theory, where $\varrho$ corresponds to
quantum state $\rho$ and possibly an extra variable.
The operators one may include in this set depend
on what we use as a description of the state of the system. For example, if  the state is described accordingly 
only to quantum rules (that is, if there is no extra variable), an observable $O$ is in $D(\varrho)$ if and only if
the support of $\rho$ is included in one of the eigenspaces of $O$.
If the state of the system is provided by a hidden-variable model compatible with quantum theory,  $D(\varrho)$ includes
all hermitian operators acting on $\mathcal{H}$. What structure can we assume, \emph{a priori}, for the set $D(\varrho)$? 

To prove his theorem, von Neumann made two assumptions about this set when the states are given in a hidden-variable model:

\begin{enumerate}
 \item For every state $\varrho$ in the model, $D(\varrho)$ contains all observables acting on $\mathcal{H}$;\label{asDall}
 \item The value assigned to each measurement reflects the observables algebraic structure. This is the content of item
 \ref{asvonneumannad} of assumption \ref{asvonneumann}.\label{asDalgebraic}
\end{enumerate}

The criticism made to von Neumann's result is directed mainly to item number \ref{asDalgebraic}. Of course, since he was 
interested in ruling out hidden-variable models, item number \ref{asDall} was mandatory. When we demand both to be true at
the same time, we reach a contradiction. Bell found a way out von Neumann's impossibility proof
by trowing away requirement \ref{asDalgebraic}. 
We can do the same  giving up of item  \ref{asDall}
instead of item \ref{asDalgebraic}.

\begin{defi}
 We say that $A$ is  $*-$\emph{closed} if any hermitian  function\footnote{A hermitian function defined in the set of operators acting on a Hilbert space is a map that takes hermitian operators
to hermitian operators.} of operators in $A$
 is also in $A$.
\end{defi}

We will assume from now on that the set $D(\rho)$ is $*-$closed. 


\begin{defi}
 Let $A$ be a $*-$closed set of hermitian operators. A \emph{functional valuation} in $A$ is a map
 \begin{eqnarray*}
 \left\langle \  \right\rangle : A&\longrightarrow&  \mathbb{R}\\
 O&\longmapsto&\left\langle O\right\rangle
 \end{eqnarray*}
  which satisfies
  $$\lim_{n\rightarrow\infty} \left\langle F_n \right\rangle =\left\langle F\right\rangle$$
  whenever the sequence $F_n$ converges strongly\footnote{If $F_n (x) \to F(x)$ for all $x$ in $\mathcal{H}$,  we say that the sequence of operators $F_n$  converges strongly to $F$.}   to $F$.
\end{defi}

This is a much stronger assumption than what von Neumann demands from his expectation-value functions. 
Von Neumann assumed these functions respect \emph{linear} relationships among the operators, while here
we demand that these functions respect \emph{arbitrary} functional relationships among the operators.

\begin{teo}
 Let $D$ be a  $*-$closed set of definite-valued operators, $d$ the set of projectors contained in $D$ and $\rho$ a 
 density matrix. The following are equivalent:
 \begin{enumerate}
  \item There is a probability measure $\mu$ defined in the set of all functional valuations  
  $$\left\langle \ \right\rangle : D\longrightarrow  \mathbb{R}$$
  such that for all set of compatible operators $O_1, \ldots, O_n \in D$ 
  $$p\left(o_1, \ldots, o_n|O_1, \ldots, O_n\right)=\mu\left(\left\{\left\langle \ \right\rangle; 
\  \left\langle   O_i\right\rangle=o_i 
 \ \forall i\right\}\right)$$
  where $p\left(o_1, \ldots, o_n|O_1, \ldots, O_n\right)$ is the probability of obtaining outcome $o_i$ in a measurement 
  of $O_i$ in state $\rho$.
  
  \item $D$ is a $I$-quasiBoolean algebra, where $I=\{P \in d; P\rho=0\}$.
 \end{enumerate}

\end{teo}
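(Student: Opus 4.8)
The plan is to prove the equivalence by first reducing the somewhat unwieldy notion of a functional valuation to a concrete combinatorial object — a two-valued homomorphism on the lattice $d$ of projectors — and then translating the two conditions into dual statements about Stone duality and measures on the resulting representation space. First I would observe that any functional valuation $\langle\,\rangle$ is forced, on projectors, to take values in $\{0,1\}$: since $P^2=P$ for $P\in d$ and $x\mapsto x^2$ is a functional relation, $\langle P\rangle^2=\langle P\rangle$. Moreover the strong-continuity requirement, applied to the partial sums of an orthogonal family of projectors (whose sum converges strongly to the join), forces $\langle\,\rangle$ to be countably additive on orthogonal families, so its restriction to $d$ is a two-valued $\sigma$-homomorphism. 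Conversely, by the spectral theorem every such homomorphism on $d$ extends uniquely through the functional calculus to a functional valuation on all of $D$; hence functional valuations are in bijection with two-valued $\sigma$-homomorphisms on $d$. I would also record the elementary but essential fact that, for a positive $\rho$ and a projector $P$, $\tr(P\rho)=0$ iff $P\rho=0$, so that $I$ is exactly the ideal of $\rho$-null projectors and $P\mapsto\tr(P\rho)$ is a countably additive probability on $d$ vanishing precisely on $I$.

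For the direction (2) $\Rightarrow$ (1), I would assume that $D$ is an $I$-quasiBoolean algebra, that is, that the quotient $d/I$ is a Boolean $\sigma$-algebra. By the Loomis--Sikorski / Stone representation theorem this quotient is realized as a $\sigma$-algebra of subsets of the space $X$ of its two-valued $\sigma$-homomorphisms, which by the previous paragraph is exactly the space of functional valuations that vanish on $I$. The state functional $\tr(\,\cdot\,\rho)$ descends to $d/I$ as a countably additive probability and therefore corresponds to a genuine probability measure $\mu$ on $X$. It then remains to check the joint-probability condition: for compatible operators $O_1,\dots,O_n$ their spectral projections lie in a common commutative subalgebra of $D$, which maps to a Boolean sub-$\sigma$-algebra of $d/I$; multiplicativity of the two-valued homomorphisms on commuting projectors identifies the event $\{\,\langle\,\rangle : \langle O_i\rangle=o_i\ \forall i\,\}$ with the joint spectral projection, and $\mu$ of this event equals $\tr\!\big(\prod_k P_{o_k}\,\rho\big)$, the quantum joint probability.

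For the converse (1) $\Rightarrow$ (2), I would start from the measure $\mu$ and form the evaluation map $\Phi: D\to\mathbb{R}^{\,\Omega}$, $\Phi(O)(\langle\,\rangle)=\langle O\rangle$, where $\Omega$ is the set of functional valuations. Because each valuation respects functional relations, $\Phi$ is a homomorphism for the functional structure into a commutative algebra of real functions, and it sends each projector to a $\{0,1\}$-valued (indicator) function; hence $\Phi(d)$ is a Boolean algebra of indicators sitting inside the measure algebra of $(\Omega,\mu)$. The hypothesis that $\mu$ reproduces the quantum statistics guarantees that $\Phi$ is faithful modulo $\mu$-null sets, and a projector $P$ is sent to the zero class iff $\mu(\{\langle P\rangle=1\})=\tr(P\rho)=0$, i.e.\ iff $P\in I$. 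Therefore $\Phi$ induces an isomorphism of $d/I$ onto a Boolean $\sigma$-subalgebra of a measure algebra, so $d/I$ is Boolean and $D$ is $I$-quasiBoolean.

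The main obstacle I anticipate is measure-theoretic rather than algebraic: matching the strong-continuity axiom built into the definition of a functional valuation with genuine \emph{countable} additivity, so that the finitely additive state on $d/I$ extends to a countably additive $\mu$ and the representation space carries a true measure rather than merely a finitely additive one. This is where the Loomis--Sikorski theorem and the $\sigma$-completeness implicit in the quasiBoolean hypothesis must be invoked with care. A second delicate point is the bookkeeping of the partial (orthomodular) structure: I must verify that \emph{compatible} is equivalent to \emph{lying in a common commutative subalgebra}, and that the joint-probability condition, quantified over \emph{all} such commuting families simultaneously, is consistently reproduced. This is precisely where the Boolean — as opposed to merely orthomodular — character of $d/I$ is both used and needed, and it is the crux that separates the present $\ast$-closed setting from the Kochen--Specker obstruction.
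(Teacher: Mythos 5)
The first thing to note is that the thesis does not prove this theorem at all: it is stated and immediately attributed to the literature (the citation \cite{ZF98}), and the term ``$I$-quasiBoolean algebra'' is never even defined in the text. So there is no proof of the paper's to compare yours against, and your proposal has to stand on its own. Its overall architecture --- identify functional valuations with two-valued homomorphisms on the projector set $d$, push the state $\tr\left(\cdot\,\rho\right)$ down to the quotient $d/I$, run a representation theorem in one direction and an evaluation-map argument in the other --- is the natural reconstruction of what such a proof must look like.

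However, there is a genuine gap exactly at the point you yourself flag as the main obstacle, and the tool you invoke does not close it. The Loomis--Sikorski theorem represents a Boolean $\sigma$-algebra as a $\sigma$-field of sets modulo a $\sigma$-ideal; it does \emph{not} represent it as a field of sets on the space of its two-valued $\sigma$-homomorphisms. In general that space can be empty: an atomless measure algebra admits no two-valued $\sigma$-homomorphisms whatsoever, and the points of the Stone space are merely finitely additive ultrafilters. So the step ``$d/I$ Boolean $\Rightarrow$ $\tr\left(\cdot\,\rho\right)$ becomes a countably additive probability measure on the space of valuations'' does not follow as you state it. To repair it you must either exploit the finite-dimensionality assumed throughout the thesis (there, orthogonal families of nonzero projectors are finite, strong and norm convergence coincide, and one can argue atomicity of $d/I$ directly, so the representation becomes elementary), or use whatever additional structure the cited reference builds into its definition of $I$-quasiBoolean; neither is done in the proposal.

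A second gap sits in your $(1)\Rightarrow(2)$ direction. The evaluation map $\Phi$ sends $d$ to a family of indicator classes in the measure algebra of $(\Omega,\mu)$ that is closed under complementation and under meets of \emph{commuting} pairs (multiplicativity of valuations is only available on commuting projectors), yet you conclude that $d/I$ \emph{is} a Boolean $\sigma$-subalgebra. Closure of the image under meets and joins of arbitrary, possibly noncommuting, pairs --- that is, the existence of Boolean operations on $d/I$ itself rather than in the ambient Boolean algebra it injects into --- is precisely the content of the quasiBoolean property, and injectivity modulo $\mu$-null sets does not deliver it: a proper subfamily of a Boolean algebra closed under complement need not be Boolean. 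As it stands, your argument shows that $d/I$ embeds in a Boolean algebra with partial preservation of structure, which is weaker than condition (2).
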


This means that when $D$ is a $I$-quasiBoolean algebra it is possible to attribute definite values to its elements in such a way that
we recover the quantum predictions when averaging over all possible valuations. Moreover, this attribution is made in
such a way that all functional relations among the observables are preserved \cite{ZF98}.

This shows that there is another way around von Neumann's result. Instead of questioning, like Bell did, the requirement of
linearity of the definite values attributed to the measurements, we drop the assumption that all observable must receive a 
definite value. Then the theorem above shows that we can actually strengthen the assumption of linearity, requiring that 
all functional relations be preserved, and we still can recover the quantum statistics.

We may ask now what if this result has any physical interest. Clearly it can not be used to rule out hidden-variable
theories, since this requires that all measurements  have definite values. Nonetheless, 
this result is connected to a  family of realist interpretations of quantum theory. Each of them   supplies a 
\emph{rule of definite-value ascription},
which picks out, from the set of all observables of a quantum system, the subset of definite-valued observables.
This family is known as \emph{modal interpretations of quantum theory} \cite{modal}.

\section{\textsf{Gleason's lemma}}
\label{sectiongleason}

In reference \cite{Gleason57}, Gleason proves his famous theorem, a mathematical result which is of particular importance 
for the 
field of quantum logic. It proves that the  quantum rule for calculating the probability of obtaining  specific results of
a given measurement
follows naturally from the structure  of events in a real or complex Hilbert space.  Although Gleason's main result is 
motivated by a problem in  foundations of quantum theory, his objective had in principle 
nothing to do with hidden variables, which are not even mentioned in his paper. Nevertheless, his work was of huge
importance to discard the possibility of certain hidden-variable models and its free of certain drawbacks 
present in von Neumann's assumptions.

Gleason's main interest was to determine all measures on the set of subspaces of a Hilbert space. 

\begin{defi}
 A \emph{measure} in the set {\Fontlukas S} of subspaces of a Hilbert space $\mathcal{H}$ is a function
 \be \mu: \mbox{\Fontlukas S} \longrightarrow [0,1]\ee
 such that $\mu\left(\mathcal{H}\right)=1$ and such that 
 if $\{S_1, \ldots, S_n\}$ is a collection of mutually orthogonal subspaces spanning the subspace $S$
 \be \mu\left(S\right)=\sum_{i=1}^{n}\mu\left(S_i\right).\ee
\end{defi}
 
 \begin{ex}
  To every density operator acting on $\mathcal{H}$ corresponds a measure $\mu_\rho$ in {\Fontlukas S} defined by
  \be \mu_\rho\left(S\right)=\mbox{Tr}\left(\rho P_S\right)\label{eqmeasures}\ee
where $P_S$ is the projector onto $S$.
    \end{ex}

    Gleason's main result states  that all measures on {\Fontlukas S} are of the form \ref{eqmeasures}, if the dimension of
    $\mathcal{H}$ is at least three.

    \begin{defi}
     A \emph{frame function of weight} $W$ for a Hilbert space $\mathcal{H}$ is a real-valued function 
     \be f: \mathcal{E} \longrightarrow \mathbb{R} \ee
     where $\mathcal{E}$ is the unit sphere in $\mathcal{H}$, such that if $x_1, \ldots, x_n$ is a an orthonormal basis
     for $\mathcal{H}$ then
     $$ \sum_i f(x_i)=W.$$
    \end{defi}

   Given a non-negative frame function with weight $W=1$, we can define a measure on {\Fontlukas S}. For every 
  one-dimensional subspace $S$ of $\mathcal{H}$, we define $\mu(P)=f(x)$, where $P$ is the projector over $S$
  and $\ket{x}$ is a unit vector belonging to $S$.
    
    \begin{defi}
     A frame function is said to be \emph{regular} if there exists a hermitian operator $T$ acting on
     $\mathcal{H}$ such that 
     $$f(x)= \left\langle x|T|x\right\rangle$$
     for all $x \ \in \ \mathcal{E}$.
    \end{defi}

    Before stating his main theorem, Gleason proves several intermediate lemmas, among which is the following:
    
    \begin{lemma}
    \label{lemmagleason}
     Every non-negative frame function on either a real or complex Hilbert space of dimension at least three is regular.
    \end{lemma}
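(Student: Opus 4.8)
The plan is to reduce everything to the core case of a real Hilbert space of dimension exactly three, and to attack that case by first establishing continuity and then exploiting the decomposition into spherical harmonics.

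First I would record the elementary reductions. Restriction: if $V \subseteq \mathcal{H}$ is any three-dimensional subspace, then $f$ restricted to the unit sphere of $V$ is again a non-negative frame function. Indeed, completing an orthonormal basis $(y_1,y_2,y_3)$ of $V$ to an orthonormal basis of $\mathcal{H}$ by vectors $z_j$ lying in $V^\perp$, the frame condition gives $\sum_{i=1}^3 f(y_i) = W - \sum_j f(z_j)$, and the second sum is constant, say $W'$, because $f|_{V^\perp}$ is itself a frame function; hence $f|_V$ has constant weight $W - W'$. Consequently, once the three-dimensional real case is known, each $f|_V$ is a quadratic form $\sand{x}{T_V}{x}$, and a routine gluing argument (two quadratic forms that agree on the common vectors of overlapping subspaces coincide) assembles the $T_V$ into a single symmetric operator $T$ on all of $\mathcal{H}$. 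The complex case is then handled by applying the real result to the real three-dimensional subspaces of $\mathcal{H}$ viewed as a real space, and checking that the resulting real quadratic forms assemble into a genuine Hermitian form, the imaginary (phase) part being recovered from the values of $f$ on the complex two-dimensional subspaces. So the whole weight of the proof sits in the real three-dimensional statement.

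For the sphere $\mathcal{E} = S^2 \subset \mathbb{R}^3$ I would proceed in two steps. Non-negativity immediately gives $0 \le f \le W$, since any value is one summand of a frame sum of non-negative terms. The analytic step is to upgrade this to continuity: here non-negativity is indispensable, since without a boundedness hypothesis pathological (non-measurable) frame functions exist. The idea is a rigidity argument relating the values of $f$ at nearby points through chains of orthonormal frames — if $f(p)$ is large then $f$ must be small on the whole great circle of vectors orthogonal to $p$, and conversely — and bootstrapping these constraints to bound the oscillation of $f$ near any point. Once continuity is in hand, I would expand $f$ into spherical harmonics, $f = \sum_{\ell \ge 0} Y_\ell$. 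Writing $g(x)$ for the average of $f$ over the great circle orthogonal to $x$, averaging the frame identity $f(x)+f(y)+f(z)=W$ over the rotations of the pair $(y,z)$ in that circle yields $f(x) + 2g(x) = W$. The circle-averaging operator acts on each harmonic component as multiplication by the Legendre value $P_\ell(0)$, so the relation forces $(P_\ell(0)+\tfrac12)\,Y_\ell = 0$ for every $\ell \ge 1$. Since $P_\ell(0) = 0$ for odd $\ell$, $P_2(0) = -\tfrac12$, and $|P_\ell(0)| < \tfrac12$ for even $\ell \ge 4$, only $\ell = 0$ and $\ell = 2$ survive. Thus $f$ is the sum of a constant and a degree-two harmonic, which is exactly a quadratic form $\sand{x}{T}{x}$ on the unit sphere, establishing regularity.

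The main obstacle is plainly the continuity step for non-negative frame functions on $S^2$; the computation of the eigenvalues $P_\ell(0)$ and the linear-algebra gluing are routine by comparison, and the continuity is precisely the place where the non-negativity hypothesis does its essential work. I would expect to spend most of the effort making the oscillation-control estimate precise — quantifying how much $f$ can differ between two nearby unit vectors in terms of a finite chain of orthogonality relations connecting them — and, secondarily, on the bookkeeping of the complex case, where one must verify that the phases are genuinely reconstructed as a Hermitian rather than merely real-symmetric form.
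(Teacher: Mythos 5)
The paper never proves this lemma: it is quoted as one of Gleason's intermediate results and deferred, without comment, to \cite{Gleason57}, so the only benchmark is Gleason's original argument --- which is precisely the architecture you describe. Your reductions and your harmonic-analysis step are sound: restriction to subspaces does yield frame functions of constant weight, the averaging of $f(x)+f(y)+f(z)=W$ over rotations of the orthogonal pair gives $f(x)+2g(x)=W$, the great-circle average acts on the degree-$\ell$ harmonic as multiplication by $P_\ell(0)$ (Funk--Hecke), and since $P_\ell(0)=0$ for odd $\ell$, $P_2(0)=-\tfrac12$, and $\left|P_\ell(0)\right|<\tfrac12$ for even $\ell\geq 4$, only the constant and the quadratic harmonic survive. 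That part of the plan is a faithful and correct rendering of Gleason's proof for \emph{continuous} frame functions on $S^2$.

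The genuine gap is the continuity step, and it cannot be waved through: ``a rigidity argument bootstrapped to control oscillation'' is a description of what must be proved, not a proof, and in Gleason's paper this step is by far the longest and most delicate part (it is exactly the part later reworked by Piron and by Cooke, Keane and Moran because of its difficulty). The one concrete estimate you do give --- if $f(p)$ is close to $W$ then $f$ is small on the whole great circle orthogonal to $p$ --- is true but structurally insufficient, for a reason worth internalizing: on a single circle (the two-dimensional case) non-negative frame functions need \emph{not} be regular (e.g.\ $f(\theta)=\tfrac{W}{2}+c\sin 6\theta$ satisfies $f(\theta)+f(\theta+\tfrac{\pi}{2})=W$), so no argument that treats great circles one at a time can ever force continuity; one needs the genuinely two-dimensional way in which great circles through nearby points of $S^2$ interlace, which is where all of Gleason's geometric lemmas live. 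Secondarily, the complex case is more than bookkeeping: a real quadratic form on a real form of $\mathcal{H}$ does not determine a Hermitian form, and reconstructing the imaginary part from the behaviour of $f$ on complex two-dimensional subspaces is a separate argument (a full section in Gleason) that your sketch names but does not supply.
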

    
    As a consequence of this lemma, we have Gleason's main result:

    \begin{teo}
     Let $\mu$ be a measure on the set  {\Fontlukas S} of subspaces of a Hilbert space $\mathcal{H}$ of dimension at least
     three. Then there exists a density matrix $\rho$ such that $\mu=\mu_{\rho}.$
     \label{propgleason}
    \end{teo}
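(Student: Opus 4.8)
The plan is to deduce this theorem from Gleason's lemma (Lemma \ref{lemmagleason}) by converting the given measure on subspaces into a non-negative frame function on the unit sphere, and then reading off the density matrix from the regular operator the lemma produces. First I would define a candidate frame function $f:\mathcal{E}\longrightarrow\mathbb{R}$ by setting
$$f(x)=\mu\left(S_x\right),$$
where $S_x$ is the one-dimensional subspace spanned by the unit vector $x$. This is well defined on $\mathcal{E}$ because $x$ and $e^{i\phi}x$ span the same subspace, so $f$ is automatically phase invariant, and it is non-negative because $\mu$ takes values in $[0,1]$. To check that $f$ is a frame function of weight $W=1$, I would take any orthonormal basis $\{x_1,\ldots,x_n\}$ of $\mathcal{H}$: the subspaces $S_{x_1},\ldots,S_{x_n}$ are mutually orthogonal and span $\mathcal{H}$, so the additivity property in the definition of a measure gives $\sum_i f(x_i)=\sum_i\mu(S_{x_i})=\mu(\mathcal{H})=1$.

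Next, since $\dim\mathcal{H}\geq 3$, Lemma \ref{lemmagleason} applies and guarantees that $f$ is regular: there is a hermitian operator $T$ acting on $\mathcal{H}$ with $f(x)=\langle x|T|x\rangle$ for every $x\in\mathcal{E}$. I would then verify that $T$ is a density matrix. Positivity follows from $\langle x|T|x\rangle=f(x)=\mu(S_x)\geq 0$ for all unit $x$, and unit trace follows by computing $\tr(T)=\sum_i\langle x_i|T|x_i\rangle=\sum_i f(x_i)=W=1$ for any orthonormal basis $\{x_i\}$.

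Finally, it remains to extend the identity from one-dimensional subspaces to all of {\Fontlukas S}. Given a subspace $S$ with orthonormal basis $\{x_1,\ldots,x_k\}$, the additivity of $\mu$ together with the regularity of $f$ gives
$$\mu(S)=\sum_{j=1}^{k}\mu\left(S_{x_j}\right)=\sum_{j=1}^{k}\langle x_j|T|x_j\rangle=\tr\left(T\,P_S\right),$$
where $P_S=\sum_j|x_j\rangle\langle x_j|$ is the projector onto $S$. Hence $\mu=\mu_T$ with $\rho=T$, which is the desired conclusion. The genuinely hard part of the whole argument is Lemma \ref{lemmagleason} itself --- the classification of non-negative frame functions as quadratic forms --- which is precisely where the hypothesis $\dim\mathcal{H}\geq 3$ is indispensable, since the two-dimensional case admits non-regular frame functions. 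Granting that lemma, the remaining steps above amount to routine bookkeeping with orthonormal bases and the additivity axiom, the only point demanding a little care being the well-definedness and phase invariance of $f$ on $\mathcal{E}$.
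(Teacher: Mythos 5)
Your proof is correct and takes exactly the route the paper intends: the paper merely asserts this theorem ``as a consequence'' of Lemma \ref{lemmagleason} without writing out the deduction, and your argument supplies precisely the missing details --- constructing the weight-one non-negative frame function $f(x)=\mu(S_x)$ from the measure, invoking the lemma to obtain the hermitian operator $T$, and then checking positivity, unit trace, and the additivity step $\mu(S)=\tr\left(T\,P_S\right)$ to conclude $\mu=\mu_{\rho}$ with $\rho=T$. There is nothing to correct.
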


    The consequences of Gleason's theorem to the foundations of quantum theory appear clearly  if one notice
    that we can interpret the measure defined not on the set of subspaces, but on the set of corresponding
    orthogonal projectors. Every projector acting on $\mathcal{H}$ corresponds to an outcome of a measurement in the 
    corresponding quantum system, and hence a measure on {\Fontlukas S} defines a way of calculating the probabilities of
    these outcomes. What theorem \ref{propgleason} states is that the only way of defining these probabilities 
   consistently is through the quantum rule using density matrices.
    
    This is certainly a really interesting fact, but for us the most important statement in Gleason's paper is lemma 
    \ref{lemmagleason}. This result implies that all measures on {\Fontlukas S} are \emph{continuous}, and
   this discards the possibility of certain hidden-variable models.
   
\subsection{\textsf{Using Gleason's Lemma to discard hidden-variable models}}   
   
Let $\lambda$ define a dispersion-free state in a hidden-variable model compatible with quantum theory describing
a system whose associated Hilbert space has dimension at least three. Then,  every one-dimensional projector
$P$ has a well defined outcome for $\lambda$ and hence we can define a measure
\be \mu_{\lambda}: \mathcal{E} \longrightarrow \{0,1\}\ee
that takes each vector in  $\mathcal{E}$  to the value associated to the projector in this direction by $\lambda$.
As a consequence of lemma \ref{lemmagleason}, this measure is continuous and hence it has to be a constant function. 

To see 
that this is really the case, we can translate the problem of assigning values to the points of the sphere to a 
problem of coloring the sphere: if the value associated to an one-dimensional projector is $1$, we
paint the corresponding unit vectors in red; if the associated value is $0$, we paint the vectors in green.
Suppose now that there are two vectors with different colors. Then, if we choose a path between the corresponding 
points in the sphere, we have to change abruptly from red to green somewhere in the way from one point to the other.
Hence, the association can not be done continuously if we use both colors.

Since all associations are constant and we
know that, given a pure quantum state,  there is at least one unidimensional projector
with definite outcome $1$. We conclude that  for all states in the hidden-variable model and for all
one-dimensional projectors the  associated definite value is $1$. This clearly can not reproduce the statistics of quantum theory.

At first sight, one may think that the argument above puts an end to the discussion on the possibility
of hidden-variable models completing quantum theory: it just can not be done. Although very compelling,  there
is one extra assumption on the kind of hidden-variable considered that was not explicitly mentioned. This extra assumption seemed
so natural that one may not even realize it is there. Hence, the reasoning above is not enough to discard all kinds of hidden 
variable models. It proves only that  \emph{noncontextual} models are ruled out. 

\subsection{\textsf{The ``hidden'' assumption of noncontextuality}}

The implicit assumption made in the preceding argument is  such that the hidden-variable models considered are not general
enough, and 
hence the argument can not be used to rule out completely the possibility of completing quantum theory.
It was tacitly assumed that the measurement of an observable must yield the same outcome, regardless of what other
compatible measurements can be made simultaneously. This is the hypothesis of noncontextuality discussed in 
section \ref{sectioncontextuality}.

With these observations, we can conclude as a corollary of Gleason's lemma the following result:

\begin{teo}[Kochen-Specker]
\label{teoKS}
 There is no noncontextual hidden-variable model compatible with quantum theory.
\end{teo}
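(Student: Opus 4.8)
The plan is to derive the result as a direct corollary of Gleason's lemma (lemma \ref{lemmagleason}), by turning a hypothetical noncontextual hidden-variable model into a frame function to which the lemma applies. Suppose, for contradiction, that such a model exists for a quantum system whose Hilbert space $\mathcal{H}$ has dimension at least three, and fix one dispersion-free state $\lambda$ of the model. Because the model is deterministic, $\lambda$ assigns a definite outcome, $0$ or $1$, to the projective measurement associated with every one-dimensional projector $P=\ket{x}\bra{x}$; denote this value by $v_\lambda(x)$, which depends only on the ray through $\ket{x}$. This defines a function $f:\mathcal{E}\longrightarrow\{0,1\}$ on the unit sphere $\mathcal{E}$ of $\mathcal{H}$.

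First I would check that $f$ is a non-negative frame function of weight $1$, and this is precisely the step where the noncontextuality hypothesis of section \ref{sectioncontextuality} is indispensable. Given any orthonormal basis $\{\ket{x_1},\ldots,\ket{x_n}\}$, the projectors $\ket{x_i}\bra{x_i}$ are the outcomes of a single complete projective measurement; since exactly one outcome must occur, the model assigns $v_\lambda(x_i)=1$ for exactly one index and $0$ for the rest, so $\sum_i f(x_i)=1$. The crucial point is that $f(x_i)$ does not depend on which basis (\textit{i.e.} which measurement context) the vector $\ket{x_i}$ is taken to belong to: this is exactly what noncontextuality guarantees, and it is what makes $f$ well defined as a function on $\mathcal{E}$ and simultaneously a frame function for every choice of basis.

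With $f$ in hand, lemma \ref{lemmagleason} applies immediately: since $\dim\mathcal{H}\geq 3$ and $f\geq 0$, the function $f$ is regular, so there is a Hermitian operator $T$ with $f(x)=\sand{x}{T}{x}$ for all $\ket{x}\in\mathcal{E}$. But then $f$ is a continuous function on the connected sphere $\mathcal{E}$ taking values in the discrete set $\{0,1\}$, and hence must be constant. Neither constant value is admissible: if $f\equiv 0$ then $\sum_i f(x_i)=0\neq 1$, while if $f\equiv 1$ then $\sum_i f(x_i)=\dim\mathcal{H}\geq 3\neq 1$, each contradicting the weight-one condition established above. This contradiction shows that no dispersion-free state, and therefore no noncontextual hidden-variable model, can be compatible with quantum theory in dimension at least three.

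I expect no serious obstacle once lemma \ref{lemmagleason} is available, since the lemma carries all the analytic weight; the only subtlety to handle with care is the second paragraph, namely making explicit that it is noncontextuality --- and not merely determinism --- that lets the single number $v_\lambda(x)$ serve simultaneously in every measurement context, so that $f$ is both well defined and a frame function. A contextual assignment would give $\ket{x}$ possibly different values in different bases, the frame-function identity would fail, and Gleason's lemma could not be invoked; this is exactly why the theorem rules out only the noncontextual models and leaves room for the contextual model constructed in section \ref{sectioncontextualmodel}.
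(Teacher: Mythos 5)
Your proof is correct and takes essentially the same route as the paper: both derive the theorem as a corollary of Gleason's lemma by turning a dispersion-free state of a noncontextual model into a non-negative $\{0,1\}$-valued frame function on the unit sphere, using regularity to force continuity and hence constancy, and then contradicting the weight-one condition. If anything, your conclusion is slightly tidier than the paper's, since you rule out both constant values directly from $\sum_i f(x_i)=1$ rather than arguing separately that the all-ones assignment cannot reproduce quantum statistics.
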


Although this result follows from Gleason's lemma, as  we proved above,  this fact was noticed only after 
it was proved by other means by Kochen and Specker. The advantage of Kochen and Specker proof is that,
contrary to Gleason's lemma,
it uses only a finite number of projectors.

\section{\textsf{Kochen and Specker's proof}}
\label{sectionks}

Suppose  a  hidden-variable model completing quantum theory is given. 
If we fix a quantum state for the system and if we also fix the hidden variable, all observables are assigned
a definite value. We will denote this value for observable $O$ by $v(O)$. We will deal only with observables whose associated
operators are one dimensional projectors, since they are enough to get a contradiction and prove the desired result.

The fact that the hidden-variable models must be compatible with quantum theory, the value $v(P)$ assigned to a projector 
$P$ must be one of its eigenvalues, and hence we have
\be v(P) \in \{0, 1\}. \label{eqvalueP}\ee

We also require that the assignment $v$ preserves the algebraic relations among \emph{compatible} operators, and hence, 
if $P_1, \ldots, P_n$ are orthogonal projectors such that $\sum_iP_i=I$ we have

\be \sum_iv(P_i)=1 \label{eqvaluecompatible}\ee

This means that whenever a set of vectors $\ket{\phi_i}$ is a basis for $\mathcal{H}$, $v(P_i)=1$ for one, and only one $i$,
 where  $P_i=\ket{\phi_i}\bra{\phi_i}$ is the corresponding projector. 
 
 Although $v$ comes form a hidden-variable model, and hence is defined in the set of observables in a quantum system,
 we will use the fact that we are restricted to the set of one dimensional projectors and consider $v$ as function assigning values to
 either the one dimensional projectors acting on $\mathcal{H}$ or   unit vectors in $\mathcal{H}$. 
 If $P=\ket{\phi}\bra{\phi}$, the value of $v$ in both 
 $P$ and $\ket{v}$ is the same 
 $$v(P)=v\left(\ket{\phi}\right).$$
 
The idea behind Kochen and Specker's proof is to find a set of vectors in such a way that is impossible to assign definite values to 
the corresponding projectors 
obeying \eqref{eqvalueP} and \eqref{eqvaluecompatible}. This proves the impossibility of noncontextual hidden-variable models
completing quantum theory.

\begin{defi}
 A \emph{definite prediction set} of vectors (DPS) is a set $A=\{r_1, \ldots, r_n\}$ of unit vectors in a Hilbert space
 $\mathcal{H}$ such that at least for one choice of assignment for some $r_i$ the value of some other $r_j$ is
 determined by \eqref{eqvalueP} and \eqref{eqvaluecompatible}.
\end{defi}

Such a set may be represented with a graph, usually called Kochen-Specker diagram. The vertices of the graph correspond to 
the vectors in the set and two vertices are connected by an edge if the corresponding vectors are orthogonal. In this 
representation, the problem of assigning values to the projectors can be translated into a problem of coloring the vertices of 
the graph. If a hidden-variable model assigns value $1$ to the  projector  we paint the 
corresponding vertex in red. If the model assigns value $0$ we paint the vertex in green. Notice that the painting is 
independent of other compatible measurements performed simultaneously, which is the assumption of noncontextuality of 
the model.

Equation \eqref{eqvaluecompatible} implies a rule for the coloring: in a set of mutually orthogonal vectors, at most one
can be red; if a set of vectors is a orthogonal basis for $\mathcal{H}$, one, and only one of them is red.

The DPS used in Kochen Specker proof is composed of three dimensional vectors, with associated  diagram shown
in figure \ref{figKS8}.  Such a set is called a KS-8 set.

\begin{figure}[h]
 \includegraphics[scale=0.15]{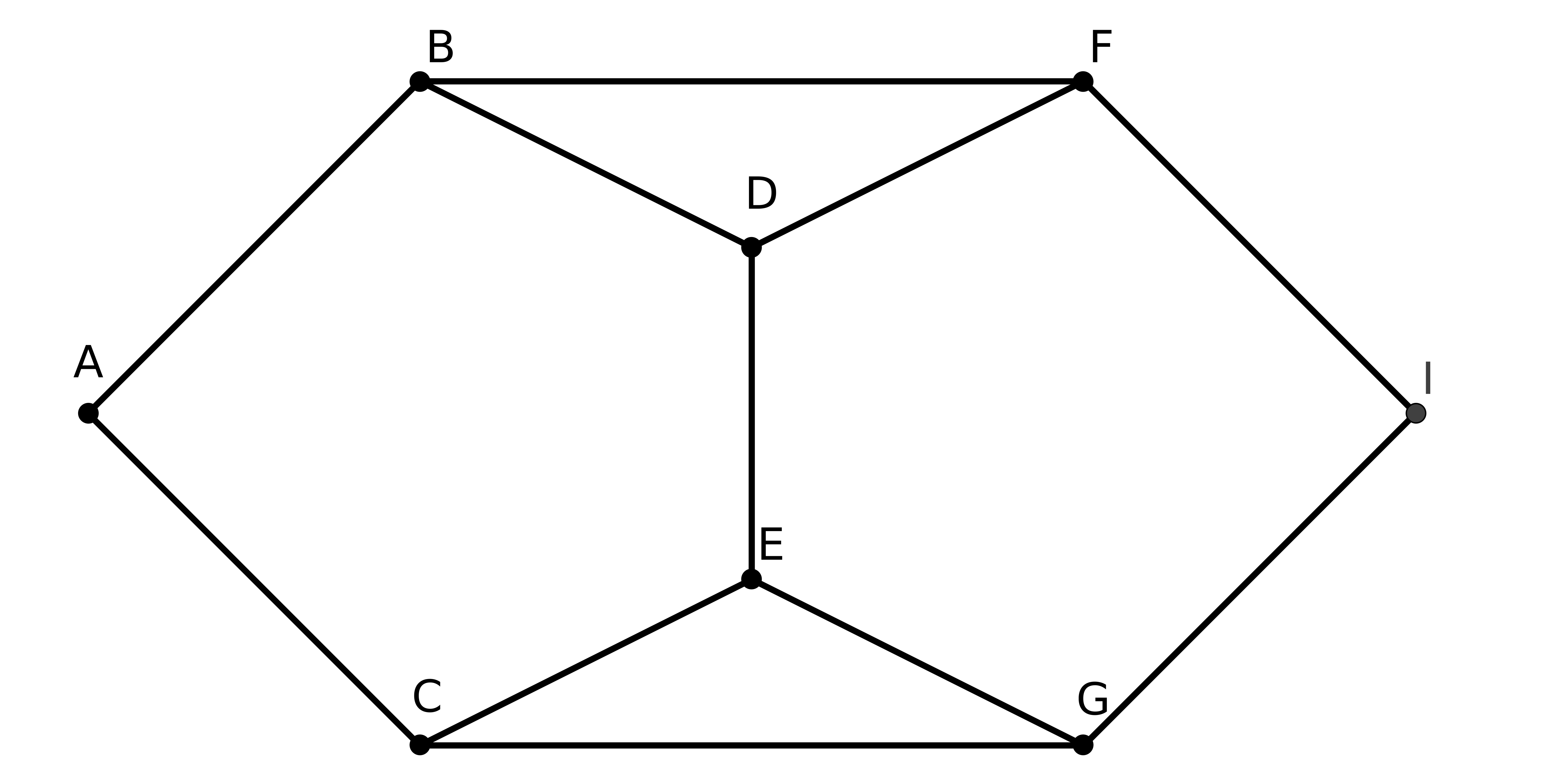}
 \caption{The set KS-8, a DPS used in the original proof of Kochen-Specker theorem.}
 \label{figKS8}
\end{figure}

\begin{teo}
 The set KS-8 is a DPS.
\end{teo}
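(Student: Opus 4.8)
The plan is to establish the definite prediction property directly from the orthogonality structure encoded in the diagram of figure \ref{figKS8}, using only the two coloring rules \eqref{eqvalueP} and \eqref{eqvaluecompatible}. First I would fix explicit coordinates in $\mathbb{R}^3$ for the eight unit vectors $r_1, \ldots, r_8$ realizing the edges of the KS-8 diagram (each edge being an orthogonality relation $\langle r_i \mid r_j\rangle = 0$), and verify by direct inner-product computation that exactly the claimed pairs are orthogonal. A useful simplification here is that the Hilbert space has dimension three: any three pairwise orthogonal unit vectors automatically span the space and hence form an orthonormal basis. Consequently every triangle (three-clique) in the diagram is a complete orthonormal basis, so \eqref{eqvaluecompatible} applies to it with the sum equal to $1$, and I do not need to exhibit any further vectors to complete these bases.

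With the geometry fixed, the coloring rules reduce to two combinatorial constraints on the diagram: (exclusivity) two orthogonal (adjacent) vectors cannot both receive the value $1$, because extending them to an orthonormal triple and applying \eqref{eqvaluecompatible} permits at most one value $1$ among them; and (completeness) in each triangle exactly one vertex receives the value $1$, again by \eqref{eqvaluecompatible}. The core of the argument is then a propagation: I would assign the value $1$ to a distinguished vertex (the apex of the configuration), force all of its neighbors to $0$ by exclusivity, and then repeatedly use completeness on the triangles that now contain two $0$-vertices to force the remaining vertex to $1$, alternating the two rules across the diagram. Carrying this through the eight vectors shows that the value of a second distinguished vertex is completely determined by the initial choice, which is precisely the statement that $\{r_1, \ldots, r_8\}$ is a definite prediction set.

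The main obstacle I anticipate is the bookkeeping of this propagation: at each step one must check that a value is genuinely \emph{forced} and not merely consistent with more than one option, which requires reading the adjacency relations off figure \ref{figKS8} with care and keeping track of the partial coloring. The geometric input, namely that the listed triples really are mutually orthogonal (and therefore bases in dimension three), is a routine but essential verification, since the entire forcing chain rests on each triangle being a genuine orthonormal basis to which \eqref{eqvaluecompatible} can be applied. Once the adjacency data and the triad structure are pinned down, the forcing itself is a short finite case analysis, and no appeal to Gleason's lemma or to any continuity argument is needed.
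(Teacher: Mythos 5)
Your preliminary setup is sound: fixing coordinates, verifying the orthogonalities, observing that in dimension three every triangle of the diagram is automatically an orthonormal basis, and deriving the pairwise exclusivity rule by completing an orthogonal pair to a basis are all legitimate (and are exactly what the paper leaves implicit). The gap is in the forcing mechanism itself. In the KS-8 ``bug'' the apex $A$ is joined to $B$ and $C$, the opposite apex $H$ is joined to $F$ and $G$, the two triangles are $\{B,D,F\}$ and $\{C,E,G\}$, and $D$ is joined to $E$. After you set $v(A)=1$ and force $v(B)=v(C)=0$, \emph{no} triangle contains two $0$-vertices: $\{B,D,F\}$ has only $B$ colored and $\{C,E,G\}$ has only $C$ colored. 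The unit propagation you describe (``repeatedly use completeness on the triangles that now contain two $0$-vertices'') therefore stalls at the very first step, and starting from a different distinguished vertex does not help. Indeed the interior vertices are genuinely \emph{not} determined by $v(A)=1$: both $\left(v(D),v(F),v(E),v(G)\right)=(1,0,0,1)$ and $(0,1,1,0)$ extend it consistently. What is determined is only $v(H)$, and it is determined ``negatively'': every consistent completion gives $v(H)=0$, even though no chain of individually forced values leads there. A definite prediction set need not have a unique coloring, so a deterministic propagation cannot be the whole argument.

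The repair is short, and it is exactly what the paper does: assume $v(A)=1$ \emph{and} $v(H)=1$ simultaneously. Then $B,C$ (neighbors of $A$) and $F,G$ (neighbors of $H$) are all $0$, so the triangles $\{B,D,F\}$ and $\{C,E,G\}$ force $v(D)=v(E)=1$; but $D\perp E$, a contradiction. Hence $v(A)=1$ implies $v(H)=0$, which is the definite-prediction property. Alternatively, you can keep your one-sided start and finish with a two-case split on $v(D)$: if $v(D)=1$ then $v(E)=0$, so the triangle $\{C,E,G\}$ forces $v(G)=1$ and exclusivity with $H$ gives $v(H)=0$; if $v(D)=0$ then the triangle $\{B,D,F\}$ forces $v(F)=1$ and again $v(H)=0$. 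Either way the conclusion is reached, but only through a contradiction or a case analysis, not through the forced propagation you proposed.
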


\begin{proof}
If vector $A$ is red, $B$ and $C$ must necessarily be green. If $H$ is red, $F$ and $G$ are
necessarily green. Since the vectors belong to a  three dimensional space, $D$ and $E$ are necessarily red, which is a contradiction since 
$D$ and $E$ can not be red at the same time. Hence,
$$A=1 \  \Longrightarrow  \  H=0.$$
\end{proof}

A KS-8 can be constructed using the following vectors in three-dimensional space:

$$\begin{array}{ll}
   A=\left(\begin{array}{ccc} 1&0&0\end{array}\right)&E=\left(\begin{array}{ccc} 0&\cos(\beta)&\sin(\beta)\end{array}\right)\\
   B=\left(\begin{array}{ccc} 0&\cos(\alpha)&\sin(\alpha)\end{array}\right)&F=\left(\begin{array}{ccc} \cot(\phi)&1&-\cot(\beta)\end{array}\right)\\
   C=\left(\begin{array}{ccc} \cot(\phi)&1&\cot(\alpha)\end{array}\right)&G=\left(\begin{array}{ccc} \tan(\phi)\cosec(\beta)&-\sin(\beta)&\cos(\beta)\end{array}\right)\\
   D=\left(\begin{array}{ccc} \tan(\phi)\cosec(\alpha)&-\sin(\alpha)&\cos(\alpha)\end{array}\right)&H=\left(\begin{array}{ccc} \sin(\phi)&-\cos(\phi)&0\end{array}\right).
  \end{array}$$

Adding  two more vectors we get another DPS, called KS-10, whose diagram is shown in figure \ref{figKS10}.
In a KS-10, if  $A$ is red, $J$ must necessarily be red. In fact, $v(A)=1 \ \Rightarrow \ v(I)=0\ \mbox{and} \ v(H)=0$. 
Since 
every time we have three mutually orthogonal vectors one of them must be assigned the value $1$, we have $v(J)=1$.
This set is obtained if we use the vectors in KS-8 plus $I=\left(\begin{array}{ccc} 0&0&1\end{array}\right)$ and 
$J=\left(\begin{array}{ccc} \cos(\phi)&\sin(\phi)&0\end{array}\right)$.

\begin{figure}[h]
 \includegraphics[scale=0.7]{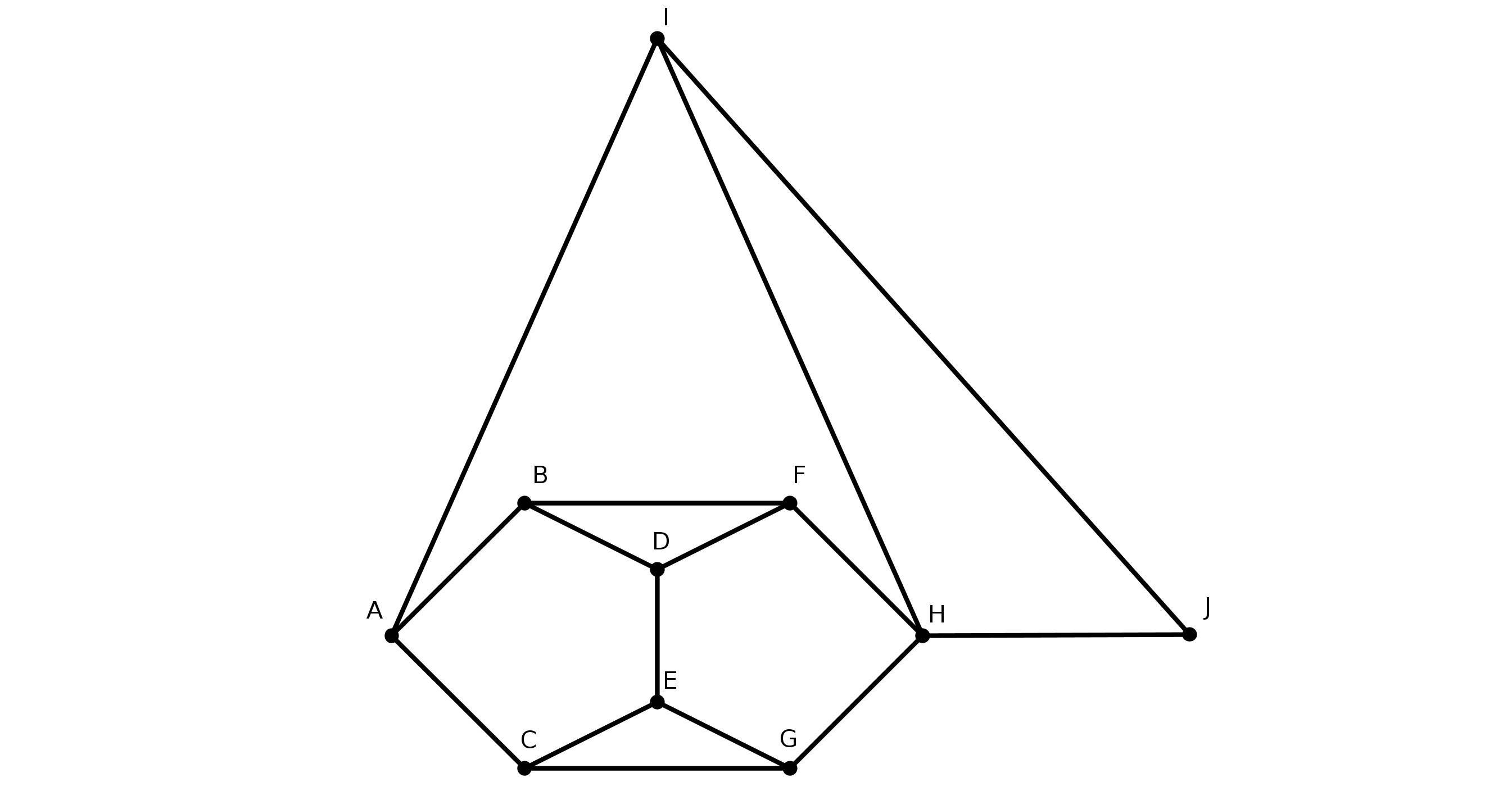}
 \caption{The set KS-10, a DPS used in the original proof of the Kochen-Specker theorem.}
 \label{figKS10}
\end{figure}

\begin{defi}
 A  set  of vectors $A=\{r_1, \ldots, r_n\}$ is called  a \emph{partially no-colorable set} (PNS) if there is at least
 one choice of  assignment to some $r_i$ that makes the assignment of values to the other vectors according to rules
 \eqref{eqvalueP} and \eqref{eqvaluecompatible} impossible.
\end{defi}

To get a PNS we concatenate five diagrams like KS-10, which results in a set of vectors with 
Kochen-Specker diagram as in figure \ref{figKS42}, called KS-42. For such a set,
the assignment of value $1$ to $A$ is impossible.
In fact, $$v(A)=1 \Rightarrow v(A_1)=1 \Rightarrow v(A_2)=1\Rightarrow v(A_3)=1\Rightarrow v(A_4)=1 \Rightarrow v(J)=1,$$ 
but $A$ and $J$ are orthogonal and hence can not be both red.

\begin{center}
\begin{figure}[h]
\centering
 \includegraphics[scale=0.5]{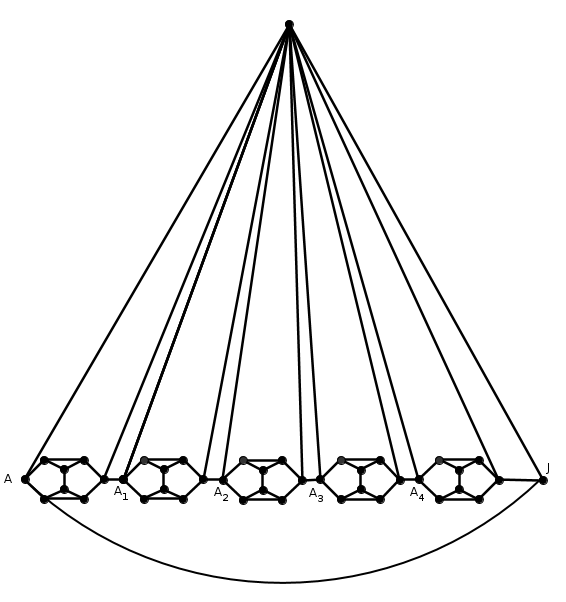}
 \caption{The set KS-42, a PNS used in the original proof of the Kochen-Specker theorem.}
 \label{figKS42}
\end{figure}
\end{center}

\begin{defi}
 A set of vectors is called a \emph{totally non-colorable set} (TNS) if it is impossible to assign definite values to 
 all vectors according to rules \eqref{eqvalueP} and \eqref{eqvaluecompatible}.
\end{defi}

A TNS provides a proof of the Kochen-Specker theorem \ref{teoKS}. 
In fact, a hidden-variable model compatible with quantum theory must assign values to 
all projectors (or equivalently, to the corresponding unit vectors) in such a way that equations
\eqref{eqvalueP} and \eqref{eqvaluecompatible} must be obeyed. Hence, if we find a TNS 
we prove that noncontextual hidden-variable models compatible with quantum theory are impossible.

The sphere in any Hilbert space with dimension at least three is  a TNS, as we have proven 
as a corollary of Gleason's lemma. Using three KS-42 sets we can build a TNS with a finite number of 
vectors in dimension three, simplifying the proof of theorem \ref{teoKS}. This set is shown in figure 
\ref{figKS117}.

A set of vectors with Kochen-Specker diagram as in figure \ref{figKS117} is called KS-117. This is the set used by
Kochen and Specker in their proof of theorem \ref{teoKS}.

\begin{figure}[h]
\centering
 \includegraphics[scale=0.8]{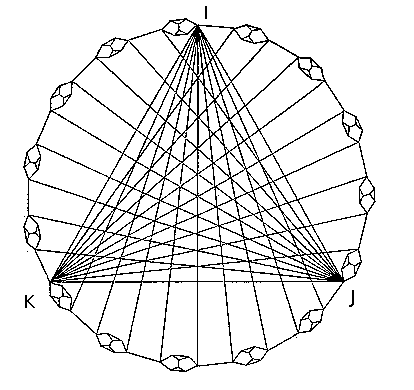}
 \caption{The set KS-117, a TNS used in the original proof of the Kochen-Specker theorem.}
 \label{figKS117}
\end{figure}

\begin{teo}
 It is impossible to assign definite values to the vectors of a KS-117 set according to equations
 \eqref{eqvalueP} and \ref{eqvaluecompatible}.
\end{teo}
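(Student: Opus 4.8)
The plan is to exploit the hierarchical construction already assembled in this section, in which the KS-117 set is obtained by gluing together three copies of the KS-42 set. The logical engine is the partially-no-colorable property of KS-42: for the distinguished vertex $A$ of such a set, the assignment $v(A)=1$ is incompatible with rules \eqref{eqvalueP} and \eqref{eqvaluecompatible}, because following the forced chain $v(A)=1 \Rightarrow v(A_1)=1 \Rightarrow v(A_2)=1 \Rightarrow v(A_3)=1 \Rightarrow v(A_4)=1 \Rightarrow v(J)=1$ produces two red vectors $A$ and $J$ that are orthogonal, violating \eqref{eqvaluecompatible}. Consequently, in any value assignment obeying \eqref{eqvalueP} and \eqref{eqvaluecompatible}, the distinguished vertex of each KS-42 subset \emph{must} receive the value $0$.

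First I would record this deduction as the single fact we need from the substructure: each of the three KS-42 subgraphs forces its own starting vector to be green. No appeal to contextuality is involved here, since the PNS property is a statement about the orthogonality graph alone; this is exactly what makes a finite TNS possible in dimension three.

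Next I would examine how the three KS-42 subsets sit inside KS-117 (figure \ref{figKS117}). The construction is arranged so that the three distinguished vectors, say $A$, $A'$ and $A''$, of the three copies are mutually orthogonal, that is, they form an orthonormal basis of the three-dimensional space. Applying \eqref{eqvaluecompatible} to this basis, exactly one of $v(A)$, $v(A')$, $v(A'')$ must equal $1$. But the previous paragraph forces all three to equal $0$, a contradiction. Hence no global assignment satisfying \eqref{eqvalueP} and \eqref{eqvaluecompatible} can exist, which is precisely the assertion that KS-117 is a totally non-colorable set, and therefore yields a proof of theorem \ref{teoKS}.

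The main obstacle will not be the logical argument, which is very short, but the geometric bookkeeping: one must verify that the $117$ unit vectors can actually be realized in $\mathbb{R}^3$ with exactly the orthogonality relations depicted in figure \ref{figKS117}, that the three KS-42 subsets share vertices consistently, and that their three starting vectors genuinely form an orthonormal triple. I would discharge this by giving explicit coordinates, extending the parametrized KS-8 vectors listed above through suitable choices of the angles $\alpha$, $\beta$ and $\phi$, and then checking that the required inner products vanish. This is a finite, if tedious, verification, and it is the only place where the specific numerical geometry of the construction, rather than pure graph-coloring logic, enters the proof.
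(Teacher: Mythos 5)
Your proof is correct and follows essentially the same route as the paper: the three distinguished vectors of the KS-42 subsets (called $I$, $J$, $K$ in figure \ref{figKS117}) cannot be assigned the value $1$ by the PNS property, yet they are mutually orthogonal, so equation \eqref{eqvaluecompatible} forces one of them to be $1$, a contradiction. Your closing remark is also consonant with the paper, which likewise notes that the genuinely hard part is exhibiting $117$ vectors in dimension three realizing the diagram, and defers that verification to the references.
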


\begin{dem} The proof  is quite simple. We just have to notice that the vectors
$I$, $J$ and $K$ can not be assigned the value $1$, since they are the first vector of a KS-42 set. But they are
mutually orthogonal, and hence one of them should be $1$ according to equation \eqref{eqvaluecompatible}.
\end{dem}

The hard part of the proof is to show that there is a set of vectors in a Hilbert space of dimension 
three with this Kochen-Specker diagram. The details can be found in references \cite{KS67,Cabello96}.

\section{\textsf{Other additive proofs of the Kochen-Specker theorem}}
\label{sectionothers}

\subsection{\textsf{P-33}}

One of the simplest proofs of the Kochen-Specker theorem uses a TNS with $33$ vectors in a Hilbert space
of dimension three \cite{Peres91}.  This TNS is known as P-33.

To simplify the notation, let $m=-1$  and $s=\sqrt{2}$. The vectors in P-33 are
$$\de{1,0,0},\ \ \de{0,1, 1}, \ \ \de{0,1, s}, \ \  \de{s,1,1}, $$
$$\de{0,m, 1}, \ \ \de{0, m, s}, \ \ \de{s,m,1}, \ \ \de{s,m,m},$$
and all others obtained from these by relevant permutations of the coordinates. By relevant we mean any permutation that 
generates a vector in a different one dimensional subspace, since what is important for the proof is the projector on the 
one dimensional subspace and not the vector itself.


The set above has an important property: it is invariant under permutations of the axis and  by
a change of orientation in each axis. This allows us to assign value $1$ to some vectors arbitrarily,
since a different choice is equivalent to this one by an operation that leaves P-33 invariant.

The table below shows the proof that  P-33 is a TNS. To simplify the notation even further, we drop the 
parenthesis in the notation of a vector and use just $abc$ to represent the vector $\de{a,b,c}$. In the table, the vectors in
each line are mutually orthogonal. 
The vectors in the first column are assigned the value $1$, and hence the other  vectors in the same line are
assigned the value $0$.  The assignment of $1$ to the vector in the first column is explained in the last column.

\vspace{2em}

\small{\begin{tabular}[h]{rrrrrl}
Trio&   & \vline& Vectors  &   \hspace{-1em} $\bot$  to the $1^\circ$ \vline &  Explanation \\
$\mathbf{001}$ & $100$& $010$ \vline &  $110$ &  $1m0$ \vline &  Arbitrary choice of axis $z$\\
$\mathbf{101}$& $m01$&$010$ \vline& & \vline&Arbitrary choice of orientation in axis $x$\\
$\mathbf{011}$& $0m1$&$100$ \vline& & \vline &Arbitrary choice of orientation in axis $y$\\
$\mathbf{1ms}$& $m1s$&$110$ \vline&$s0m$ &$0s1$ \vline&Arbitrary choice between $x$ and $y$\\
$\mathbf{10s}$& $s0m$&$010$ \vline&$smm$ & \vline& $2^\circ$ and $3^\circ$ are zero\\
$\mathbf{s11}$& $01m$&$smm$ \vline&$m0s$ & \vline& $2^\circ$ and $3^\circ$ are zero\\
$\mathbf{s01}$& $010$&$10s$ \vline&$mms$ & \vline& $2^\circ$ and $3^\circ$ are zero\\
$\mathbf{11s}$& $1m0$&$11s$ \vline&$0sm$ & \vline& $2^\circ$ and $3^\circ$ are zero\\
$\mathbf{01s}$& $100$&$0sm$ \vline&$1s1$ & \vline& $2^\circ$ and $3^\circ$ are zero\\
$\mathbf{1s1}$& $10m$&$0sm$ \vline&$msm$ & \vline& $2^\circ$ and $3^\circ$ are zero\\
$\mathbf{100}$& $0s1$&$01s$ \vline & &  \vline &CONTRADICTION.\\
\end{tabular}}

\vspace{2em}

We get a contradiction in the last line: we have to assign value $1$ to $100$, but it is already assigned 
value $0$ in the first line.

In the table we used only $25$ vectors, but we can not discard the other $8$ because we need them to repeat the argument
with different choices of the first vector in the first four lines. If we use only the $25$ vectors that appear in the table
we would not have a set  invariant under permutations of the axis  and by
change of orientation in each axis, and the set of vectors would not be a TNS.

\subsection{\textsf{Cabello's proof with 18 vectors}}
\label{sectioncabello18}

In 1996, another simple proof of the KS theorem with 18 vectors in a four  dimensional space was found by Cabello \emph{et. al.}
\cite{Cabello96a}. It
was the world record at the time. The TNS in this proof is shown in figure \ref{figcabello18}. Once more, we drop the 
brackets in the vectors to simplify the notation and use $m=-1$.

\begin{figure}[h]
\centering
 \includegraphics[scale=0.43]{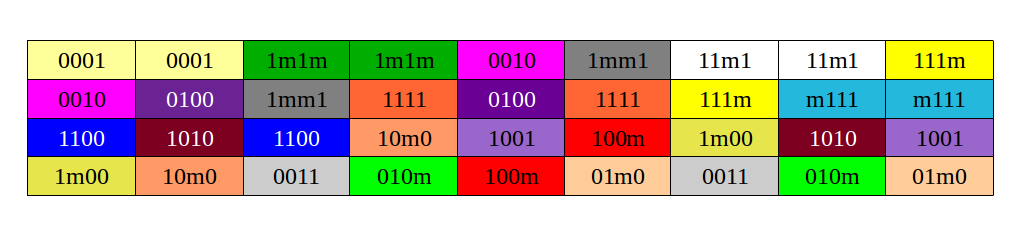}
 \caption{The set used in Cabello's  proof of the Kochen-Specker theorem using 18 vectors.}
 \label{figcabello18}
\end{figure}

In the table bellow, the vectors in each column are orthogonal. Cells that contain the same vector have the same color.
Since we have nine columns, nine different cells, and only nine, can be assigned the value $1$, one for each column. If the 
assignment is noncontextual, cells with the same color must be assigned the same value. To see the contradiction, we just 
notice that  the number of cells with the same color is 2, and hence the number of cells assigned the value 1 must be even.

\subsection{\textsf{The simplest proof of the Kochen-Specker theorem}}

Any TNS shown above provides a proof for the Kochen-Specker theorem
and the impossibility of noncontextual hidden
variable models is established. Nevertheless, from a physical point of view, there is still a lot of work to be done. 
The validity of the theorem should be experimentally verified, and hence people started to work on 
experimental implementations of such proofs \cite{TKLSK13}.

The need of an experimental verification of this result  is what makes the improvement made by Kochen and Specker's original
proof so important: in 
Gleason's proof, we need an infinite number of vectors to reach a contradiction, and this, of course, makes any experimental
test of the result impossible. In the original proof of Kochen and Specker the set of vectors used is finite, but it is really 
big. Any experimental arrangement involving 117 measurements is really hard to implement with small error. 

Many proofs where derived after Kochen-Specker work, with the objective of simplifying the TNS used.
Among the additive proofs (those relying on equation \eqref{eqvaluecompatible}),
the proof presented in section \ref{sectioncabello18} is still the world record for smallest number of vectors in the set.
But a proof with few vectors is not necessarily the simplest proof for an experimentalist. The number of different measurement
setups is related to the number of contexts, and hence it might be better in some situations to seek for a 
set with the smallest number of contexts. In this sense, the simplest proof known was presented in references \cite{LBPC13}.
The 21 vectors used are shown in figure \ref{figcabello21a}. The Kochen-Specker diagram of this set is shown in figure
\ref{figcabello21}.

\begin{figure}[h]
\centering
 \includegraphics[scale=0.43]{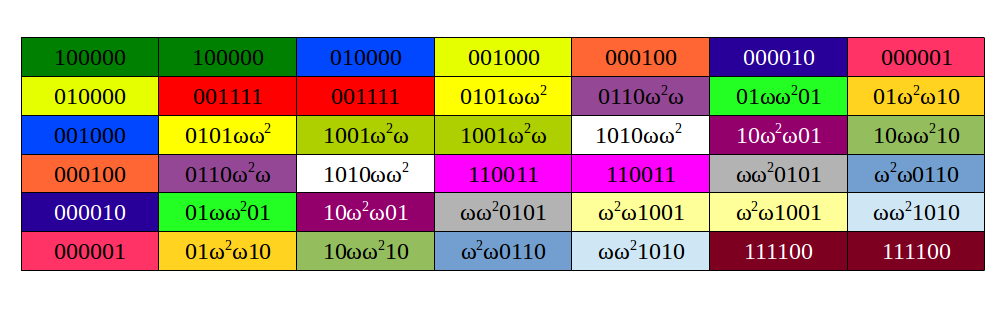}
 \caption{The set used in the simplest  proof of the Kochen-Specker theorem using 21 vectors and 7 contexts.}
 \label{figcabello21a}
\end{figure}

\begin{figure}[h]
\centering
 \includegraphics[scale=0.5]{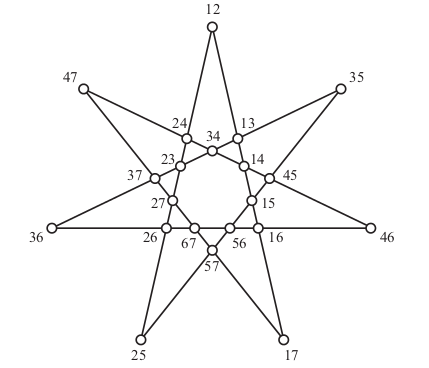}
 \caption{The Kochen-Specker diagram of the TNS. Vector labeled by $ij$ is the vector common to  $i$-th and $j$-th basis.}
 \label{figcabello21}
\end{figure}

In the table of figure \ref{figcabello21a}, the vectors in each column are orthogonal. 
Cells that contain the same vector have the same color.
Since we have seven columns, seven different cells, and only seven, can be assigned the value $1$, one for each column. If the 
assignment is noncontextual, cells with the same color must be assigned the same value. To see the contradiction, once more
we
notice that  the number of cells with the same color is 2, and hence the number of cells assigned the value $1$ must be even.

\subsection{\textsf{Multiplicative proofs of the Kochen-Specker theorem}}

In the previous proofs of the Kochen-Specker theorem, we have used the sum of compatible operators and the fact that
the values assigned by a hidden-variable  model to the observables should obey the same linear relations the 
corresponding operators did. More generally, we can assume that, for compatible operators, the validity of 
$$f(A_1,\ldots, A_n)=0$$
implies that
$$f(v(A_1),\ldots,v( A_n))=0,$$
for any function $f$.

This allows the construction of proofs of the Kochen-Specker theorem with different functions $f$. Examples of
such  proofs are the multiplicative ones we will discuss bellow. In this kind of argument, we use the fact that a set of
compatible operators obey the relation 
$$A_1\times \ldots \times  A_n=B$$
to impose the condition 
\be v(A_1)\times \ldots \times v( A_n) =v(B). \label{eqKSmultiplicative}\ee

\subsubsection{\textsf{The Peres Mermin square}}

A simple multiplicative proof of the Kochen-Specker theorem uses the set of operators known as the 
\emph{Peres-Mermim square} \cite{Mermin90, Peres90}:
\begin{equation}
\begin{array}{ccc}
A_1=\sigma_x \otimes I & A_2=I \otimes  \sigma_x &A_3=\sigma_x \otimes \sigma_x\\
A_4=I\otimes \sigma_y & A_5= \sigma_y \otimes I&A_6=\sigma_y \otimes \sigma_y\\
A_7=\sigma_x \otimes \sigma_y & A_8=\sigma_y \otimes  \sigma_x &A_9=\sigma_z \otimes \sigma_z.\\
\end{array}
\end{equation}

It is not possible to assign definite values $v(A_i)$ to all of these observables in such a way that 
the value assigned to each operator is one of its eigenvalues and \eqref{eqKSmultiplicative} is satisfied.
This happens because this set of operators has the following properties:

\begin{enumerate}
\item The three operator in each line and in each column are compatible;

\item \label{prodop} The product of the operators in the last column is $-I$. 
The product of the operators in the other columns and in all lines is $I$.
    \end{enumerate}

    Using equation \eqref{eqKSmultiplicative}, we have
  \begin{eqnarray}
  P_1= & v(A_1)v(A_2)v(A_3)&=1 \nonumber\\
  P_2= & v(A_4)v(A_5)v(A_6)&=1 \nonumber\\
  P_3= & v(A_7)v(A_8)v(A_9)&=1 \nonumber\\
  P_4= & v(A_1)v(A_4)v(A_7)&=1 \nonumber\\
  P_5= & v(A_2)v(A_5)v(A_8)&=1 \nonumber\\
   P_6= & v(A_3)v(A_6)v(A_9)&=-1
   \label{eqperesmermim}
       \end{eqnarray}

 and hence
  $$1=P_1P_2P_3=P_4P_5P_6=-1$$
  which is a contradiction. This proves that the Peres-Mermim square provides a multiplicative proof of the 
  Kochen-Specker theorem. The assumption of noncontextuality appears clearly in equations \eqref{eqperesmermim} since we
  assumed that each observable has the same value independently if it is measured together with the other compatible
  observables appearing in the same line or in the same column.

 \subsubsection{\textsf{A simple proof in dimension 8}}

Another simple multiplicative proof of the Kochen-Specker theorem is provided by the set of vectors


$$\begin{array}{ll}
A_1=\sigma_y \otimes I \otimes I& A_2=\sigma_x\otimes \sigma_x\otimes \sigma_x \\
A_3=\sigma_y\otimes \sigma_y\otimes \sigma_x &
A_4=\sigma_y\otimes \sigma_x\otimes \sigma_y  \\ A_5=\sigma_x\otimes \sigma_y\otimes \sigma_y & A_6=I \otimes I \otimes \sigma_x  \\
 A_7=I \otimes I \otimes \sigma_y & A_8= \sigma_x\otimes I \otimes I\\
 A_9=I \otimes  \sigma_y \otimes I  & A_{10}=I \otimes  \sigma_x\otimes I.
 \end{array}$$

 The contradiction we get when we assign definite values to these observables is easily understood if we arrange them in a
 star, as shown in figure \ref{figmultiplicative}. The operators are arranged in five lines with four operators each
: $A_1A_3A_6A_9$, $A_1A_4A_7A_{10}$, $A_2A_3A_4A_5$, $A_2A_6A_8A_{10}$ and $A_5A_7A_8A_9$. 
The following properties hold:
 
 \begin{figure}[h]
 \centering
 \includegraphics[scale=0.8]{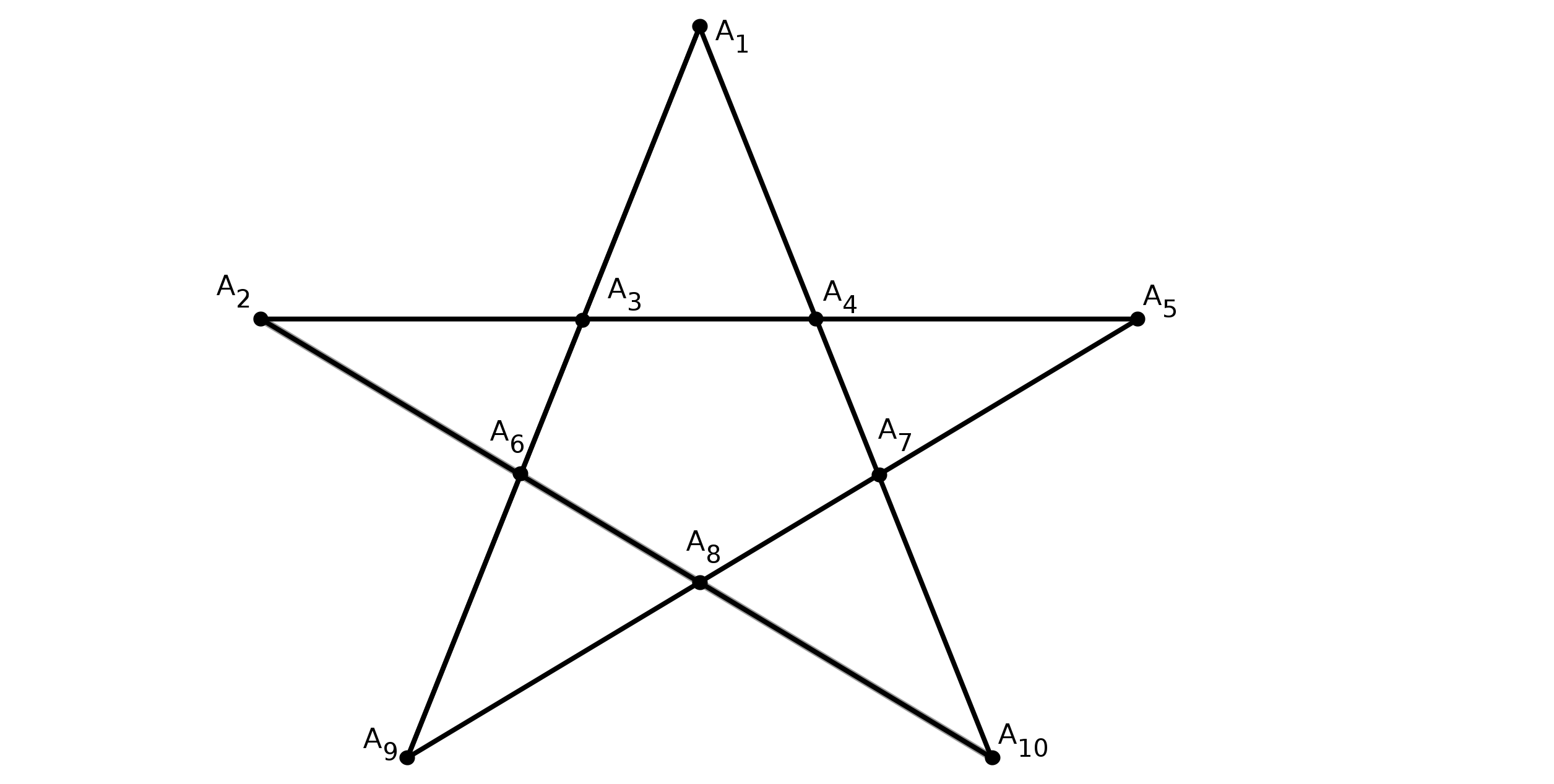}
 \label{figmultiplicative}
 \caption{Observables providing a proof of Kochen-Specker theorem in dimension 8.}
\end{figure}

\begin{enumerate}
\item The observables in each line are compatible;
\item \label{prodop2} The product of the observables that appear in the horizontal line $A_2A_3A_4A_5$ is  $-I$; the 
product of the observables in every other line is $I$.
\end{enumerate}
This properties implies that the values assigned by a hidden-variable model must obey
\begin{subequations}
\begin{eqnarray}
 P_1=&v(A_1)v(A_3)v(A_6)v(A_9)&=1,\\
  P_2=& v(A_1)v(A_4)v(A_7)v(A_{10})&=1,\\
  P_3=&  v(A_2)v(A_6)v(A_8)v(A_{10})&=1,\\
  P_4=&   v(A_5)v(A_7)v(A_8)v(A_9)&=1,\\
  P_5=&    v(A_2)v(A_3)v(A_4)v(A_5)&=-1.
  \end{eqnarray}
\end{subequations}
This leads to a contradiction, since the validity of the equations above would imply
$$-1=P_1P_2P_3P_4P_5= \prod_iv(A_i)^2 = 1.$$

\section{\textsf{A contextual hidden-variable model}}
\label{sectioncontextualmodel}

The Kochen-Specker theorem forbids noncontextual hidden-variable models, but it is possible to complete quantum theory
in order to give definite values for all projective measurements, as long as we  drop the assumption of noncontextuality.
An example of such a model is provide by Bell in reference \cite{Bell66}.

To define a hidden-variable model it suffices to define the values $v(P)$ attributed to the projectors $P$.
This happens because every hermitian operator can be written as a linear combination of compatible projectors
$$A=\sum_i \lambda_i P_{\phi_i},$$
in which $\lambda_i$ is the eigenvalue of $A$ corresponding to eigenvector $\ket{\phi_i}$.
As we can choose the $\ket{\phi}$ mutually orthogonal,  we can assume that $[P_{\phi_i}, P_{\phi_j}]=0$ and hence they are
mutually compatible. Since the assignment $v$ must preserve the linear relationships between compatible vectors, we have
$$v(A)=\sum_{i}\lambda_i v(P_{\phi_i}).$$

 Suppose an experimental arrangement performs the measurement of the observables represented by the projectors
 $P_{\phi_1}, \ldots, P_{\phi_n}$. Let us define the numbers $a_i \in \mathbb{R}$ such that the expectation values of the 
 $P_{\phi_1}, \ldots, P_{\phi_n}$ are $a_1, a_2-a_1, a_3-a_2,\ldots, a_n -a_{n-1}$, respectively.
 As hidden variables we will use a real number between zero and one.
 The value associated to projector $P$ if the value of the hidden variable is $\lambda$ is

$$\left\{\begin{array}{cc}
v(P_{\phi_i})=1 & \mbox{if}\  a_{i-1} <\lambda \leq a_i,\\
v(P_{\phi_i})=0 & \mbox{otherwise.}
\end{array}\right.$$

Notice that the value of each  $a_i$ depends on the entire set of projectors being measured. Hence the value of 
$v(P_{\phi_i})$ does not depend just on the quantum state of the system and the hidden variable $\lambda$, it
depends also on which other projectors are being measured with $P_{\phi_i}$. This means that this is
a contextual hidden-variable model.

To show that this model agrees with the quantum predictions, we notice that
$$\langle P_{\phi_i} \rangle = \int_0^1 v(P_{\phi_i})d\lambda = a_i-a_{i-1}.$$

This model is quite artificial, but it is important conceptually to show that the hypothesis of noncontextuality
in the Kochen-Specker theorem is essential to discard the possibility of hidden-variable models. It shows that the
completion of quantum theory is possible, and brings hope for those who doubt the fact that 
nature could be intrinsically probabilistic. But one important remark must be made. Hidden-variable theories were first imagined
by people who believed  that the world could not behave in such a counter-intuitive manner. The main point was
to recover the notion we have in classical theory that every measurement has a definite outcome, that \emph{exists} prior 
to the measurement and is only \emph{revealed} when the measurement is performed. If we choose to keep this line of thought,
the Kochen-Specker theorem forces contextuality on our theories, which is also a really intriguing feature, not present in 
classical theories. Hence, if quantum theory is really correct, and so far there is no reason to believe it is not, we
have to accept the fact that things are a bit weird and our intuition, modeled by our experience with classical systems, 
can not be applied to explain its phenomena.

There exist also other state-independent proofs with a smaller number of observables.
In reference \ref{YO12} the authors present a proof of the Kochen-Specker theorem with 13 vectors. 
The idea of the proof is quite different from the additive and multiplicative proofs we have shown above. It is based on the violation of   an experimentally testable inequality involving only 13 observables that is satisfied
by all non-contextual  models while being violated by all qutrit states.

\section{\textsf{Final Remarks}}
\label{sectionfinalcontextuality}

In the classical description of physical systems, probabilities come from our lack of 
knowledge about the past history of the system, or due to practical problems that come when we deal with a huge number
of particles at the same time. Every system has well defined values for all physical quantities, that are merely revealed by
the measurements. The impossibility of accessing these values was believed to be 
a technological and practical issue and not a fundamental
limit imposed by nature on the information we can gain when interacting with a system.

This reasoning can not be applied to quantum theory.   Since the development of its  modern mathematical formulation
 in the 1920's, this
intrinsic probabilistic behavior has been puzzling physicists and philosophers of science, experts and non-experts
all around the world. Is it a flaw on the mathematical structure of the theory? Would it be possible to complete
quantum theory in order to predict with certainty the outcomes of each measurement and still recover the quantum statistics?

In this chapter we have shown that if this completion is required to be \emph{noncontextual}, it is not possible.
The first attempt was made by von Neumann in 1932.  He showed that under some assumptions, the expectation-value functions
in the hidden-variable models should obey the quantum rule, and hence could not be dispersion-free. His argument, though,
 discards only a very restrict class of hidden-variable models, since he made the strong assumption that expectation-value
 functions should reproduce the algebraic relations among the observables, even if the observables are not compatible. 
 Although this is the case for quantum theory, we can not justify this assumption physically and hence we should not impose 
 it on our models. In fact, a simple hidden-variable model for a qubit system is provided by Bell as
 a counter example to von Neumann's result.
 
 More successful results appeared with the work of Kochen and Specker. Their main theorem states that
 for system with dimension 3 or higher, noncontextual hidden-variable models recovering the quantum statistics are
 not possible. The noncontextuality assumption requires that the value assigned to a measurement does not depend
 on other compatible measurements performed together.
 The same result can be proven with the help of Gleason's lemma, with the drawback that the number of vectors in 
 the proof is infinite.
 After Kochen and Specker's original proof, many others have been derived. The advantage of these proofs is that they are 
much simpler then the first and hence may be more suitable for experimental implementations. We have
discussed some of these proofs above, but many more are known. We refer to \cite{Cabello96, TKLSK13} for a more details.

 As shown by Bell, 
it is possible to construct a contextual hidden-variable model for any set of measurements in  any dimension. Although this
model is quite artificial, it proves that the  assumption of noncontextuality is crucial in the Kochen-Specker theorem.

In summary, what we learn with this result is that to reconcile the quantum formalism with the notion of well defined physical
properties of classical intuition, we must accept contextuality, which is also a very counter-intuitive  property.
How could the value of one physical quantity depend on what other properties are jointly measured?
The Kochen-Specker theorem implies that there is no way out: the mathematical description of
quantum systems does not agree with the classical  idea of pre-defined physical quantities.
\chapter{\textsf{Non-locality}}
\label{chapternonlocality}

Historically, the discussion of nonlocality in quantum theory preceded the discussion about its noncontextual character.
It started around 1935, when Einsten, Podolsky and Rosen noticed that the way of thinking of classical physics does not apply 
directly to quantum systems \cite{EPR35}. They started one of the greatest debates in foundations of physics and philosophy of science 
in general, that is still fruitful nowadays.

The classical world consists of objects with precise physical attributes: position, mass, velocity, orientation, charge, 
etc. This is how physicists were used to think for centuries. Their job was to understand the connection between
these attributes and create mathematical objects that mimic these relations.
A theory build  for this purpose would be considered satisfactory
if every relevant physical attribute has a counterpart in the theory and if the relations and results predicted by this
correspondence agree with what is observed in real situations.

This line of thought led many scientist, including Einsten, Podolsky and Rosen, to conjecture
the existence of a more complete theory behind the quantum formalism.  The intrinsic probabilistic
character of quantum measurements should be the result of the lack of knowledge about the past history of the system
and  a more adequate theory should be conceived that predicted all these results with certainty. 

This is the same reasoning that we used to conjecture the existence of hidden-variable models completing quantum theory.
Einsten, Podolsky and Rosen belied that such a model would be possible. In this chapter we prove that under the assumption 
of \emph{locality}, this kind of model does not exist.

The first one to provide a proof of the impossibility of \emph{local hidden-variable models} was John Bell, in 1964 \cite{Bell64}.
He demonstrated that if the statistics of joint measurements on a pair of two qubits in the singlet state were 
given by a hidden-variable model, a linear inequality involving the corresponding probabilities should be satisfied.
A simple choice of measurements leads to a violation of this inequality, and hence the model can not reproduce
the quantum statistics. 

Many similar inequalities were derived since Bell's work. Because of his pioneer paper, any inequality
derived under the assumption of a local hidden-variable model is called \emph{Bell inequality}.
Quantum theory violates these inequalities in  many situations. Besides the insight given in foundations of 
quantum theory, those violations are also connected to many interesting applications.

The pioneer paper of Einsten, Podolsky and Rosen is discussed in section \ref{sectionEPR}. Hidden-variable models
are introduced in section \ref{sectionhiddenvariables} and Bell's proof of the impossibility of such models 
in section \ref{sectionBell}. Other proofs based on Bell inequalities are presented in section \ref{sectionBellinequalities}
and its connection with convex geometry in \ref{sectionconvex}. We finish with out final remarks in section
\ref{sectionfinalnonlocality}

\section{\textsf{The EPR paradox}}
\label{sectionEPR}

Einsten, Podolsky and Rosen published in 1935 one of the most important and cited papers in quantum information theory and
also in foundations of quantum mechanics. In their letter, entitled ``\emph{Can Quantum-Mechanical description of Physical 
Reality Be Considered Complete?}'' \cite{EPR35}, the authors argue that in order to a physical theory be considered complete,
every quantity with physical
reality has to be predicted with certainty by the theory. As we know, non-commuting observables in quantum theory can never
have definite values simultaneously, and hence, we must accept one of two possible situations: either quantum
theory does not provide a complete description of nature or two non-commuting observables can not both have
physical reality. They present arguments discarding the second option, and hence they believed that quantum theory
could not be considered complete.

According to EPR, when we analyze the success of a theory, we  must ask two questions: 

\begin{enumerate}
 \item Is the theory correct? \label{enteocorrect}
 \item Is the theory complete? \label{enteocomplete}
\end{enumerate}

The answer to question number \ref{enteocorrect} is 'yes' if the predictions of the theory agree with all data available 
from experimentation in real physical systems. Of course, it is always possible that a theory considered correct be at some 
point contradicted with more modern and advanced experimental setups, and if this happens physicist should seek for different
theories capable of describing the new results.
At that time, as it is nowadays, the answer of this question for quantum theory is 'yes'.

The concept of a complete theory is more delicate and it is not easy to define. EPR argued that any reasonable definition for
completeness must end in a concept for which the following condition is necessary:

\begin{center}
 \emph{``Every element of the physical reality must have a counterpart in the physical theory.''}
\end{center}

The concept of physical reality is also delicate, but they provide a condition they consider to be sufficient for
a physical quantity to be called an \emph{element of reality}:

\begin{center}
 \emph{``If, without any way disturbing a system, we can predict with certainty the value of a physical quantity, then there 
 exists an element of physical reality corresponding to this physical quantity.''}
\end{center}

For them, a physical theory can only be considered satisfactory if it is both correct and complete.

In classical theory, once we have full information about the system, that is, if we have a pure state, all measurements have
definite values. Therefore, every quantity corresponds to an element of reality and classical theory is complete.
In the other hand, quantum theory does not predict the outcomes of every measurement even if the system is in a pure state.
This can only be done if the state is an eigenvector of the corresponding operator and hence two non-commuting operators 
can not have both definite values in every state.

Consider, for example, the quantum system of one qubit. If a qubit is  in state $\ket{\psi}=\ket{0}$, 
we can predict that a measurement of the observable
$\sigma_z$ will have certainly outcome $0$. If instead we measure $\sigma_x$, each possible outcome occurs
with equal probability.

These observation and EPR assumptions lead to the conclusion that one of two conditions must 
hold:

\begin{enumerate}
 \item Quantum theory is not complete;\label{enqtcomplete}
 \item Two non-commuting observables can not represent elements of reality at the same time. \label{enreality}
\end{enumerate}

In fact, if quantum theory was complete and both observables corresponded to elements of reality, both should have
definite values predicted by the theory for all pure states, which is certainly not possible.

Lets see now how EPR discard option \ref{enreality}. Suppose we have a pair of quantum system that have interacted in the past
in composite state $\ket{\Psi}$.
Suppose we want to measure to observables $M$ and $N$ in the first system and let $\{\ket{u_1}, \ldots, \ket{u_m}\}$ and
$\{\ket{v_1}, \ldots, \ket{v_n}\}$ be the eigenvectors of $M$ and $N$, respectively. Then we can decompose
$\ket{\Psi}$ in two different ways:
\begin{eqnarray}
\ket{\Psi}&=&\sum_{i=1}^m \ket{u_i}\otimes \ket{\mu_i} \nonumber\\
\ket{\Psi}&=&\sum_{i=1}^n \ket{v_i} \otimes \ket{\nu_i}
\end{eqnarray}
where $\ket{\mu_i}$ and $\ket{\nu_i}$ are pure states for the second system. Suppose now that measurement $M$ was performed in
the first system. The state of the composite system after the measurement, if outcome $i$ was obtained is 
$\ket{u_i}\otimes \ket{\mu_i}$,
and the second system can be described by the state $\ket{\mu_i}$. On the other hand, if measurement $N$ was performed in
the first system, the state of the composite system after the measurement, if outcome $i$ was obtained, is 
$\ket{v_i}\otimes \ket{\nu_i}$, and the second system is left in state $\ket{\nu_i}$.

Now EPR argument that since nothing was done in the second system, the physical reality of this system is the same 
for both options, and hence $\ket{\mu_i}$ and $\ket{\nu_i}$ describe the same physical reality.

Suppose now that the vectors $\ket{\mu_i}$ are eigenvectors of an observable $M'$ in the second system and the vectors
$\ket{\nu_i}$ are eigenvectors of an observable $N'$ in the second system, not commuting with $M'$. This can
be the case in some situations, as we show in example \ref{exsinglet} below. If we measure $M$ in the first system, we can predict
with certainty the outcome of $M'$ in the second system, \emph{without disturbing the second system}, since we have not 
interacted with it at any point during the measurement. On the other hand, if we measure $N$ in the first system, we can
predict with certainty the outcome of $N'$ in the second system, again without disturbing it. Hence,  both 
$M'$ and $N'$ must correspond to elements of reality, which in turn proves that condition \ref{enreality} is not true.
Thus, we have no option but to accept the fact that quantum theory is not complete.

\begin{ex}[The Singlet]
\label{exsinglet}
The state of two qbits given by
\be\ket{\Psi_-}=\frac{\ket{01}-\ket{10}}{\sqrt{2}},\label{eqsinglet1}\ee
called the \emph{singlet}, can be used to exemplify the situation mentioned above. This state can be also written as
\be\ket{\Psi_-}=\frac{\ket{+-}-\ket{-+}}{\sqrt{2}}.\label{eqsinglet2}\ee
If we use equation \eqref{eqsinglet1}, we see that a measurement of $\sigma_z$ in the first qubit allows the prediction 
of the result of the same measurement in the second qubit. In the other hand, if we use equation \eqref{eqsinglet2},
we see that a measurement of $\sigma_x$ in the first qubit allows the prediction of the result of the same measurement
in the second qubit. 
\end{ex}

EPR's discussion on physical reality is  based on Newtonian (classical) mechanics, which is suitable only
to describe the motion of macroscopic objects.
The study of the motion of bodies is an ancient one, making classical mechanics one of the oldest and largest subjects in 
science. It is also the physical theory that describes most of the phenomena we deal with in our daily life and hence it is
not surprising that  our intuition is guided by this way of thinking. EPR go even further, using this ideas as impositions
of what we should call physical reality.
This line of thought is not necessarily valid for quantum systems, as we already discussed in chapter \ref{chapterncinequalities}
 and appendix
\ref{chaptercontextuality}.

The debate in EPR's paper is of great importance both from
the physical as well as the philosophical point of view. This issue deserves a much more deep analysis then the one presented 
here and many people have devoted their time to investigate it. See \cite{EPRstantford} and references 
therein for more detailed discussion on the subject.

\section{\textsf{Local Hidden-Variable Models}}
\label{sectionhiddenvariables}

If quantum theory is not complete, we should seek for other theories that assign definite outcomes
for all measurements and at the same time, agree with all quantum predictions. We continue with the same nomenclature
used in chapter \ref{chapterncinequalities} and call such theories \emph{hidden-variables models} compatible with quantum
theory. EPR believed in the existence of such theories. We have already  proved that under the assumption of noncontextuality,
these theories can not exist. In section
\ref{sectionBell} we prove that under the assumption of \emph{locality}, these models also do not exist.

The hypothesis of locality is crucial in EPR's argument. It states
that physical processes occurring at one place should have no immediate effect on the other location.
This appears to be a reasonable assumption to make, as it is 
a consequence of special relativity, which states that information can never be transmitted faster than the speed of light.
This assumption is explicit in their argument, since they assume that the measurement performed on the second particle does
not influence the first one. EPR's assumption is generally referred to as \emph{local realism}, 
as it is the combination of the principle of locality with the \emph{realistic} assumption 
that all systems must objectively have a pre-existing value for any possible measurement before the measurement is made.

The assumption of local realism has an immediate consequence on the probability distribution describing the measurements
performed in a composite system. If we assume this condition, a complete description of the system has to
give predefinite values for all measurements in all subsystems and at the same time the value obtained in one subsystem
can not
depend on the measurement performed on any other subsystem. 

Within this perspective, any uncertainty on the outcomes of each measurement comes from the fact that the previous history
of the composite system is not known. With the locality assumption, any correlation among the results of the measurements is a consequence
of the past interaction among the parties. Let $\lambda$ be a set of variables describing the past history of the composite system.
They play the role of hidden variables in a hidden-variable model.
Once these variables are known, there is no correlation between the outcomes in each subsystem, as a consequence, the 
statistics of the experiment can be written as

\be p(a_1, \ldots, a_n|A_1,\ldots, A_n)= \sum_{\lambda} p(\lambda) p(a_1|A_1,\lambda)\times \cdots p(a_n|A_n,\lambda), \label{eqlocaldistributions} \ee
where $p(a_1, \ldots, a_n|A_1,\ldots, A_n)$ is the probability of getting the set of outcomes $a_1, \ldots, a_n$ 
when measurement $A_i$ 
is performed on part $i$, $p(a_i|A_i,\lambda)$ is the probability of getting $a_i$ 
in measurement $A_i$ in the $i$-th subsystem given the past history $\lambda$, and $p(\lambda)$ is the probability
distribution on the hidden variable $\lambda$.

Equation \eqref{eqlocaldistributions} provides a mathematical way of verifying if the statistics of a given
experiment is consistent with the assumption of local realism. If this is the case, it should be possible to write
the probability distribution in the form given by this equation. In the next section we prove that this is not always
possible if the statistics is obtained from quantum systems.


\section{\textsf{Bell's proof of the impossibility of hidden variables compatible with quantum theory}}
\label{sectionBell}

Suppose we have a pair of qubits in the singlet state. Any measurement on one qubit with possible
outcomes $\pm1$ can be written on the form
$$R=\vec{r} \cdot \vec{\sigma}=r_1 \sigma_x + r_2\sigma_y +r_3 \sigma_z,$$
where $\vec{r}=(r_1,r_2,r_3)$ is a unit real vector.

Let us suppose also that a given hidden-variable model provides definite values for the measurements performed in each qubit.
If this model satisfies the locality assumption, the value of such a measurement performed on one of the qubits depends only
on the vector $\vec{r}$ and on the hidden variable $\lambda$. We will denote this value by $v_i(\vec{r}, \lambda)$,
where $i=1,2$ denotes the qubit on which the measurement is performed.

Since the qubits are in the singlet state, the results are anti-correlated if the same measurement is made in both qubits.
Hence, 
$$v_1(\vec{r}, \lambda)=-v_2(\vec{r}, \lambda).$$
Also the quantum expectation value for the measurement of
$R=\vec{r} \cdot \vec{\sigma}$ in the first qubit and $S=\vec{s} \cdot \vec{\sigma}$ in the second qubit is equal to
$$\langle RS\rangle_Q = -\vec{r} \cdot \vec{s}$$
and it must agree with the expectation value calculated using the hidden-variable model, which is 
$$\langle RS\rangle=\sum_\lambda p(\lambda) v_1(\vec{r}, \lambda)v_2(\vec{s}, \lambda)=
-\sum_\lambda p(\lambda) v_1(\vec{r}, \lambda)v_1(\vec{s}, \lambda).$$


It follows that for any other measurement $T=\vec{t}\cdot \vec{\sigma}$ we have
\begin{eqnarray}
\langle RS\rangle-\langle RT\rangle&=& -\sum_\lambda p(\lambda)[ v_1(\vec{r}, \lambda)v_1(\vec{s}, \lambda)-
v_1(\vec{r}, \lambda)v_1(\vec{t}, \lambda)]\\
&=&\sum_\lambda p(\lambda) v_1(\vec{r}, \lambda)v_1(\vec{s}, \lambda)[
v_1(\vec{s}, \lambda)v_1(\vec{t}, \lambda)-1]
\end{eqnarray}
and hence
$$|\langle RS \rangle - \langle RT\rangle| \leq 1 + \langle ST \rangle.$$

If the hidden-variable model agrees with the quantum prediction, we have that
\be|\vec{r} \cdot \vec{s} - \vec{r}\cdot \vec{t}| \leq 1 +\vec{s}\cdot \vec{t}\label{eqBellinequality}\ee
an inequality that must hold for every choice of $\vec{r}, \vec{s}$ and $\vec{t}$.

Now, if we choose $\vec{r}=\vec{s}=-\vec{t}$ the left hand side of the inequality is equal to $2$, while the right
hand side is equal to $0$, which is a contradiction with inequality \eqref{eqBellinequality}. This proves that 
the conclusions obtained with the assumption of local realism do not agree with  quantum theory.

\begin{teo}
 There is no local hidden-variable model compatible with quantum theory.
\end{teo}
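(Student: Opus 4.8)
The plan is to follow Bell's original strategy: exhibit an explicit two-qubit state together with a family of local measurements whose quantum statistics cannot be reproduced by any factorizable distribution of the form \eqref{eqlocaldistributions}. First I would take the pair of qubits in the singlet state $\ket{\Psi_-}$ and restrict attention to measurements $R = \vec{r}\cdot\vec{\sigma}$ with outcomes $\pm 1$, parametrized by unit vectors $\vec{r} \in \R^3$. Assuming a local hidden-variable model, the outcome of such a measurement on qubit $i$ is a definite value $v_i(\vec{r},\lambda) \in \{\pm 1\}$ depending only on $\vec{r}$ and the hidden variable $\lambda$, never on the setting chosen at the other wing; this is precisely the locality assumption encoded in \eqref{eqlocaldistributions}.

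Next I would use the two structural facts about the singlet. The perfect anti-correlation gives $v_1(\vec{r},\lambda) = -v_2(\vec{r},\lambda)$ for every $\vec{r}$ and $\lambda$, while agreement with quantum theory forces the model's expectation value to match $\langle RS\rangle_Q = -\vec{r}\cdot\vec{s}$. Writing the model expectation as $\langle RS\rangle = -\sum_\lambda p(\lambda)\, v_1(\vec{r},\lambda)\, v_1(\vec{s},\lambda)$ and subtracting the analogous expression for a third setting $T = \vec{t}\cdot\vec{\sigma}$, I would factor out $v_1(\vec{r},\lambda)\, v_1(\vec{s},\lambda)$ and use $|v_1|=1$ to obtain the Bell-type inequality
\begin{equation}
\left|\langle RS\rangle - \langle RT\rangle\right| \leq 1 + \langle ST\rangle.
\end{equation}
Inserting the quantum values then yields $|\vec{r}\cdot\vec{s} - \vec{r}\cdot\vec{t}| \leq 1 + \vec{s}\cdot\vec{t}$, an inequality any local model must respect for all choices of the three unit vectors.

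The contradiction is then immediate: choosing $\vec{r} = \vec{s} = -\vec{t}$ makes the left-hand side equal to $2$ and the right-hand side equal to $0$, so $2 \leq 0$, which is absurd. This shows no factorizable distribution can reproduce the singlet correlations, and hence no local hidden-variable model is compatible with quantum theory. I do not expect any genuine obstacle here, since all the analytic work has already been carried out in the derivation preceding the statement; the only point requiring a little care is the passage from the absolute value to a clean bound, which relies on the outcomes being exactly $\pm 1$ so that the factor $v_1(\vec{s},\lambda)\, v_1(\vec{t},\lambda)-1$ is non-positive and can be pulled out of the modulus before averaging against $p(\lambda)$. The harder conceptual subtlety worth flagging is that this argument rules out \emph{local} models only, exactly as the Kochen--Specker analysis of appendix \ref{chaptercontextuality} ruled out \emph{noncontextual} ones.
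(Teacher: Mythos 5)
Your proposal is correct and follows essentially the same route as the paper's own argument: the singlet state with measurements $\vec{r}\cdot\vec{\sigma}$, the anti-correlation $v_1(\vec{r},\lambda)=-v_2(\vec{r},\lambda)$, the derivation of $\left|\langle RS\rangle - \langle RT\rangle\right| \leq 1 + \langle ST\rangle$, and the choice $\vec{r}=\vec{s}=-\vec{t}$ yielding the contradiction $2\leq 0$. Your remark about pulling the non-positive factor $v_1(\vec{s},\lambda)v_1(\vec{t},\lambda)-1$ out of the modulus, and the closing comparison with the Kochen--Specker theorem, are both accurate and consistent with the paper's presentation.
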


\section{\textsf{Bell Inequalities}}
\label{sectionBellinequalities}

There are many other linear inequalities which can be obtained assuming the hypothesis of local realism 
that are violated in some experimental situations involving quantum systems.
All of these inequalities are called \emph{Bell inequalities}, named after Bell's pioneer discovery, inequality
\eqref{eqBellinequality}.  It is possible to find a huge number
of non-equivalent Bell inequalities in the literature and work has been devoted to create a database to collect and organize 
all these examples \ref{RBG14}.

\subsection{\textsf{The CHSH inequality}}

The most famous and also the simplest Bell inequality was derived by Clauser, Horne, Shimony and Holt  
\cite{CHSH69}. This inequality is known  as CHSH inequality.

In the corresponding experimental scenario, there are four measurements available in  a bipartite system,  two measurements 
in each subsystem. 
Each measurement
has two possible outcomes, which we denote by $\pm 1$.

Let us denote the measurements in the first subsystem by $A_1$ and $A_2$ and the measurements in the second subsystem by 
$B_1$, $B_2$. Given a choice of measurement in each subsystem, $p(a,b|A_i,B_j)$ will denote the joint probability of
having outcome $a$ in the first  subsystem and $b$ in the second subsystem. The expectation value of the joint measurement of
$A_i$ and $B_j$ is 
$$\langle A_i B_j\rangle = p(11|A_iB_j) +  p(-1-1|A_iB_j) - p(-11|A_iB_j) - p(1-1|A_iB_j).$$

Consider now that the outcomes of $A_i$ and $B_j$ are given by a local hidden-variable model. Then we have
$$p(a,b|A_i,B_j)=\sum_{\lambda} p(\lambda) p(a|A_i,\lambda)p(b|B_j,\lambda).$$
All probability vectors of this form can be written as convex combination of the ones assigning definite values to each 
measurement locally. 
We will focus first in those distributions. The definite values
assigned to each measurement by the model will be denoted by $v(A_i)$ and $v(B_j)$. In this case we have
\be\langle A_i B_j\rangle = v(A_i)v(B_j). \label{eqexpectationdefined}\ee

Now consider the sum 
\be S_{CHSH}=\langle A_1 B_1\rangle +\langle A_1 B_2\rangle + \langle A_2 B_1\rangle - \langle A_2 B_2\rangle.\ee
If these values are given by equation \eqref{eqexpectationdefined}, we have

\begin{eqnarray}
 S_{CHSH}&=&v(A_1)v(B_1) +v(A_1)v(B_2)+ v(A_2)v(B_1)- v(A_2)v(B_2) \nonumber\\
 &=&v(A_1)(v(B_1)+v(B_2))-v(A_2)(v(B_1)-v(B_2)).
 \end{eqnarray}
Since the possible outcomes are $\pm 1$ it follows that $S_{CHSH}$ is either $2$ or $-2$. Taking convex combinations
of these distributions we conclude that  if some distribution is given by a local hidden
variable model we have
\be -2 \leq \langle A_1 B_1\rangle +\langle A_1 B_2\rangle + \langle A_2 B_1\rangle - \langle A_2 B_2\rangle \leq 2.
\label{eqCHSH}\ee
The second inequality is the famous CHSH inequality.
 
Now we see what can happen if we use a quantum system. 

\begin{ex}
\label{exCHSHviolation}
Consider again the singlet state $\ket{\Psi_-}$ and the measurements
$A_1=\sigma_z$, $A_2=\sigma_x$, $B_1=\frac{-\sigma_x-\sigma_z}{2}$ and $B_2=\frac{-\sigma_x+\sigma_z}{2}$.
In this case we have $S_{CHSH}=2\sqrt{2}$, which violates the local bound of $2$ given by the CHSH inequality
\eqref{eqCHSH}. This is the maximum value obtained with quantum distributions, and this bound is called the Tsirelson bound
for the CHSH inequality \cite{Cirelson80}.

\end{ex}

\section{\textsf{Bell inequalities and convex geometry}}
\label{sectionconvex}

We can define more precisely the scenario we are working with, 
in a similar way as was done in section \ref{sectionsheaftheory}.
Once more we start with a set of possible measurements $X$, and the main difference from what was done before is that now
 we assume that the system is composed of $n$ different spatial separated subsystems. The set $X$ is then divided into
 various distinct subsets $X_1, X_2, \ldots, X_n$, where $X_i$ is the set of measurements available for party $i$. In this case, 
compatibility is guaranteed by the spatial separation among the parties, and all contexts are of the form
$$C=\{M_1,M_2,\ldots, M_n\}, \ \ M_i \in X_i.$$
Scenarios with these extra restrictions are called Bell scenarios. The particular  case in which all parties have
each one $m$ measurements available, each measurement with $o$ possible outcomes, is denoted by $(n,m,o)$.

The vertices of the compatibility hypergraph of a Bell scenario can be split in the $n$ disjoint subsets $X_i$. Each edge has
one, and only one element of each $X_i$. In the bipartite case $n=2$, this graph is the complete bipartite graph
$G=(X_1,X_2)$.

The probability distributions for Bell scenarios can be denoted in a simple way. Given a context $C=\{M_1,M_2,\ldots, M_n\}$,
$$p(m_1,m_2, \ldots , m_n|M_1,M_2,\ldots, M_n)$$
will denote the probability of the set of outcomes $m_1,m_2, \ldots , m_n$ when each measurement $M_i$ is 
performed in party $i$.

The no-disturbance property in this case is a very reasonable restriction to make. It is a consequence of the assumption 
that the measurements performed in one site do not affect any other instantaneously, since no information can travel faster 
then the speed of light. In this context, this property is referred to as the \emph{no-signaling} condition.
The set of no-signaling distributions $\mathcal{N}$ is a polytope, since it is defined by a finite set of linear inequalities.

The noncontextual distributions of a Bell scenario are exactly the ones for which a local hidden-variable model
can be constructed. 

\begin{defi}
 A probability distribution $p$ for a Bell scenario is called \emph{local} if it can be written in the form
 $$p(m_1,m_2, \ldots , m_n|M_1,M_2,\ldots, M_n)=\sum_{\lambda} p(\lambda)\prod_{i=1}^np(m_i|M_i, \lambda)$$
 where $p(\lambda)$ is a probability distribution in the hidden variable $\lambda$.
\end{defi}

Since the set of local distributions $\mathcal{L}$ is the convex hull of a finite set, it is a polytope. The 
H-descriptions of this polytope correspond  to  a finite set of Bell inequalities providing necessary
and sufficient conditions for membership in $\mathcal{L}$.

\begin{defi}
A \emph{Bell inequality} is a linear inequality
$$S=\sum\gamma_{m_1,m_2, \ldots,m_n |M_1,M_2, \ldots,M_n} p(m_1,m_2, \ldots,m_n |M_1,M_2, \ldots,M_n ) \leq b$$
where  $\gamma_{m_1,m_2, \ldots,m_n |M_1,M_2, \ldots,M_n}$ and $b$ are real numbers, which is satisfied by 
all 
classical distributions and violated by some nonlocal distribution. A  \emph{tight Bell
inequality} is a linear inequality defining a non-trivial facet of the local polytope $\mathcal{L}$.
\end{defi}

In general  quantum distributions do not satisfy all Bell inequalities, as we saw in example \ref{exCHSHviolation}.
This behavior is often referred to as \emph{quantum nonlocality}.
The maximal quantum value for $S$ is called the Tsirelson bound for the inequality \cite{Cirelson80}. 

\section{\textsf{Final Remarks}}
\label{sectionfinalnonlocality}

In this chapter we have shown once more that under very reasonable circumstances, a completion of quantum theory by a 
hidden-variable model is not possible. The impossibility proofs are based on multipartite scenarios and rely
on the fact that, according to special relativity, information can not travel faster then light. This 
restriction imposes the condition that what is done in one party can not instantaneously affect any other, and hence that
our hidden-variable models have to be \emph{local}.

The first impossibility proof in this situation was provided by John Bell \cite{Bell64}, who derived an inequality 
for the expectation values of joint measurements in a pair of qubits in the singlet state that should be valid if
those were given by a hidden-variable model. This inequality is not always valid for quantum distributions, what proves that
these models can not reproduce the statics of quantum theory for this state.

After Bell's work many other inequalities satisfied by local hidden-variable models and violated by some
quantum distributions were derived. 
The simplest and also most famous is the CHSH inequality \cite{CHSH69}. 
Violations of Bell inequalities prove that the assumption of local realism is incompatible with quantum theory.
Locality and realism are features of classical theory, properties of our daily life experience, that can not be applied at 
the same time in the description of quantum systems. 
There is  huge amount of work on the subject, both
in the aim of finding new inequalities and finding applications for different types of inequalities
(see \cite{BCPSW13} and references therein).

There are also many experimental implementations leading to violation of a Bell inequality \cite{wikibelltests}.
The first one was performed in 1972 by Stuart J. Freedman and John F. Clauser \cite{FC72}.
Modern experiments are very precise, but unfortunately none of them is able to fulfill all requirements necessary to 
actually eliminate
the possibility of hidden-variable models describing the system involved according to our classical conceptions.
The failures in these experiments are generally called \emph{loopholes} \cite{wikiloopholes}.

The most common of these failures are the \emph{detection loophole} and the \emph{locality loophole}.
The detection loophole comes from the fact that all detectors (or measurement devices) are imperfect:
a portion of the systems prepared are always lost before they are detected. Hence, the data obtained in the experiment is 
incomplete. It is possible that  this missing data  creates the illusion of a violation of the inequality, while if 
we take into account the lost events in the statistics we would have a local distribution.

The locality loophole appears because in some implementations is not possible to guarantee that the subsystems are sufficiently
far apart from each other. We need to make sure that what happens in one laboratory does not affect the results in the other.
To do that we have to assure that the process of choosing a measurement, performing it and getting an outcome is completed
before any signal can travel from one site to the other. 
The first time it was done was in 1981, when Alain Aspect and collaborators performed the pioneer experiment
of violation of the CHSH inequality \cite{ADR82}. This experiment does not eliminate the detection loophole. Since that time, many 
improvements were made. The photon is the first experimental system for which all main experimental loopholes have been 
surmounted, albeit presently only in separate experiments \cite{GMRWKBLCGNUZ13, CMACGLMSZNBLGK13}. We believe that 
a loophole free implementation will soon be achieved.

\chapter{\textsf{What  explains the Tsirelson bound?}}
\label{chaptertsirelson}

Quantum probability distributions may exhibit nonlocality, a feature
that is revealed by the violation of a Bell inequality. In most cases it is possible
to find distributions that violate this inequalities \emph{more} then the quantum distributions. What is the physical
explanation for that? Why isn't quantum theory \emph{more} nonlocal then it is? For a given scenario, what distinguishes the 
set of quantum probability distribution from others obtained with general probability theories? In this chapter we discuss the various physical
principles proposed to answer this question.

In section \ref{sectionnosignaling}, we show that the no-signaling principle, implied by the relativistic
imposition that no signal can travel faster then the speed of light, is not enough to rule out violations higher then the 
Tsirelson bound. Nonetheless, the existence of some of these distributions has implausible consequences for communication
complexity, which we examine in section \ref{sectionimplausible}. 

The principle of 
\emph{Information Causality},  which states that the information gain that one can get about the data of a spatially distant
observer by using all his local
resources and $m$ classical
 bits sent to him by this observer is at most $m$ bits. It is a generalization of the no-signaling principle, which is just 
 Information Causality with $m=0$. This principle is satisfied by quantum distributions, but discards many others outside the 
 quantum set, as we will see in section \ref{sectioninformationcausality}.
 
 The principle of \emph{Macroscopic Locality}, subject of section 
 \ref{sectionmacroscopiclocality}, states that a any physical theory should recover the classical 
results  when we measure a large number of systems and our devices are not capable of identifying
individual particles. It is not equivalent to Information Causality and it is also known that it can not recover the 
quantum set.    Nonetheless, it is a  reasonable property we should expect from any alternative to quantum theory.
 
In section  \ref{sectionmultipartiteprinciples}, we show that no bipartite principle is capable of ruling out some 
non-quantum distributions. This proves that intrinsically multipartite principles must be found. The first
one is the principle of Local Orthogonality, the Exclusivity principle applied to Bell scenarios. 
It can be used to rule out many non-quantum distribution, including some of the distributions that can not be
ruled out by any bipartite principle. This principle and some implications are discussed in section 
\ref{sectionlocalorthogonality}. We finish this appendix in section  \ref{sectionfinaltsirelson} with our
final remarks.

\section{\textsf{No-signaling}}
\label{sectionnosignaling}

We have seen in appendix \ref{chapternonlocality} that relativistic causality is a reasonable imposition to make on the 
acceptable probability distributions in a Bell scenario. This  restriction is a consequence of 
special relativity theory, which states that no signal can travel faster then the speed of light.
Quantum theory does not violate this principle, but more general 
probabilistic theories might. In 1993, Popescu and Rorlich proposed to take non-locality as the quantum principle
and analyze what this assumption, together with relativistic causality, would imply.

We consider once again a bipartite scenario where each subsystem is far away from the other. Relativistic causality
implies that if no signal was sent from one party to the other, one of the parties can get no information about 
the measurements applied  in the other party nor about the results obtained. The mathematical consequence of this
assumption is that the distribution must obey the following principle:

\begin{prin}[The no-signaling principle]
 Probability distributions in a Bell scenario satisfy
\begin{eqnarray}
 \sum_{a_2}P\left(a_1,a_2|x_1,x_2\right)&=&P\left(a_1|x_1\right);\nonumber\\
\sum_{a_1}P\left(a_1,a_2|x_1,x_2\right)&=& P\left(a_2|x_2\right),
\end{eqnarray}
where $x_1$ is a measurement in party one with possible outputs $a_1$ and $x_2$ is a measurement in party two with possible outputs $a_2$.
\end{prin}
These distributions are called \emph{no-signaling}.

We want to see now what are the consequences of taking non-locality and relativistic causality as fundamental axioms. Would that be enough to
single out the set of quantum distributions? Is quantum theory the only one exhibiting non-locality while preserving relativistic
causality?

Let us see what happens with the CHSH inequality
$$S_{CHSH}=\langle A_1 B_1\rangle +\langle A_1 B_2\rangle + \langle A_2 B_1\rangle - \langle A_2 B_2\rangle\leq 2.$$
The  quantum maximum is $2\sqrt{2}$, although the algebraic maximum is $4$. What physical principle prevents 
quantum distributions from reaching  the algebraic maximal? What singles out the bound of $2\sqrt{2}$?
Is it relativistic causality?

Popescu and Rorlich found a simple example that shows that the no-signaling restriction
is not enough to rule out non-quantum correlations. The distribution is known as PR box.

\begin{ex}[PR-box]
Suppose that in a bipartite system one party  can measure $A_1$ and $A_2$ and the other $B_1$ and $B_2$, each with possible
outcomes $\pm 1$. Consider the distribution in the table below:
 
\begin{center}
\begin{tabular}[h]{c|c|c|c|c}
&$(1,1)$&$(1,-1)$&$(-1,1)$&$(-1,-1)$\\ \hline
$11$&$0.5$&$0$&$0$&$0.5$\\ \hline
$12$&$0.5$&$0$&$0$&$0.5$\\ \hline
$21$&$0.5$&$0$&$0$&$0.5$\\ \hline
$22$&$0$&$0.5$&$0.5$&$0$

\end{tabular}
\end{center}
where the number in column $ab$ and line $ij$ is the probability of outcome $a$ for measurement $A_i$ and 
outcome $b$ for measurement $B_j$. This distribution is no-signaling, but it reaches the algebraic maximum  for  
CHSH inequality. 
\end{ex}

The  PR boxes shows that relativistic causality is not enough to distinguish quantum theory from more
general ones. Impossibility of being represented by local hidden variable models is a property of a broad class 
of no-signaling theories. Although they satisfy the no-signaling principle,
the existence of such boxes   would imply  many unreasonable  consequences.

\section{\textsf{Implausible consequences of superstrong non-locality}}
\label{sectionimplausible}

Violations above the quantum threshold are often called superstrong non-locality. The PR box is a simple example 
of a distribution exhibiting this feature. In this section we will show that the existence of this kind of distribution
leads to implausible consequences for the theory of communication complexity, which describes how much communication is 
needed between two parties to
evaluate a distributed function $f$ \cite{V05, BBLMTU6}.

\begin{defi}
 A \emph{distributed function} is a Boolean function
\begin{eqnarray}
 f : \{0,1\}^n \times \{0,1\}^n &\rightarrow &\{0,1\}\nonumber \\
 (x,y)&\longmapsto & f(x,y)
 \label{eqdistributed}
\end{eqnarray}
where the strings $x$ and $y$ are in possession of spatial separated parties, Alice and Bob,
that must communicate in order to compute 
$f$. 
\label{defidistributed}
\end{defi}

By communicating with each other one bit at a time according to some preestablished protocol, they have
to compute the value of $f(x,y)$ in such a way  that at least one of them knows the value at the end of the protocol.
Let $n_f(x,y)$ denote the minimum number of bits exchanged between them in order to accomplish this task.
This number does not depend only on $f$, it may depend also on the resources available for both parties. Once
the resources are fixed, we can define the communication complexity of $f$.

\begin{defi}
 Given the resources shared between the parties, the \emph{communication complexity} of the distributed function $f$ is
 \be c(f)= \max_{x,y}n_f(x,y),\ee
the maximum is taken over all pairs  $(x,y) \in \{0,1\}^n \times \{0,1\}^n$.
\end{defi}

For some functions $f$,  the protocols using quantum systems can be more efficient then the ones assuming only classical
correlations between the parties. Hence, the communication complexity can decrease in the presence of entanglement. 
In other cases, such as for the function
$$Ip(x,y)=\sum_i x_i y_i$$
the communication complexity is  effectively not affected when the parties share quantum correlated systems.
Our purpose in this section is to prove that if the parties shared systems correlated according to the distribution of
a PR box, the communication complexity is reduced to one bit for all distributed functions of the form 
\eqref{eqdistributed}.

First, we will see that this is the case when  $f=Ip$. Suppose that the parties share
at least $n$ PR boxes. In box $i$  Alice will perform measurement $A_{x_i}$, getting outcome $a_i$,
and Bob will perform  measurement $B_{y_j}$, getting outcome $b_i$. The PR box distribution is such that
for all $i$, $a_i + b_i= x_iy_i$ where all sums and products are taken modulo two. Hence, we have
$$Ip(x,y)=\sum_i x_i y_i=\sum_i(a_i+b_i)=\sum_ia_i + \sum_i b_i.$$
The strings $a_i$ and $b_i$ are computed locally and this step does not require any communication between the parties.
After those strings where obtained, Alice, for example, computes $\sum_ia_i$ locally and then sends the resulting
bit to Bob, which is now able to evaluate $f(x,y)$.

The same thing happens for all other $f$. This happens because any function of the form \eqref{eqdistributed} can 
be written as a composition of $Ip$ and local polynomials in $x$ and $y$.

\begin{prop}
 Let $f$ be a distributed function, given according to definition \ref{defidistributed}. There are polynomial
 functions $P_i:\{0,1\}^n\rightarrow \{0,1\}$ and $Q_i:\{0,1\}^n\rightarrow \{0,1\}$ such that
 \be f(x,y)=\sum_i P_i(x)Q_i(y).\ee
\end{prop}

The functions $P_i$ and $Q_i$ depend only on $f$, and hence the strings $w_i=P_i(x)$ and $z_i=Q_i(x)$ can be computed locally
by each party, without any communication. After that they can apply the protocol above to compute
$Ip(w,z)$, and hence compute $f$ with only one bit of communication.

The notion of communication complexity in the presence of PR box is then meaningless, since all functions require only one bit 
to be exchanged in order to compute it. Although this does not contradicts any physical principle, this fact 
does contradict our experiences that certain computational
tasks are harder than other ones. It has been shown that trivial communication complexity can 
be achieved with a violations strictly less than $4$, but it is still not clear if the Tsirelson bound for 
the CHSH inequality
is a critical value that separates trivial
from nontrivial communication complexity. If this is indeed the case, non-triviality of communication complexity
would be a principle singling out the quantum bound.

\section{\textsf{Information Causality}}
\label{sectioninformationcausality}

Information Causality, proposed in reference \cite{PPKSWZ09},
is a generalization of the no-signaling principle. It is respected by both classical and quantum 
theories and violated by some non-quantum distributions. 
Suppose Alice posses some previously assembled data, unknown to some other party, Bob. She is allowed to send 
only classical bits to him. Information Causality states that:
\begin{prin}
 \emph{The information gain that Bob can reach about Alice's data by using all his local resources and $m$ classical
 bits sent by her is at most $m$ bits.}
\end{prin}
The no-signaling condition is just Information Causality with $m=0$. 

Consider now the following task:
Alice receives a bit string $\vec{a}=\left(a_0, a_1, . . . , a_N\right)$ and Bob
receives   $b \in \left(0, 1, \ldots,N \right)$. He is asked to give
the value of Alice's $b$th bit   after receiving from her $m$
classical bits.
If Information Causality is respected, he's information about $\vec{a}$ is
at most $m$ bits.

A good definition of \emph{he's information about her string} would be the mutual information
between the string $\vec{a}$ and everything that Bob has, namely, the $m$-bit message $\vec{x}$ and his party $B$ of 
all presheared correlation, $I(\vec{a}:\vec{x}, B)$.  Information causality would imply $I(\vec{a}:b,\vec{x}, B)\leq m.$
The problem with this definition is that it is not theory-independent: mutual information has to be defined using 
specific objects of the underlying theory and it is not clear if this definition can be done consistently for all
theories, nor whether such definition is unique \cite{BBCLSSWW10}.

Letting aside the problem of defining mutual information, we will show that \emph{if} such a  definition can be made in
a way that 
 three elementary properties are satisfied, the principle of Information Causality holds and we can find a simple necessary 
condition independent of the theory for this principle to be satisfied.

To derive this necessary condition we will need the quantity $I$ defined below, which quantifies the efficiency of Alice and 
Bob's strategy to achieve their goal. Let $\beta$ be Bob's output. Then
\be I=\sum_i I(a_i:\beta|b=i)\ee
where $I(a_i:\beta|b=i)$ is the Shannon mutual information between $a_i$ and $\beta$, given that $b=i$.

\begin{teo}
\label{teoic}
 Suppose that for a given theory a notion of mutual information $I(A:B)$ can be defined and that the following rules are
 satisfied:
 \begin{enumerate}
  \item[I.] Consistency: If the subsystems $A$ and $B$ are classical, $I(A:B)$ coincides with Shannon's mutual information;
  \item[II.] Data processing inequality: Acting on one of the parties locally by any transformation allowed by the theory does 
  not increase the mutual information  $I(A:B)$. More formally, let $S_B$ be the state space of subsystem $B$ and 
   $T:S_B\rightarrow S_B$  any transformation allowed by the theory in this subsystem. Then
   $$I(A:B) \geq I(A:T(B)).$$
   \item[III.] Chain rule: It is possible to define a conditional mutual information $I(A:B|C)$ in such a way that 
   $$I(A:B,C) = I(A:C)+I(A:B|C).$$
 \end{enumerate}
 Then it is possible to prove that
 \begin{enumerate}
  \item The theory satisfies Information Causality; \label{itemic1}
  \item  $I(\vec{a}:\vec{x},B) \leq I$. \label{itemic2}
 \end{enumerate}

\end{teo}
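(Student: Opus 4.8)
The plan is to prove the two conclusions in sequence, where item \ref{itemic2} does the technical heavy lifting and item \ref{itemic1} follows as a near-immediate corollary. Let me sketch the approach for Theorem \ref{teoic}.

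\textbf{Setting up the protocol picture.} First I would fix notation for the most general protocol. Alice holds $\vec{a} = (a_0, \ldots, a_N)$, encodes an $m$-bit message $\vec{x}$ (possibly using her share of the presheared correlation), and sends $\vec{x}$ to Bob. Bob, holding the message $\vec{x}$ and his subsystem $B$, applies some allowed local transformation depending on his input $b$ and produces an output $\beta$. The key observation is that $\beta$ is obtained from $(\vec{x}, B)$ by a local operation on Bob's side, with $b$ selecting which operation is applied. This is precisely the configuration the three axioms are designed to control.

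\textbf{Proving item \ref{itemic2}.} The inequality $I \leq I(\vec{a}:\vec{x}, B)$ is the heart of the matter, and the strategy is to combine the chain rule with the data processing inequality. First I would expand $I(\vec{a}:\vec{x}, B)$ using the chain rule (axiom III), peeling off the bits of $\vec{a}$ one coordinate at a time, to get a telescoping sum of conditional mutual informations $\sum_i I(a_i : \vec{x}, B \mid a_0, \ldots, a_{i-1})$. The independence of the $a_i$ should let me simplify these conditional terms. On the other side, for each fixed $b = i$, Bob's output $\beta$ is a function (a local allowed transformation) of $(\vec{x}, B)$, so the data processing inequality (axiom II) gives $I(a_i : \beta \mid b = i) \leq I(a_i : \vec{x}, B)$ after appropriate conditioning. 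Summing over $i$ and matching the two expansions yields $I = \sum_i I(a_i : \beta \mid b = i) \leq I(\vec{a} : \vec{x}, B)$. The consistency axiom (I) is what guarantees that the classical quantities $I(a_i : \beta \mid b = i)$ appearing in the definition of $I$ coincide with the theory's mutual information on these classical registers, so that the two notions can legitimately be compared.

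\textbf{Deriving item \ref{itemic1}.} Once item \ref{itemic2} is in hand, Information Causality follows by bounding $I(\vec{a}:\vec{x}, B)$ from above by $m$. The idea is that $\vec{x}$ carries only $m$ classical bits, and $B$ was prepared before the message was sent, so by the no-signaling content already built into the axioms (consistency plus data processing, with $\vec{a}$ independent of $B$ alone), the mutual information between $\vec{a}$ and the pair $(\vec{x}, B)$ cannot exceed the $m$ bits transmitted. Combining $I \leq I(\vec{a} : \vec{x}, B) \leq m$ gives the statement of Information Causality. The main obstacle I anticipate is item \ref{itemic2}, specifically making the data processing step rigorous when $b$ is itself a random variable selecting Bob's operation: one must condition carefully on each value $b = i$ and ensure the chain rule expansion on Alice's side aligns term-by-term with the data-processing bounds on Bob's side, using the independence of the bits $a_i$ to collapse the conditioning. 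Everything downstream of that inequality is essentially bookkeeping.
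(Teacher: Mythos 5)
The thesis states this theorem without proof (it is imported from reference \cite{PPKSWZ09}), so your sketch can only be judged against the standard argument in that literature, which your treatment of item \ref{itemic2} follows closely and correctly: chain-rule telescoping of $I(\vec{a}:\vec{x},B)$ over the coordinates $a_i$, removal of the conditioning via independence of the $a_i$ (each conditional term $I(a_i:\vec{x},B\mid a_0,\ldots,a_{i-1})$ equals $I(a_i:\vec{x},B,a_0,\ldots,a_{i-1})$ by the chain rule plus independence, and dominates $I(a_i:\vec{x},B)$ by data processing), then data processing applied to Bob's local computation of $\beta$ from $(\vec{x},B)$ at fixed $b=i$, and consistency to identify the surviving quantities with the Shannon terms defining $I$. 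Two assumptions you use silently should be flagged: axiom III splits the \emph{second} argument of the mutual information, so peeling coordinates off $\vec{a}$ requires symmetry $I(A:B)=I(B:A)$, which is not among the listed axioms, together with a conditional version of the chain rule to iterate the expansion.

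The genuine flaw is in your handling of item \ref{itemic1}. By the paper's own formulation, Information Causality \emph{is} the statement $I(\vec{a}:\vec{x},B)\leq m$; it is therefore not a corollary of item \ref{itemic2}, and "combining $I \leq I(\vec{a}:\vec{x},B)\leq m$" yields the theory-independent consequence $I\leq m$ (which the paper states separately, after the theorem), not item \ref{itemic1} itself. Item \ref{itemic1} needs its own derivation, and your appeal to "the no-signaling content already built into the axioms" glosses over precisely the step that carries the weight: one applies the chain rule as $I(\vec{a}:\vec{x},B)=I(\vec{a}:B)+I(\vec{a}:\vec{x}\mid B)$, uses $I(\vec{a}:B)=0$ because Alice's data is independent of Bob's share of the preshared resource, and must then bound $I(\vec{a}:\vec{x}\mid B)\leq H(\vec{x})\leq m$. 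That last inequality — mutual information with an $m$-bit \emph{classical} message cannot exceed its Shannon entropy — is where the classicality of $\vec{x}$ and the consistency axiom enter (together with non-negativity of the generalized mutual information), and it does not follow from generic no-signaling considerations. As written, your proof of item \ref{itemic1} is incomplete at exactly this point.
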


It follows from item \ref{itemic1} that both classical and quantum theories satisfy Information Causality. In classical
theory we use Shannon's mutual information and in quantum theory the mutual information coming from von Neumman's entropy.
For both of them the three requirements of theorem \ref{teoic} are fulfilled.

From item \ref{itemic2} we get the following necessary condition for Information Causality in Alice and Bob's protocol:
\be I \leq m.\ee

The parameter $I$ is easier to work with  because it does not depend on the underlying probabilistic theory. 
It depends solely on 
the input and output bits of their protocol. This condition allows us to prove  that if Alice and Bob 
share PR boxes, Information Causality can be violated.

This violation can be achieved if they use a scheme known as the van Dam's protocol. This is
the simplest situation in which Information Causality can 
be violated. Alice receives two bits $(a_0, a_1)$ and is allowed to send only one of them to Bob. Alice uses $x=a_0 + a_1$ as
input of her part of the PR box and obtains outcome $a$. She sends the bit $m=a_0 + a$ to Bob. He will use as the input 
of his part of the PR box the bit $y=k$, which is $0$ if he wants to learn the value of $a_0$ and $1$ if he wants to learn 
the value of $a_1$. He gets output $b$. As we already mention, for the PR box  inputs and outputs are related according to 
the rule $xy=a + b$ and hence we have: 
\begin{eqnarray}
(a_0+a_1)k=a+b&=&a_0 + m +b\nonumber\\
b+m&=&a_0 + (a_0+a_1)k
\end{eqnarray}
Now, if $k=0$, $b+m=a_0$ and if $k=1$, $b+m=a_1$. Hence, if Bob sums his output of the PR box with Alice's message he 
gets the right value of the bit he had to guess with certainty. With this protocol he has access to two bits of information 
about her data with a message of only one bit, clearly violating Information Causality. 

It is also possible to prove a much more stronger result \cite{PPKSWZ09}.

\begin{teo}
 If Alice and Bob can share distributions violating the CHSH inequality above the Tsirelson bound, they can violate
 Information Causality.
\end{teo}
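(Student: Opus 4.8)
The plan is to show that any no-signaling box exceeding the Tsirelson bound fuels a generalized van Dam protocol whose figure of merit $I$ grows without bound while only $m=1$ classical bit is communicated, thus violating the necessary condition $I\le m$ of Theorem \ref{teoic}. First I would reduce a generic CHSH-violating box to a symmetric noisy PR box. The CHSH expression $S_{CHSH}$ is invariant under a group of input/output relabellings, so averaging a shared box over this group produces an \emph{isotropic} box, still no-signaling, in which $a\oplus b = x\cdot y$ holds with a single probability $P$ for all four input pairs, while preserving the CHSH value. Computing the four correlators of this isotropic box gives $S_{CHSH}=8P-4$, so the hypothesis $S_{CHSH}>2\sqrt{2}$ is equivalent to $E:=2P-1>1/\sqrt{2}$, that is $2E^2>1$. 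The single parameter $E$ is the correlation carried by one box.

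Next I would set up the recursive protocol. Let Alice hold $N=2^n$ bits $\vec{a}=(a_0,\dots,a_{N-1})$ and let Bob wish to learn $a_b$ for an index $b$ with binary expansion $b_1\cdots b_n$. Generalizing van Dam's construction of section \ref{sectionimplausible}, the parties consume $2^n-1$ copies of the box arranged in $n$ layers; feeding partial XOR-sums of Alice's bits as the box inputs and chaining the outputs, Bob can recover any chosen $a_b$ by summing the outputs along a chain of $n$ boxes together with the single bit $m$ that Alice transmits. The essential point is that XORing the outcomes of $n$ independent boxes, each of bias $E$, yields effective bias $E^{n}$, so Bob guesses $a_b$ correctly with probability $P_n=\tfrac{1+E^{n}}{2}$, the same value for every $b$.

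Finally I would evaluate $I$. Since each of the $N=2^n$ choices of Bob's input amounts to estimating the uniform bit $a_b$ through a binary symmetric channel of crossover $1-P_n$, the chain rule and consistency give $I=\sum_b I(a_b:\beta\,|\,B=b)=2^n\,[\,1-h(P_n)\,]$, where $h$ is the binary entropy. Using the elementary estimate $1-h\!\left(\tfrac{1+x}{2}\right)\ge\tfrac{x^2}{2\ln 2}$ with $x=E^{n}$ yields
$$I \ge 2^n\cdot\frac{E^{2n}}{2\ln 2}=\frac{(2E^2)^n}{2\ln 2}.$$
Because $2E^2>1$, the right-hand side diverges as $n\to\infty$, whereas $m=1$ throughout. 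Hence for $n$ large enough $I>m$, contradicting the necessary condition $I\le m$ established in Theorem \ref{teoic}, so Information Causality is violated.

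I expect the main obstacle to be the combinatorics of the recursive protocol: one must verify that the $n$-box chain genuinely delivers $a_b$ up to an $E^{n}$-biased error, with the guessing probability $P_n$ independent of $b$, and that the symmetrization step leaves the CHSH value at its above-Tsirelson value (so the isotropic box still satisfies $2E^2>1$). Once these structural facts are in place, the entropy inequality and the divergence of $(2E^2)^n$ are routine.
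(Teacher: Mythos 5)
Your proposal is correct and follows essentially the same route as the paper: depolarize the shared box by local relabellings into an isotropic box with $p(a\oplus b = xy)=\tfrac{1+E}{2}$ (preserving $S_{CHSH}$, so that $S_{CHSH}>2\sqrt{2}$ becomes $2E^2>1$), run the nested van Dam--Paw\l owski protocol on $2^n$ bits to get guessing probability $\tfrac{1}{2}(1+E^n)$, and conclude $I>1=m$ for large $n$. The only difference is that you spell out the final entropy estimate $I\ge (2E^2)^n/(2\ln 2)$ explicitly, whereas the paper delegates this step to the cited reference.
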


The idea behind the proof is the following: first, we note that any distribution can be brought into a simple form  where
the local outcomes have a uniform distribution and the joint distributions satisfies
\be p(a+ b= xy)=\frac{1+E}{2}\label{eqnsboxes}\ee
where $0 \leq E \leq 1$. The case $E=1$ corresponds to the PR box and $E=0$ to completely uncorrelated bits.
This transformation can be done locally and does not change the value of $S_{CHSH}.$ The classical bound is violated if
$E >\frac{1}{2}$ and the quantum threshold becomes $E = \frac{1}{\sqrt{2}}.$
Whenever $E >\frac{1}{\sqrt{2}}$ we get a violation of Information Causality.

In the protocol used to obtain this violation, Alice receives $N=2^n$ bits and Bob  receives a list with $n$ bits to
inform him which of her bits he has to guess. She is allowed to send one bit to him. 
Using a chain of preestablished systems correlated according to 
 equation \eqref{eqnsboxes}, they can apply a protocol for which the probability of Bob guessing correctly the bit $a_k$ is
$$p_k=\frac{1}{2}(1+E^n).$$
Information Causality condition is violated as soon as $I>1$
 and this happens if $2E^2>1$ and $n$ is large enough \cite{PPKSWZ09}. 
 This proves that whenever the distribution violates CHSH above
 the Tsirelson bound we can use it to implement a protocol violating Information Causality.
 
 This result  connects the Tsirelson bound with a compelling physical principle.
 However, here are also non-quantum 
 distributions that lie under the quantum threshold and hence are not exclude by the previous argument. 
 It is still not known if Information Causality singles out  entire the set of quantum distributions. 
 A partial answer
 was provided a few months after Information Causality's first paper was released \cite{ABPS09}.
 
The authors present two families distributions which ca be written in the form
\be PR_{\alpha, \beta}=\alpha PR + \beta B +(1-\alpha-\beta)I,\ee
where $I$ is the uniform uncorrelated distribution and  
$PR$ is the usual PR box. In the first family, $B$ is one of the non-local boxes given by
\be P_{NL}^{\mu \nu \sigma}=\left\{\begin{array}{cc}
                   \frac{1}{2} & \mbox{if} \ \ a+ b =xy+ \mu x + \nu y + \sigma\\
                   0& \mbox{otherwise}.
                  \end{array}\right.,\ee
with $\nu \mu \sigma$ any sequence of bits except $000$ and $001$. The distribution $PR_{\alpha, \beta}$ will be quantum iff
$$\alpha^2+\beta^2 \leq 1$$
which is a necessary and sufficient condition for Information Causality to be satisfied if 
$\nu \mu \sigma= 010, \ 011, \ 100$ or $101$. Hence, in this slice of the no-signaling polytope, 
\emph{Information Causality 
singles out the boundary of the set of quantum distributions.} This is shown is figure \ref{figic} (a).

The condition for Information Causality in the case $\nu \mu \sigma= 111$ is 
$$\alpha \leq \frac{1}{2},$$
which gives the quantum maximum value for CHSH. Hence, this protocol can not discard non-quantum 
boxes below the Tsirelson bound. This is shown if figure \ref{figic} (b).
It is not known if these boxes  violate Information Causality in this 
slice of the no-signaling polytope. 

In the second family, $B$ is one of the local boxes given by
\be P_{L}^{\mu \nu \sigma \tau}=\left\{\begin{array}{cc}
                   1 & \mbox{if} \ \ a=\mu x+ \nu, \ \ b=\sigma y + \tau\\
                   0& \mbox{otherwise}.
                  \end{array}\right.,\ee
with $\mu\sigma+ \nu + \tau=0$. For these distributions, Information Causality is violated iff
$$(\alpha + \beta)^2+\alpha^2> 1.$$
This inequality does not coincide with the criteria for quantumness. For this family it is possible to exclude several
non-quantum correlations below the Tsirelson bound, but with the strategy used, it is not possible to reach
the quantum boundary. This is shown in figure \ref{figic} (c).

\begin{figure}[!h]
  \centering
  \includegraphics[scale=0.2]{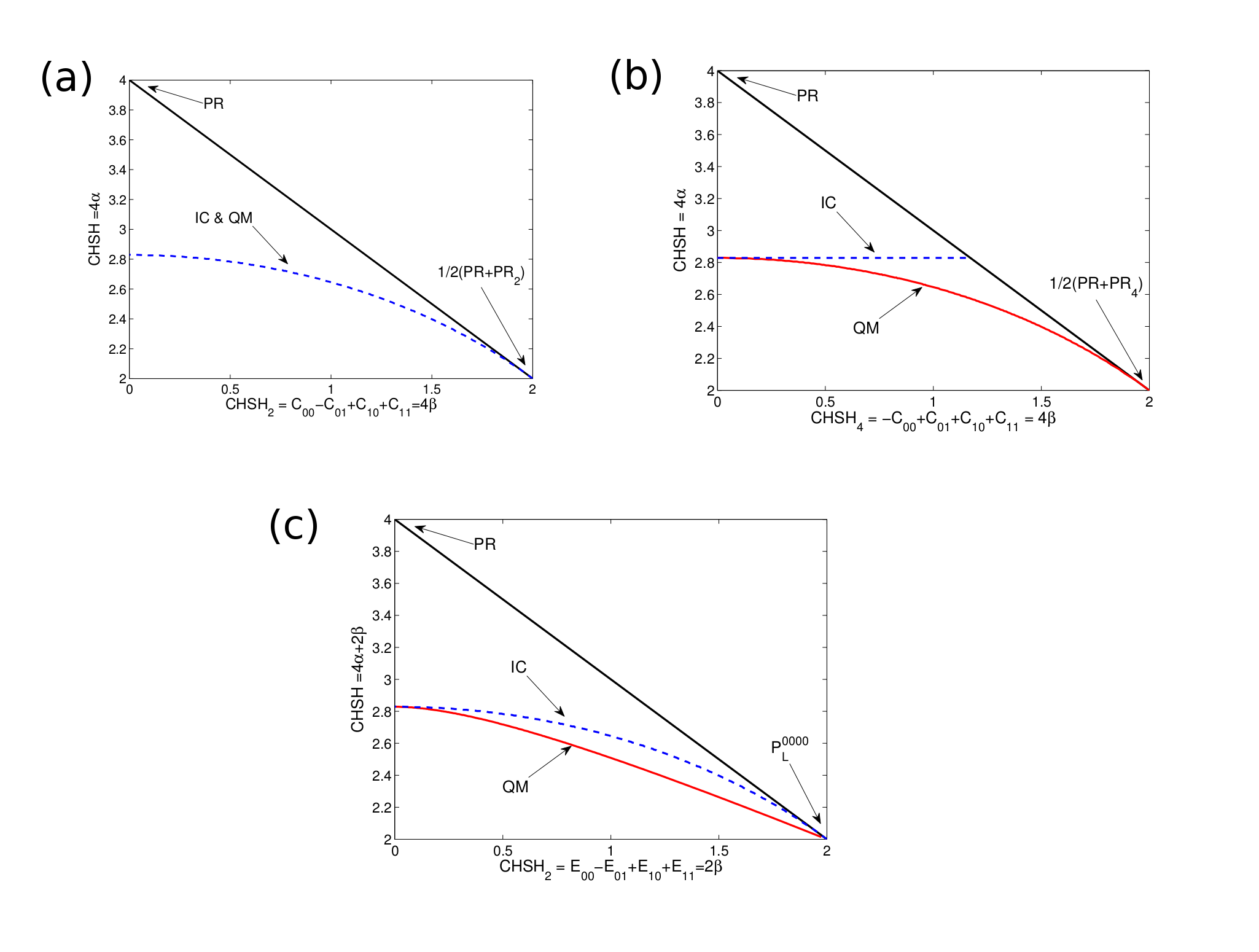}\\
  \label{figic}
  \caption{(a) In this slice of the no-signaling polytope, the principle of Information Causality singles out the 
  boundary of the quantum set. (b) In this slice, the same protocol is not able to explain the boundary of the quantum set.
  (c) In this slice, the same protocol gets close to the quantum boundary. This image was taken from reference \cite{ABPS09}.}
\end{figure}

Alice and Bob's game can be generalized to alphabets with more then two elements \cite{CSS10}.
Instead of giving Alice a string of bits,
she will now receive a string of \emph{dits}, a random variable with $d$ possible outcomes. Her message will also be changed. She 
is now allowed to send Bob $m$ dits. Their goal remains the same: Bob receives a position $y$ in Alice's string and he has to
guess the dit she has in that specific position. The efficiency of their protocol can be measured by the quantity
$$I=\sum_{k=0}^{n} I(a_k: b_k|y=k)$$
where $I(a_k: b_k|y=k)$ is the mutual information between Alice's $k$th dit $a_k$ and Bob's guess $b_k$, given that he was
asked to guess her dit in position $k$. Information Causality will be violated as soon as
$$I > m\log_2d.$$

Let us focus in the case where Alice receives a string of two dits $\vec{a}=(a_0, a_1)$, 
$a_i \in \{0, 1, \ldots, d-1\}$. Bob receives a bit $y$ that tells him if he has to guess the first or the second dit in
Alice's string. Since she only sends him one dit, Information Causality requires that $I=\log_2d$.
If Alice and Bob share the  no-signaling distribution with $d$ inputs in Alice's side, $2$ inputs in Bob's side and $d$ outputs 
in both sides given by
$$PR_d(ab|xy)=\left\{\begin{array}{cc}
                     \frac{1}{d}&\mbox{if} \ xy=(b-a) \ \mbox{mod} \ d\\
                     0& \mbox{otherwise}
                    \end{array}\right.,$$
there is a protocol in which Information Causality is violated.

As inputs of the $PR_d$ box, Alice uses $x=(a_1-a_0) \ \mbox{mod} \ d$ and Bob uses $y$. She gets output $a$ and he gets 
output $b$. Alice send the message $m= (a-a_0)\ \mbox{mod} \ d$. Bob, in possession of $m$ will make his guess 
$g=(b-m) \ \mbox{mod} \ d= (b-a + a_0) \ \mbox{mod} \ d$. Given that the inputs and outputs are correlated
 according to the rule $xy=(b-a) \ \mbox{mod} \ d$, we have 
 $$g=[(a_1-a_0)y +a_0] \ \mbox{mod} \ d$$
 which is equal to $a_0$ if $y=0$ and equal to $a_1$ if $y=1$.
 
 Therefore, using this protocol, Bob can guess any of her bits with certainty. This means that
 $$I=2\log_2d,$$
 clearly violating Information Causality.
 
 We can also see what happens when we use noisy boxes of the type
 $$PR_d(E)=EPR_d +(1-E)I.$$
There is a protocol using nested boxes of this kind that achieves success probability of 
$$P=\frac{(d-1)E^n +1}{d}$$
where $n$ is the number of boxes used.

Figure \ref{figinformationcausality} shows the critical value of $E$ beyond which Information Causality ceases to be violated. For $d=2$ we return to the case discussed
previously  and we have that for values of $E$ above  $\frac{1}{\sqrt{2}}$ Information Causality is violated. This is also 
the bound  for quantum  distributions. 
For $d>2$ the situation becomes richer. The quantum bound is no longer known and 
the critical value in which Information Causality ceases to be violated can be smaller then $\frac{1}{\sqrt{2}}$.

\begin{figure}[!h]
  \centering
  \includegraphics[scale=0.5]{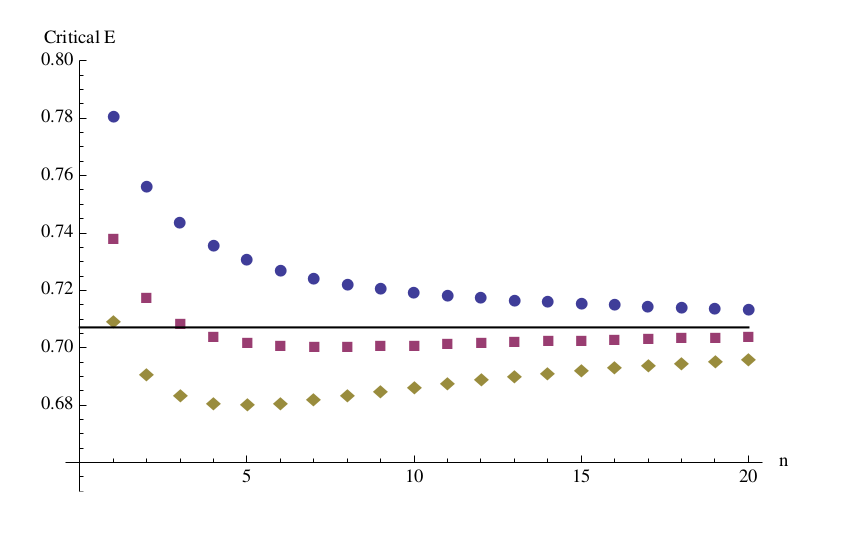}\\
  \label{figinformationcausality}
  \caption{Critical level of noise $E$ for which Information Causality ceases to be violated, as a function of the number
  of boxes used and for different values of $d$ ($d=2$, blue dots, $d = 5$ purple squares, $d = 10$
green diamonds). The solid line corresponds to the Macroscopic Locality bound (see section \ref{sectionmacroscopiclocality}).
This image was taken from reference \cite{CSS10}.}
\end{figure}

\section{\textsf{Macroscopic non-locality}}
\label{sectionmacroscopiclocality}

The motivation for the definition of Macroscopic locality is not to identify the principle behind quantum theory, but rather to understand how to go 
beyond it \cite{NW09}. One of the most important problems of current research in theoretical physics is to reconcile quantum theory and 
general relativity and a first step towards this goal is to derive general results that should apply to 
any theory satisfying a set of reasonable requirements. Macroscopic Locality may be one of them. The idea behind this 
principle is that any such theory should recover the classical 
results  when we measure a large number of equally prepared system and our devices are not capable of identifying
individual particles.

In the kind of experiments we have considered so far, two parties Alice and Bob share individual particles correlated 
according to some distribution $p(ab|xy)$, where, as usual,  $x$ and $y$ label the possible measurements 
and  $a$ and $b$ the possible outcomes in Alice's and Bob's
side, respectively. We refer to this kind of experiment as a \emph{microscopic experiment}.

In a \emph{macroscopic experiment}, 
Alice and Bob share a huge number $N\gg 1$ of pairs of particles correlated according to the distribution  $p(ab|xy)$.
They will not interact with a single particle but with a beam of them and hence they will not be able to 
address them individually and any operation they perform will be applied to all the particles in the beam at the same time.

After Alice and Bob perform some measurement in their particles, each beam will be divided in a number of different beams,
each one corresponding to one possible outcome of that measurement. In this scenario, the probabilities are no longer 
important and the \emph{intensities} of each beam will describe the results of the experiment. If Alice
measures $x$, we will denote the intensity of the beam corresponding to outcome $a$ by $I^x_a$ and analogously for Bob.

\begin{prin}[Macroscopic Locality]
The distribution of intensities
$p\left(I^x_a, I^y_b\right)$ Alice and Bob  
observe  admits a local hidden variable model. This is equivalent of saying that there is a global distribution
$$p(I^x_a,I_{x_1}^{a_1}, \ldots, I_{x_m}^{a_m},I^y_b, I_{y_1}^{b_1}, \ldots, I_{y_n}^{b_n})$$ 
such that 
\be p\left(I^x_a, I^y_b\right)=\int p\left(I^x_a,I_{x_1}^{a_1}, \ldots, I_{x_m}^{a_m},I^y_b, I_{y_1}^{b_1}, \ldots, I_{y_n}^{b_n}\right)
\prod_{i,j}dI_{x_i}^{a_i}I_{y_j}^{b_j}.\ee
\end{prin}

\begin{figure}[h]
  \centering
  \includegraphics[scale=0.6]{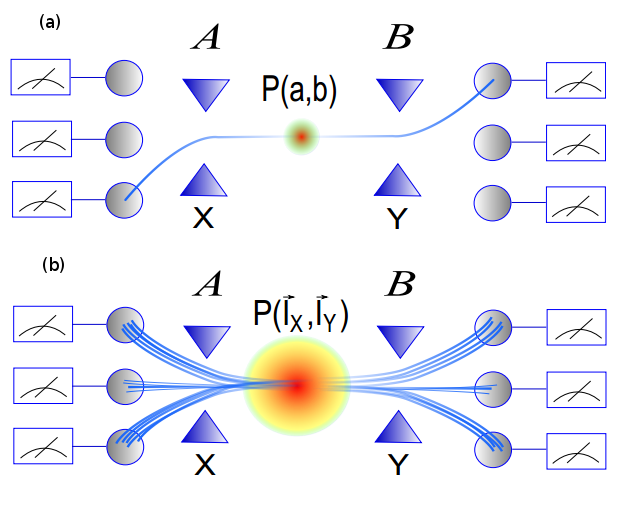}\\
  \caption{(a) A microscopic experiment. (b) A macroscopic experiment. Image taken from reference \cite{NW09}}
\end{figure}

Clearly the intensities are related to the distribution $p(ab|xy)$. With this correspondence written
explicitly, it is possible to identify the set of no-signaling distributions satisfying Macroscopic Locality.
This set is very similar to the set of quantum distributions, but it is not identical.

\begin{teo}
\label{teoml1}
 The set of macroscopic local non-signaling distributions is equal to the set $Q_1$ introduced in reference \cite{NPA08}.
\end{teo}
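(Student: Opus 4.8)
The plan is to follow the strategy of Navascués and Wunderlich \cite{NW09}: translate the Macroscopic Locality condition, which is a statement about continuous beam intensities, into an algebraic condition on the microscopic behavior $p(ab|xy)$, and then recognize that condition as precisely the semidefinite characterization of the set $Q_1$ of reference \cite{NPA08}. First I would set up the macroscopic experiment carefully. For each of the $N \gg 1$ independent pairs Alice applies a setting $x$ and records an outcome, so the intensity $I^x_a$ of the output beam associated with outcome $a$ is the number of particles giving that outcome, $I^x_a = \sum_{k=1}^{N} \chi^{(k)}_{a|x}$, a sum of $N$ independent identically distributed indicator variables; likewise for Bob. Since the detectors cannot resolve single particles, only the joint statistics of the pair of intensity vectors $(I^x_a, I^y_b)$ is observable, and each microscopic pair contributes independently to these sums.

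Second, I would pass to the large-$N$ limit using the central limit theorem. After centering and rescaling, the fluctuations $\xi^x_a = (I^x_a - \langle I^x_a \rangle)/\sqrt{N}$ converge in distribution to a centered multivariate Gaussian whose covariance matrix $C$ is built entirely out of the first- and second-order microscopic correlators of $p(ab|xy)$. The crucial reduction at this stage is that Macroscopic Locality, i.e. the demand that the observed intensity distribution $p(I^x_a, I^y_b)$ admit a local hidden-variable model, becomes in this limit the demand that the limiting Gaussian distribution of the $\xi$ be local. This is where the $N\to\infty$ limit does all the work: the combinatorial microscopic problem collapses onto a clean statement about a single Gaussian characterized by $C$.

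The heart of the proof, and the step I expect to be the main obstacle, is the characterization of when a jointly Gaussian distribution of Alice--Bob fluctuations is local. I would isolate this as a lemma: a centered Gaussian vector with covariance matrix $C$ admits a local hidden-variable model if and only if there exists a positive-semidefinite Gram matrix $G$ whose diagonal and off-diagonal entries are fixed by the marginal and joint probabilities (together with the orthogonality relations among outcomes of the same setting). The nontrivial direction is constructing the local model: given $G \succeq 0$, its Gram decomposition supplies real vectors that one can turn into local response functions reproducing the Gaussian correlations, while conversely any local model yields correlations expressible as inner products of bounded local functions, forcing $G \succeq 0$. Getting the bookkeeping of this equivalence exactly right, so that no spurious constraints are introduced and the marginals match on both sides, is the delicate part.

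Finally, I would match this positive-semidefiniteness condition with the definition of $Q_1$. The matrix $G$ constructed above is indexed by the identity together with the measurement events, its entries are exactly $p(a|x)$, $p(b|y)$ and $p(ab|xy)$ subject to the same-setting orthogonality and no-signaling relations, and the constraint $G \succeq 0$ is literally the first-level certificate of the hierarchy of \cite{NPA08}. Hence $p \in Q_1$ if and only if the limiting Gaussian is local, that is, if and only if $p$ satisfies Macroscopic Locality. Both inclusions then follow from the two directions of the lemma, which establishes the claimed equality of the set of macroscopic-local non-signaling distributions with $Q_1$.
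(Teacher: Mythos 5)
You should first be aware that the thesis contains no proof of this statement to compare yours against: by the paper's own convention results labelled \emph{Theorem} are quoted from the literature, and this one is stated bare, attributed implicitly to Navascu\'es and Wunderlich \cite{NW09} (with $Q_1$ from \cite{NPA08}); the surrounding text only remarks that once the intensities are written in terms of $p(ab|xy)$ the identification ``is possible.'' Judged on its own, your sketch is a faithful reconstruction of the original \cite{NW09} argument and its main steps are the right ones: i.i.d.\ pairs, central limit theorem for the rescaled intensity fluctuations, reduction of Macroscopic Locality to locality of the limiting Gaussian, and identification of that locality condition with the first level of the NPA hierarchy.

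One point in your key lemma needs sharpening, because as phrased it is close to being wrong. You say the entries of the Gram/covariance matrix $G$ are ``fixed by the marginal and joint probabilities.'' They are not all fixed: the Alice--Bob blocks, the within-setting blocks, and the same-setting orthogonality entries are determined by $p(ab|xy)$, but the entries pairing two \emph{different settings of the same party} correspond to correlations that are never observed (those measurements are incompatible) and are free parameters. The entire equivalence lives in this freedom. Macroscopic Locality holds if and only if these free entries \emph{can be chosen} so that the completed matrix is positive semidefinite --- the joint Gaussian with that covariance is then the hidden-variable model, and conversely the covariance matrix of any joint model supplies such a completion --- and existence of a PSD completion with the prescribed observable entries is precisely the definition of membership in $Q_1$. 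If all entries of $G$ were fixed, the existential quantifier in your lemma would have nothing to range over, and there is in any case no microscopic data that could fix the same-party cross-setting entries; so make the completion problem explicit when you write the lemma and the matching with \cite{NPA08}.
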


This set is the first set in a hierarchy  of conditions necessarily satisfied by any distribution $p(ab|xy)$ obtained with 
a quantum system. It can be numerically  characterized via semidefinite programming. 
By definition $Q \subset Q_1$ and even in the simplest case of each part with two measurements with
two outcomes they are not the same, although they are extremely close. 

Although Macroscopic Locality  is not able to single out the set of quantum distributions even in the simplest scenario,
it does single out the Tsirelson bound for the CHSH inequality.

\begin{teo}
\label{teoml2}
 The maximum value for $S_{CHSH}$ for macroscopic local no-signaling theories is equal to the Tsirelson bound
 $2\sqrt{2}$.
\end{teo}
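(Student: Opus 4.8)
The plan is to exploit the identification provided by the previous result. By Theorem \ref{teoml1}, the set of macroscopic local no-signaling distributions coincides with the set $Q_1$, the first level of the NPA hierarchy \cite{NPA08}. Hence the problem reduces to computing the maximum of $S_{CHSH}$ over $Q_1$, and I would organize the argument as a two-sided bound. For the lower bound, I would recall that $Q \subseteq Q_1$, so the maximum over $Q_1$ is at least the maximum over the quantum set. The latter is exactly $2\sqrt{2}$, a value that is attained by the explicit quantum realization of example \ref{exCHSHviolation} (the singlet together with the measurements listed there). Therefore $\max_{Q_1} S_{CHSH} \geq 2\sqrt{2}$, and the whole content of the theorem lies in establishing the matching upper bound.

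For the upper bound I would use the defining feature of $Q_1$: a distribution belongs to $Q_1$ precisely when there exists a positive semidefinite moment matrix $\Gamma$, indexed by the operators $\{I, A_1, A_2, B_1, B_2\}$, whose diagonal entries equal $1$ (since the $\pm1$ observables square to the identity) and whose cross entries reproduce the correlators, $\Gamma_{A_i B_j} = \left\langle A_i B_j\right\rangle$. Because $\Gamma \succeq 0$, its Gram decomposition yields real unit vectors $\ket{a_1}, \ket{a_2}, \ket{b_1}, \ket{b_2}$ in some inner-product space such that
\begin{equation}
\left\langle A_i B_j\right\rangle = \braket{a_i}{b_j}.
\end{equation}
The key point is that this vector representation exists for \emph{every} element of $Q_1$, even those without a genuine quantum state behind them, so the semidefinite characterization of Theorem \ref{teoml1} is exactly the tool that makes the bound theory-independent.

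With the Gram vectors in hand the estimate is short. Writing
\begin{equation}
S_{CHSH} = \braket{a_1}{b_1} + \braket{a_1}{b_2} + \braket{a_2}{b_1} - \braket{a_2}{b_2} = \braket{a_1}{b_1 + b_2} + \braket{a_2}{b_1 - b_2},
\end{equation}
Cauchy--Schwarz together with $\norm{a_i}=1$ gives $S_{CHSH} \leq \norm{b_1 + b_2} + \norm{b_1 - b_2}$. The parallelogram law and $\norm{b_j}=1$ yield $\norm{b_1 + b_2}^2 + \norm{b_1 - b_2}^2 = 4$, and maximizing $u+v$ under $u^2 + v^2 = 4$ gives $u=v=\sqrt{2}$, hence $S_{CHSH} \leq 2\sqrt{2}$. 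Combined with the lower bound this proves the claim.

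I expect the main obstacle to be the careful justification of the Gram-vector representation: one must confirm that the normalization $\Gamma_{ii}=1$ really forces unit vectors and that the off-diagonal entries of the moment matrix are exactly the CHSH correlators, rather than some rescaled or shifted quantities. A secondary technical point is that the vectors are a priori complex, so I would either note that only the real parts enter the objective or restrict to the real Gram decomposition; in either case the parallelogram computation is unaffected. Everything else (Cauchy--Schwarz and the elementary optimization) is routine.
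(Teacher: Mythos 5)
Your proof is correct, but note that the thesis itself contains no proof of this statement: by its stated convention, results labelled \emph{Theorem} are quoted from the literature, and this one is imported from the macroscopic-locality paper of Navascu\'es and Wunderlich \cite{NW09}, where it follows by combining Theorem \ref{teoml1} with the known fact that the maximum of $S_{CHSH}$ over $Q_1$ is $2\sqrt{2}$. What you have supplied is a self-contained derivation of that fact: your argument is Tsirelson's classical vector argument, lifted from genuine quantum observables to the Gram vectors of a level-1 moment matrix, which is exactly why the Tsirelson bound is already certified at the first level of the hierarchy. The lower bound via $Q \subset Q_1$ and example \ref{exCHSHviolation} is right. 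For the upper bound, the one point that deserves the care you flagged is the dictionary between the projector-indexed matrix used in \cite{NPA08} to define $Q_1$ and your observable-indexed matrix: setting $A_i = 2E_{1|A_i}-\id$ and $B_j = 2E_{1|B_j}-\id$, the change of index set is an invertible congruence, so positive semidefiniteness is preserved; the level-1 identities coming from idempotency ($\Gamma_{E,E}=\Gamma_{\id,E}=p$) force the unit diagonal; and the $(A_i,B_j)$ entries become $4\,p(11|ij)-2\,p(1|i)-2\,p(1|j)+1=\left\langle A_iB_j\right\rangle$, the observed correlator. With that in place, Cauchy--Schwarz plus the parallelogram law close the argument as you wrote it, and the real-versus-complex issue is immaterial because the real part of a Hermitian positive semidefinite matrix is again positive semidefinite and agrees with $\Gamma$ on the relevant (real) entries. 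So your route buys something the thesis does not contain: an actual proof, rather than a pointer to one.
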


Theorem \ref{teoml1} implies that   if Macroscopic Locality and no-signaling are fundamental properties of nature, 
the set of allowed distributions has to be contained in $Q_1$. If these axioms are enough to pin down the set of allowed 
distributions, they must come from a non-quantum theory. On the other side, theorem \ref{teoml2} shows that in
the same circumstances a violation 
of CHSH inequality above the Tsirelson bound is not possible. The similarities between $Q^1$ and the quantum set decrease,
though, if we increase the number of measurements available for Alice and Bob and the number of possible outcomes
for each measurement. It is possible then that macroscopic local distributions violate some Bell inequality above
the Tsirelson bound. This observation opens the door for finding non-quantum distributions using Bell-like
experimental scenarios.


\subsection{\textsf{Macroscopically local correlations can violate Information Causality}}

In section \ref{sectioninformationcausality} we showed that if Alice and Bob share a large number of bipartite system correlated according to the distribution
$$PR_d(E)=E\left(PR_d\right) +(1-E)I$$
they can apply a nested protocol to violate Information Causality as long as $E$ is above a certain threshold, that depends on the number 
of shared distributions used in the protocol and also on $d$.

When $d$ is equal to $2$ it is clear that whenever $E$ is above $\frac{1}{\sqrt{2}},$ both Information Causality and Macroscopic Locality
are not valid anymore. It is also known that  
$E\leq \frac{1}{\sqrt{2}}$ is a necessary and sufficient condition for the distribution to be quantum.

The situation $d>2$ is much more complex. In this case we do not know what is the condition on $E$ for quantumness
of the distribution. The condition for Macroscopic Locality remains the same, at least up to $d=5$
: the distribution $PR_d(E)$ will violate Macroscopic Locality iff
$E>\frac{1}{\sqrt{2}}$. The critical values for Information Causality, as we already mention, depends strongly on $d$. 
Figure \ref{figinformationcausality} shows the critical values for different values of $d$, as a function of the 
number of boxes available, and also the critical value for Macroscopic Locality.

This observation allows us to prove that some macroscopic local distributions can violate Information Causality.
For example, for $d=5$, the distribution
$$PR_5(E)=\frac{1}{\sqrt{2}}PR_5 +\left(1-\frac{1}{\sqrt{2}}\right)I$$
is macroscopic local but can be used to violate Information Causality.

Therefore, Information Causality and Macroscopic Locality are not equivalent.  Macroscopic Locality was proposed not as a principle capable of singling out quantum
distributions but rather as a desirable axiom of any alternative to quantum theory. The fact that macroscopic local
distribution violate Information Causality shows that if the principle of Information Causality is also a fundamental property of any non-quantum
theory, then the set of  distributions  it allows in some scenarios
has to be smaller then the set of macroscopic local distributions \cite{CSS10}.

\section{\textsf{Quantum correlations require multipartite information principles}}
\label{sectionmultipartiteprinciples}

So far we have seen four different principles proposed to explain quantum nonlocality: no-signaling, non-triviality 
of communication complexity, Information Causality and Macroscopic Locality. Although very fruitful in 
many senses,  these requirements suffer from a common drawback. All of them are based in a bipartite situation in
which two spatially separated parties share a pair of correlated system described according to some probability
distribution.

We can come up with much more  interesting situations. Instead of a bipartite scenario, we can imagine now a
$n$-partite
system shared among $n>2$ spatially separated parties. What physical principles explain the set of quantum distribution in 
in a general situation?

There is a trivial way of applying the bipartite requirements we have studied before to distributions in a multipartite 
scenario. We can consider the situation in which Alice holds $k$ of these subsystems and Bob the $n-k$ left and apply
the bipartite principles to the distribution obtained in this way. We may conjecture that applying some of these
principles to all possible bipartitions we would be able to single out the set of quantum distributions also on the 
multipartite scenario. Unfortunately this is not the case \cite{GWAN11}.

The problem is that there are some non-quantum multipartite distributions that behave exactly like local distributions
for every possible bipartition. One example of such distributions are found in the set of tripartite distributions 
admitting a \emph{time-ordered bilocal model} \cite{PBS11, GWAN12} .

Let $p(a_1a_2a_3|x_1x_2x_3)$ denote the probability of getting  outcomes $a_1,a_2$ and $a_3$, respectively, when the first part 
applies measurement $x_1$,  the second part applies measurement $x_2$ and the third part applies measurement $x_3$. 

\begin{defi}
 We say that the distribution $p(a_1a_2a_3|x_1x_2x_3)$
 admits a \emph{time-ordered bilocal model} (TOBL) if it can be written in the form
\begin{eqnarray}
 p(a_1a_2a_3|x_1x_2x_3)&=&\sum_{\lambda} p_{\lambda}^{i|jk} p_{j\rightarrow k}(a_ja_k|x_jx_k)\\
 &=&\sum_{\lambda} p_{\lambda}^{i|jk} p_{j\leftarrow k}(a_ja_k|x_jx_k).
 \end{eqnarray}
 for $(i,j,k)=(1,2,3), (2,3,1), (3,1,2).$  The distributions $ p_{j\rightarrow k}(a_ja_k|x_jx_k)$ and 
 $p_{j\leftarrow k}(a_ja_k|x_jx_k)$ are allowed to be signaling in at most one direction,
 as indicated by the arrow.
\end{defi}
These models have a very clear operational meaning. Let us consider first the case $(i,j,k)=(1,2,3)$. This case
corresponds to the bipartition $1|23$: the first subsystem is with Alice and the other two are with Bob.
Equation 
$$p(a_1a_2a_3|x_1x_2x_3)=\sum_{\lambda} p_{\lambda}^{1|23}p(a_1|x_1) p_{2\rightarrow 3}(a_2a_3|x_2x_3)$$
means that under this bipartition, the distribution admits a local hidden variable model, $\lambda$ being
the hidden variable. The fact that $p_{2\rightarrow 3}(a_2a_3|x_2x_3)$ may be signaling is not an issue here because
systems $2$ and $3$ are now seen as one, and hence the notion of signaling makes no sense.

Since $(i,j,k)$ can vary over all possible permutations, the same will happen for the other bipartitions $2|13$ and 
$3|12$. This implies that whenever we consider bipartition of a TOBL distribution, the bipartite distribution
obtained will be local. This remains true if we concatenate any number of them under 
\emph{wiring}, which is the most general operation we can apply to this set of distributions \cite{ABLPSV09}.
This implies that it can not violate any principle mentioned above.

The important observation is that there are TOBL distributions that are not quantum. This can be seen with the help of
a famous Bell inequality for the $(3,2,2)$ scenario, known as \emph{Guess Your Neighbor's Input} inequality:
$$p(000|000)+p(110|011)+p(011|101)+p(101|110) \leq 1.$$
For this inequality the quantum bound is also $1$, that is, there is no quantum violation in this case. The maximal
value obtained with TOBL distributions is $\frac{7}{6}$, which proves the existence of TOBL distributions outside
the quantum set. 

Another example is provided in reference \cite{YCATS12}. The authors
study violations of the principle of Information causality  in the presence of extremal no-signaling distributions
on a tripartite 
scenario. They  prove that distribution can not be discarded by any bipartite physical principle.

Hence, neither the bipartite principles already proposed so far nor any other that may be proposed in the future will
be able to single out the set of quantum distributions in the multipartite scenario because none of them is capable
of ruling out 
the TOBL distributions. 
This result implies that  intrinsically multipartite principles are required to fully understand the set of quantum 
distributions in more complicated situations.

\section{\textsf{Local orthogonality: the exclusivity principle for Bell scenarios}}
\label{sectionlocalorthogonality}

Unlike all other principles we have mentioned previously in this appendix, the Exclusivity principle can be applied 
directly to all Bell scenarios, including the ones with multiple parties. In this situation, the principle
is commonly referred to as the principle of \emph{Local Orthogonality} \cite{FSABCLA12}.

Suppose a composite system is shared among $n$ spatially separated parties. In each party an experimentalist can 
apply $m$ measurements with $d$ possible outcomes. The possible events in this scenario are of the form
$$(a_0, a_1, \ldots a_n | x_0, x_1, \ldots , x_n) $$
where $x_i$ stands for the measurement performed in party $i$ and $a_i$ for the corresponding outcome.

\begin{defi}
 Two  events  
 $$e_1 = (a_0, a_1, \ldots a_n | x_0, x_1, \ldots , x_n) \  \mbox{and} 
 \  e_2 = (a'_0, a'_1, \ldots a'_n | x'_0, x'_1, \ldots , x'_n)$$
 are \emph{exclusive} or \emph{locally orthogonal} if they involve different outputs
of the same measurement by (at least) one party:
$$ x_i=x'_i \  \mbox{and} \ a_i \neq a'_i.$$
A collection of events $\{e_i\}$ is \emph{locally orthogonal} if the events are pairwise
locally orthogonal.
 \end{defi}

As before, the Exclusivity principle demands that if a set of events  $\{e_i\}$ is \emph{locally orthogonal}
\be\sum_i p(e_i) \leq 1. \label{eqloinequality}\ee
Such an inequality is called an \emph{orthogonality inequality}.

The set of  distributions that satisfy all LO inequalities in this scenario is denoted by $\mathcal{LO}^1$.
As shown in \cite{CSW10}, for bipartite scenario this set is equal to the set of no-signaling distributions, but this 
equivalence is no longer valid for more parties. Already in the $(3,2,2)$ scenario no-signaling and $\mathcal{LO}^1$ are no longer
equal. All orthogonality inequalities  in this case are equivalent under local operations to the Guess Your Neighbor Input inequality
$$p(000|000)+p(110|011)+p(011|101)+p(101|110) \leq 1$$
for which the maximal no-signaling violation is equal to $\frac{4}{3}$. Numerical data suggests that the gap between the two
sets increase with the number of parties, but already for $n=5$ the problem becomes intractable due to the huge size of 
the exclusivity graph.

Violations of Local Orthogonality can exhibit \emph{activation} effects.
A larger distribution coming from several copies of $p \in \mathcal{LO}^1$ does not necessarily satisfies
Local Orthogonality. Consider $k$ copies of a $n$-partite system with distribution $p$, distributed among $kn$ parties, each party having access
to only one subsystem of one of the copies. If the resulting distribution $p_k$ satisfies all Local Orthogonality inequalities 
for the $(kn,m,d)$ scenario we say that $p$ belongs to the set $\mathcal{LO}^k$. We denote by $\mathcal{LO}^{\infty}$ 
the set of distribution in the $(n,m,d)$ scenario that belong to $\mathcal{LO}^k$ for all $k$.

To see what are the consequences of imposing the Local Orthogonality principle, we have to characterize the sets $\mathcal{LO}^k$, what
requires that we  identify all 
Local Orthogonality inequalities for a given scenario. As we have already seen, this is a hard problem, equivalent to finding all 
maximal cliques of 
the exclusivity graph of the scenario.

At first sight, it seems that Local Orthogonality would not capable of ruling out non-quantum distributions in the bipartite scenario 
because of the equivalence
between this principle and no-signaling, but this is not the case. Due to the activation effects,
imposing Local Orthogonality in the multipartite level leads to detection of non-quantumness even for the bipartite case.

Already for the simplest scenario $(2,2,2)$, Local Orthogonality is able to rule out the  PR box if we use two copies of this distribution.
Suppose that parties $1$ and $2$ are in possession of one of the copies and parties $3$ and $4$ are in possession of the other 
copy. Then, the value of the sum 
$$p(0000|0000) + p(1110|0011) + p(0011|0110) + p(1101|1101) + p(0111|1101)$$
is equal to $\frac{5}{4}$, while Local Orthogonality demands this value to be less or equal then $1$. The same reasoning allows us to 
rule out other distributions obtained from the PR box by adding noise. Consider the family of distributions given by
$$PR(\alpha)=\alpha PR +(1-\alpha)I$$
where $I$ is the distribution where all parties are independent and the probabilities for all measurements are uniform.
Two copies of $PR(\alpha)$ violate Local Orthogonality for all $\alpha > 0.72$. This value is close to the quantum bound of 
$\alpha = \frac{1}{\sqrt{2}}\approx 0.707$.

Local Orthogonality also rules out all extremal distributions also  in the $(2,2,d)$ scenario. This 
happens because we can use them to simulate a PR box, perfectly with one copy if $d$ is even and arbitrarily well with 
sufficiently many copies if $d$ is odd.

Local Orthogonality is very successful in the bipartite case as it rules out many distribution and gets close to the Tsirelson bound.
But it is for $n>2$  that we expect it to perform better then the previous principles,
since its definition is intrinsically
multipartite.  It is possible to prove that all extremal distributions in the $(3,2,2)$ scenario lie outside 
$\mathcal{LO}^1$ or $\mathcal{LO}^2$. The distributions used in section \ref{sectionmultipartiteprinciples}
as examples of non-quantum violations that satisfy
all bipartite principles are also ruled out by Local Orthogonality, since they violate the Guess Your Neighbor Input
inequality. Local Orthogonality rules out distributions where all
other known principles fail.

\section{Final Remarks}
\label{sectionfinaltsirelson}

An important problem in Physics is to understand what kind of correlations can be observed between measurements
conducted in spatially separated physical systems that have interacted in the past. Quantum theory predicts
stronger correlations then the ones that can be obtained with classical systems, which leads to
violations of Bell inequalities. At least mathematically, there is room 
for more: quantum systems do not reach the algebraic maximum violation of several Bell inequalities , which can be
reached only with some 
non-quantum distributions, obtained using  more general probabilistic theories.
Why we do not observe these stronger correlations in nature? Is there any physical principle that 
forbids probability distributions outside the quantum set?

No-signaling is certainly a property we should impose on the distributions in order to discard the unphysical 
ones, but it is not enough to single out the quantum set. The no-signaling distribution of a PR box
can reach the algebraic maximum
of the CHSH inequality, while the Tsirelson bound lies below this value. Nonetheless, the existence of 
such distributions would have strange consequences in the field of communication complexity. If
the parties are allowed to share an arbitrary number of PR boxes,  any
distributed function would require only one bit of communication between the parties to be computed,
making the notion of communication complexity meaningless. Although this does not contradict any principle, it goes against
our experience that some problems are harder to solve then others. Although trivial communication complexity was found
 with  violations strictly less than $4$,  it is still not clear if the Tsirelson bound for 
the CHSH inequality
is a critical value that separates trivial
from nontrivial communication complexity. 

Information Causality is a principle
with a information theoretic motivation. It can also be used to discard several non-quantum distributions.
For the CHSH inequality it is
known that any violation above the Tsirelson bound also violates Information Causality.  In more sophisticated situations,
it is known that this principle can rule out many non-quantum distribution, but it is not known whether 
if we can relate this
to the Tsirelson bound of more complicated inequalities nor if it singles out the entire set of quantum distributions.
 It remains an open question whether this whole zoo of nonlocality can be derived from information causality.
 
 Information Causality was also used to derive limits on Hardy's non-locality \cite{Hardy93}. It has been shown that any generalized 
 probability theory which gives completely random results for local dichotomic observable, 
 can provide Hardy's non-local correlation and satisfy Information Causality at the same time \cite{AKRRJ10, GRKR10}.
 Nevertheless, there are some restrictions imposed by quantum theory that can not be explained by the 
 considered Information causality condition.
 
 The principle of Macroscopic Locality is a reasonable property we should expect from any physical theory, since any such theory should recover the classical 
results  when the number of particles goes to infinity. The set of macroscopic local correlation is not equal to 
the quantum set. They are close for the $(2,2,2)$ scenario, but the similarities decrease
 if we increase the number of measurements available or the number of possible outcomes
for each measurement. Though this principle can not recover the quantum set, it may help us to understand how to
derive generalizations of quantum theory and reconcile it  with
general relativity.

Although Macroscopic Locality  is not able to single out the set of quantum distributions even in the simplest scenario,
it does single out the Tsirelson bound for the CHSH inequality. It is still an open problem  to prove
that macroscopic local distributions violate some Bell inequality above
the Tsirelson bound. 

This principle was also used to derive quantum Bell
inequalities, linear inequalities that provide necessary conditions for a distribution to be quantum \cite{YNSV11}.
The method is applicable to all bipartite
scenarios. Such inequalities
provide  analytical approximations to the quantum
set, which are difficult to find in general.

Although the principles above are very fruitful in many different situations, they are not enough to explain the set of 
quantum distributions in scenarios with more than two parties. Some non-quantum distributions in a tripartite scenario 
have been found that behave like classical distributions for all possible bipartitions.  This implies that, in 
order to explain the quantum set in more complicated scenarios, intrinsically multipartite principles must be used.

The only multipartite principle proposed so far is Local Orthogonality, the Exclusivity principle applied to Bell scenarios.
 Local Orthogonality is very successful  in the bipartite case as it rules out the extremal boxes in the 
 $(2,2,d)$ scenario for any $d$, as well as many others for $d=2$, approaching the Tsirelson bound for the CHSH inequality.
For $n>2$   we expect it to perform better then the previous principles.  
It is possible to prove that all extremal distributions in the $(3,2,2)$ scenario violate Local Orthogonality with one or 
two copies.
Some non-quantum distributions that satisfy
all bipartite principles are also ruled out by Local Orthogonality.

The difficulty in proving the consequences of this principle to other scenarios lie in the fact that the 
exclusivity graph becomes intractable when we increase the number of parties, measurements or outcomes. 
This makes any computational calculation impossible. Nevertheless,
Local Orthogonality rules out distributions where all
other known principles fail. This corroborates  the conjecture that the Exclusivity principle is the fundamental principle 
that singles out the set of  quantum distributions.


\backmatter
\bibliographystyle{alpha}
\bibliography{biblio1}

\end{document}